\documentclass[11pt, a4paper]{article}

\usepackage{fullpage}
\usepackage{amsmath}
\numberwithin{equation}{section}
\usepackage{amsthm}
\usepackage{amsfonts}
\usepackage{amssymb}
\usepackage{graphicx}
\usepackage{hyperref}
\usepackage{color}
\usepackage[inline]{enumitem}
\usepackage{accents}
\usepackage{multimedia}
\newlength{\dhatheight}
\newcommand{\doublehat}[1]{%
    \settoheight{\dhatheight}{\ensuremath{\hat{#1}}}%
    \addtolength{\dhatheight}{-0.35ex}%
    \hat{\vphantom{\rule{1pt}{\dhatheight}}%
    \smash{\hat{#1}}}}

\newcommand{\choice}[1]{\overset{\bullet}{#1}}
\newcommand{\alta}{\mathfrak{a}}

\newcommand{\C}{\tilde{C}}
\newcommand{\Chat}{\hat{C}}

\newcommand{\compC}{\mathbb{C}}

\newcommand{\E}{\mathrm{E}}

\newcommand{\Fhat}{\hat{F}}
\newcommand{\G}{\tilde{G}}
\newcommand{\gfn}{\mathbf{g}}
\newcommand{\gfntilde}{\tilde{\mathbf{g}}}

\newcommand{\h}{\tilde{h}}
\newcommand{\halfH}{\mathbb{H}}

\newcommand{\Khat}{\hat{K}}

\newcommand{\bigO}{\mathcal{O}}

\newcommand{\realR}{\mathbb{R}}
\newcommand{\tR}{\mathcal{R}}
\newcommand{\s}{\widetilde{s}}

\newcommand{\What}{\hat{W}}
\newcommand{\Vhat}{\hat{V}}
\newcommand{\X}{\tilde{X}}

\newcommand{\Y}{\tilde{Y}}

\newcommand{\intZ}{\mathbb{Z}}
\newcommand{\DeltaR}{\Delta_{\pmb{\mathit{\Sigma}}}}
\newcommand{\DeltaRhat}{\hat{\Delta}_{\pmb{\mathit{\Sigma}}}}
\newcommand{\DeltaRdoublehat}{\doublehat{\Delta}_{\pmb{\mathit{\Sigma}}}}

\newcommand{\Phihat}{\hat{\Phi}}

\newcommand{\Psihat}{\hat{\Psi}}
\newcommand{\SigmaR}{\pmb{\mathit{\Sigma}}}

\newcommand{\twoandhalf}{\ensuremath{2\textrm{\textonehalf}}}

\DeclareMathOperator{\Ai}{Ai}

\DeclareMathOperator{\ext}{ext}

\DeclareMathOperator{\diag}{diag}

\DeclareMathOperator{\inside}{in}

\DeclareMathOperator{\MB}{MB}
\DeclareMathOperator{\odd}{odd}

\DeclareMathOperator{\outside}{out}
\DeclareMathOperator{\Pe}{Pe}

\DeclareMathOperator{\pre}{pre}
\DeclareMathOperator{\Span}{Span}
\DeclareMathOperator{\Tr}{Tr}
\DeclareMathOperator{\tri}{multi}
\DeclareMathOperator{\HPe}{HPe}

\newtheorem{prop}{Proposition}
\newtheorem{thm}{Theorem}
\newtheorem{cor}{Corollary}
\newtheorem{RHP}{RH Problem}
\newtheorem{lemma}{Lemma}

\theoremstyle{remark}
\newtheorem{rmk}{Remark}

\title{Critical Hermitian matrix model with external source and Boussinesq hierarchy}

\author{
  Dong Wang\thanks{School of Mathematical Sciences, University of Chinese Academy of Sciences, Beijing, P.~R.~China 100049 \newline
    email: \href{mailto:wangdong@wangd-math.xyz}{\protect\nolinkurl{wangdong@wangd-math.xyz}}}
   \and   
    Shuai-Xia Xu\thanks{Institut Franco-Chinois de l’Energie Nucl\'eaire, Sun Yat-sen University, Guangzhou, P.~R.~China 510275 \newline
    e-mail: \href{xushx3@mail.sysu.edu.cn}{\protect\nolinkurl{xushx3@mail.sysu.edu.cn}}}
}

\begin{document}

\maketitle

\begin{abstract}
We consider the random Hermitian matrix model of dimension $2n$, with external source, defined by the probability density function
\begin{equation*}
  \frac{1}{Z_{2n}} \lvert \det(M) \rvert^{\alpha} e^{-2n\Tr (V(M) - AM)}, \quad V(x) = \frac{x^4}{4} - t\frac{x^2}{2},
\end{equation*}
where the external source $A$ has two eigenvalues $\pm a$ of equal multiplicity. We investigate the limiting local statistics of the eigenvalues of $M$ around $0$ in certain critical regimes as $n \to \infty$. When the parameters $t$ and $a$ lie on a critical curve along which the limiting mean eigenvalue density vanishes as $|x|^{1/3}$, the double scaling limit of the correlation kernel is constructed from functions associated with the Boussinesq equation. This new limiting kernel reduces to the classical Pearcey kernel when $\alpha = 0$. Furthermore, in the multi-critical case where the limiting mean eigenvalue density vanishes as $|x|^{5/3}$, the limiting kernel is built from the second member of the Boussinesq hierarchy. We derive the results by transforming the random matrix model into biorthogonal ensembles that are analogous to the Muttalib-Borodin ensemble, and then analyzing its asymptotic behavior via a vector Riemann-Hilbert problem.
\end{abstract}

\tableofcontents

\section{Introduction}

Consider the Hermitian matrix model with external source defined by the probability measure
\begin{equation} \label{eq:rmt_ext_source}
  \frac{1}{Z_n}e^{-n\Tr(V(M) - AM)}, 
\end{equation}
where the external source $A$ is a fixed Hermitian matrix, the potential function $V$ is a real analytic function, and $Z_n$ is a normalization constant. The external source term breaks the unitary invariance of the measure. 

Using the Harish-Chandra-Itzykson-Zuber integral to integrate out the eigenvectors, the joint probability density function for the eigenvalues can be written explicitly in determinantal form; see Br\'ezin and Hikami \cite{Brezin-Hikami96}, \cite{Brezin-Hikami98} and Zinn-Justin \cite{Zinn_Justin97}, \cite{Zinn_Justin98}. Hence, the distribution of eigenvalues of $M$ can be characterized by its correlation kernel. Our paper is devoted to the study of the double scaling limit of the correlation kernel in certain critical regimes as the size of the matrix $n\to\infty$. 

In the Gaussian case $V(x)=x^2/2$ where the external source has two eigenvalues $\pm a$ of equal multiplicities, the asymptotics of the eigenvalue correlation kernel have been established in a series of works; see \cite{Bleher-Kuijlaars04}, \cite{Aptekarev-Bleher-Kuijlaars05}, \cite{Bleher-Kuijlaars07} for the Riemann-Hilbert approach, and \cite{Brezin-Hikami98}, \cite{Tracy-Widom06} for alternative approaches. It is remarkable that the limiting mean eigenvalue density function can vanish at the origin as a cubic root, forming a $1/3$-cusp point as $a$ tends to the critical value $1$. Near the $1/3$-cusp point, a new limiting kernel arises, which is now known as the Pearcey kernel. The Pearcey kernel can be expressed by \cite[Equation (1.7)]{Bleher-Kuijlaars07}
\begin{equation} \label{eq:Pearcey_kernel}
  K^{(\Pe)}(\xi, \eta;\tau)=\frac{p(\xi)q''(\eta)-p'(\xi)q'(\eta)+p''(\xi)q(\eta)-\tau
    p(\xi)q(\eta)}{\xi-\eta},
\end{equation}
which is denoted as $K^{cusp}(\xi, \eta; \tau)$ in \cite[Equation (1.7)]{Bleher-Kuijlaars07}, with \cite[Equation (1.8)]{Bleher-Kuijlaars07}
\begin{align}\label{eq:pearcey_integral}
  p(x) = {}& p(x;\tau)=\frac{1}{2\pi}\int_{-\infty}^\infty
             e^{-\frac14s^4-\frac{\tau}{2} s^2+isx}d s && \text{and} &
                                                                       q(x) = {}& q(x;\tau)=\frac{1}{2\pi} \int_\Sigma e^{\frac14
                                                                                  s^4+\frac{\tau}{2} s^2+isx} d s.
\end{align}
Here the contour $\Sigma$ in the definition of $q$ consists of the four rays $\{ \arg s=\pi/4,3\pi/4,5\pi/4,7\pi/4 \}$, oriented as follows: the first and third rays are oriented from infinity toward the origin, while the second and fourth rays are oriented outward from the origin. We remark that the Pearcey kernel also appears in more general external source models \cite{Capitaine-Peche16}, \cite{Hachem-Hardy-Najim16}, in the non-intersecting Brownian motions model \cite{Adler-van_Moerbeke07}, \cite{Tracy-Widom06}, \cite{Adler-Orantin-van_Moerbeke10}, \cite{Adler-Cafasso-van_Moerbeke11}, the two matrix model \cite{Geudens-Zhang15}, planar partitions \cite{Okounkov07} and tiling models \cite{Borodin-Kuan10}, \cite{Duits-Kuijlaars21}, \cite{Huang-Yang-Zhang24}. Recently, Erd\H{o}s and coauthors showed that the Pearcey kernel is universal near the $1/3$-cusp point for Wigner-type matrices with independent entries in \cite{Erdos-Kruger-Schroder20} and for general correlated complex Hermitian random matrices in \cite{Erdos-Henheik-Riabov25}.

For large $n$ and a general even polynomial potential $V$, Bleher, Delvaux and Kuijlaars \cite{Bleher-Delvaux-Kuijlaars10} studied the limiting mean eigenvalue distribution of this model \eqref{eq:rmt_ext_source} where the external source $A$ has two eigenvalues $\pm a$ of equal multiplicities. They showed that this limiting empirical distribution of eigenvalues is characterized by the solution of a constrained vector equilibrium problem. The limiting density of this distribution is supported on a finite union of disjoint intervals, but the explicit computation of the limiting density is still challenging. Before the work of \cite{Bleher-Delvaux-Kuijlaars10}, McLaughlin considered the explicit limiting density for $V(x) = x^4/4$ in \cite{McLaughlin07}. In the case of a quartic potential $V(x)=x^4/4 - tx^2/2$, \cite{Bleher-Delvaux-Kuijlaars10} provided a detailed analysis of the distribution and described all possible singular behaviors in the $(t, a)$-plane where the density vanishes on the support. See also \cite{Aptekarev-Lysov-Tulyakov09} and \cite{Aptekarev-Lysov-Tulyakov11} for the quartic potential. Later, Mart\'inez-Finkelshtein and Silva \cite{Martinez_Finkelshtein-Silva21} extended this analysis to a general polynomial potential and an external source possessing two distinct eigenvalues with multiplicities $n_1$ and $n_2 = n - n_1$. They classified the possible singular behaviors of the density function. Besides the singularities already present in the classical Hermitian random matrix model ($A=0$), the density can only have a $1/3$-cusp point
 \begin{equation}\label{eq:cusp}
\rho(x) \sim  c|x-x_0|^{\frac{1}{3}},
\end{equation}
and a $5/3$-critical point 
\begin{equation}\label{eq:higher_cusp}
\rho(x) \sim  c|x-x_0|^{\frac{5}{3}}. 
\end{equation}
Notably, no higher-order critical points of the form $\rho(x) \sim  c|x-x_0|^{\frac{k}{3}}$ with $k>5$ can occur in this model. Later in this paper, we call the $1/3$-cusp point the \emph{Pearcey critical} point, and the $5/3$-critical point the \emph{multi-critical} point. The nomenclature will be explained below. 
The previous research mentioned above concentrates on the limiting empirical distribution density of eigenvalues, or equivalently, the vector equilibrium measure. Based on the equilibrium measure result, standard Riemann-Hilbert techniques can be applied to obtain universality results if the limiting density does not have a multi-critical point; see the discussion in \cite[Section 2.5]{Bleher-Delvaux-Kuijlaars10}. The scaling limits of the correlation kernel near the multi-critical point require new technical input, and the problem has been emphasized in \cite[Section 5.2]{Kuijlaars10}.

In this paper, we study the limiting correlation kernel near both the \emph{Pearcey critical} point and the \emph{multi-critical} point by considering the following $2n$-dimensional Hermitian matrix with external source and quartic potential:
\begin{equation} \label{eq:pdf_ext_source_intro}
  \frac{1}{Z_{2n}} \lvert \det(M) \rvert^{\alpha} e^{-2n\Tr(V(M) - AM)}, \quad V(x) = \frac{x^4}{4} - t\frac{x^2}{2}, \quad A = \diag(\underbrace{a, \dotsc, a}_n, \underbrace{-a, \dotsc, -a}_n),
\end{equation}
where $\alpha>-1$ and $Z_{2n}$ is a normalization constant. Comparing with the model \eqref{eq:rmt_ext_source}, the extra term $\lvert \det(M) \rvert^{\alpha}$ introduces algebraic singular behavior at the origin. 
When $\alpha=0$, the above model was considered in \cite[Section 7]{Bleher-Delvaux-Kuijlaars10} and \cite{Aptekarev-Lysov-Tulyakov11}, where they obtained the limiting empirical distribution density of eigenvalues and described all possible singular behaviors of the density in the $(t, a)$-plane. 
As shown in Figure \ref{fig:phase_diagram}, in the regions labelled as Cases I, II and III that are generic regions of the parameter space, the limiting distribution of eigenvalues is either supported on a single interval (Case III) or on two separated intervals (Cases I and II). On the solid curve separating Case II and Case III, the limiting distribution is supported on one interval with the density vanishing like $x^2$ at $0$. Along the dashed curve parameterized by $c \in (0, 3^{-1/4})$ as follows
\begin{align} \label{eq:dashed_curve}
  t = {}& 6c^2 - \frac{1}{c^2}, & a = {}& - 2c^3+\frac{1}{c},
\end{align}
the limiting density vanishes like $\lvert x \rvert^{1/3}$ at $0$. As $c\to 3^{-1/4}$, the dashed curve intersects the solid curves at $(t, a) = (3^{1/2}, 3^{-3/4})$, and the limiting density of the eigenvalues vanishes like $\lvert x \rvert^{5/3}$ at $0$. We refer to this singular behavior as the multi-critical point.

\begin{figure}[htb]
  \centering
  \includegraphics[width=0.8\linewidth]{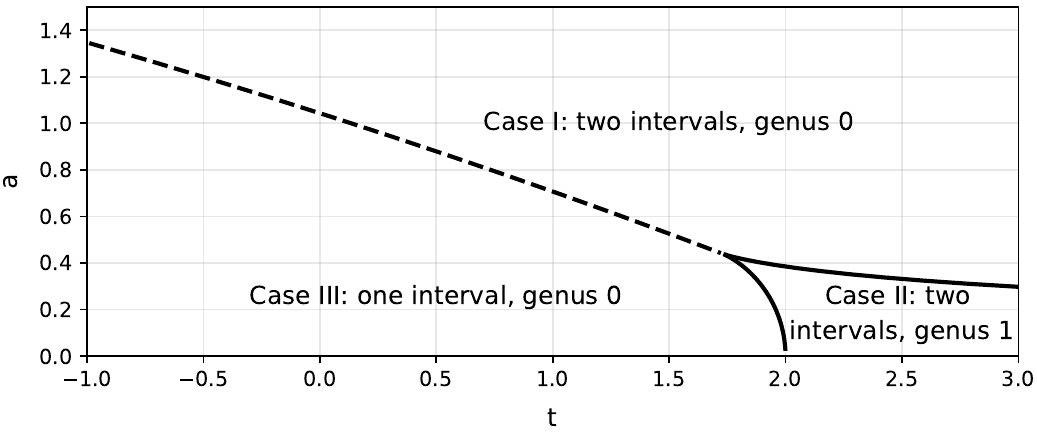}
  \caption{The phase diagram of $t$ and $a$. Adapted from \cite[Figure 7.1]{Bleher-Delvaux-Kuijlaars10}.}
  \label{fig:phase_diagram}
\end{figure}

We now consider the correlation kernel, namely $K^{\ext}_{2n}$, of the eigenvalues of $M$ in \eqref{eq:pdf_ext_source_intro}. The explicit expression of $K^{\ext}_{2n}$ will be given later in \eqref{eq:kernel_for_K^ext}. A key observation of this paper is that $K^{\ext}_{2n}$ can be expressed by the reproducing kernels of biorthogonal ensembles that generalize the Muttalib-Borodin (MB) ensemble \cite{Borodin99}, \cite{Wang-Zhang21}, \cite{Wang-Xu25}. Below we define the biorthogonal ensemble. We define the biorthogonal polynomials $\{ p_n(x) = p^{W, \gamma}_n(x) \}$ and $\{ q_n(x^2)=q^{W, \gamma}_n(x^2) \}$, where $p_n$ and $q_n$ are degree $n$ monic polynomials satisfying the biorthogonality
\begin{align} \label{eq:biorthogonal_def}
   \langle p_j(x), q_k(x^2) \rangle_{W, \gamma} = {}& \delta_{jk} h^{W, \gamma}_j, & \langle f(x), g(x) \rangle_{W, \gamma} = {}& \int_{\realR}  f(x)g(x) (1_{x \geq 0}(x) + \gamma \cdot 1_{x < 0}(x)) W(x) dx.
\end{align}
Here $\gamma$ is a real constant, and $W(x)$ is a weight function. The associated reproducing kernel is then defined by
\begin{equation}\label{eq:kernel_MB}
  K^{W, \gamma}_m(x, y) = \sum^{m - 1}_{k = 0} \frac{1}{h^{W, \gamma}_k} p^{W, \gamma}_k(x) q^{W, \gamma}_k(y^2) (1_{x \geq 0}(x) + \gamma \cdot 1_{x < 0}(x)) W(x),
\end{equation}
and we introduce the symmetrized kernel
\begin{equation} \label{eq:defn_Khat}
  \Khat^{W, \gamma}_m(x, y) = K^{W, \gamma}_m(x, y) + K^{W, \gamma}_m(-x, y).
\end{equation}
With this notation, the external source kernel admits the representation
\begin{equation}\label{eq:K_ext}
  K^{\ext}_{2n}(x, y) = \frac{1}{2} \Khat^{W, 1}_n(x, y) + \frac{y}{2x} \Khat^{W^{\odd}, -1}_n(x, y), 
\end{equation}
where
\begin{align} \label{eq:W_specified}
  W(x) = {}& \lvert x \rvert^{\alpha} e^{-2n(V(x) - ax)}, & W^{\odd}(x) = {}& \lvert x \rvert W(x).
\end{align}
The identity \eqref{eq:K_ext} will be proved later in Proposition \ref{prop:ext_in_biorth} in Section \ref{sec:bi_poly_multiple}.

We obtain scaling limits of the correlation kernels $K^{\ext}_{2n}$ for the model \eqref{eq:pdf_ext_source_intro} near both the Pearcey critical and multi-critical points by exploring the asymptotics of $\Khat^{W, \gamma}_n$ with appropriately chosen parameters and scaling regimes. 

Before giving the results on limiting local eigenvalue statistics as the size of the matrix tends to infinity, we remark that the Hermitian matrix model with external source is also studied in mathematical physics since it has applications in geometric properties of surfaces and 2D gravity \cite{Brezin-Hikami16}. The constant $Z_n$ in \eqref{eq:rmt_ext_source} is $F^{\{ V \}}_n[A]$ in \cite[Equation (1.1)]{Kharchev-Marshakov-Mironov-Morozov-Zabrodin92}, an essential object in the generalized Kontsevich model (GKM), which becomes, if $V(x)=x^3/3$, Kontsevich's Airy matrix model that is used to prove the Witten conjecture \cite{Kontsevich92}, \cite{Witten90}. We note that the integrable properties of the GKM have been extensively studied \cite{Morozov99}. For example, when the potential $V(x)=x^{p+1}/(p+1)$ or $x^{-p + 1}/(-p + 1)$, $p \geq 2$ is a positive integer, the GKM of finite size $n$ is related to $\tau$-functions of the Gelfand-Dickey hierarchy (KdV for $p = 2$ and Boussinesq for $p = 3$; see Remark \ref{rmk:Beq_def}) \cite{Kontsevich92}, \cite{Adler-van_Moerbeke92}, and \cite{Mironov-Morozov-Semenoff96}. The integrability of the Hermitian matrix model with external source is also considered in \cite{Adler-van_Moerbeke-Vanhaecke09} and \cite{Wang09}. 

\subsection{Statement of main results} \label{subsec:main_results}

To state our main results, we introduce the limiting kernels, which can be defined in terms of $\Phi^{(\gamma, \alpha, \rho, \sigma, \tau)}(\xi)$, the solution to the model Riemann-Hilbert (RH) problem \ref{RHP:model} associated with the Boussinesq hierarchy. The precise formulation and properties of this model RH problem will be given in Section \ref{subsec:model_RHP}. As shown later, $\Phi^{(\gamma, \alpha, \rho, \sigma, \tau)}(\xi)$ is a uniquely defined analytic function in the sector $\{ \arg \xi \in (\pi/4, 3\pi/4) \}$, and it admits an analytic continuation to the larger sector $\{ \arg \xi \in (-\pi, \pi) \}$. Hence, for $\xi \in \realR_+$, we denote by $\tilde{\Phi}^{(\gamma, \alpha, \rho, \sigma, \tau)}(\xi)$ the analytic continuation of $\xi^{ \frac{\alpha}{3}-\frac{1}{2} } \Phi^{(\gamma, \alpha, \rho, \sigma, \tau)}(\xi) $ from the sector $\{ \arg \xi \in (\pi/4, 3\pi/4) \}$ to $\{ \arg \xi \in (-\pi, \pi) \}$. Then, we define the kernel for $\xi, \eta \in \realR_+$
   \begin{equation} \label{eq:limit_kernel_mult}
  K^{(\gamma, \alpha, \rho, \sigma, \tau)}(\xi, \eta):=\frac{e^{\frac{2}{3}\alpha \pi i}}{2\pi i (\xi-\eta)}
\begin{pmatrix}
      1 & 0 & 0
    \end{pmatrix}
\tilde{\Phi}^{(\gamma, \alpha, \rho, \sigma, \tau)}(\eta)^{-1}\tilde{\Phi}^{(\gamma, \alpha, \rho, \sigma, \tau)}(\xi)
    \begin{pmatrix}
      0 \\ 1 \\ 0
    \end{pmatrix}.
  \end{equation}
Alternatively, the kernel can be expressed equivalently in terms of the entries of $\Phi^{(\gamma, \alpha, \rho, \sigma, \tau)}$ as
\begin{equation} \label{eq:limit_K^mult_entries} 
       K^{(\gamma, \alpha, \rho, \sigma, \tau)}(\xi, \eta) := \frac{1}{2\pi i( \xi-\eta)} \sum^{2}_{j = 0} \phi^{(\gamma, \alpha, \rho, \sigma, \tau)}_j(\xi) \psi^{(\gamma, \alpha, \rho, \sigma, \tau)}_j(\eta),
    \end{equation}
    where  
    \begin{align}
      \phi^{(\gamma, \alpha, \rho, \sigma, \tau)}_j(\xi) = {}& \xi^{\frac{\alpha}{3} - \frac{1}{2}} (e^{-\frac{\alpha}{3} \pi i} \Phi^{(\gamma, \alpha, \tau, \sigma,s)}_{j0,+}(\xi) + e^{\frac{\alpha}{3} \pi i} \Phi^{(\gamma, \alpha, \rho, \sigma, \tau)}_{j1,+}(\xi)), \\
      \psi^{(\gamma, \alpha, \rho, \sigma, \tau)}_j(\eta) = {}& \eta^{-\frac{\alpha}{3} + \frac{1}{2}} (e^{\frac{\alpha}{3} \pi i} (\Phi^{(\gamma, \alpha, \rho, \sigma, \tau)})^{-1}_{0j,+}(\eta) - e^{-\frac{\alpha}{3} \pi i} (\Phi^{(\gamma, \alpha, \rho, \sigma, \tau)})^{-1}_{1j,+}(\eta)),
    \end{align}
for $\xi, \eta\in(0,\infty)$ and $f_{+}(\xi)$ denotes the limiting value of $f(z)$ as $z\to \xi$ with $z\in \compC_+$.  

We obtain the double scaling limits of the correlation kernels $K^{\ext}_{2n}$ of the model \eqref{eq:pdf_ext_source_intro} for the critical parameters $(t, a)$ along the dashed curve defined by \eqref{eq:dashed_curve} where the Pearcey critical point arises and at the critical parameters $(t, a) = (3^{1/2}, 3^{-3/4})$ where the multi-critical point arises. When the parameters $(t, a)$ tend to these critical parameters at an appropriate speed as $n\to\infty$, we obtain a new family of limiting kernels $K^{(\gamma, \alpha, \rho, \sigma, \tau)}$ associated with the Boussinesq hierarchy.

\begin{thm} \label{thm:main}
  Let $K^{\ext}_{2n}$ be the correlation kernel of the eigenvalues of the random matrix model \eqref{eq:pdf_ext_source_intro} 
  defined equivalently by \eqref{eq:K_ext} and \eqref{eq:kernel_for_K^ext}, and $\Khat^{W, \gamma}_n$ be the symmetrized kernel function for the generalized Muttalib-Borodin ensemble whose precise definition is given in \eqref{eq:defn_Khat}. Here we take $W(x) = \lvert x \rvert^{\alpha} e^{-2n(V(x) - ax)}$ as in \eqref{eq:W_specified}, $\alpha, a, V$ the same as in \eqref{eq:pdf_ext_source_intro} and $\gamma \in [-1, 1]$.
  \begin{enumerate}
  \item (Limiting kernels near the Pearcey critical point)
    Let $t$ and $a$ be parametrized by $c \in (0, 3^{-1/4})$ and $\tau \in \realR$ as
    \begin{align} \label{eq:def_t_c_Pearcey}
      t = {}& 6c^2 - \frac{1}{c^2} - \frac{2}{3c^2} \sqrt{\frac{1 - 3c^4}{2n}} \tau, & a = {}&- 2c^3+ \frac{1}{c}  + \frac{4}{3c} \sqrt{\frac{1 - 3c^4}{2n}} \tau,
    \end{align}
    and define the constant 
    \begin{equation} \label{eq:c_1_Pearcey}
      c_1 = 2c^{-\frac{4}{3}} (1 - 3c^4).
    \end{equation}
    It then holds uniformly for $\xi, \eta$ in compact subsets of $\mathbb{R}\setminus\{0\}$ that
    \begin{equation} \label{eq:limit_K^mult}
      \lim_{n \to \infty} \frac{1}{(c_1 n)^{3/4}} \Khat^{W, \gamma}_n(\frac{\xi}{(c_1 n)^{3/4}}, \frac{\eta}{(c_1n)^{3/4}}) = 2\xi K^{(\gamma, \alpha, 0, \frac{3}{4},\tau)}(\xi^2, \eta^2), 
    \end{equation}
    and 
     \begin{multline} \label{eq:limit_K^ext}
       \lim_{n \to \infty} \frac{1}{(c_1 n)^{3/4}}   K^{\ext}_{2n}(\frac{\xi}{(c_1 n)^{3/4}}, \frac{\eta}{(c_1 n)^{3/4}}) = K^{(\Pe, \alpha)}(\xi, \eta; 2\tau), \\
       K^{(\Pe, \alpha)}(\xi, \eta; 2\tau) = \xi K^{(1, \alpha, 0, \frac{3}{4},\tau)}(\xi^2, \eta^2)+  \eta K^{(-1, \alpha+1, 0, \frac{3}{4},\tau)}(\xi^2, \eta^2).
        \end{multline}
  \item (Limiting kernels near the multi-critical point)
    Let $t$ and $a$ be parametrized by $c$ and $\tau \in \realR$ as
    \begin{align} \label{eq:def_t_c_critical}
      t = {}& 6c^2 - \frac{1}{c^2} - \frac{n^{-\frac{3}{4}}}{3c^2} \tau, & a = {}& - 2c^3 +\frac{1}{c} + \frac{2n^{-\frac{3}{4}}}{3c} \tau,
    \end{align}
    where $c$ tends to $3^{-1/4}$ at the speed
    \begin{equation} \label{eq:c_multi_crit}
      c = 3^{-\frac{1}{4}} \left( 1 - \frac{\sigma}{\sqrt{n}} \right)^{\frac{1}{4}}, \quad \sigma\in\mathbb{R}.
    \end{equation}
    It then holds uniformly for $\xi, \eta$ in compact subsets of $\mathbb{R}\setminus\{0\}$ that
    \begin{equation} \label{eq:limit_K^mult_crit}
      \lim_{n \to \infty} \frac{e^{\frac{2}{3} n^{1/4}(\eta^2 - \xi^2)} }{ (3^{\frac{2}{3}} n)^{3/8}}    \Khat^{W, \gamma}_{n}(\frac{\xi}{(3^{\frac{2}{3}} n)^{3/8}}, \frac{\eta}{(3^{\frac{2}{3}} n)^{3/8}})= 2\xi K^{(\gamma, \alpha, -\frac{1}{4}, \frac{3}{2}\sigma, \tau)}(\xi^2, \eta^2),
    \end{equation}
    and
     \begin{multline} \label{eq:limit_K^ext_crit}
       \lim_{n \to \infty} \frac{ e^{\frac{2}{3} n^{1/4}(\eta^2 - \xi^2)} }{(3^{\frac{2}{3}}  n)^{3/8}}   K^{\ext}_{2n}(\frac{\xi}{(3^{\frac{2}{3}}  n)^{3/8}}, \frac{\eta}{(3^{\frac{2}{3}}  n)^{3/8}})= K^{(\tri, \alpha)}(\xi, \eta; \sigma, \tau), \\
       K^{(\tri, \alpha)}(\xi, \eta; \sigma, \tau) = \xi K^{(1, \alpha,  -\frac{1}{4}, \frac{3}{2}\sigma, \tau)}(\xi^2, \eta^2)+  \eta K^{(-1, \alpha+1,  -\frac{1}{4}, \frac{3}{2}\sigma, \tau)}(\xi^2, \eta^2). 
        \end{multline}
  \end{enumerate}
\end{thm}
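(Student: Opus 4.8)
The plan is to reduce everything to the asymptotics of the symmetrized Muttalib--Borodin-type kernels $\Khat^{W,\gamma}_n$ and then run a Deift--Zhou steepest-descent analysis whose origin parametrix is built from the Boussinesq model RH Problem~\ref{RHP:model}. First, by the representation \eqref{eq:K_ext} of Proposition~\ref{prop:ext_in_biorth}, $K^{\ext}_{2n}$ is a combination of $\Khat^{W,1}_n$ and $\Khat^{W^{\odd},-1}_n$; since $W^{\odd} = \lvert x\rvert W$ is exactly the weight \eqref{eq:W_specified} with $\alpha$ replaced by $\alpha+1$, the statements \eqref{eq:limit_K^ext} and \eqref{eq:limit_K^ext_crit} follow from \eqref{eq:limit_K^mult} and \eqref{eq:limit_K^mult_crit} applied with $(\gamma,\alpha)=(1,\alpha)$ and $(\gamma,\alpha)=(-1,\alpha+1)$: under the rescaling $x=\xi/(c_1n)^{3/4}$, $y=\eta/(c_1n)^{3/4}$ (resp.\ with the $(3^{2/3}n)^{3/8}$ scaling) the prefactor $y/(2x)$ becomes $\eta/(2\xi)$, and the exponential factor $e^{\frac23 n^{1/4}(\eta^2-\xi^2)}$ is common to both terms. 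So it suffices to prove \eqref{eq:limit_K^mult} and \eqref{eq:limit_K^mult_crit}.

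For those, I would use the Riemann--Hilbert problem $Y$ for the biorthogonal polynomials $p_n^{W,\gamma}$, $q_n^{W,\gamma}$ set up in Section~\ref{sec:bi_poly_multiple} (a vector, equivalently $3\times3$ matrix, problem, of Muttalib--Borodin type with parameter $\theta=2$) together with the Christoffel--Darboux-type formula expressing $\Khat^{W,\gamma}_n$ through $Y$. The main body is the steepest-descent chain $Y\to T\to S\to R$: a first transformation $Y\to T$ normalizing at infinity by the $g$-functions of the vector equilibrium measure of \cite{Bleher-Delvaux-Kuijlaars10} (equivalently the $\lambda$-functions on the associated cubic spectral curve), which along the dashed curve \eqref{eq:dashed_curve} develops a $1/3$-cusp at the origin and, at $(t,a)=(3^{1/2},3^{-3/4})$, a $5/3$-singularity; a lens-opening transformation $T\to S$; and the construction of parametrices, namely an outer parametrix from the spectral curve, Airy parametrices at the regular soft edges, and, crucially, a local parametrix in a fixed disk around $0$.

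The heart of the proof is the construction of this origin parametrix from the model solution $\Phi^{(\gamma,\alpha,\rho,\sigma,\tau)}$. Passing to the local variable $\xi=(c_1n)^{3/4}x$ (resp.\ $\xi=(3^{2/3}n)^{3/8}x$) and using the precise parametrizations \eqref{eq:def_t_c_Pearcey}--\eqref{eq:c_1_Pearcey} (resp.\ \eqref{eq:def_t_c_critical}--\eqref{eq:c_multi_crit}), one checks that, after a conformal change of coordinate straightening the cusp, the jump contour and jump matrices of $S$ inside the disk converge to those of RH Problem~\ref{RHP:model} with $(\rho,\sigma)=(0,\tfrac34)$ in the Pearcey case and $(\rho,\sigma)=(-\tfrac14,\tfrac32\sigma)$ in the multi-critical case, the parameter $\tau$ being carried by the $O(n^{-1/2})$ perturbation of $(t,a)$; then, using the prescribed large-$\xi$ behavior of $\Phi^{(\gamma,\alpha,\rho,\sigma,\tau)}$, one verifies the matching with the outer parametrix on the boundary circle up to $I+O(n^{-\delta})$. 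This step requires the solvability of RH Problem~\ref{RHP:model} for all real $\tau$ (and real $\sigma$), which is established separately in Section~\ref{subsec:model_RHP} (e.g.\ by a vanishing-lemma argument exploiting the Boussinesq-hierarchy structure). I expect this step to be the main obstacle: in the multi-critical case the parametrix must be built from a genuinely two-parameter deformation of the model (the second member of the hierarchy), the finer scaling \eqref{eq:c_multi_crit} of $c$ toward $3^{-1/4}$ being forced by the requirement that the $\sigma$-deformation not collapse in the limit (note $c_1\to0$ there), and the bookkeeping relating the cusp coordinate on the spectral curve to the model variable $\xi$ is delicate.

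Finally, $R$ solves a small-norm RH problem, hence $R=I+O(n^{-\delta})$ uniformly on $\compC$ and uniformly for parameters in compact sets. Unravelling $Y\to T\to S\to R$ and substituting into the Christoffel--Darboux formula, the contributions of $R$ and of the outer parametrix drop out in the limit and one is left precisely with the $\Phi^{(\gamma,\alpha,\rho,\sigma,\tau)}$-kernel \eqref{eq:limit_K^mult_entries}; the $x\mapsto x^2=\xi^2$, $y\mapsto y^2=\eta^2$ change inherent in the biorthogonal pairing produces the arguments $(\xi^2,\eta^2)$ and the Jacobian factor $2\xi$ in \eqref{eq:limit_K^mult} and \eqref{eq:limit_K^mult_crit}, while the conjugation by $e^{\frac23 n^{1/4}(\eta^2-\xi^2)}$ in the multi-critical case removes a residual exponential coming from the difference of the $\lambda$-functions at $(t,a)$ and at the exact multi-critical parameters and, being of the form $f(\eta)/f(\xi)$, does not affect correlation functions. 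Combining this with the first paragraph gives \eqref{eq:limit_K^ext} and \eqref{eq:limit_K^ext_crit}.
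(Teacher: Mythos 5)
Your broad strategy aligns with the paper's: the reduction of \eqref{eq:limit_K^ext} and \eqref{eq:limit_K^ext_crit} to \eqref{eq:limit_K^mult} and \eqref{eq:limit_K^mult_crit} via \eqref{eq:K_ext}, with $W^{\odd}=\lvert x\rvert W$ shifting $\alpha\mapsto\alpha+1$, is exactly right, as is the identification of the origin parametrix built from RH Problem~\ref{RHP:model} as the heart of the matter, and its matching and solvability via a vanishing lemma. The gap is in the endgame, and it is not merely notational. The RH problem for $Y^{(n+k)}$ set up in Section~\ref{sec:RH} is not a $3\times 3$ matrix RHP on $\compC$ to which a standard small-norm argument yielding ``$R = I + \bigO(n^{-\delta})$ on $\compC$'' applies. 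It is a \emph{hybrid} $1\times 2$ vector problem: $Y^{(n+k)}_1$ is analytic on $\compC$, while $Y^{(n+k)}_2$ is analytic only on the right half-plane $\halfH$, with a boundary identification along the imaginary axis (item~\ref{enu:RHP:defn_Y_5} of RH problem~\ref{RHP:defn_Y}). Consequently the final transformation does not produce a matrix-valued $R$ on $\compC$; instead the paper wraps the analysis around the spectral curve via the Joukowski-type map $J_c$ of Section~\ref{subsec:G_and_tildeG}, producing a \emph{scalar} function $\tR^{(n+k,j)}(s)$ satisfying a \emph{shifted} RH problem (RH problem~\ref{rhp:tR}), and the small-norm output is the scalar estimate $\tR^{(n+k,k)}(s)=1+\bigO(n^{-C_R})$ of Lemma~\ref{lem:tRest}. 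This is the Claeys--Romano/Wang--Zhang machinery for MB-type ensembles, genuinely different from the matrix equilibrium-measure analysis of \cite{Bleher-Delvaux-Kuijlaars10} that your outline points to (and the $\gfn,\gfntilde$ here are the MB $g$-functions of \cite{Claeys-Romano14}, not the vector-equilibrium $\lambda$-functions).

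A second gap is the assembly. To unwind the Christoffel--Darboux formula \eqref{eq: kernel_K_mult}, one needs asymptotics of the full $3\times 3$ matrix $X^{(n)}(z)$, and the paper recovers $X^{(n)}$ from \emph{three} vector solutions $Y^{(n-1)}$, $\hat{Y}^{(n-1)}$, $\doublehat{Y}^{(n-1)}$ (Section~\ref{sec:proof_main}), whose shifted scalar avatars $\tR$, $\hat{\tR}$, $\doublehat{\tR}$ each need their own shifted RHPs and small-norm estimates (Lemmas~\ref{lem:tRest}--\ref{lem:tRest_doublehat}). The specific linear combinations $\hat Y$, $\doublehat Y$, governed by the constants $c^{(n+k,j)}_{10}$, $c^{(n+k,j)}_{20}$, $c^{(n+k,j)}_{21}$, are chosen precisely so that the corresponding shifted RHPs remain small-norm with normalization $1$ at the base point $s=-1$. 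Your phrase ``unravelling $Y\to T\to S\to R$ and substituting into the Christoffel--Darboux formula'' elides this: a single vector solution does not determine the $3\times 3$ kernel. These are not cosmetic points — they are where the vector nature of the MB RHP forces the argument to depart from the familiar matrix-RHP template you invoke.
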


We remark that the convergence in \eqref{eq:limit_K^mult}, \eqref{eq:limit_K^ext}, \eqref{eq:limit_K^mult_crit} and \eqref{eq:limit_K^ext_crit} excludes the point $0$, because the kernels may have a mild singularity if $\xi$ or $\eta = 0$. This does not affect the convergence as correlation kernels; see \cite[Theorem 1.2 and Remark 1.4]{Claeys-Kuijlaars-Vanlessen08} for an analogous situation. The limiting kernels on the right-hand side of \eqref{eq:limit_K^ext} and in \eqref{eq:limit_K^ext_crit} have alternative expressions in Proposition \ref{pro:sum_kernel_Psi}. 

For $(t, a)$ along the dashed curve defined by \eqref{eq:dashed_curve} as shown in Figure \ref{fig:phase_diagram}, the Pearcey critical point arises. We obtain new limiting kernels in \eqref{eq:limit_K^mult} and \eqref{eq:limit_K^ext}, which are constructed out of functions associated with the Boussinesq equation as shown in Section \ref{subsec:model_RHP}. When the parameters $(\gamma, \alpha, \rho, \sigma, \tau)=(1, 0, 0, 3/4, \tau)$, the solution of the Boussinesq equation applied here is elementary and the associated model RH problem can be solved in terms of Pearcey integrals like \eqref{eq:pearcey_integral}; see Remarks \ref{rmk:poly_solution}, \ref{rem:Pearcey_red} and \ref{rem:Pearcey_red_2} below. Therefore, when $\alpha=0$, the limit \eqref{eq:limit_K^ext} reduces to 
\begin{equation} \label{eq:limit_K^ext_Pearcey}
  K^{(\Pe, 0)}(\xi, \eta; 2\tau) = K^{(\Pe)}(\eta,\xi;2\tau),
\end{equation}
as seen from \eqref{eq:Peaecey_red}, where $K^{(\Pe)}(\eta,\xi; \tau)$ is the classical Pearcey kernel defined in \eqref{eq:Pearcey_kernel}. Similarly, the limiting kernel \eqref{eq:limit_K^mult} reduces to a folded Pearcey kernel defined on the positive real line as given in \eqref{eq:Hard_Pearcey_kernel_1}. Our result confirms the universality of the Pearcey limit near the cusp singularity in the matrix model \eqref{eq:pdf_ext_source_intro} with $\alpha=0$, as expected in \cite[The Pearcey Transition paragraph of Section 7]{Bleher-Delvaux-Kuijlaars10} and \cite[Section 5.2]{Kuijlaars10}.

Notably, the result for $\alpha > -1$ at the Pearcey critical point is not a trivial extension of the Pearcey limit for $\alpha = 0$. One would expect that the Pearcey kernel admits an $\alpha$-deformation, in analogy with the deformation of the Airy kernel into the Painlev\'{e} XXXIV kernel considered in \cite{Its-Kuijlaars-Ostensson08}, \cite{Its-Kuijlaars-Ostensson09}. However, such an $\alpha$-deformation of the Pearcey kernel has not previously appeared in the literature. One contribution of our paper is that we find a natural $\alpha$-deformation of the Pearcey kernel given by $K^{(\Pe, \alpha)}(\xi, \eta; 2\tau)$. Furthermore, we show that $K^{(\Pe, \alpha)}(\xi, \eta; 2\tau)$ possesses an integrable structure associated with the Boussinesq equation, as given in Section \ref{subsec:model_RHP}. Since the Pearcey kernel occurs in various situations, we expect the $\alpha$-deformed kernel $K^{(\Pe, \alpha)}(\xi, \eta; 2\tau)$ will also arise as a universal limit in other models. 

At the multi-critical point where the parameter $(t, a) = (3^{1/2}, 3^{-3/4})$ is the intersection of the dashed curve and the two solid curves in Figure \ref{fig:phase_diagram}, the limiting density of the eigenvalues vanishes like $\lvert x \rvert^{5/3}$. To the best of the authors' knowledge, no limiting correlation kernel in the random matrix literature has been found for this kind of critical point, despite the problem having been raised many years ago in \cite[Section 5.2]{Kuijlaars10}. Another contribution of our paper is that we obtain the new limiting kernels $K^{(\gamma, \alpha, -\frac{1}{4}, \frac{3}{2}\sigma, \tau)}(\xi^2, \eta^2)$ and $K^{(\tri, \alpha)}(\xi, \eta; \sigma, \tau)$ in \eqref{eq:limit_K^mult_crit} and \eqref{eq:limit_K^ext_crit} near the multi-critical point. Furthermore, we show that they possess an integrable structure associated with the second member of the Boussinesq hierarchy, as shown in Section \ref{subsec:model_RHP}.  

As mentioned before, we focus on the critical behavior near the \emph{Pearcey critical} point and the \emph{multi-critical} point of the matrix model \eqref{eq:pdf_ext_source_intro}. However, we mention that for $(t, a)$ in regions labelled as Cases I, II and III in Figure \ref{fig:phase_diagram}, the limiting density is regular and the local limits of $K^{\ext}_{2n}$ are expected to be the universal sine kernel in the bulk and the universal Airy kernel at the edges, as claimed in \cite[End of Section 2.5 and Section 5.7]{Bleher-Delvaux-Kuijlaars10}. On the solid curve between the regions labelled Case II and Case III in Figure \ref{fig:phase_diagram}, the limiting density of eigenvalues vanishes like $x^2$ at $0$ and the local limit of $K^{\ext}_{2n}$ for $\alpha=0$ is expected to be the universal Painlev\'e II kernel that occurs in the unitary matrix model \cite{Bleher-Its99}, \cite{Claeys-Kuijlaars06}, as claimed in \cite[The Painlev\'e Transition paragraph of Section 7]{Bleher-Delvaux-Kuijlaars10} and \cite[Section 5.2]{Kuijlaars10}. The transitional behavior of the model as $(t, a)$ on the solid curve between the regions labelled Case I and Case II is more subtle; see \cite[The Painlev\'e Transition paragraph of Section 7]{Bleher-Delvaux-Kuijlaars10}. We expect that our approach also works in other regimes, and especially proves that on the solid curve between Case II and Case III, the limiting kernel is the Painlev\'{e} II kernel with $\alpha$ parameter like in \cite{Claeys-Kuijlaars-Vanlessen08}.

In this paper, we only consider the quartic potential function $V$. Our approach is based on a vector Riemann-Hilbert problem and it applies for more general even, real analytic $V$. We expect our results to be universal, in the sense that for a more general even, real analytic potential $V$, if the limiting density of the eigenvalues vanishes like $\lvert x \rvert^{1/3}$ or $\lvert x \rvert^{5/3}$ at $0$, the limiting correlation kernel at $0$ should be given by $K^{(\Pe, \alpha)}$ or $K^{(\tri, \alpha)}$. Away from the origin, such vanishing exponents as $1/3$ or $5/3$ are not expected to occur in the limiting eigenvalue density.

According to \cite{Martinez_Finkelshtein-Silva21}, for the Hermitian random matrix model with external source that has only two distinct eigenvalues, besides the singularities that occur in random matrix models without external source, the only possible new singularities are featured with vanishing orders $1/3$ and $5/3$. Hence, we conjecture that our limiting kernels complete the classification of the singular behaviour of the Hermitian random matrix model with external source that has only two distinct eigenvalues. We remark that the method in our paper cannot be extended straightforwardly to the case when $V$ is not even or the numbers of the external eigenvalues of distinct values are unequal, so the proof of this conjecture needs alternative methods.

\subsection{Model Riemann-Hilbert problem} \label{subsec:model_RHP}

We consider the $3\times 3$ matrix-valued function $\Phi^{(\gamma, \alpha, \rho, \sigma, \tau)}(\xi)$ ($\Phi(\xi)$ for short) that satisfies the following model RH problem:
\begin{RHP} \label{RHP:model}
  $\Phi^{(\gamma, \alpha, \rho, \sigma, \tau)}(\xi)$ ($\Phi(\xi)$ for short) is a $3 \times 3$ vector-valued function that is analytic on $\compC \setminus \Gamma_{\Phi}$ and continuous up to the boundary, where $\Gamma_{\Phi} = \realR_+ \cup \realR_- \cup \{ \arg \xi = \pm \pi/4 \} \cup \{ \arg \xi = \pm 3\pi/4 \}$, with all six rays oriented outward from $0$. It satisfies the following properties.
  \begin{enumerate}
   \item \label{enu:RHP:model_1}
    $\Phi^{(\gamma, \alpha, \rho, \sigma, \tau)}_+(\xi) = \Phi^{(\gamma, \alpha, \rho, \sigma, \tau)}_-(\xi) J_{\Phi}(\xi)$, where 
    \begin{equation} \label{eq:RHP:model_1}
      J_{\Phi}(\xi) = \left\{
        \begin{aligned}
          & \begin{pmatrix}
            1 & \pm e^{\mp \frac{2\alpha}{3} \pi i} & 0 \\
            0 & 1 & 0 \\
            0 & 0 & 1
          \end{pmatrix}, &
                           \arg \xi = {}& \pm \frac{\pi}{4}, \\
          & \begin{pmatrix}
            1 & 0 & 0 \\
            0 & 1 & 0 \\
            0 & \mp \gamma e^{\mp \frac{\alpha}{3} \pi i} & 1
          \end{pmatrix}, &
                           \arg \xi = {}& \pm \frac{3\pi}{4},  \\
          & \begin{pmatrix}
            0 & 1 & 0 \\
            1 & 0 & 0 \\
            0 & 0 & 1
          \end{pmatrix}, &
                           \xi \in {}& \realR_+, \\
                           &
                                  \begin{pmatrix}
                                    1 & 0 & 0 \\
                                    0 & 0 & 1 \\
                                    0 & 1 & 0
                                  \end{pmatrix}, &
                                  \xi \in {}& \realR_-.
        \end{aligned}
      \right.
    \end{equation}
  \item \label{enu:RHP:model_2}
    $\Phi^{(\gamma, \alpha, \rho, \sigma, \tau)}(\xi)$ has the following boundary condition as $\xi \to \infty$
    \begin{equation} \label{eq:expansion_at_infty_Phi}
      \Phi^{(\gamma, \alpha, \rho, \sigma, \tau)}(\xi) = (I + \bigO(\xi^{-1})) \Upsilon(\xi) \Omega_{\pm} e^{-\Theta^{(\rho, \sigma, \tau)}(\xi)},
    \end{equation}
    where
    \begin{align} \label{def:Lpm}
      \Upsilon(\xi) = {}& \diag(1, \omega \xi^{\frac{1}{3}}, \omega^2 \xi^{\frac{2}{3}}),
      & \Omega_+ = {}&
        \begin{pmatrix}
          1 & 1 & 1 \\
          1 & \omega & \omega^2 \\
          1 & \omega^2 & \omega
        \end{pmatrix},
      & \Omega_- = {}&
        \begin{pmatrix}
          1 & 1 & 1 \\
          \omega & 1 & \omega^2 \\
          \omega^2 & 1 & \omega
        \end{pmatrix}
    \end{align}
with $\omega = e^{2\pi i/3}$, and
    \begin{multline} \label{eq:Theta}
      \Theta^{(\rho, \sigma, \tau)}(\xi) = \\
      \begin{cases}
        \diag \left( \rho \omega \xi^{\frac{4}{3}} + \sigma \omega^2 \xi^{\frac{2}{3}} + \tau \omega \xi^{\frac{1}{3}}, \rho \omega^2 \xi^{\frac{4}{3}} + \sigma\omega \xi^{\frac{2}{3}} + \tau  \omega^2 \xi^{\frac{1}{3}}, \rho \xi^{\frac{4}{3}} + \sigma \xi^{\frac{2}{3}} + \tau  \xi^{\frac{1}{3}} \right),
        & \xi \in \compC_+, \\
        \diag \left( \rho \omega^2 \xi^{\frac{4}{3}} + \sigma \omega \xi^{\frac{2}{3}} + \tau \omega^2 \xi^{\frac{1}{3}}, \rho \omega \xi^{\frac{4}{3}} + \sigma \omega^2 \xi^{\frac{2}{3}} + \tau  \omega \xi^{\frac{1}{3}}, \rho \xi^{\frac{4}{3}} + \sigma\xi^{\frac{2}{3}} + \tau  \xi^{\frac{1}{3}} \right),
        & \xi \in \compC_-.
      \end{cases}
    \end{multline}
    
      \item \label{enu:RHP:model_3}
    $\Phi^{(\gamma, \alpha, \rho, \sigma, \tau)}(\xi)$ has the following boundary condition as $\xi \to 0$
    \begin{equation} \label{eq:limit_Phi_at_0_generic}
      \Phi^{(\gamma, \alpha, \rho, \sigma, \tau)}(\xi) = N^{(\gamma, \alpha, \rho, \sigma, \tau)}(\xi) \diag \left( \xi^{\frac{1}{2} - \frac{\alpha}{3}}, \xi^{\frac{1}{2} + \frac{\alpha}{6}}, \xi^{\frac{\alpha}{6}} \right) E,
    \end{equation}
    where $N^{(\gamma, \alpha, \rho, \sigma, \tau)}(\xi)$ is analytic in a neighbourhood of $0$, and $E$ is a piecewise constant function on $\compC \setminus \Gamma_{\Phi}$ (except for possibly one $\log \xi$ entry). If $\alpha$ is not an integer, $E$ is a piecewise constant function on $\compC \setminus \Gamma_{\Phi}$, and in the sector $\arg \xi \in (\pi/4, 3\pi/4)$, we have
    \begin{equation} \label{eq:generic_E}
      E =
      \begin{pmatrix}
        1 & 0 & 0 \\
        \frac{e^{\frac{2\alpha}{3} \pi i}}{1 - e^{\alpha \pi i}} (1 - \gamma) & 1 - \gamma & -e^{\frac{\alpha}{3} \pi i} \\
        \frac{e^{\frac{2\alpha}{3} \pi i}}{1 + e^{\alpha \pi i}} (1 + \gamma) & 1 + \gamma & e^{\frac{\alpha}{3} \pi i}
      \end{pmatrix}.
    \end{equation}
    If $\alpha$ is an odd integer, in the sector $\arg \xi \in (\pi/4, 3\pi/4)$, we have
    \begin{equation}
      E = 
      \begin{pmatrix}
        1 & 0 & 0 \\
        \frac{e^{\frac{2\alpha}{3} \pi i}}{1 - e^{\alpha \pi i}} (1 - \gamma) & 1 - \gamma & -e^{\frac{\alpha}{3} \pi i} \\
        -\frac{e^{\frac{2\alpha}{3} \pi i}}{2\pi i} (1 + \gamma) \log \xi & 1 + \gamma & e^{\frac{\alpha}{3} \pi i}
      \end{pmatrix}.
    \end{equation}
    Especially, if $\alpha$ is an odd number and $\gamma = -1$, $E$ is a piecewise constant function on $\compC \setminus \Gamma_{\Phi}$, and in the sector $\arg \xi \in (\pi/4, 3\pi/4)$, we have
    \begin{equation} \label{eq:E_odd}
      E =
      \begin{pmatrix}
        1 & 0 & 0 \\
        e^{\frac{2\alpha}{3} \pi i} & 2 & -e^{\frac{\alpha}{3} \pi i} \\
        0 & 0 & e^{\frac{\alpha}{3} \pi i}
      \end{pmatrix}.
    \end{equation}
    If $\alpha$ is an even integer, in the sector $\arg \xi \in (\pi/4, 3\pi/4)$, we have
    \begin{equation}
      E = 
      \begin{pmatrix}
        1 & 0 & 0 \\
        -\frac{e^{\frac{2\alpha}{3} \pi i}}{2\pi i} (1 - \gamma) \log \xi & 1 - \gamma & -e^{\frac{\alpha}{3} \pi i} \\
        \frac{e^{\frac{2\alpha}{3} \pi i}}{1 + e^{\alpha \pi i}} (1 + \gamma) & 1 + \gamma & e^{\frac{\alpha}{3} \pi i}
      \end{pmatrix}.
    \end{equation}
    Especially, if $\alpha$ is an even number and $\gamma = 1$, $E$ is a piecewise constant function on $\compC \setminus \Gamma_{\Phi}$, and in the sector $\arg \xi \in (\pi/4, 3\pi/4)$,
    \begin{equation} \label{eq:E}
      E =
      \begin{pmatrix}
        1 & 0 & 0 \\
        0 & 0 & -e^{\frac{\alpha}{3} \pi i} \\
        e^{\frac{2\alpha}{3} \pi i} & 2 & e^{\frac{\alpha}{3} \pi i}
      \end{pmatrix}.
    \end{equation}
  \end{enumerate}
\end{RHP}

From the behavior of $\Phi^{(\gamma, \alpha, \rho, \sigma, \tau)}(\xi)$ at $\infty$ and $0$ and its jump on $\Gamma_{\Phi}$, we have
\begin{equation}\label{eq:Phi_det}
\det \Phi^{(\gamma, \alpha, \rho, \sigma, \tau)}(\xi)=\mp 3\sqrt{3} i \xi, 
\end{equation}
for $\xi \in \compC_{\pm}$. Therefore $\tilde{\Phi}^{(\gamma, \alpha, \rho, \sigma, \tau)}(\xi)^{-1} $ is well defined.  

\begin{rmk}\label{rem:Pearcey_red}
  When $(\gamma,\alpha)=(1,0)$ or $(-1,1)$, the RH problem for $\Phi^{(\gamma, \alpha, \rho, \sigma, \tau)}(\xi)$ is equivalent to the hard-edge Pearcey parametrix $\Psi^{ (\HPe, \alpha') }$, with the corresponding parameter $\alpha'=-1/2$ and $\alpha'=1/2$, respectively (\cite{Kuijlaars-Martinez_Finkelshtein-Wielonsky11} and \cite[RH problem 3.1]{Dai-Xu-Zhang23}), after a gauge transformation. 
  Specifically, we have 
  \begin{equation}\label{eq:Psi_HPeaecey}
    \Phi^{(1, 0, 0, \frac{3}{2}, \tau)}(\xi)=\sqrt{3}i\xi^{\frac{1}{2}}C_{\Psi}(\tau)^{T} (\Psi^{(\HPe, -\frac{1}{2})} (\xi, \tau)^{-1})^T
    \begin{cases}
      \diag(i,-i,1), & \arg(\xi)\in(0,\pi),\\
      \diag(-i,-i,1), & \arg(\xi)\in(-\pi,0),\\
    \end{cases}
  \end{equation}
  and 
     \begin{equation}\label{eq:Psi_HPeaecey_2}
    \Phi^{(-1, 1, 0, \frac{3}{2}, \tau)}(\xi)=\sqrt{3}i\xi^{\frac{1}{6}}C_{\Psi}(\tau)^{T} (\Psi^{(\HPe, \frac{1}{2})} (\xi, \tau)^{-1})^T
    \begin{cases}
       \diag(-e^{-\frac{1}{6}\pi i},-e^{\frac{1}{6}\pi i},1), & \arg(\xi)\in(0,\pi),\\
   \diag(-e^{\frac{1}{6}\pi i},e^{-\frac{1}{6}\pi i},1), & \arg(\xi)\in(-\pi,0),\\
    \end{cases}
  \end{equation}
  where the matrix $C_{\Psi}(\tau)$ is an upper triangular matrix with all diagonal entries equal to $1$; see \cite[Equations (3.18)--(3.20)]{Dai-Xu-Zhang23}. For other values of $\alpha$, similar to \eqref{eq:Psi_HPeaecey}, $\Phi^{(\gamma, \alpha, \rho, \sigma, \tau)}(\xi)$ and $\xi^{\frac{1-\alpha'}{3}} (\Psi^{(\HPe,  \alpha')} (\xi, \tau)^{-1})^{T}$ have the same singularity at infinity. However, $\Phi^{(\gamma, \alpha, \rho, \sigma, \tau)}(\xi)$ has a regular singularity at the origin with characteristic exponents $\left(\frac{1}{2} - \frac{\alpha}{3}, \frac{1}{2} + \frac{\alpha}{6},\frac{\alpha}{6} \right)$, which are pairwise distinct whenever $\alpha\neq 0,1$. In contrast, $\xi^{\frac{\alpha'-1}{3}}(\Psi^{(\HPe,  \alpha')} (\xi, \tau)^{-1})^{T}$ has a regular singularity at the origin with characteristic exponents $\left( \frac{1-\alpha'}{3}, \frac{2\alpha'+1}{3}, \frac{1-\alpha'}{3}\right)$ among which two coincide. Consequently, when $\alpha\neq 0,1$, $\Phi^{(\gamma, \alpha, \rho, \sigma, \tau)}(\xi)$ is not gauge equivalent to $\xi^{\frac{\alpha'-1}{3}}(\Psi^{(\HPe,  \alpha')} (\xi, \tau)^{-1})^{T}$ and therefore not gauge equivalent to $\Psi^{(\HPe, \alpha')}(\xi) $.
\end{rmk}

We have the following result regarding the solvability of the model RH problem. 
\begin{thm} \label{thm:RHP_unique_solvability}
  Let $\alpha>-1$ and $\gamma\in[-1,1]$. Under either of the two conditions:
  \begin{enumerate}
  \item \label{enu:thm:RHP_unique_solvability:1}
    $\rho > 0$, $\sigma \in \realR$ and $\tau \in \realR$, or
  \item \label{enu:thm:RHP_unique_solvability:2}
    $\rho = 0$ and $\sigma > 0$ and $\tau \in \realR$,
  \end{enumerate}
  RH problem \ref{RHP:model} has a unique solution. Moreover, $\Phi^{(\gamma, \alpha, \rho, \sigma, \tau)}$ has the asymptotic expansion at $\infty$
  \begin{equation} \label{eq:asy_expansion_Phi}
    \Phi^{(\gamma, \alpha, \rho, \sigma, \tau)}(\xi) = \left( I + \frac{M}{\xi} + \bigO(\xi^{-2}) \right) \Upsilon(\xi) \Omega_{\pm} e^{-\Theta(\xi)}, \quad M = (m^{(\gamma, \alpha)}_{ij}(\rho, \sigma, \tau))^3_{i, j = 1},
  \end{equation}
  where $m^{(\gamma, \alpha)}_{ij}(\rho, \sigma, \tau)$ are analytic in $\rho, \sigma, \tau$.
\end{thm}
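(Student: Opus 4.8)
\emph{Proof proposal.} The plan is to obtain uniqueness by a Liouville-type argument, existence by the Fredholm theory for Cauchy-type singular integral operators together with a vanishing lemma, and then to read off the asymptotic expansion and the analytic dependence on $(\rho,\sigma,\tau)$ from the resolvent. For uniqueness, suppose $\Phi_1$ and $\Phi_2$ both solve RH problem \ref{RHP:model}. They share the jump $J_{\Phi}$ on every ray of $\Gamma_{\Phi}$, including $\realR_{\pm}$, so $R:=\Phi_1\Phi_2^{-1}$ (well defined off $\{0\}$ by \eqref{eq:Phi_det}) extends analytically across $\Gamma_{\Phi}$. Near $\xi=0$ the common factorization $\Phi_i=N_i(\xi)\diag(\xi^{1/2-\alpha/3},\xi^{1/2+\alpha/6},\xi^{\alpha/6})E$ from item \ref{enu:RHP:model_3} gives $R=N_1N_2^{-1}$; since $\det\diag(\xi^{1/2-\alpha/3},\xi^{1/2+\alpha/6},\xi^{\alpha/6})=\xi$ and $\det E$ is a nonzero constant, \eqref{eq:Phi_det} forces $\det N_i$ to be a nonzero constant, so each $N_i$ is invertible near $0$ and $R$ is analytic at $0$. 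At $\infty$, working in one half-plane at a time and using \eqref{eq:expansion_at_infty_Phi}, the explicit factors cancel and $R=(I+\bigO(\xi^{-1}))(I+\bigO(\xi^{-1}))^{-1}=I+\bigO(\xi^{-1})$. Hence $R$ is entire and tends to $I$, so $R\equiv I$.

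\emph{Existence --- reduction to a regular RH problem.} The jumps of $\Phi$ on the four diagonal rays and on $\realR_{\pm}$ are constant, so \ref{RHP:model} is not yet of the standard normalized type; one first performs a sequence of explicit, invertible transformations: conjugation by the leading behaviour $\Upsilon(\xi)\Omega_{\pm}e^{-\Theta^{(\rho,\sigma,\tau)}(\xi)}$ at $\infty$, conjugation by a local parametrix built from $\diag(\xi^{1/2-\alpha/3},\xi^{1/2+\alpha/6},\xi^{\alpha/6})E$ near $0$, and the usual lens-type deformations. This produces a function $\Psi$ normalized to $I$ at $\infty$ whose jump $V_{\Psi}$ decays exponentially along the diagonal rays and stays bounded (with bounded inverse) on $\realR_{\pm}$. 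It is here that the sign hypotheses enter: $\rho>0$, or $\rho=0$ together with $\sigma>0$ (the lower-order parameter $\tau$ playing no role), is exactly what makes the relevant differences of diagonal entries of $\Theta^{(\rho,\sigma,\tau)}$ have negative real part in the appropriate sectors, so that the off-diagonal entries $\pm e^{\mp 2\alpha\pi i/3}$ and $\mp\gamma e^{\mp\alpha\pi i/3}$ are damped; and for $\alpha>-1$ the characteristic exponents $\tfrac12-\tfrac{\alpha}{3}$, $\tfrac12+\tfrac{\alpha}{6}$, $\tfrac{\alpha}{6}$ give $\Psi$ at worst an integrable singularity at $0$. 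Thus $\Psi$ solves a regular, normalized RH problem; the associated singular integral operator $I-C_{V_{\Psi}}$ is Fredholm of index zero, and by the standard criterion the problem is solvable once the corresponding homogeneous problem --- RH problem \ref{RHP:model} with the $\bigO(\xi^{-1})$ in \eqref{eq:expansion_at_infty_Phi} replaced by a genuinely decaying factor --- has only the trivial solution.

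\emph{Existence --- the vanishing lemma.} Let $\Phi_0$ solve the homogeneous problem. Using the Schwarz symmetry of $\Gamma_{\Phi}$ and of $J_{\Phi}$ (the permutation blocks on $\realR_{\pm}$ are Hermitian and involutory, and the triangular blocks on the diagonal rays transform predictably under $\xi\mapsto\bar\xi$ and conjugate transpose), one forms the auxiliary matrix $A(\xi):=\Phi_0(\xi)\,H(\xi)\,\overline{\Phi_0(\bar\xi)}^{\,T}$ for $\xi\in\compC_+$, where $H$ is constant on each of the three sub-sectors of $\compC_+$ cut out by $\Gamma_{\Phi}$ and is chosen so that $A$ continues analytically across the rays $\arg\xi=\pi/4,3\pi/4$ to a single function analytic in $\compC_+$, continuous up to $\realR$, with $A(\xi)\to 0$ as $\xi\to\infty$ (again using the sign conditions) and only an integrable singularity at $0$. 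Cauchy's theorem then gives $\int_{\realR}A_+(\xi)\,d\xi=0$, and adding the conjugate transpose,
\[
  \int_{\realR}\bigl(A_+(\xi)+A_+(\xi)^*\bigr)\,d\xi=0 .
\]
On $\realR$ the integrand equals $\Phi_{0,+}(\xi)\,Q(\xi)\,\Phi_{0,+}(\xi)^*$ for an explicit Hermitian matrix $Q(\xi)$ determined by the jumps \eqref{eq:RHP:model_1}, and $\gamma\in[-1,1]$ together with $\alpha>-1$ makes $Q(\xi)$ positive semidefinite; hence $\Phi_{0,+}Q\Phi_{0,+}^*\equiv 0$ on $\realR$. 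Analysing $\ker Q(\xi)$ forces enough boundary entries of $\Phi_{0,+}$ to vanish that, propagating this through the jump relations and invoking the mild bound at $0$, $\Phi_0$ extends to an entire matrix function vanishing at $\infty$; therefore $\Phi_0\equiv 0$. This yields solvability, and with the uniqueness above, RH problem \ref{RHP:model} has a unique solution under either hypothesis.

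\emph{Asymptotic expansion, analyticity, and the main obstacle.} Since $V_{\Psi}-I$ is smooth and rapidly decaying, the solution $\Psi$ of the regular RH problem, recovered from $\mu=(I-C_{V_{\Psi}})^{-1}I$, satisfies $\Psi(\xi)=I+\tfrac{1}{2\pi i}\int\frac{\mu(s)(V_{\Psi}(s)-I)}{s-\xi}\,ds$; expanding $\frac{1}{s-\xi}$ in powers of $\xi^{-1}$ gives $\Psi(\xi)=I+\Psi_1\xi^{-1}+\bigO(\xi^{-2})$, and undoing the conjugation by $\Upsilon\Omega_{\pm}e^{-\Theta}$ produces \eqref{eq:asy_expansion_Phi} with $M=\Psi_1$. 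Because $(\rho,\sigma,\tau)$ enter $V_{\Psi}$ only through $\Theta^{(\rho,\sigma,\tau)}$ and do so analytically, and because the solvability just proved means $I-C_{V_{\Psi}}$ is invertible for every admissible triple, the analytic Fredholm theorem gives that $\mu$, hence $\Psi_1$, hence $M=(m^{(\gamma,\alpha)}_{ij}(\rho,\sigma,\tau))$, depends analytically on $(\rho,\sigma,\tau)$. The crux of the whole argument is the vanishing lemma, and within it the two genuinely problem-specific points: choosing the sector-wise constant matrices $H$ so that $A$ glues into a single analytic function across the mixture of triangular (diagonal rays) and permutation-type ($\realR_{\pm}$) jumps, and verifying that the resulting Hermitian form $Q(\xi)$ on $\realR$ is positive semidefinite exactly in the range $\gamma\in[-1,1]$, $\alpha>-1$; a secondary technical nuisance is the singularity at $\xi=0$, where the characteristic exponents $\tfrac12-\tfrac{\alpha}{3},\tfrac12+\tfrac{\alpha}{6},\tfrac{\alpha}{6}$ make $A$ only integrable rather than bounded, which must be checked to be compatible with the contour deformation.
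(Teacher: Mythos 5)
Your overall architecture — uniqueness by Liouville, existence via Fredholm theory for the associated singular integral operator plus a vanishing lemma, asymptotics and analyticity of $M$ from the resolvent — matches the paper's, and the Liouville and Fredholm parts are correct and standard. The substantive divergence, and the place where your proposal has a genuine gap, is the vanishing lemma.

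You propose the standard Zhou-style argument: build a matrix $A(\xi) = \Phi_0(\xi)\,H(\xi)\,\overline{\Phi_0(\bar\xi)}^{\,T}$ with sector-wise constant $H$, continue it across the diagonal rays, integrate $A_+ + A_+^*$ over $\realR$, and read off positivity from an Hermitian matrix $Q(\xi)$. But you leave $H$ and $Q$ entirely unspecified, and this is precisely where the approach is in doubt. On $\realR_\pm$ the jump $J_\Phi$ is a $3\times 3$ permutation matrix with spectrum $\{1,1,-1\}$, so $J_\Phi + J_\Phi^*$ is indefinite; for $Q = H_+ J_\Phi + J_\Phi^* H_+^*$ to be positive semidefinite, $H$ must be nontrivial, and at the same time $H$ is heavily constrained by the requirement that $A$ extend analytically across $\arg\xi = \pm\pi/4, \pm 3\pi/4$, whose triangular jumps carry the phases $e^{\mp 2\alpha\pi i/3}$ and $\gamma e^{\mp \alpha\pi i/3}$. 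You assert such an $H$ exists but do not exhibit it; without it, the claimed inequality $\int_{\realR}\Phi_{0,+}Q\Phi_{0,+}^*=0 \Rightarrow \Phi_{0,+}\equiv 0$ has no content. This is not a minor omission — it is the entire difficulty of the lemma.

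The paper sidesteps this by a genuinely different, scalar argument. Using the real-symmetry $\Phi_0(z)=\overline{\Phi_0(\bar z)}$, it fixes a row $(\phi_0,\phi_1,\phi_2)$ of $\Phi_0$, builds an auxiliary function $F$ whose Laurent expansion at $\infty$ is shown to be nontrivial via Carlson's theorem, and then evaluates the scalar contour integral
\[
  -\oint_{\Gamma_R} f(z)\,\frac{\phi_0(z)}{\prod_i (z-a_i)}\,\frac{\phi_2(z)}{\prod_i (z-b_i)}\,z^{-k}\,dz,
\]
where $f$ is the analytic extension of $\phi_1$, $a_i, b_i$ are the finitely many real zeros of $\phi_0$ and $\phi_2$, and $k\in\{0,1\}$ is chosen according to the parity of the sign of the auxiliary factor $h(x)$. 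Deforming to the real line and using the jump identity $(f\phi_2)_+ + (f\phi_2)_- = |f_+|^2 + (1-\gamma^2)|\phi_{2,+}|^2$ on $(-\infty,0)$ (this is where $\gamma\in[-1,1]$ enters) yields a sign-definite integrand, while the decay of $\Theta$ (here $\rho>0$, or $\rho=0,\sigma>0$) forces the integral to vanish — a contradiction. The polynomial factors $\prod(z-a_i)$, $\prod(z-b_i)$ and the choice of $k$ are nontrivial devices to keep the integrand sign-definite and integrable at $0$ for $\alpha>-1$, and Carlson's theorem is needed to rule out the degenerate case where $F$ vanishes to infinite order. None of these moving parts appear in your sketch. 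So while you correctly identify where each hypothesis should be used, the vanishing lemma you outline is not the one in the paper, and as written it does not establish the key positivity.
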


Next, we establish the following integrability property for $\Phi = \Phi^{(\gamma, \alpha, \rho, \sigma, \tau)}$. For convenience, we introduce the transformation 
\begin{equation}\label{eq:transform}
  \Phihat=P\Phi, \quad \text{where} \quad P =
  \begin{pmatrix}
    1 & 0 & 0 \\
    m_{13} & 1 & 0 \\
    2m^2_{13} + 2m_{23} - m_{12}  & m_{13} & 1
  \end{pmatrix},
\end{equation}
with $m_{ij} = m^{(\gamma, \alpha)}_{ij}(\rho, \sigma, \tau)$ in \eqref{eq:asy_expansion_Phi}. We denote
\begin{align}
  u = {}& u^{(\gamma, \alpha)}(\rho, \sigma, \tau) = -3(m^2_{13} + m_{23} - m_{12}), \label{eq:uv_u} \\
  v = {}& v^{(\gamma, \alpha)}(\rho, \sigma, \tau) = 3(m^3_{13} - m_{12} m_{13} + 2m_{13} m_{23} - m_{22} + m_{33}). \label{eq:uv_v}
\end{align}
Later we will see that
\begin{align} \label{eq:u_v_expr}
  u = {}& 3(m_{13})_{\tau}, & v = {}& -\frac{3}{2} (m^2_{13} + 2m_{23})_{\tau}.
\end{align}
Then we have the following Lax pair for $\Phihat$, the compatibility of which is described by the Boussinesq hierarchy and their self-similarity reduction.

\begin{thm} \label{prop:boussinesq}
  Let $\Phihat$ be defined in \eqref{eq:transform}. We have the Lax pair 
\begin{equation} \label{eq:LaxTriple}
  -\xi \frac{\partial}{\partial \xi} \Phihat = \hat{L} \Phihat, \quad -\frac{\partial}{\partial \tau} \Phihat = \hat{A} \Phihat, \quad   -\frac{\partial}{\partial \sigma} \Phihat = \hat{B} \Phihat, \quad  -\frac{\partial}{\partial \rho} \Phihat =  \hat{C} \Phihat, 
\end{equation}
where
\begin{equation}\label{eq:Lhat}
  \hat{L}=-\frac{1}{3} \diag(0,1,2)+\frac{\tau}{3}\hat A +\frac{2\sigma}{3} \hat B+\frac{4\rho}{3}\hat C,
\end{equation}
\begin{align}\label{eq:Ahat}
  \hat A= {}&
  \begin{pmatrix}
    0 & 1 & 0 \\
    0 & 0 & 1 \\
    \xi +v& -u& 0
  \end{pmatrix}, &
  \hat B= {}&
  \begin{pmatrix}
    0 & 0 & 1 \\
    \xi & 0 & 0 \\0 & \xi & 0
  \end{pmatrix}
  +
  \begin{pmatrix}
    \frac{2}{3}u & 0 & 0 \\
    v - \frac{2}{3}u_{\tau} & - \frac{1}{3}u & 0 \\
    -v_{\tau} + \frac{2}{3}u_{\tau \tau }& v - \frac{1}{3}u_{\tau} & - \frac{1}{3}u
  \end{pmatrix},
\end{align}
\begin{equation}\label{eq:Chat}
  \hat C=\xi^2
  \begin{pmatrix}
    0 & 0 & 0 \\
    0& 0 & 0 \\1 &0 & 0
  \end{pmatrix}
  +\xi
  \begin{pmatrix}
    0 & 1 & 0 \\
    \frac{1}{3}u & 0 & 1 \\
    -\frac{1}{3}u_{\tau} + \frac{2}{3}v & -\frac{2}{3}u & 0
  \end{pmatrix}
  +
  \begin{pmatrix}
    c_{11} & -\frac{1}{3}(v - u_{\tau} )& \frac{1}{3}u \\
    c_{21}& c_{22} & -\frac{1}{3}v\\
    c_{31} &c_{32} & c_{33}
  \end{pmatrix},
\end{equation}
with
\begin{align}
  c_{11} = {}&\frac{1}{9}(2u_{\tau \tau}-3v_{\tau} +2u^2), \quad c_{22} = \frac{1}{9}(-u_{\tau \tau} - u^2), \quad c_{33} = \frac{1}{9}(-u_{\tau \tau}+3v_{\tau} -u^2), \label{eq:c11} \\
  c_{21} = {}&\frac{1}{9}(-2u_{\tau \tau \tau}+3v_{\tau \tau}-2(u^2)_{\tau} + 3uv), \label{eq:c21} \\
  c_{31} = {}&\frac{1}{9}(2u_{\tau \tau \tau \tau}-3v_{\tau \tau \tau}+2(u^2)_{\tau \tau} - 3(uv)_{\tau} -3v^2),\\
  c_{32}={}& \frac{1}{9}(-u_{\tau \tau \tau}-(u^2)_{\tau}+3v_{\tau \tau} + 6uv). \label{eq:c32}
\end{align}  
The compatibility of the Lax pair is described by the first equation in the Boussinesq hierarchy \cite[Section 1.4.7]{Dickey03}
\begin{align} \label{eq:1st_Bous_hier}
  \frac{\partial u}{\partial \sigma} = {}& u_{\tau \tau} - 2v_{\tau}, &
  \frac{\partial v}{\partial \sigma} = {}& -v_{\tau \tau} + \frac{2}{3} u_{\tau \tau \tau} + \frac{2}{3} uu_{\tau},
\end{align}
which implies the classical Boussinesq equation \cite[Equation (1.4.8)]{Dickey03}
  \begin{equation}\label{eq: Boussinesq equation}
    u_{\sigma\sigma} = -\frac{1}{3} u_{\tau \tau \tau \tau} - \frac{4}{3}(u u_{\tau})_{\tau}
\end{equation}
and the next member in the Boussinesq hierarchy \cite[Equation (4.6)]{Wang-Mei21}
  \begin{align}
    u_{\rho} = {}& \frac{1}{3}(u_{\tau \tau \tau \tau} - 2v_{\tau \tau \tau} + (u^2)_{\tau \tau} - 4(uv)_{\tau} ), \label{eq:2nd_Boussinesq_hierarchy:1} \\  
    v_{\rho}= {}& \frac{1}{9}(2u_{\tau \tau \tau \tau \tau} - 3v_{\tau \tau \tau \tau} + 2(u^2)_{\tau \tau \tau} + 2uu_{\tau \tau \tau} - 6(uv_{\tau} )_{\tau}  - 6(v^2)_{\tau}  + 4u^2u_{\tau}), \label{eq:2nd_Boussinesq_hierarchy:2}
\end{align}
and their self-similarity reduction
\begin{align}
  &\frac{4}{3}\rho \left( -u_{\tau \tau \tau \tau } + 2v_{\tau \tau \tau } - (u^2)_{\tau \tau } + 4(uv)_\tau  \right) + 2\sigma (2v_\tau  - u_{\tau \tau }) - \tau  u_\tau  - 2u = 0, \label{eq:Boussinesq_equation_coupled_equation:1} \\
  &\frac{4}{9}\rho \left( -2u_{\tau \tau \tau \tau \tau } + 3v_{\tau \tau \tau \tau } - 2(u^2)_{\tau \tau \tau } - 2uu_{\tau \tau \tau } + 6(uv_\tau )_\tau  + 6(v^2)_\tau  - 4u^2 u_\tau  \right) \nonumber \\
&~~~~~~~~~~~~~~~~~~~~~~~~~~~~~~~~~~~ + 2\sigma \left( v_{\tau \tau } - \frac{2}{3}u_{\tau \tau \tau } - \frac{1}{3}(u^2)_\tau  \right) - \tau v_\tau  - 3v = 0. \label{eq:Boussinesq_equation_coupled_equation:2}
\end{align}
Particularly, in the case $\rho=0$, the self-similarity reduction can be written as a single equation
\begin{equation}\label{eq:Boussinesq_equation_Reduction}
  4\sigma^2 (-u_{\tau \tau \tau \tau } - 2(u^2)_{\tau \tau }) - 3\tau ^2 u_{\tau \tau } - 21\tau  u_\tau  - 24u = 0.
\end{equation}
Taking $\sigma = 3/2$ in \eqref{eq:Boussinesq_equation_Reduction} without loss of generality, and denoting
\begin{equation} \label{eq:defn_f(tau)}
  f(\tau) = m_{13}(\gamma, \alpha, 0, \frac{3}{2}, \tau),
\end{equation}
then we have Chazy's equations
\begin{equation}\label{eq:Third_order_ODE} 
  y_{\tau \tau \tau}+6y_\tau ^2+\tau y-\frac{1}{72} \tau ^4+\frac{1}{6}(\alpha-\alpha^2)=0, \quad y(\tau )=f(\tau )+\frac{\tau ^3}{108},
\end{equation}
\begin{equation}\label{eq:second_order_ODE} 
  (\tilde{y})_{\tau \tau } ^2+4\tilde{y}_\tau ^3-4(\tau \tilde{y}_\tau -\tilde{y})^2-\frac{4}{3}(\alpha^2-\alpha + 1)\tilde{y}_\tau +\frac{4}{27}(\alpha+1)(2\alpha-1)(\alpha-2)=0, \quad \tilde{y}(\tau )=\sqrt{2}f(\sqrt{2}\tau )+\frac{4}{27}\tau ^3,
\end{equation}
and \eqref{eq:second_order_ODE} is also equivalent to the Painlev\'{e} IV equation \cite{Cosgrove-Scoufis93}. 
\end{thm}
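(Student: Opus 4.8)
\emph{Overview and Step 1 (the Lax pair).} My plan is to follow the standard passage from a Riemann--Hilbert problem with piecewise-constant jumps to an isomonodromic Lax pair, specialised to the self-similar (string-equation) regime, and then to read off the Boussinesq hierarchy and its reduction from the zero-curvature relations. First I would observe that the jump matrices in \eqref{eq:RHP:model_1} depend on neither $\xi$ nor the deformation parameters $(\rho,\sigma,\tau)$, so that the logarithmic derivatives $\partial_{\xi}\Phihat\cdot\Phihat^{-1}$, $\partial_{\tau}\Phihat\cdot\Phihat^{-1}$, $\partial_{\sigma}\Phihat\cdot\Phihat^{-1}$ and $\partial_{\rho}\Phihat\cdot\Phihat^{-1}$ have no jump on $\Gamma_{\Phi}$ and hence are single-valued on $\compC\setminus\{0\}$. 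Using the local form \eqref{eq:limit_Phi_at_0_generic} — in which $\diag(\xi^{\frac12-\frac{\alpha}{3}},\xi^{\frac12+\frac{\alpha}{6}},\xi^{\frac{\alpha}{6}})$ commutes with its own logarithmic $\xi$-derivative and $E$ does not depend on $(\rho,\sigma,\tau)$, the logarithmic entries of $E$ for integer $\alpha$ contributing only a nilpotent correction that one checks does not obstruct analyticity — one gets that $-\xi\partial_{\xi}\Phihat\cdot\Phihat^{-1}$, $\partial_{\tau}\Phihat\cdot\Phihat^{-1}$, $\partial_{\sigma}\Phihat\cdot\Phihat^{-1}$, $\partial_{\rho}\Phihat\cdot\Phihat^{-1}$ are analytic at $0$. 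The expansion \eqref{eq:asy_expansion_Phi} (after conjugation by $\Upsilon\Omega_{\pm}e^{-\Theta^{(\rho,\sigma,\tau)}}$) shows these four functions grow at most like $\xi^{2},\xi,\xi,\xi^{2}$ at $\infty$ ($\xi$ for the first when $\rho=0$); by Liouville each is a polynomial in $\xi$ whose coefficients are expressible through the expansion coefficient $M$ and the exponents at $0$. The gauge transformation $\Phihat=P\Phi$ with $P$ as in \eqref{eq:transform} is tailored so that these polynomials take the companion-type normal forms $\hat{A},\hat{B},\hat{C}$ of \eqref{eq:Ahat}--\eqref{eq:c32} (the $m_{ij}$-dependence of $P$ supplying the correction terms $-(\partial_{\bullet}P)P^{-1}$); matching the $\xi^{-1}$-coefficient of \eqref{eq:asy_expansion_Phi} against $-\partial_{\tau}\Phihat=\hat{A}\Phihat$ yields the relations \eqref{eq:u_v_expr}, which together with \eqref{eq:uv_u}--\eqref{eq:uv_v} identify $u$ and $v$.

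\emph{Step 2 (self-similarity and compatibility).} The identity \eqref{eq:Lhat}, expressing $\hat{L}$ as $-\tfrac13\diag(0,1,2)$ plus a $\xi$-independent linear combination of $\hat{A},\hat{B},\hat{C}$, I would obtain as the infinitesimal form of the exact scaling covariance of RH problem \ref{RHP:model}: since the jumps are invariant under $\xi\mapsto\lambda\xi$ for $\lambda>0$ while the data at $0$ and $\infty$ transform covariantly, $\Phi^{(\gamma,\alpha,\rho,\sigma,\tau)}(\lambda\xi)$ and $\Phi^{(\gamma,\alpha,\lambda^{4/3}\rho,\lambda^{2/3}\sigma,\lambda^{1/3}\tau)}(\xi)$ agree up to a constant right factor; differentiating in $\lambda$ at $\lambda=1$ and inserting the Lax equations \eqref{eq:LaxTriple} gives \eqref{eq:Lhat}. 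Equating mixed second $\xi,\tau,\sigma,\rho$-derivatives of $\Phihat$ then produces the zero-curvature relations, and expanding them in powers of $\xi$ yields: from the $\tau$–$\sigma$ pair the first Boussinesq system \eqref{eq:1st_Bous_hier} (hence \eqref{eq: Boussinesq equation} upon eliminating $v$); from the $\tau$–$\rho$ pair the second hierarchy member \eqref{eq:2nd_Boussinesq_hierarchy:1}--\eqref{eq:2nd_Boussinesq_hierarchy:2}; and, after substituting \eqref{eq:Lhat} and the already-derived flows into the $\xi$–$\tau$ relation $\partial_{\tau}\hat{L}-\xi\partial_{\xi}\hat{A}=[\hat{L},\hat{A}]$, the self-similarity constraints \eqref{eq:Boussinesq_equation_coupled_equation:1}--\eqref{eq:Boussinesq_equation_coupled_equation:2}. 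Setting $\rho=0$ and eliminating $v$ reduces the latter pair to the single equation \eqref{eq:Boussinesq_equation_Reduction}.

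\emph{Step 3 (Chazy's equations and Painlev\'e IV).} Fix $\rho=0$, $\sigma=\tfrac32$ and set $f(\tau)=m_{13}(\gamma,\alpha,0,\tfrac32,\tau)$ as in \eqref{eq:defn_f(tau)}, so that $u=3f'(\tau)$ by \eqref{eq:u_v_expr}. Inserting $u=3f'$ into \eqref{eq:Boussinesq_equation_Reduction}, recognising the result as a second $\tau$-derivative and integrating twice, I get a third-order ODE for $f$; the two integration constants are not free but are pinned down by the requirement that $\hat{L}(0)$ have the $\tau$-independent spectrum $\{-(\tfrac12-\tfrac{\alpha}{3}),-(\tfrac12+\tfrac{\alpha}{6}),-\tfrac{\alpha}{6}\}$ dictated by \eqref{eq:limit_Phi_at_0_generic}, i.e.\ by the two $\alpha$-dependent first integrals of the Boussinesq flow (the elementary symmetric functions of these exponents). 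This forces the constant $\tfrac16(\alpha-\alpha^{2})$ and gives Chazy's equation \eqref{eq:Third_order_ODE} for $y=f+\tau^{3}/108$. One further integration after multiplication by a suitable integrating factor, combined with the rescaling $\tilde{y}(\tau)=\sqrt{2}f(\sqrt{2}\tau)+\tfrac{4}{27}\tau^{3}$, brings \eqref{eq:Third_order_ODE} to the second-order second-degree form \eqref{eq:second_order_ODE}, which belongs to the Cosgrove--Scoufis list of master Painlev\'e equations; its equivalence with Painlev\'e IV is that of \cite{Cosgrove-Scoufis93}.

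\emph{Main obstacle.} The genuinely laborious part is Step 1: determining the four Lax matrices \emph{precisely} — in particular the quadratic matrix $\hat{C}$ with the coefficients \eqref{eq:c11}--\eqref{eq:c32} — and verifying that the single gauge $P$ of \eqref{eq:transform} simultaneously normalises all of $\hat{L},\hat{A},\hat{B},\hat{C}$ while remaining compatible with the several degenerate (integer-$\alpha$, logarithmic) versions of $E$ at the origin. Closely entangled with this is the bookkeeping in Step 3 of how the $\alpha$-dependence — invisible in the Lax matrices themselves — re-enters through the connection data at $0$ to produce the explicit $\alpha$-terms in \eqref{eq:Third_order_ODE}--\eqref{eq:second_order_ODE}.
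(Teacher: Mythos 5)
Your outline follows essentially the same route as the paper: derive polynomial Lax matrices from the constancy of the jumps and from the singularity data at $0$ and $\infty$, gauge with $P$ as in \eqref{eq:transform}, obtain \eqref{eq:Lhat} from the infinitesimal scale-covariance of RH problem \ref{RHP:model} (the paper's Euler relation \eqref{eq: dzPsi}), read off the two Boussinesq flows and the self-similarity reduction from zero curvature, and use the regular-singular data at $\xi=0$ to fix the $\alpha$-dependence in the Chazy equations.

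Two small remarks. First, the scale-covariance relates $\Phi^{(\gamma,\alpha,\rho,\sigma,\tau)}(\lambda\xi)$ and $\Phi^{(\gamma,\alpha,\lambda^{4/3}\rho,\lambda^{2/3}\sigma,\lambda^{1/3}\tau)}(\xi)$ by the constant \emph{left} diagonal factor $\diag(1,\lambda^{-1/3},\lambda^{-2/3})$ coming from $\Upsilon(\lambda\xi)=\diag(1,\lambda^{1/3},\lambda^{2/3})\Upsilon(\xi)$, not a right factor; differentiating at $\lambda=1$ is exactly \eqref{eq: dzPsi}. Second, for Step 3 the paper does not integrate \eqref{eq:Boussinesq_equation_Reduction} back: it first extracts $v=\tfrac12(\tau f)_\tau+\tfrac32 f_{\tau\tau}$ (\eqref{eq:v_f}) from the vanishing of the $\xi^{-1}$-coefficient $L_{-1}$, and then writes the two nontrivial symmetric-function identities for the eigenvalues of $\hat{L}_0$ as explicit first integrals \eqref{eq:First_integral_1}--\eqref{eq:First_integral_2}; substituting $u=3f_\tau$ and \eqref{eq:v_f} yields \eqref{eq:Third_order_ODE} and \eqref{eq:second_order_ODE} directly. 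Your integration route is workable (indeed \eqref{eq:Boussinesq_equation_Reduction} is the second $\tau$-derivative of \eqref{eq:Third_order_ODE}), but it would still need the $v$-to-$f$ relation in order to invoke the characteristic polynomial of $\hat L_0$ (whose coefficients involve $v$), and the passage from \eqref{eq:Third_order_ODE} to \eqref{eq:second_order_ODE} is not a free integration: the integration constant is again $\alpha$-dependent and is pinned by the remaining eigenvalue datum, which is precisely what \eqref{eq:First_integral_2} supplies.
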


\begin{rmk} \label{rmk:Beq_def}
  In the literature, the Boussinesq hierarchy is identified with the $3$rd Gelfand-Dickey (GD) hierarchy. Generally, for $n\geq 2$, the $n$th GD hierarchy is defined via the Lax equation
  \begin{equation} \label{eq:Lax_Eq}
    \frac{d}{dt_k}L=[(L^{k/n})_+, L]=(L^{k/n})_+L-L(L^{k/n})_+, \quad k=1,2,3,\dots
  \end{equation}
  where $L$ is a scalar differential operator of order $n$:
  \begin{equation} \label{eq:L_Operator}
    L=\partial^n+u_{n-2}\partial^{n-2}+\cdots+u_0, \quad \partial=\frac{d}{dx}=\frac{d}{dt_1},
  \end{equation}
  and $(L^{k/n})_+$ denotes the purely differential part of the fractional powers of the operator $L^{k/n}$; see \cite[Section 1.4]{Dickey03} for the definition. For $n=2$, the GD hierarchy reduces to the well-known KdV hierarchy. For $n=3$, one obtains the Boussinesq hierarchy. In this case, we write the Lax operator as
  \begin{equation} \label{eq:L_Operator_3}
    L=\partial^3+u\partial+v.
  \end{equation}
  Its fractional powers admit the following expansions
  \begin{equation} \label{eq:L_Operator_1/3}
    L^{1/3}=\partial+\frac{u}{3}\partial^{-1}+\frac{1}{3}(v-u')\partial^{-2}+\frac{1}{9}(2u''-3v'-u^2)\partial^{-3}+\bigO(\partial^{-4}),
  \end{equation}
  and then
  \begin{align} \label{eq:L_Operator_2/3}
    (L^{2/3})_+ = {}& \partial^2+\frac{2u}{3}, & (L^{4/3})_+ = {}& \partial^4+\frac{4u}{3}\partial^2+\frac{2}{3}(u'+2v)\partial+\frac{2}{9}(u''+3v'+u^2).
  \end{align}
  Substituting \eqref{eq:L_Operator_2/3} into the Lax equation yields the first and second nontrivial flows of the Boussinesq hierarchy as given in \eqref{eq:1st_Bous_hier}, \eqref{eq:2nd_Boussinesq_hierarchy:1} and \eqref{eq:2nd_Boussinesq_hierarchy:2}, respectively, under the identification $(x,t_2,t_4)=(\tau,- \sigma,-\rho)$. 
\end{rmk}

\begin{rmk}\label{rmk:poly_solution}
  In the special cases
  \begin{enumerate*}[label=(\arabic*)]
  \item \label{enu:rmk:poly_solution_1}
    $\rho = 0$, $\sigma = 3/2$, $\alpha=0$ and $\gamma=1$,
  \item \label{enu:rmk:poly_solution_2} 
    $\rho = 0$, $\sigma = 3/2$, $\alpha=1$ and $\gamma= -1$, 
  \end{enumerate*}
  we show at the end of Section \ref{sec:boussinesq} that 
  \begin{align} \label{eq:poly_f}
    f(\tau) = {}& -\frac{1}{27}\tau^3 \pm \frac{\tau}{6}, & u(\tau) = {}& -\frac{1}{3}\tau^2 \pm \frac{1}{2}, & y(\tau) = {}& -\frac{1}{36}\tau^3 \pm \frac{\tau}{6}, & \tilde{y}(\tau) = \pm \frac{\tau}{3},
  \end{align}
  where the sign is $+$ in Case \ref{enu:rmk:poly_solution_1} and $-$ in Case \ref{enu:rmk:poly_solution_2}.
\end{rmk}

\begin{rmk}
  Since $u(\tau) = u(\gamma, \alpha, 0, 3/2, \tau)$ satisfies \eqref{eq:Boussinesq_equation_Reduction} with $\sigma = 3/2$, we have that $\hat{v}(\tau) = 2 \cdot 3^{1/2} u(2^{-1/2} \cdot 3^{1/4} \tau)$ satisfies \cite[Equation (59)]{Wang-Xu25} where $\hat{v}(\tau)$ substitutes $v(\tau)$ there. Also Equations \eqref{eq:Third_order_ODE} coincides with \cite[Equation (57)]{Wang-Xu25}, and \eqref{eq:second_order_ODE} coincides with \cite[Equation (58)]{Wang-Xu25} where $u(\tau)$ is the $\tilde{y}(\tau)$ in \eqref{eq:second_order_ODE}. In fact, if $\gamma = 0$, $\rho = 0$ and $\sigma = 3/2$, by comparing RH problem \ref{RHP:model} and \cite[RH problem 1.8]{Wang-Xu25} with $\theta = 2$ and asymptotic expansion \cite[Equation (415)]{Wang-Xu25}, we find that $f(\tau)$ defined in \eqref{eq:defn_f(tau)} is equal to $c(2^{-1/2} \tau)$ where $c$ is defined in \cite[Equations (45) and (417)]{Wang-Xu25}. Hence, \cite[Equations (59), (57) and (58)]{Wang-Xu25} are special cases of \eqref{eq:Boussinesq_equation_Reduction}, \eqref{eq:Third_order_ODE} and \eqref{eq:second_order_ODE} with $\gamma = 0$. See \cite[Remark 1.12]{Wang-Xu25} for a literature review of the three formulas in integrable systems.
\end{rmk}

\begin{rmk}\label{rem:n_Boussinesq_hierarchy}The integrable structure of the model RH problem $\Phi^{(\gamma, \alpha, \rho, \sigma, \tau)}(\xi)$ is associated with the first two members of the Boussinesq hierarchy. It is expected that the RH problem with the exponent $\Theta^{(\rho, \sigma, \tau)}(\xi)$ in \eqref{eq:expansion_at_infty_Phi} replaced by the general form 
  \begin{equation}
    \Theta(\xi)=\diag\left(\sum_{k=1}^m t_k \omega^{jk}\xi^{\frac{k}{3}}\right)_{j=1}^3,
  \end{equation}
where $t_{k}=0$ if $k=3j$, would correspond to the higher members of the Boussinesq hierarchy.
\end{rmk}

\subsection{Alternative representation and specializations of the result}

From RH problem \ref{RHP:model}, we define another RH problem:
\begin{RHP} \label{RHP:model_Psi}
  $\Psi^{(\gamma, \alpha, \rho, \sigma, \tau)}(\xi)$ ($\Psi(\xi)$ for short) is a $3 \times 3$ vector-valued function that is analytic on $\compC \setminus \Gamma_{\Psi}$ and continuous up to the boundary, where $\Gamma_{\Psi} = \realR_+ \cup \realR_- \cup \{ \arg \xi = \pm \pi/8 \} \cup \{ \arg \xi = \pm 3\pi/8 \}\cup \{ \arg \xi = \pm 5\pi/8 \} \cup \{ \arg \xi = \pm 7\pi/8 \}$, with all rays oriented outward from $0$. It satisfies the following properties.
  \begin{enumerate}
  \item \label{enu:RHP:model_Psi}
    $\Psi^{(\gamma, \alpha, \rho ,\sigma, \tau)}_+(\xi) = \Psi^{(\gamma, \alpha, \rho, \sigma, \tau)}_-(\xi) J_{\Psi}(\xi)$, where 
     \begin{equation} \label{eq:RHP:model_Psi}
      J_{\Psi}(\xi) =
      \left\{
      \begin{aligned}
        & \begin{pmatrix}
          1 & \pm e^{\mp \frac{2\alpha}{3} \pi i} & 0 \\
          0 & 1 & 0 \\
          0 & 0 & 1
        \end{pmatrix}, &
        \arg \xi = {}&\pm  \frac{\pi}{8}, \\
        & \begin{pmatrix}
          1 & 0 & 0 \\
          0 & 1 & 0 \\
          0 & \mp\gamma e^{\mp \frac{\alpha}{3} \pi i} & 1
        \end{pmatrix}, &
        \arg \xi = {}& \pm \frac{3\pi}{8}, \\
         & \begin{pmatrix}
          1 & 0 & 0 \\
          0 & 1 & \pm \gamma e^{\pm \frac{\alpha}{3} \pi i}  \\
          0 & 0& 1
        \end{pmatrix}, &
        \arg \xi = {}& \pm \frac{5\pi}{8},\\
         & \begin{pmatrix}
          1 &0 &  \mp e^{\pm \frac{2\alpha}{3} \pi i}\\
          0 & 1 & 0 \\
          0 & 0 & 1
        \end{pmatrix}, &
        \arg \xi = {}& \pm \frac{7\pi}{8},\\
         & \begin{pmatrix}
          0 & 1 & 0 \\
          1 & 0 & 0 \\
          0 & 0 & 1
        \end{pmatrix}, &
        \xi \in {}& \realR_+, \\
     &   \begin{pmatrix}
          0 & 0 & 1 \\
          0 & 1 & 0 \\
          1 & 0 & 0
        \end{pmatrix}, &
         \xi \in {}& \realR_-.
      \end{aligned}
    \right.
    \end{equation}
  \item 
    $\Psi^{(\gamma, \alpha, \rho, \sigma, \tau)}(\xi)$ has the following boundary condition as $\xi \to \infty$
    \begin{equation} \label{eq:RHP_Psi_infty}
      \Psi^{(\gamma, \alpha, \rho, \sigma, \tau)}(\xi) = (I +\bigO(\xi^{-1})) \diag (1, \omega \xi^{-\frac{1}{3}}, \omega^2 \xi^{\frac{1}{3}})\Omega_{\pm} e^{-\Theta(\xi^2)}.
    \end{equation}
  \item 
    $\Psi^{(\gamma, \alpha, \rho, \sigma, \tau)}(\xi)$ has the following boundary condition as $\xi \to 0$
    \begin{equation} \label{eq:limit_Psi_at_0_generic}
      \Psi^{(\gamma, \alpha, \rho, \sigma, \tau)}(\xi) = \hat{N}(\xi) \diag \left( \xi^{ - \frac{2\alpha}{3}}, \xi^{\frac{\alpha}{3}}, \xi^{\frac{\alpha}{3}} \right) \hat{E}(\xi),
    \end{equation}
    where $\hat{N}(\xi)$ is analytic in a neighbourhood of $0$, and $\hat{E}(\xi)=E(\xi^2)$ for $\arg \xi \in(\pi/8, 3\pi/8)$ and the expression for $\hat{E}$ in other sectors is determined by the jump relations.
  \end{enumerate}
\end{RHP}
\begin{prop}\label{pro:sol_Psi}
  Under the same conditions as in Theorem \ref{thm:RHP_unique_solvability}, RH problem \ref{RHP:model_Psi} has a unique solution given by
  \begin{equation} \label{eq: Psi_def}
    \Psi  ^{(\gamma, \alpha, \rho, \sigma, \tau)}(\xi)=\xi^{-1} H^{(-\gamma, \alpha + 1, \rho, \sigma, \tau, +)}(\xi^2)
    \begin{cases}
      \Phi  ^{(\gamma, \alpha, \rho, \sigma, \tau)}(\xi^2) , & \arg \xi \in (-\frac{\pi}{2}, \frac{\pi}{2}), \\
      \Phi  ^{(\gamma, \alpha,\rho, \sigma, \tau)}(e^{\mp 2\pi i}\xi^2)(1\oplus\sigma_1) , &\pm  \arg \xi \in( \frac{\pi}{2}, \pi),
    \end{cases}
  \end{equation}
  where $\Phi  ^{(\gamma, \alpha, \rho, \sigma, \tau)}(\xi)$ is the unique solution of RH problem \ref{RHP:model} and $H^{(-\gamma, \alpha + 1, \rho, \sigma, \tau, +)}(\xi)$ is defined as
  \begin{equation}\label{eq:H}
    H^{(\gamma, \alpha, \rho, \sigma, \tau, +)}(\xi)=
    \begin{pmatrix}
      \xi^{\frac{1}{2}} & 0 & 0\\
      m^{(\gamma, \alpha)}_{13}(\rho, \sigma, \tau)  & 1 & 0 \\
      m^{(\gamma, \alpha)}_{23}(\rho, \sigma, \tau)  & 0 & 1
    \end{pmatrix},
  \end{equation}
  where $m^{(\gamma, \alpha)}_{ij}(\rho, \sigma, \tau)$ is defined in \eqref{eq:asy_expansion_Phi}.
\end{prop}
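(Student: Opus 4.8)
The plan is to verify by direct substitution that the right-hand side of \eqref{eq: Psi_def} --- denote it $\Psi$, and abbreviate $\Phi = \Phi^{(\gamma,\alpha,\rho,\sigma,\tau)}$, $H = H^{(-\gamma,\alpha+1,\rho,\sigma,\tau,+)}$ --- satisfies all conditions of RH problem \ref{RHP:model_Psi}, and then to deduce uniqueness from the unique solvability of RH problem \ref{RHP:model} granted by Theorem \ref{thm:RHP_unique_solvability}. The mechanism is the standard squaring/folding: $\xi \mapsto \xi^2$ maps $\compC \setminus (\Gamma_{\Psi} \cup i\realR)$ two-to-one onto $\compC \setminus \Gamma_{\Phi}$, doubling arguments, so that the rays $\arg\xi = \pm\pi/8, \pm 3\pi/8$ of $\Gamma_{\Psi}$ are carried to $\arg = \pm\pi/4, \pm 3\pi/4$, while $\arg\xi = \pm 5\pi/8, \pm 7\pi/8$ and $\realR_-$ are carried to rays of $\Gamma_{\Phi}$ in $\compC_{\mp}$ after the rotation $e^{\mp 2\pi i}$ appearing in \eqref{eq: Psi_def}; this is why the second branch of \eqref{eq: Psi_def} carries the extra factor $1 \oplus \sigma_1$, which is precisely $J_{\Phi}|_{\realR_-}$.

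First I would check analyticity of $\Psi$ off $\Gamma_{\Psi}$. On each open sector disjoint from $i\realR$ it is immediate from analyticity of $\Phi$ and $H$; across $i\realR$ the apparent jump of $\Phi$ along $\realR_-$ of the $\xi^2$-plane must cancel, which it does because the superscript $+$ fixes the branch of the square root in \eqref{eq:H} so that $\xi^{-1}H(\xi^2)$ is single-valued on $\compC \setminus \{0\}$ (with $(1,1)$-entry identically $1$), while $J_{\Phi}|_{\realR_-} = 1 \oplus \sigma_1$ reconciles the two one-sided boundary values of $\Phi$. For the jump relation, on a ray of $\Gamma_{\Psi}$ in the right half-plane $\xi^{-1}H(\xi^2)$ is continuous and $\Psi_-^{-1}\Psi_+ = J_{\Phi}(\xi^2)$, which I would match against $J_{\Psi}$ in \eqref{eq:RHP:model_Psi} entry by entry (on $\realR_+$ both are the $(1,2)$-transposition); on a ray in the left half-plane and on $\realR_-$ one instead gets $1 \oplus \sigma_1$-conjugates of the $J_{\Phi}$-jumps (evaluated at $e^{\mp 2\pi i}\xi^2$), and conjugation by $1 \oplus \sigma_1$ turns the $\Gamma_{\Phi}$-jumps on $\arg = \mp\pi/4, \mp 3\pi/4$ into the $\Gamma_{\Psi}$-jumps on $\arg\xi = \pm 7\pi/8, \pm 5\pi/8$ and the $(1,2)$-transposition on $\realR_+$ into the $(1,3)$-transposition on $\realR_-$; this is a finite computation.

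Substituting the expansion \eqref{eq:asy_expansion_Phi} with argument $\xi^2$ settles the behavior at $\infty$: the key identity is $\xi^{-1}H(\xi^2)\Upsilon(\xi^2) = (I + \bigO(\xi^{-1}))\diag(1, \omega\xi^{-1/3}, \omega^2\xi^{1/3})$, valid for any finite entries of $H$ (the $(1,1)$-entry equals $1$ since $(\xi^2)^{1/2}\xi^{-1} = 1$), after which the subleading terms of $\Phi(\xi^2)$ contribute only at order $\bigO(\xi^{-1})$ and the factors $\Omega_{\pm}$, $e^{-\Theta(\xi^2)}$ pass through unchanged, giving \eqref{eq:RHP_Psi_infty}. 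The delicate step --- the one I expect to be the main obstacle --- is the behavior at $0$, where the precise form of $H$ is essential. Using $(\xi^2)^r = \xi^{2r}$ for $\arg\xi \in (\pi/8, 3\pi/8)$ one rewrites \eqref{eq:limit_Phi_at_0_generic} at argument $\xi^2$ and obtains $\Psi(\xi) = \hat N(\xi)\diag(\xi^{-2\alpha/3}, \xi^{\alpha/3}, \xi^{\alpha/3})\hat E(\xi)$ with $\hat E(\xi) = E(\xi^2)$ --- the exponents $\tfrac12 - \tfrac{\alpha}{3}, \tfrac12 + \tfrac{\alpha}{6}, \tfrac{\alpha}{6}$ of $\Phi$ combining with the prefactor $\diag(\xi,\xi,1)$ to give $-\tfrac{2\alpha}{3}, \tfrac{\alpha}{3}, \tfrac{\alpha}{3}$ --- and $\hat N(\xi) = \xi^{-1}H(\xi^2)\,N(\xi^2)\,\diag(\xi, \xi, 1)$, where $N = N^{(\gamma,\alpha,\rho,\sigma,\tau)}$ has even, analytic entries $N_{ij}$. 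The only entries of $\hat N$ not manifestly analytic at $0$ are the $(2,3)$ and $(3,3)$ ones, equal to $\bigl(m^{(-\gamma,\alpha+1)}_{13}N_{13}(\xi^2) + N_{23}(\xi^2)\bigr)/\xi$ and $\bigl(m^{(-\gamma,\alpha+1)}_{23}N_{13}(\xi^2) + N_{33}(\xi^2)\bigr)/\xi$, so their analyticity is equivalent to
\[
  m^{(-\gamma,\alpha+1)}_{13}(\rho,\sigma,\tau) = -\frac{N_{23}(0)}{N_{13}(0)}, \qquad m^{(-\gamma,\alpha+1)}_{23}(\rho,\sigma,\tau) = -\frac{N_{33}(0)}{N_{13}(0)}.
\]
Proving these two identities --- which link the local data of $\Phi^{(\gamma,\alpha,\rho,\sigma,\tau)}$ at the origin to the large-$\xi$ coefficients of the \emph{dual} model function $\Phi^{(-\gamma,\alpha+1,\rho,\sigma,\tau)}$ --- is the technical heart. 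I would establish them by first deriving an auxiliary relation between the two model RH problems $\Phi^{(\gamma,\alpha,\rho,\sigma,\tau)}$ and $\Phi^{(-\gamma,\alpha+1,\rho,\sigma,\tau)}$ (for instance, the $\xi \mapsto -\xi$ parity relation satisfied by $\Psi$, which after folding realizes the dual problem as a gauge transform of a reflection of $\Phi^{(\gamma,\alpha,\rho,\sigma,\tau)}$) and then comparing the expansions at $0$ and $\infty$; the exceptional integer-$\alpha$ cases, where $E$ carries a $\log\xi$-entry, require separate bookkeeping.

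Finally, uniqueness. From \eqref{eq:Phi_det} and $(\xi^2)^{1/2} = \xi$ one computes $\det\Psi(\xi) = \mp 3\sqrt{3}\, i$ for $\xi \in \compC_{\pm}$, a nonvanishing piecewise constant, so $\Psi$ is invertible on $\compC \setminus \Gamma_{\Psi}$. If $\Psi_1, \Psi_2$ both solve RH problem \ref{RHP:model_Psi}, then $R := \Psi_1\Psi_2^{-1}$ has no jump across any ray of $\Gamma_{\Psi}$, hence extends to an entire function; at $0$ the common factors $\diag(\xi^{-2\alpha/3}, \xi^{\alpha/3}, \xi^{\alpha/3})\hat E(\xi)$ cancel and $R = \hat N_1\hat N_2^{-1}$ is bounded (the $\hat N_i$ being analytic and invertible near $0$, since $\det\hat N_i$ is a nonzero constant there); and $R = I + \bigO(\xi^{-1})$ at $\infty$. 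By Liouville $R \equiv I$, so RH problem \ref{RHP:model_Psi} has at most one solution, which is therefore the function defined by \eqref{eq: Psi_def}.
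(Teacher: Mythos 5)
Your proposal is correct and follows essentially the same route as the paper. You verify the jump relation and behavior at infinity directly, isolate the two scalar identities $m^{(-\gamma,\alpha+1)}_{13}=-N_{23}(0)/N_{13}(0)$ and $m^{(-\gamma,\alpha+1)}_{23}=-N_{33}(0)/N_{13}(0)$ as the obstruction to analyticity of $\hat N$ at the origin, and propose deriving them from a gauge relation between $\Phi^{(\gamma,\alpha,\rho,\sigma,\tau)}$ and $\Phi^{(-\gamma,\alpha+1,\rho,\sigma,\tau)}$; the paper packages exactly this relation as Lemma \ref{lem:Backlund} and then compares the two resulting expansions of $\Psi$ at the origin rather than stating the scalar identities explicitly, but this is the same idea. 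Your uniqueness argument via $\det\Psi=\mp 3\sqrt3\,i$ and Liouville is also the paper's.
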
   

From RH problem \ref{RHP:model_Psi}, we have alternative expressions for $K^{(\Pe, \alpha)}(\xi, \eta; 2\tau)$ and $K^{(\tri, \alpha)}(\xi, \eta; \sigma, \tau)$. Denote $\tilde{ \Psi}  ^{(\gamma, \alpha, \rho, \sigma, \tau)}(\xi)$ the analytic extension of $  \xi^{\frac{2\alpha}{3}} \Psi  ^{(\gamma, \alpha, \rho, \sigma, \tau)}(\xi)$ from the sector $\{ \arg \xi \in (\pi/8, 3\pi/8) \}$ to the complex plane with a cut on $(-\infty i, 0]$. We have the following proposition. 
\begin{prop}\label{pro:sum_kernel_Psi}
  For $\xi, \eta\in \mathbb{R}\setminus\{0\}$, we have
          \begin{align}\label{eq:sum_kernel_Psi}
 \xi K^{(\gamma, \alpha, \rho, \sigma, \tau)}(\xi^2, \eta^2)& + \eta K^{(-\gamma, \alpha+1, \rho, \sigma, \tau)}(\xi^2, \eta^2)= \frac{e^{\frac{2}{3}\alpha \pi i}}{2\pi i (\xi-\eta)} \nonumber\\
& \times
\begin{pmatrix}
      1 & 0 & 0
    \end{pmatrix}
\tilde{\Psi}^{(\gamma, \alpha, \rho, \sigma, \tau)}(\eta)^{-1}\tilde{\Psi}^{(\gamma, \alpha, \rho, \sigma, \tau)}(\xi)
    \begin{pmatrix}
      0 \\ 1 \\ 0
    \end{pmatrix}.
  \end{align}
\end{prop}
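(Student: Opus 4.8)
\emph{The plan} is to insert the explicit solution of RH problem~\ref{RHP:model_Psi} provided by Proposition~\ref{pro:sol_Psi} directly into the right-hand side of~\eqref{eq:sum_kernel_Psi}, converting the bilinear form in $\tilde\Psi^{(\gamma,\alpha,\rho,\sigma,\tau)}$ into one in $\tilde\Phi^{(\gamma,\alpha,\rho,\sigma,\tau)}$, which is controlled by~\eqref{eq:limit_kernel_mult}. The key observation is that, by~\eqref{eq:H},
\[
  H^{(-\gamma,\alpha+1,\rho,\sigma,\tau,+)}(\xi^2)=\diag(\xi,1,1)\,L,\qquad L:=I+m^{(-\gamma,\alpha+1)}_{13}E_{21}+m^{(-\gamma,\alpha+1)}_{23}E_{31},
\]
is a \emph{constant} unipotent lower-triangular matrix ($E_{ij}$ the matrix units). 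Keeping track of the scalar powers of $\xi$ in the definitions of $\tilde\Psi$ and $\tilde\Phi$, formula~\eqref{eq: Psi_def} then collapses to the clean identity
\[
  \tilde\Psi^{(\gamma,\alpha,\rho,\sigma,\tau)}(\xi)=\diag(\xi,1,1)\,L\,\tilde\Phi^{(\gamma,\alpha,\rho,\sigma,\tau)}(\xi^2),
\]
first for $\xi$ in the base sector $\arg\xi\in(\pi/8,3\pi/8)$ and then, by analytic continuation, on $\compC\setminus(-i\infty,0]$; for $\xi\in\realR_-$ the second branch of~\eqref{eq: Psi_def} together with the jump of $\Phi^{(\gamma,\alpha,\rho,\sigma,\tau)}$ on $\realR_-$ gives the same formula with the correspondingly continued $\tilde\Phi^{(\gamma,\alpha,\rho,\sigma,\tau)}(\xi^2)$.

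Next I would substitute this into~\eqref{eq:sum_kernel_Psi} and use $L^{-1}\diag(\xi/\eta,1,1)\,L=I+(\xi/\eta-1)\bigl(E_{11}-m^{(-\gamma,\alpha+1)}_{13}E_{21}-m^{(-\gamma,\alpha+1)}_{23}E_{31}\bigr)$ to split the result into a main term and a rank-one remainder. Writing $\mathcal K_\Phi:=(1,0,0)\tilde\Phi^{(\gamma,\alpha)}(\eta^2)^{-1}\tilde\Phi^{(\gamma,\alpha)}(\xi^2)(0,1,0)^T$ (and suppressing the common parameters $(\rho,\sigma,\tau)$ on $\tilde\Phi$ and $m$), the identity $\tfrac{1}{\xi-\eta}=\tfrac{\xi+\eta}{\xi^2-\eta^2}$ turns the main term into $(\xi+\eta)K^{(\gamma,\alpha,\rho,\sigma,\tau)}(\xi^2,\eta^2)$ via~\eqref{eq:limit_kernel_mult}, so that the assertion~\eqref{eq:sum_kernel_Psi} becomes equivalent to the single \emph{contiguity} identity
\[
  K^{(-\gamma,\alpha+1,\rho,\sigma,\tau)}(\xi^2,\eta^2)-K^{(\gamma,\alpha,\rho,\sigma,\tau)}(\xi^2,\eta^2)=\frac{e^{\frac{2}{3}\alpha\pi i}}{2\pi i\,\eta^2}\,\bigl[(1,0,0)\tilde\Phi^{(\gamma,\alpha)}(\eta^2)^{-1}(1,-m^{(-\gamma,\alpha+1)}_{13},-m^{(-\gamma,\alpha+1)}_{23})^{T}\bigr]\,\bigl[\tilde\Phi^{(\gamma,\alpha)}(\xi^2)\bigr]_{12},
\]
where the column vector on the right is the first column of $L^{-1}$.

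\emph{The hard part} will be this contiguity identity, which links the two model problems with parameters $(\gamma,\alpha)$ and $(-\gamma,\alpha+1)$. I would prove it by exhibiting the Schlesinger-type transformation $\Phi^{(-\gamma,\alpha+1,\rho,\sigma,\tau)}(\xi)=G(\xi)\Phi^{(\gamma,\alpha,\rho,\sigma,\tau)}(\xi)$, where $G$ is assembled from $L^{-1}$ and a rational dressing that absorbs the shift of the characteristic exponents $(\tfrac12-\tfrac\alpha3,\tfrac12+\tfrac\alpha6,\tfrac\alpha6)$ in~\eqref{eq:limit_Phi_at_0_generic} under $\alpha\mapsto\alpha+1$ and the sign flip of the off-diagonal entry of~\eqref{eq:RHP:model_1} on $\arg\xi=\pm3\pi/4$; once $G$ is pinned down by matching the jump, the local behavior at $0$, and the asymptotics~\eqref{eq:asy_expansion_Phi}, Theorem~\ref{thm:RHP_unique_solvability} guarantees it solves the $(-\gamma,\alpha+1)$ problem, and feeding the relation into~\eqref{eq:limit_kernel_mult} for $K^{(-\gamma,\alpha+1)}$ reproduces precisely the rank-one correction above. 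This $G$ is the transformation already implicit in Proposition~\ref{pro:sol_Psi} (there $H^{(-\gamma,\alpha+1,\rho,\sigma,\tau,+)}$ carries the expansion coefficients of $\Phi^{(-\gamma,\alpha+1,\rho,\sigma,\tau)}$), and the identity can equivalently be read off as the $n\to\infty$ limit of a finite-$n$ Christoffel-type relation between $\Khat^{W,1}_n$ and $\Khat^{W^{\odd},-1}_n$ stemming from $W^{\odd}=|x|W$. The remaining delicacy is the matching at the origin, where the logarithmic cases $\alpha\in\intZ$ require separate but parallel treatment, and the passage from $\xi,\eta\in\realR_+$ to all of $\realR\setminus\{0\}$, which is routine given the stated analytic continuation of $\tilde\Psi$ and the dependence of $K^{(\cdot,\cdot,\rho,\sigma,\tau)}(\xi^2,\eta^2)$ on $\xi^2,\eta^2$ only.
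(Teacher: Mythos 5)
Your reduction is correct: the factorization $H^{(-\gamma,\alpha+1,\rho,\sigma,\tau,+)}(\xi^2)=\diag(\xi,1,1)\,L$ with $L$ constant and unipotent is exactly right, and plugging into \eqref{eq: Psi_def} does give $\tilde\Psi^{(\gamma,\alpha,\rho,\sigma,\tau)}(\xi)=\diag(\xi,1,1)\,L\,\tilde\Phi^{(\gamma,\alpha,\rho,\sigma,\tau)}(\xi^2)$, from which your main-term/rank-one split and the resulting contiguity identity follow by algebra.

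However, your route is genuinely different from the paper's, and in a way that makes the ``hard part'' you flag unnecessarily hard. The paper exploits \emph{both} factorizations of $\Psi$ that Proposition~\ref{pro:sol_Psi} provides: \eqref{eq: Psi_def}, which expresses $\tilde\Psi$ through $\Phi^{(\gamma,\alpha,\rho,\sigma,\tau)}$ and $H^{(-\gamma,\alpha+1,\rho,\sigma,\tau,+)}$, and the equivalent form \eqref{eq:Psi_alt} obtained from the B\"acklund relation Lemma~\ref{lem:Backlund}, which expresses $\tilde\Psi$ through $\Phi^{(-\gamma,\alpha+1,\rho,\sigma,\tau)}$ and $H^{(\gamma,\alpha,\rho,\sigma,\tau,-)}$. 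Each gives a representation of the corresponding kernel:
\begin{align*}
  K^{(\gamma,\alpha,\rho,\sigma,\tau)}(\xi^2,\eta^2) &= \frac{e^{\frac{2}{3}\alpha\pi i}}{2\pi i(\xi^2-\eta^2)}\,(1,0,0)\,\tilde\Psi(\eta)^{-1}\,H^{(-\gamma,\alpha+1,+)}(\eta^2)H^{(-\gamma,\alpha+1,+)}(\xi^2)^{-1}\,\tilde\Psi(\xi)\,(0,1,0)^T,\\
  K^{(-\gamma,\alpha+1,\rho,\sigma,\tau)}(\xi^2,\eta^2) &= \frac{e^{\frac{2}{3}\alpha\pi i}}{2\pi i(\xi^2-\eta^2)}\,(1,0,0)\,\tilde\Psi(\eta)^{-1}\,(H^{(\gamma,\alpha,-)}(\eta^2)^{-1})^T H^{(\gamma,\alpha,-)}(\xi^2)^T\,\tilde\Psi(\xi)\,(0,1,0)^T.
\end{align*}
Taking $\xi$ times the first plus $\eta$ times the second, the two inner matrix products collapse to $\diag(\eta/\xi,1,1)$ and $\diag(\xi/\eta,1,1)$ respectively, so the weighted sum is $(\xi+\eta)I_3$, and $(\xi+\eta)/(\xi^2-\eta^2)=1/(\xi-\eta)$ gives \eqref{eq:sum_kernel_Psi} in one line. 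Your single-factorization route instead funnels all of the content of Lemma~\ref{lem:Backlund} into a separately stated and separately proved contiguity identity; this is workable (your Schlesinger map $G$ is exactly \eqref{eq:Backlund_transform}), but it duplicates work that Proposition~\ref{pro:sol_Psi} already does for you, and it leaves implicit the cancellation of the phase factor $e^{\frac{2}{3}\pi i}$ that distinguishes the $\alpha$ and $\alpha+1$ versions of \eqref{eq:limit_kernel_mult} --- an accounting step your sketch does not address. If you recognize that \eqref{eq:Psi_alt} \emph{is} your desired Schlesinger relation already packaged at the level of $\Psi$, the detour through a contiguity identity disappears, and your argument becomes the paper's.
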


\begin{rmk}\label{rem:Pearcey_red_2}
  When $\gamma=1$, $\alpha=0$, $\rho=0$ and $\sigma = 3/4$, after an elementary transformation the RH problem for $\Psi^{(\gamma, \alpha, \rho, \sigma, \tau)}(\xi)$ is equivalent to that of the Pearcey parametrix $\Psi^{(\Pe)}$ given in \cite[Section 8]{Bleher-Kuijlaars07}; see also \cite[RH problem 2.1]{Dai-Xu-Zhang21}. Specifically, in the sector $\{ \arg \xi\in(3\pi/8, 5\pi/8) \}$, we have 
  \begin{equation}\label{eq:Psi_Pe}
    \Psi^{(1, 0, 0, \frac{3}{4}, \tau)}(\xi)=C_0(\tau) (\Psi^{(\Pe)} (\xi; 2\tau)^{-1})^T \diag(-1,1,1),
  \end{equation}
  where $\Psi^{(\Pe)}(\xi; 2\tau)$ is $\Psi(z; \rho)$ in \cite[RH problem 2.1]{Dai-Xu-Zhang21} with $z = \xi$ and $\rho = 2\tau$ there, and the matrix $C_0(\tau)$ is given by
  \begin{equation}
    C_0(\tau)= \sqrt{6\pi} i e^{\frac{2}{3}\tau^2}
    \begin{pmatrix}
      0 & 1 & 0 \\
      0 & 0 & 1 \\
      1& 0 & 0
    \end{pmatrix} 
    \begin{pmatrix}
      1 & 0 & \frac{4}{27}\tau^3+\tau \\
      0 & 1 & 0 \\
      0& 0 & 1
    \end{pmatrix}.
  \end{equation}
  From \eqref{eq:RHP:model_Psi} and the definition of $\tilde{ \Psi}  ^{(\gamma, \alpha, \rho, \sigma, \tau)}(\xi)$, we have in $\{ \arg \xi\in(3\pi/8, 5\pi/8) \}$
  \begin{equation}\label{eq:tildePsi_Pe}
    \tilde{\Psi}^{(1, 0, 0, \frac{3}{4}, \tau)}(\xi)= \Psi^{(1, 0, 0, \frac{3}{4}, \tau)}(\xi)
    \begin{pmatrix}
      1 & 0 &0 \\
      0 & 1 & 0 \\
      0& 1 & 1
    \end{pmatrix}
    =C_0(\tau) (\Psi^{(\Pe)} (\xi; 2\tau)^{-1})^T
    \begin{pmatrix}
      -1 & 0 &0 \\
      0 & 1 & 0 \\
      0& 1 & 1
    \end{pmatrix}.
  \end{equation}
  Substituting \eqref{eq:tildePsi_Pe} into \eqref{eq:limit_K^ext} and \eqref{eq:sum_kernel_Psi} yields 
  \begin{equation}\label{eq:Peaecey_red}
    \begin{split}
       K^{(\Pe, 0)}(\xi, \eta; 2\tau) =  & \xi K^{(1, 0, 0, \frac{3}{4}, \tau)}(\xi^2, \eta^2)+ \eta K^{(-1, 1, 0, \frac{3}{4}, \tau)}(\xi^2, \eta^2) \\
      = {}& \frac{1}{2\pi i (\eta-\xi)}
            \begin{pmatrix}
              1 & 0 & 0
            \end{pmatrix}
            \Psi^{(\Pe)}(\eta; 2\tau)^{T}(\Psi^{(\Pe)}(\xi; 2\tau)^{-1})^T
            \begin{pmatrix}
              0 \\ 1 \\ 1
            \end{pmatrix} \\
      = {}& \frac{1}{2\pi i (\eta-\xi)}
            \begin{pmatrix}
              0 & 1 & 1
            \end{pmatrix}
            \Psi^{(\Pe)}(\xi; 2\tau)^{-1} \Psi^{(\Pe)}(\eta; 2\tau)
            \begin{pmatrix}
              1 \\ 0 \\ 0
            \end{pmatrix} \\
      = {}& K^{(\Pe)}(\eta,\xi; 2\tau),
    \end{split}
  \end{equation}
  where $K^{(\Pe)}$ is the classical Pearcey kernel given in \eqref{eq:Pearcey_kernel}; see also \cite[Equation (10.19)]{Bleher-Kuijlaars07}. This, in turn, implies that for $\xi, \eta\in(0,\infty)$,
  \begin{align} \label{eq:Hard_Pearcey_kernel_1}
    K^{(1, 0, 0,\frac{3}{4},\tau)}(\xi, \eta) &=  \frac{1}{2\sqrt{\xi}}( K^{(\Pe)}(\sqrt{\eta},\sqrt{\xi}; 2\tau)+ K^{(\Pe)}(-\sqrt{\eta},\sqrt{\xi}; 2\tau))
                                                \nonumber \\
                                              & =\frac{p(\sqrt{\eta})q''(\sqrt{\xi})-\sqrt{\eta/\xi}p'(\sqrt{\eta})q'(\sqrt{\xi})+p''(\sqrt{\eta})q(\sqrt{\xi})-2\tau
                                                p(\sqrt{\eta})q(\sqrt{\xi})}{\eta-\xi},
  \end{align}
  and 
  \begin{align} \label{eq:Hard_Pearcey_kernel_2}
    K^{(-1, 1, 0,\frac{3}{4},\tau)}(\xi, \eta) &=  \frac{1}{2\sqrt{\eta}}( K^{(\Pe)}(\sqrt{\eta},\sqrt{\xi}; 2\tau)+ K^{(\Pe)}(\sqrt{\eta},-\sqrt{\xi}; 2\tau))
                                                 \nonumber \\
                                               & =\frac{p(\sqrt{\eta})q''(\sqrt{\xi})-\sqrt{\xi/\eta}p'(\sqrt{\eta})q'(\sqrt{\xi})+p''(\sqrt{\eta})q(\sqrt{\xi})-2\tau
                                                 p(\sqrt{\eta})q(\sqrt{\xi})}{\eta-\xi},
  \end{align}
  where the Pearcey functions $p$ and $q$ are defined by \eqref{eq:pearcey_integral} and we use that $p$ is even and $q$ is odd. From \eqref{eq:limit_kernel_mult}, \eqref{eq:Psi_HPeaecey} and \eqref{eq:Psi_HPeaecey_2}, we also have for $\xi, \eta\in(0,\infty)$
  \begin{equation} \label{eq:Hard_Pearcey_kernel_3}
    K^{(1, 0, 0,\frac{3}{2},\tau)}(\xi, \eta) =  K^{(\HPe,- \frac{1}{2})}(\eta,\xi;\tau), \quad K^{(-1, 1, 0,\frac{3}{2},\tau)}(\xi, \eta;\tau) =  K^{(\HPe, \frac{1}{2})}(\eta,\xi;\tau),
  \end{equation}
  where $K^{(\HPe,\alpha')}(\eta,\xi;\tau)$ is the hard-edge Pearcey kernel obtained in \cite[Equations (1.19) and (1.28)]{Kuijlaars-Martinez_Finkelshtein-Wielonsky11}. The relation \eqref{eq:Hard_Pearcey_kernel_3} is a special case of \cite[Proposition 1.2]{Liechty-Wang17}.
\end{rmk}

\subsection{Relation to other models}

\paragraph{Muttalib-Borodin ensemble}

The biorthogonal polynomials defined by \eqref{eq:biorthogonal_def} become the biorthogonal polynomials of the Muttalib-Borodin ensemble with parameter $\theta = 2$ and potential function $W(x)$ on $\realR_+$ if $\gamma = 0$ \cite[Equation (1.5)]{Claeys-Romano14}, \cite[Equation (1.13)]{Wang-Zhang21}. Hence, Theorem \ref{thm:main} implies limiting local statistics for the Muttalib-Borodin ensemble.
\begin{cor} \label{cor:MB}
  Let $n$ particles be in the Muttalib-Borodin ensemble of Laguerre type, such that they are distributed on $(0, \infty)$ and their joint probability density function is
  \begin{equation}
    \frac{1}{Z_n} \prod_{1 \leq i < j \leq n} (x_i - x_j)(x^2_i - x^2_j) \prod^n_{j = 1} x^{\alpha}_j e^{-n(V(x_j))}, \quad V(x) = \frac{x^4}{2} - tx^2 - 2ax, \quad \alpha > -1.
  \end{equation}
  Then as $n \to \infty$, the correlation kernel $K^{\MB}_n$ defined in \cite[Equation (13)]{Wang-Xu25} with $N = n$ and $V_t = V$ there, has the following limits for particles at $0$:
  \begin{enumerate}
  \item
    If $t$ and $a$ are parametrized by \eqref{eq:def_t_c_Pearcey} and $c_1$ is given in \eqref{eq:c_1_Pearcey}, then we have
    \begin{equation} \label{eq:MB_limit_Pearcey}
      \lim_{n \to \infty} \frac{1}{(c_1 n)^{3/4}} K^{\MB}_n(\frac{\xi}{(c_1 n)^{3/4}}, \frac{\eta}{(c_1n)^{3/4}}) = 2\xi K^{(0, \alpha, 0, \frac{3}{4},\tau)}(\xi^2, \eta^2). 
    \end{equation}
  \item
    If $t$ and $a$ are parametrized by \eqref{eq:def_t_c_critical} and \eqref{eq:c_multi_crit}, then we have
    \begin{equation} \label{eq:MB_limit_tri}
     \lim_{n \to \infty} \frac{e^{\frac{2}{3} n^{1/4}(\eta^2 - \xi^2)} }{ (3^{\frac{2}{3}} n)^{3/8}} K^{\MB}_{n}(\frac{\xi}{(3^{\frac{2}{3}} n)^{3/8}}, \frac{\eta}{(3^{\frac{2}{3}} n)^{3/8}})= 2\xi K^{(0, \alpha, -\frac{1}{4}, \frac{3}{2}\sigma, \tau)}(\xi^2, \eta^2).
    \end{equation}
  \end{enumerate}
\end{cor}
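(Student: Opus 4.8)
The strategy is to obtain Corollary~\ref{cor:MB} as the $\gamma = 0$ specialization of Theorem~\ref{thm:main}, once the Muttalib--Borodin correlation kernel $K^{\MB}_n$ has been identified with the reproducing kernel of the biorthogonal ensemble \eqref{eq:biorthogonal_def}--\eqref{eq:kernel_MB} at $\gamma = 0$. No fresh asymptotic analysis is needed: everything is already contained in Theorem~\ref{thm:main}, since $\gamma = 0$ lies in its admissible range $\gamma \in [-1, 1]$.

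First I would match the data. With $\gamma = 0$ the bilinear form in \eqref{eq:biorthogonal_def} reduces to $\langle f, g \rangle_{W, 0} = \int_0^\infty f(x) g(x) W(x)\, dx$, so $\{ p^{W, 0}_n \}$ and $\{ q^{W, 0}_n \}$ are exactly the biorthogonal polynomials of the Muttalib--Borodin ensemble with parameter $\theta = 2$ and weight $W$ on $\realR_+$, as recalled just above. For $W$ as in \eqref{eq:W_specified} and $V(x) = x^4/4 - tx^2/2$ as in \eqref{eq:pdf_ext_source_intro} one has, for $x > 0$,
\[
  W(x) = x^{\alpha} e^{-2n(V(x) - ax)} = x^{\alpha} e^{-n\left( x^4/2 - tx^2 - 2ax \right)},
\]
and $x^4/2 - tx^2 - 2ax$ is precisely the potential $V$ appearing in the statement of Corollary~\ref{cor:MB}. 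Hence the determinantal point process generated by $K^{\MB}_n$ of \cite[Equation~(13)]{Wang-Xu25} is the same as the one generated by the reproducing kernel $K^{W, 0}_n$ of \eqref{eq:kernel_MB}; the two kernels differ at most by a conjugation $K(x, y) \mapsto \frac{g(x)}{g(y)} K(x, y)$, which affects neither the correlation functions nor the form of the scaled limits in Theorem~\ref{thm:main} (the exponential prefactor $e^{\frac{2}{3} n^{1/4}(\eta^2 - \xi^2)}$ in \eqref{eq:limit_K^mult_crit} is itself exactly such a conjugation factor).

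Next I would pass from $K^{W, 0}_n$ to the symmetrized kernel $\Khat^{W, 0}_n$ of \eqref{eq:defn_Khat}. For $x > 0$ the argument $-x$ is negative, and at $\gamma = 0$ the weight factor $1_{x \geq 0}(-x) + \gamma \cdot 1_{x < 0}(-x)$ vanishes, so $K^{W, 0}_n(-x, y) = 0$ and therefore $\Khat^{W, 0}_n(x, y) = K^{W, 0}_n(x, y)$ for all $x, y \in \realR_+$. Consequently, for $\xi, \eta > 0$, the left-hand sides of \eqref{eq:MB_limit_Pearcey} and \eqref{eq:MB_limit_tri} coincide (up to the conjugation above) with the left-hand sides of \eqref{eq:limit_K^mult} and \eqref{eq:limit_K^mult_crit} at $\gamma = 0$. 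Applying \eqref{eq:limit_K^mult} with $\gamma = 0$, with $t, a$ given by \eqref{eq:def_t_c_Pearcey} and $c_1$ by \eqref{eq:c_1_Pearcey}, gives \eqref{eq:MB_limit_Pearcey}; applying \eqref{eq:limit_K^mult_crit} with $\gamma = 0$, with $t, a$ given by \eqref{eq:def_t_c_critical} and $c \to 3^{-1/4}$ as in \eqref{eq:c_multi_crit}, gives \eqref{eq:MB_limit_tri}.

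The only point requiring care is the bookkeeping in the identification $K^{\MB}_n \leftrightarrow K^{W, 0}_n$ --- namely pinning down the precise normalization (and, if relevant, the labeling of the parameter $\alpha$) used in \cite[Equation~(13)]{Wang-Xu25} and checking that the accompanying conjugation is benign. Since a determinantal process is invariant under such conjugations and Theorem~\ref{thm:main} is already stated modulo this ambiguity (with the point $0$ excluded), this is routine and no essential obstacle remains.
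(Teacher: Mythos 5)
Your proposal is correct and follows the same route as the paper: set $\gamma = 0$ in the two limit identities \eqref{eq:limit_K^mult} and \eqref{eq:limit_K^mult_crit} of Theorem~\ref{thm:main}, having first observed that for $\gamma = 0$ the weight factor kills $K^{W,0}_n(-x,y)$ for $x>0$, so that $\Khat^{W,0}_n$ reduces to the Muttalib--Borodin reproducing kernel on the positive half-line with potential $x^4/2 - tx^2 - 2ax$. The paper's proof of this corollary additionally verifies that the limiting kernel in \eqref{eq:MB_limit_Pearcey} agrees with that of \cite[Equation~(21)]{Wang-Xu25}, but that extra identity only supports the remark preceding the corollary rather than the corollary's statement itself, so its omission is not a gap in your argument.
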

We note that the limit formula \eqref{eq:MB_limit_tri} is new, and \eqref{eq:MB_limit_Pearcey} is a special case of \cite[Theorem 1.6]{Wang-Xu25}. However, it is not trivial to see that the limiting kernel $2\xi K^{(0, \alpha, 0, \frac{3}{4},\tau)}(\xi^2, \eta^2)$ is the same as the kernel in \cite[Equation (21)]{Wang-Xu25}. This identification is justified in Section \ref{subsec:proof_MB}.

\paragraph{Universality of Wigner-type random matrices with external source}

Suppose $n$ is large, and $W = (w_{ij})^{2n}_{i, j = 1}$ is a Wigner-type Hermitian random matrix; that is, $w_{ij}$ ($i \leq j$) are independent random variables, with $\E w_{ij} = 0$, $\E \lvert \Re w_{ij} \rvert^2 = \E \lvert \Re w_{ij} \rvert^2 = 1/2$ if $i < j$ and $\E \lvert w_{ii} \rvert^2 = 1$. Let $f_{ij}(z)$ be the density function of $w_{ij}$, where $1 \leq i < j \leq 2n$ and $z \in \compC$ and $f_i(x)$ be the density function of $w_{ii}$ where $1 \leq i \leq 2n$ and $x \in \realR$. Let $A$ be defined in \eqref{eq:pdf_ext_source_intro} with $a = \sqrt{2n} + \bigO(1)$. We expect that if the Hermitian random matrix $X = (x_{ij})^{2n}_{i, j = 1}$ has the probability density proportional to
\begin{equation}
  \lvert \det(X) \rvert^{\alpha} \prod_{1 \leq i < j \leq 2n} f_{ij}(x_{ij}) \prod^n_{i =1} f_i(x_{ii} - a) \prod^{2n}_{j = n + 1} f_j(x_{jj} + a),
\end{equation}
then the local distribution around $0$ of eigenvalues of $X$, up to proper scaling, is approximated by the correlation kernel $K^{(\Pe, \alpha)}(\xi, \eta; 2\tau)$ where $\tau$ depends on $a - \sqrt{2n}$, under suitable assumption on the moments of $f_{ij}$ and $f_i$. We note that if $\alpha = 0$, $X$ can be simply expressed as $W + A$, and this universality result is proved in \cite{Erdos-Kruger-Schroder20}.

\paragraph{Similarity to the two-matrix model for the Ising model coupled with 2D gravity}

We note that the limiting correlation kernel at the multi-critical point in the Hermitian matrix model with external source is naturally associated with $\Psi^{(\gamma, \alpha, \rho, \sigma, \tau)}(\xi)$ whose large $\xi$-asymptotic behavior in \eqref{eq:RHP_Psi_infty} is given by (on $\compC_+$)
 \begin{equation}\Theta(\xi^2)=\diag\left( \rho \omega^j \xi^{\frac{8}{3}} + \sigma \omega^{2j} \xi^{\frac{4}{3}} + \tau \omega^{j} \xi^{\frac{2}{3}}\right)_{j=1}^3, \end{equation}
which involves only $\xi^{2k/3}$, $k=1,2,4$. The leading order of the derivative of the phase function, $5/3$, corresponds to the vanishing order of the limiting mean eigenvalue density \eqref{eq:higher_cusp} in the multi-critical regime. It is interesting that another $3 \times 3$ matrix-valued RH problem where the behavior at infinity involves exponential functions with exponent
\begin{equation}
  \diag\left(\frac{3}{7}\omega^{j-1}\xi^{\frac{7}{3}}+t_5\omega^{1-j}\xi^{\frac{5}{3}}+t_2\omega^{1-j}\xi^{\frac{2}{3}}+t_1\omega^{j-1}\xi^{\frac{1}{3}}\right)_{j=1}^3
\end{equation}
has been used quite recently in the study of the partition function of the two-matrix model with quartic interactions in the multi-critical regime where the limiting zero distribution of the associated polynomials vanishes at the speed $|x|^{4/3}$ near the origin \cite[Section 4]{Hayford24}. This two-matrix model is related to the Ising model coupled with 2D gravity \cite{Duits-Hayford-Lee25}.

\subsection{Organization of the paper}

We transform the random matrix model into biorthogonal ensembles in Section \ref{sec:bi_poly_multiple}, and then express the correlation kernel $K^{\ext}_{2n}$ in terms of solutions of Riemann-Hilbert problems in Section \ref{sec:RH}. The asymptotic analysis of the Riemann-Hilbert problems is carried out in Section \ref{sec:Deift-Zhou}, leading to the proof of Theorem \ref{thm:main} in Section \ref{sec:proof_main}. Theorem \ref{thm:RHP_unique_solvability} is proved in Section \ref{sec:solvability}, and Theorem \ref{prop:boussinesq} is subsequently proved in Section \ref{sec:boussinesq}. The proofs of Propositions \ref{pro:sol_Psi} and \ref{pro:sum_kernel_Psi} and Corollary \ref{cor:MB} are given in Section \ref{sec:remaining}.

\subsection*{Acknowledgements}

Dong Wang was partially supported by the National Natural Science Foundation of China under grant number 12271502, and the University of Chinese Academy of Sciences start-up grant 118900M043. Shuai-Xia Xu was partially supported by the National Natural Science Foundation of China under grant numbers 12431008, 12371257 and 11971492, and by the Guangdong Basic and Applied Basic Research Foundation (Grant No. 2022B1515020063). We thank Tom Claeys, Nathan Hayford, and Arno Kuijlaars for helpful discussions.

\section{Biorthogonal, polyorthogonal and multiple orthogonal polynomials} \label{sec:bi_poly_multiple}

\subsection{The generalized Muttalib-Borodin biorthogonal system}

We consider the biorthogonal system between polynomials $\{ p_n(x) = p^{W, \gamma}_n(x) \}$ and $\{ q_n(x^2)=q^{W, \gamma}_n(x^2) \}$, where $p_n$ and $q_n$ are degree $n$ monic polynomials defined by \eqref{eq:biorthogonal_def}. In general, this biorthogonal system may not be well-defined. However, in a special case, we have the well-definedness of the biorthogonal system:

\begin{prop} \label{prop:basic}
  Let $\What(x)$ be an even function that is non-negative and integrable on $\realR$. We further assume that $\What(x)$ vanishes fast at $\infty$ such that $\lim_{x \to \pm \infty} \lvert x \rvert^{-1} \log \What(x) = -\infty$. Under the following conditions
  \begin{enumerate}
  \item 
    $W(x) = \What(x) e^{\alta x}$ where $\alta > 0$, and
  \item
    $\lvert \gamma \rvert \leq 1$,
  \end{enumerate}
   the biorthogonal system given by \eqref{eq:biorthogonal_def} is well-defined, all $p_n$, $q_n$ are uniquely determined and $h^{W, \gamma}_n \neq 0$.
\end{prop}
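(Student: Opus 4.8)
The plan is to reduce the well-definedness of the biorthogonal system to the non-singularity of a single bilinear form, and then to establish that non-singularity by a positivity argument. Recall that the biorthogonal polynomials $p_n$ and $q_n$ together with the norming constants $h^{W,\gamma}_n$ exist and are uniquely determined precisely when the bimoment matrix $G_n = \bigl( \langle x^j, x^{2k} \rangle_{W,\gamma} \bigr)_{j,k=0}^{n-1}$ is invertible for every $n$; indeed, writing $p_n(x) = x^n + \sum_{j<n} a_j x^j$, the biorthogonality conditions $\langle p_n, q_k \rangle_{W,\gamma} = 0$ for $k < n$ are a linear system whose coefficient matrix is (a minor of) $G_n$, and $h^{W,\gamma}_n = \langle p_n, q_n \rangle_{W,\gamma} = \det G_{n+1}/\det G_n$, which is nonzero exactly when both determinants are. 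So the entire proposition follows once we show $\det G_n \neq 0$ for all $n \geq 1$.

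The first step is to write the bilinear form in a more symmetric way. For polynomials $f, g$ of degree $< n$, set $\langle f, g \rangle_{W,\gamma} = \int_{\realR} f(x) g(x^2)\, \bigl( 1_{x\geq 0} + \gamma\, 1_{x<0} \bigr) W(x)\, dx$. Splitting $f = f_{\even} + f_{\odd}$ into its even and odd parts and using that $g(x^2)$ is even, one gets
\begin{equation*}
  \langle f, g \rangle_{W,\gamma} = \int_0^\infty f_{\even}(x) g(x^2) (1+\gamma) W(x)\, dx + \text{(odd-part contribution)},
\end{equation*}
and after the substitution $x \mapsto \sqrt{x}$ and careful bookkeeping of the even/odd split plus the $e^{\alta x}$ factor (which is what breaks the symmetry $W(-x) = W(x)$ of $\What$), the quadratic form $\mathbf{c}^T G_n \mathbf{c} = \langle \phi, \phi \rangle_{W,\gamma}$ for $\phi(x) = \sum_{j<n} c_j x^j$ can be rearranged into a sum of genuinely positive pieces. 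The key structural fact I would exploit is that $\langle \phi, \phi \rangle_{W,\gamma}$, when $\gamma \in [-1,1]$, is an average of the two positive-definite forms obtained at $\gamma = 1$ (weight $W$ on all of $\realR$, which is a bona fide positive measure since $W > 0$) and at $\gamma = -1$; more precisely $1_{x\geq0} + \gamma 1_{x<0} = \tfrac{1+\gamma}{2}\cdot 1 + \tfrac{1-\gamma}{2}\sgn(x)\cdot(\text{sign adjustments})$, so one should decompose the form accordingly and check each summand is non-negative, with strict positivity of at least one for $\phi \neq 0$.

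The main obstacle, and the place where the hypotheses really enter, is the $\gamma = -1$ (or more generally $\gamma < 1$) endpoint, where the naive weight $(1_{x\geq0} - 1_{x<0})W(x)$ is a signed measure and positivity is not automatic. Here the role of $\alta > 0$ is essential: writing $W(x) = \What(x) e^{\alta x}$ with $\What$ even, the antisymmetric part of the weight is $\sinh(\alta x)\What(x)$, which has the \emph{same sign as $x$}, so that $\int_{\realR} \psi(x)^2 \sgn(x) \What(x) e^{\alta x}\, dx$-type integrals are controlled. Concretely, pairing the contribution of $\phi$ on $(0,\infty)$ with that of its reflection and using $\sinh(\alta x) \geq 0$ for $x \geq 0$, the cross terms organize into $\int_0^\infty \bigl(\text{square}\bigr)\bigl(\text{nonneg}\bigr)\,dx$; the decay hypothesis $\lvert x\rvert^{-1}\log\What(x) \to -\infty$ guarantees all these integrals converge absolutely despite the $e^{\alta x}$ growth, so the manipulations are legitimate. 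Once every piece of $\langle \phi,\phi\rangle_{W,\gamma}$ is shown to be $\geq 0$ with equality forcing $\phi \equiv 0$ (the latter because the $\gamma=1$ piece already dominates a positive-definite Hermitian form), we conclude $\mathbf{c}^T G_n \mathbf{c} > 0$ for $\mathbf{c} \neq 0$, hence $G_n$ is positive definite, hence invertible, and in particular $h^{W,\gamma}_n = \langle p_n, q_n \rangle_{W,\gamma} > 0 \neq 0$, which completes the proof. I would expect the bulk of the write-up to be the bookkeeping in the even/odd decomposition and the convergence estimates; the conceptual content is just ``average of positive-definite forms stays positive definite, and $\alta>0$ keeps the antisymmetric part sign-definite.''
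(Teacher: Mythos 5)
Your reduction step — well-definedness of the biorthogonal system follows from nonvanishing of the leading principal minors of the bimoment matrix — is fine. The problem is the positivity argument you then run.

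You write $\mathbf{c}^T G_n \mathbf{c} = \langle \phi, \phi \rangle_{W,\gamma}$ with $\phi(x)=\sum c_j x^j$, and propose to show this is a positive quadratic form. But look at the indexing of $G_n$: its $(j,k)$ entry is $\langle x^j, x^{2k}\rangle_{W,\gamma}$, not $\langle x^j, x^k\rangle_{W,\gamma}$, because the biorthogonality in \eqref{eq:biorthogonal_def} pairs a degree-$n$ polynomial $p_n(x)$ in $x$ with a degree-$n$ polynomial $q_n(x^2)$ in $x^2$. So
\begin{equation*}
\mathbf{c}^T G_n \mathbf{c} \;=\; \sum_{j,k} c_j c_k \int x^{j+2k}\bigl(1_{x\ge 0}+\gamma 1_{x<0}\bigr)W(x)\,dx \;=\; \int \phi(x)\,\phi(x^2)\,\bigl(1_{x\ge 0}+\gamma 1_{x<0}\bigr)W(x)\,dx,
\end{equation*}
which is emphatically \emph{not} $\int \phi(x)^2 (\cdots)\,dx$. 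The bimoment matrix here is not symmetric, and $\int \phi(x)\phi(x^2)\,d\mu(x)$ is not a sum of squares; even for $\gamma=1$ and $W>0$ on all of $\mathbb{R}$ one can choose $\phi$ (e.g.\ $\phi(x)=x-a$ for $a$ chosen against $W$) to make it negative. So the entire positivity mechanism — ``positive-definite at $\gamma=1$'', ``average of positive-definite forms stays positive definite'' — has no hold on this problem. This is precisely the structural difficulty that makes Muttalib--Borodin-type biorthogonal ensembles harder than classical orthogonal polynomials, and your write-up doesn't engage with it.

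The paper's route is quite different and worth knowing: after the change of variables $x^2=\xi$, the orthogonality conditions become a \emph{multiple} orthogonality on $(0,\infty)$ with respect to two weights $u_1(\xi), u_2(\xi)$ built from $e^{\pm\alta\sqrt\xi}$, and the required nonvanishing of the relevant determinants is exactly the statement that $(u_1,u_2)$ is a weak AT-system (equivalently, that the sequence $w_n$ is a Markov system) in the sense of Nikishin--Sorokin. One then proves the Markov property by a zero-counting argument: any nontrivial linear combination of $w_0,\dots,w_m$, after factoring out its zero at $0$ and on the imaginary axis, is of the form $g(z)e^{\alta z} + (-1)^k\gamma g(-z)e^{-\alta z}$ with $\deg g \le m$, and Rouch\'e's theorem on half-disks shows this has at most $\deg g$ zeros in the right half-plane when $|\gamma|<1$; the endpoint cases $\gamma=\pm 1$ follow by a continuity/contradiction argument. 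This is where the hypotheses $\alta>0$, $|\gamma|\le 1$, and the super-exponential decay of $\What$ actually do their work. Your instinct that $\alta>0$ controls the antisymmetric part and that the decay of $\What$ legitimizes convergence is directionally right, but those facts enter through the zero-counting, not through any hidden positivity.
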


\begin{proof}
  We will show that for each $n \geq 0$, there is a unique monic polynomial $p_n(x) = a_0 + a_1 x + \dotsb + a_n x^n$ with $a_n = 1$ such that
  \begin{equation} \label{eq:orth_cond_p_n}
    \int_{\realR} p_n(x) e^{\alta x} x^{2k} (1_{x \geq 0}(x) + \gamma \cdot 1_{x < 0}(x)) \What(x) dx = 0, \quad k = 0, 1, \dotsc, n - 1,
  \end{equation}
and a unique monic polynomial $q_n(x) = b_0 + b_1 x + \dotsb + b_n x^n$ with $b_n = 1$ such that 
  \begin{equation} \label{eq:orth_cond_q_n}
    \int_{\realR} x^k e^{\alta x} q_n(x^2) (1_{x \geq 0}(x) + \gamma \cdot 1_{x < 0}(x)) \What(x) dx = 0, \quad k = 0, 1, \dotsc, n - 1,
  \end{equation}
  and
  \begin{equation} \label{eq:orth_cond_p_n_q_n}
    \int_{\realR} p_n(x) e^{\alta x} q_n(x^2) (1_{x \geq 0}(x) + \gamma \cdot 1_{x < 0}(x)) \What(x) dx \neq 0.
  \end{equation}
  We note that
  \begin{multline}
    p_n(\sqrt{\xi}) e^{\alta \sqrt{\xi}} + \gamma p_n(-\sqrt{\xi}) e^{-\alta \sqrt{\xi}} = \\
    \left( \sum^{\lfloor n/2 \rfloor}_{j = 0} a_{2j} \xi^j \right) (e^{\alta \sqrt{\xi}} + \gamma e^{-\alta \sqrt{\xi}}) + \left( \sum^{\lfloor (n - 1)/2 \rfloor}_{j = 0} a_{2j + 1} \xi^j \right) \sqrt{\xi}(e^{\alta \sqrt{\xi}} - \gamma e^{-\alta \sqrt{\xi}}).
  \end{multline}
  After the change of variable $x^2 = \xi$, we find that the orthogonality condition \eqref{eq:orth_cond_p_n} becomes
  \begin{equation} \label{eq:p_orth_even}
    \int_{\realR_+} \left[ p_n(\sqrt{\xi}) e^{\alta \sqrt{\xi}} + \gamma p_n(-\sqrt{\xi}) e^{-\alta \sqrt{\xi}} \right] \xi^k \frac{\What(\sqrt{\xi})}{2 \sqrt{\xi}} d\xi = 0, \quad k = 0, 1, \dotsc, n - 1,
  \end{equation}
  the orthogonality condition \eqref{eq:orth_cond_q_n} becomes
  \begin{multline}
    \int_{\realR_+} \xi^j (e^{\alta \sqrt{\xi}} + \gamma e^{-\alta \sqrt{\xi}}) q_n(\xi) \frac{\What(\sqrt{\xi})}{2 \sqrt{\xi}} d\xi = 0, \quad \int_{\realR_+} \xi^k (e^{\alta \sqrt{\xi}} - \gamma e^{-\alta \sqrt{\xi}}) q_n(\xi) \frac{\What(\sqrt{\xi})}{2} d\xi = 0, 
  \end{multline}
  for $j = 0, 1, \dotsc, \left\lfloor \frac{n - 1}{2} \right\rfloor, \quad k = 0, 1, \dotsc, \left\lfloor \frac{n}{2} \right\rfloor - 1$, 
  and condition \eqref{eq:orth_cond_p_n_q_n} transforms into
  \begin{equation} \label{eq:p_orth_even_deg_n}
    \int_{\realR_+} \left[ p_n(\sqrt{\xi}) e^{\alta \sqrt{\xi}} + \gamma p_n(-\sqrt{\xi}) e^{-\alta \sqrt{\xi}} \right] q_n(\xi) \frac{\What(\sqrt{\xi})}{2 \sqrt{\xi}} d\xi \neq 0.
  \end{equation}

  We denote
  \begin{align} \label{eq:multiple_weights_u_i}
    u_1(\xi) = {}& e^{\alta \sqrt{\xi}} + \gamma e^{-\alta \sqrt{\xi}}, & u_2(\xi) = {}& \sqrt{\xi}(e^{\alta \sqrt{\xi}} - \gamma e^{-\alta \sqrt{\xi}}), & w_n(\xi) =
                                                                                                                                                            \begin{cases}
                                                                                                                                                              u_1(\xi) \xi^{\frac{n}{2}}, & n \text{ is even}, \\
                                                                                                                                                              u_2(\xi) \xi^{\frac{n - 1}{2}}, & n \text{ is odd}.
                                                                                                                                                            \end{cases}
  \end{align}
If we can show that $(u_1(\xi), u_2(\xi))$ is a weak AT-system on $(0, +\infty)$ in the sense of \cite[Section 4.4.4]{Nikishin-Sorokin91}, or equivalently, $(w_n(\xi))^{\infty}_{n = 0}$ is a Markov system on $(0, +\infty)$ in the sense of \cite[Section 4.4.2]{Nikishin-Sorokin91}, then for all $n \geq 0$, the determinants \cite[Section 4.4, Exercise 3]{Nikishin-Sorokin91}
\begin{equation}
  M_n := \det \left( m_{j, k} \right)^n_{j, k = 0} \neq 0, \quad \text{where} \quad m_{j, k} = \int_{\realR_+} w_j(\xi) \xi^k \frac{\What(\sqrt{\xi})}{2\sqrt{\xi}} d\xi.
\end{equation}
Then it is straightforward to check that (letting $M_{-1} = 1$)
\begin{align}
  p_n(\sqrt{\xi}) e^{\alta \sqrt{\xi}} + \gamma p_n(-\sqrt{\xi}) e^{-\alta \sqrt{\xi}} = {}& \frac{1}{M_{n - 1}} \det
                                                                                             \begin{pmatrix}
                                                                                               m_{0, 0} & m_{0, 1} & \dots & m_{0, n - 1} & w_0(\xi) \\
                                                                                               m_{1, 0} & m_{1, 1} & \dots & m_{1, n - 1} & w_1(\xi) \\
                                                                                               \vdots & \vdots & \dots & \vdots \\
                                                                                               m_{n, 0} &  m_{n, 1} & \dots &  m_{n, n - 1} &  w_n(\xi) \\
                                                                                             \end{pmatrix}, \\
  q_n(\xi) = {}& \frac{1}{M_{n - 1}} \det
                 \begin{pmatrix}
                   m_{0, 0} & m_{0, 1} & \dots & m_{0, n} \\
                   m_{1, 0} & m_{1, 1} & \dots & m_{1, n} \\
                   \vdots & \vdots & \dots & \vdots \\
                   m_{n - 1, 0} & m_{n - 1, 1} & \dots & m_{n - 1, n} \\
                   1 & \xi & \dots & \xi^n
                 \end{pmatrix},
\end{align}
and \eqref{eq:p_orth_even_deg_n} is also satisfied.
  
  To show that $(w_n(\xi))^{\infty}_{n = 0}$ is a Markov system on $(0, +\infty)$, we need to verify that for any $m + 1$ real numbers $\alpha_0, \alpha_1, \dotsc, \alpha_m$ that are not all zero, the linear combination
  \begin{equation}
    h_{\gamma}(\xi) := \alpha_0 w_0(\xi) + \alpha_1 w_1(\xi) + \dotsb + \alpha_m w_m(\xi)
  \end{equation}
  changes sign at no more than $m$ points on $(0, +\infty)$. Let
  \begin{equation}
    f(z) = \alpha_0 + \alpha_1 z + \alpha_2 z^2 + \dotsb + \alpha_m z^m.
  \end{equation}
  Suppose that the polynomial $f(z)$ has a zero of multiplicity $k$ at $0$ and zeros $\{ i y_j \}^l_{j = 1}$ on the positive imaginary axis $\{ iy : y > 0 \}$, counting multiplicities. (Here we allow $k = 0$ or $l = 0$.) We define
  \begin{equation}
    g(z) = f(z) z^{-k} \prod^l_{j = 1} (z^2 + y^2_j)^{-1},
  \end{equation}
  which is a polynomial of degree $m - k - 2l$. Then we have
  \begin{equation} \label{eq:h_in_f_g}
    h_{\gamma}(z^2) = \h_{\gamma}(z) z^k \prod^{l}_{j = 1} (z^2 + y^2_j), \quad \text{where} \quad \h_{\gamma}(z) = g(z) e^{\alta z} + (-1)^k \gamma g(-z) e^{-\alta z}.
  \end{equation}
  For $x\in(0, +\infty)$, $h_{\gamma}(x^2)$ changes sign if and only if $\tilde{h}_{\gamma}(x)$ does. Since $\h_{\gamma}(x)$ is continuous, it changes sign only at its zeros. However, not all zeros of $\h_{\gamma}(x)$ are sign-changing points, since some zeros may be ``permanance-knots'' of $\h_{\gamma}(x)$, in terms of \cite[Section 4.4]{Nikishin-Sorokin91}.

  We first consider the case $ \gamma \in(-1,1)$. Denote $\Omega = \{ z \in \compC : \lvert z \rvert < R \text{ and } \Re z > 0 \}$ for $R > 0$. If $R$ is large enough, we have $\lvert g(z) \rvert > 0$ for all $z \in \partial \Omega$, and the inequality
  \begin{equation}
    \lvert \gamma g(-z) e^{-\alta z} \rvert < \lvert g(z) e^{\alta z} \rvert \quad \text{on $\partial \Omega$}.
  \end{equation}
  Then by Rouch\'{e}'s theorem, in the region $\Omega$, the number of zeroes of $\h_{\gamma}(z) = g(z) e^{\alta z} +(-1)^k \gamma g(-z) e^{-\alta z}$ is equal to the number of zeroes of $g(z) e^{\alta z}$ there, which is no more than $m - k - 2l$. Especially, the number of zeroes of $\h_{\gamma}(x)$ on $(0, R)$ is no more than $m - k - 2l$. Letting $R \to +\infty$, we derive that the number of zeros of $\h_{\gamma}(x)$ on $(0, +\infty)$ is no more than $m - k - 2l$ by \eqref{eq:h_in_f_g}, and then conclude that $h_{\gamma}(x)$, as well as $h_{\gamma}(x^2)$, changes sign at no more than $m - k - 2l \leq m$ points there. Hence, we prove that $\{ w_n(\xi) \}^{\infty}_{n = 0}$ is a Markov system on $(0, +\infty)$ if $\gamma \in (-1, 1)$.

  We consider $\gamma = 1$ by contradiction. Suppose $\h_1(x) = g(x) e^{\alta x} + (-1)^k g(-x) e^{-\alta x}$ changes sign more than $m$ times on $(0, +\infty)$. Then there are $0 < x_0 < x_1 < \dotsb < x_{m + 1}$ such that $\h_1(x)$ has the property that it attains the same sign at $\{ x_{2i} \}_{0 \leq i \leq \lfloor (m + 1)/2 \rfloor}$, and the opposite sign at $\{ x_{2i + 1} \}_{0 \leq i \leq \lfloor m /2 \rfloor}$. Since $h_{\gamma}(x)$ depends on $\gamma$ continuously, if $\gamma \in (-1, 1)$ is very close to $1$, $\h_{\gamma}(x)$ has the same property, that is, $\h_{\gamma}(x)$ changes sign at least once in each interval $(x_i, x_{i + 1})$, $i = 0, 1, \dotsc, m$, which contradicts the results above that $\h_{\gamma}(x)$ has at most $m$ zeros on $(0, +\infty)$. In the same way, we have that $\h_{-1}(x)$ changes sign no more than $m$ times on $(0, +\infty)$. Hence, we also prove that $\{ w_n(\xi) \}^{\infty}_{n = 0}$ is a Markov system on $(0, +\infty)$ when $\gamma = \pm 1$.
\end{proof}

Throughout this section, we take $W$ to be defined by $\What$ and $\alta > 0$ as in Proposition \ref{prop:basic}. Here we recall the reproducing kernel $K^{W, \gamma}_m$ defined in \eqref{eq:kernel_MB} for the biorthogonal ensemble and $\Khat^{W, \gamma}_m$ defined in \eqref{eq:defn_Khat}. When $\gamma = 0$, we have that $\Khat^{W, 0}_m(x, y) = K^{W, 0}_m(x, y)$ for $x, y > 0$ and it is the correlation kernel of the Muttalib-Borodin ensemble.

\subsection{Polyorthogonal polynomials and multiple orthogonal polynomials} \label{subsec:poly_multiple}

We denote the multiple weights on $\realR_+$
\begin{align} \label{eq:defn_W_1_W_2}
  W_1(\xi) = {}& (e^{\alta \sqrt{\xi}} + \gamma e^{-\alta  \sqrt{\xi}}) \xi^{-\frac{1}{2}} \What(\sqrt{\xi}), & W_2(\xi) = {}& (e^{\alta \sqrt{\xi}} - \gamma e^{-\alta  \sqrt{\xi}}) \What(\sqrt{\xi}),
\end{align}
where $\alta$ and $\What$ are the same as in Proposition \ref{prop:basic}. Following \cite[The first definition in Section 4.3.4]{Nikishin-Sorokin91}, a non-zero polynomial $P_{k_1, k_2}(\xi)$ is called a polyorthogonal polynomial of type II with vector index $(k_1, k_2)$ with respect to the multiple weights $W_1(\xi)$ and $W_2(\xi)$, if
\begin{equation} \label{eq:polyorth_P}
  \int_{\realR_+} P_{k_1, k_2}(\xi) \xi^k W_j(x) dx = 0, \quad k = 0, 1, \dotsc, k_j - 1, \quad \text{for} \quad j = 1, 2,
\end{equation}
and $\deg P_{k_1, k_2}(\xi) \leq k_1 + k_2$. Following \cite[The second definition in Section 4.3.4]{Nikishin-Sorokin91}, if a non-trivial function $Q_{k_1, k_2}(\xi) = A_{k_1, k_2}(\xi) W_1(\xi) + B_{k_1, k_2}(\xi) W_2(\xi)$ satisfies that $A_{k_1, k_2}(\xi)$ is a polynomial with $\deg A_{k_1, k_2}(\xi) \leq k_1 - 1$, $B_{k_1, k_2}(\xi)$ is a polynomial with $\deg B_{k_1, k_2}(\xi) \leq k_2 - 1$, and
\begin{equation} \label{eq:inner_prod_Q}
  \int_{\realR_+} \xi^j Q_{k_1, k_2}(\xi) d\xi = 0, \quad j = 0, 1, \dotsc, k_1 + k_2 - 2,
\end{equation}
then $A_{k_1, k_2}(\xi)$ and $B_{k_1, k_2}(\xi)$ are called polyorthogonal polynomials of type I with respect to the multiple weights $W_1(\xi)$ and $W_2(\xi)$. \footnote{In the literature, the terms polyorthogonal polynomials and multiple orthogonal polynomials are usually used interchangeably. In our paper, we utilize these notions to distinguish the differences.}

We note that the polyorthogonal polynomials are not uniquely determined. Indeed, if $P_{k_1, k_2}(\xi)$ and $Q_{k_1, k_2}$ satisfy the definitions, then for any nonzero constant $c$, $c P_{k_1, k_2}(\xi)$ and $c Q_{k_1, k_2}$ also satisfy the definitions. In the spirit of \cite[Section 4]{Bleher-Kuijlaars04a}, if the polyorthogonal polynomial $P_{k_1, k_2}(\xi)$ additionally satisfies that $\deg P_{k_1, k_2} = k_1 + k_2$ and it is monic, then it is called a multiple orthogonal polynomial of type II. Similarly, if the polyorthogonal polynomials $A_{k_1, k_2}(\xi)$ and $B_{k_1, k_2}(\xi)$ satisfy
\begin{equation}
  \int_{\realR_+} \xi^{k_1 + k_2 - 1} Q_{k_1, k_2}(\xi) d\xi = 1,
\end{equation}
then $A_{k_1, k_2}(\xi)$ and $B_{k_1, k_2}(\xi)$ are called multiple orthogonal polynomials of type I.

Although we do not have the well-definedness of $P_{k_1, k_2}$ and $Q_{k_1, k_2}$ for all $(k_1, k_2)$, we have that for all integers $m$, the polynomials
\begin{align}
  P_{\lfloor \frac{m - 1}{2} \rfloor + 1, \lfloor \frac{m}{2} \rfloor}(\xi) = {}& q_m(\xi), \label{eq:multiple_P} \\
  Q_{\lfloor \frac{m - 1}{2} \rfloor + 1, \lfloor \frac{m}{2} \rfloor}(\xi) = {}& \frac{1}{2h_{m - 1}^{W, \gamma}} \left( p_{m - 1}(\sqrt{\xi}) e^{\alta  \sqrt{\xi}} + \gamma p_{m - 1}(-\sqrt{\xi}) e^{-\alta  \sqrt{\xi}} \right) \xi^{-\frac{1}{2}} \What(\sqrt{\xi}), \label{eq:multiple_Q}
\end{align}
are the unique multiple orthogonal polynomials with index $(\lfloor (m - 1)/2 \rfloor + 1, \lfloor m/2 \rfloor)$, where $q_m$ and $p_{m - 1}$ are the biorthogonal polynomials defined in \eqref{eq:biorthogonal_def} with respect to $W$ specified in Proposition \ref{prop:basic}. 
Hence, for the vector index $(k_1, k_2) = (k, k)$ or $(k + 1, k)$, multiple orthogonal polynomials $P_{k_1, k_2}$ and $Q_{k_1, k_2}$ are well-defined. For $(k_1, k_2) = (k, k + 1)$ or $(k + 2, k)$, we have polyorthogonal polynomials
\begin{align}
  P_{k, k + 1}(\xi) = {}& \langle q_{2k}(x^2), x^{2k + 1} \rangle q_{2k + 1}(\xi) - \langle q_{2k + 1}(x^2), x^{2k + 1} \rangle q_{2k}(\xi), \label{eq:defn_P_k,k+1} \\
  P_{k + 2, k}(\xi) = {}& \langle q_{2k + 1}(x^2), x^{2k + 2} \rangle q_{2k + 2}(\xi) - \langle q_{2k + 2}(x^2), x^{2k + 2} \rangle q_{2k + 1}(\xi), \label{eq:defn_P_k+2,k}
\end{align}
where $\langle \cdot , \cdot \rangle = \langle \cdot , \cdot \rangle_{W, \gamma}$ is the inner product defined in \eqref{eq:biorthogonal_def}. They may not be multiple orthogonal polynomials up to a constant multiple, since their degree may be less than $2k + 1$ and $2k + 2$, respectively. For $m = 2k + 1$ or $2k + 2$, let $p_m(\xi) = \xi^m + a^{(m)}_1 \xi^{m - 1} + \dotsb$, and define
\begin{equation} \label{eq:defn_tilde_p_m}
  \tilde{p}_m(\xi) = p_m(\xi) - a^{(m)}_1 p_{m - 1}(\xi).
\end{equation}
Then for $(k_1, k_2) = (k, k + 1)$ or $(k + 2, k)$, we have the polyorthogonal polynomials $A_{k_1, k_2}$ and $B_{k_1, k_2}$, in the form of $Q_{k_1, k_2}(\xi) = A_{k_1, k_2}(\xi) W_1(\xi) + B_{k_1, k_2}(\xi) W_2(\xi)$,
\begin{align}
  Q_{k, k + 1}(\xi) = {}& \left( \tilde{p}_{2k + 1}(\sqrt{\xi}) e^{\alta  \sqrt{\xi}} + \gamma \tilde{p}_{2k + 1}(-\sqrt{\xi}) e^{-\alta  \sqrt{\xi}} \right) \xi^{-\frac{1}{2}} \What(\sqrt{\xi}), \label{eq:defn_Q_k_k+1} \\
  Q_{k + 2, k}(\xi) = {}& \left( \tilde{p}_{2k + 2}(\sqrt{\xi}) e^{\alta  \sqrt{\xi}} + \gamma \tilde{p}_{2k + 2}(-\sqrt{\xi}) e^{-\alta  \sqrt{\xi}} \right) \xi^{-\frac{1}{2}} \What(\sqrt{\xi}). \label{eq:defn_Q_k+2_k}
\end{align}
They may not be multiple orthogonal polynomials up to a constant multiple, since the inner product between the right-hand side of \eqref{eq:defn_Q_k_k+1} (resp.~\eqref{eq:defn_Q_k+2_k}) with $\xi^{2k}$ (resp.~$\xi^{2k + 1}$) vanishes if $a_m = 0$ in the definition of $\tilde{p}_m$ for $m = 2k + 1$ (resp.~$m = 2k + 2$).

Later in this paper, we use \eqref{eq:multiple_P}, \eqref{eq:multiple_Q}, \eqref{eq:defn_P_k,k+1}, \eqref{eq:defn_P_k+2,k}, \eqref{eq:defn_Q_k_k+1} and \eqref{eq:defn_Q_k+2_k} as the definition of the polyorthogonal polynomials with indices $(k_1, k_2) = (k, k)$, $(k + 1, k)$, $(k, k + 1)$ and $(k + 2, k)$. We do not consider the well-definedness of the polyorthogonal polynomials for other indices.

Denote
\begin{align}
  \tilde{h}_{k_1, k_2} = {}& \int_{\realR_+} Q_{k_1, k_2}(\xi) \xi^{k_1 + k_2 - 1} d\xi, & h^{(j)}_{k_1, k_2} = {}& \int_{\realR_+} P_{k_1, k_2}(\xi) \xi^{k_j} W_j(\xi) d\xi, \quad j = 1, 2.
\end{align}
We have, by arguments above, with $h^{W, \gamma}_n$ defined in \eqref{eq:biorthogonal_def} and $W$ specified in Proposition \ref{prop:basic},
\begin{align} \label{eq:h^1_kk}
  h^{(1)}_{k, k} = {}& 2 h^{W, \gamma}_{2k}, & h^{(1)}_{k - 1, k} = {}&  -2 h^{W, \gamma}_{2k - 1} h^{W, \gamma}_{2k - 2}, & h^{(2)}_{k, k - 1} = {}& 2 h^{W, \gamma}_{2k - 1}, & h^{(2)}_{k + 1, k - 1} = {}& -2 h^{W, \gamma}_{2k} h^{W, \gamma}_{2k - 1}, 
\end{align}
\begin{equation} \label{eq:htilde_kk}
  \tilde{h}_{k, k} = \tilde{h}_{k + 1, k} = 1, 
\end{equation}
and so they are all non-zero. The quantities $h^{(1)}_{k + 1, k}$, $h^{(2)}_{k, k}$, $\tilde{h}_{k, k + 1}$ and $\tilde{h}_{k + 2, k}$ may vanish, but we have

\begin{equation}
  \begin{split}
    \frac{h^{(1)}_{k + 1, k}}{\tilde{h}_{k + 2, k}} = \frac{\langle q_{2k + 1}(x^2), x^{2k + 2} \rangle}{\langle \tilde{p}_{2k + 2}(x), x^{2(2k + 1)} \rangle} = {}& \frac{\langle q_{2k + 1}(x^2), p_{2k + 2}(x) \rangle - a^{(2k + 2)}_1\langle q_{2k + 1}(x^2), p_{2k + 1}(x) \rangle}{\langle p_{2k + 2}(x), x^{2(2k + 1)} \rangle - a^{(2k + 2)}_1 \langle p_{2k + 1}(x), x^{2(2k + 1)} \rangle} \\
    = {}& \frac{-a^{(2k + 2)}_1 h^{W, \gamma}_{2k + 1}}{-a^{(2k + 2)}_1 h^{W, \gamma}_{2k + 1}} = 1,
  \end{split}
\end{equation}
where the validity of the last identity holds for $a^{(2k + 2)}_1 = 0$ if we apply l'H\^{o}pital's rule. Similarly, we have
\begin{equation} \label{eq:tilde_h_kk}
  \frac{h^{(2)}_{k, k}}{\tilde{h}_{k, k + 1}} = 1.
\end{equation}

Like in the multiple orthogonal ensembles considered in \cite{Kuijlaars10a}, for $x, y > 0$, we consider the reproducing kernel of the polyorthogonal (multiple orthogonal) system. Upon using the expression \eqref{eq:multiple_P} and \eqref{eq:multiple_Q} of the multiple orthogonal polynomials, it turns out that the reproducing kernel is \cite[Equation (3.11)]{Bleher-Kuijlaars04a}
\begin{equation} \label{eq:K^mult_and_K^Wc}
  \sum^{m - 1}_{k = 0} P_{\lfloor \frac{k - 1}{2} \rfloor + 1, \lfloor \frac{k}{2} \rfloor}(y) Q_{\lfloor \frac{k}{2} \rfloor + 1, \lfloor \frac{k + 1}{2} \rfloor}(x) = \frac{1}{2\sqrt{x}} \Khat^{W, \gamma}_m(\sqrt{x}, \sqrt{y}),
\end{equation}
where $\Khat^{W, \gamma}$ is defined in \eqref{eq:defn_Khat}.

From a similar argument in \cite{Bleher-Kuijlaars04a}, we have the following Christoffel-Darboux formula:
\begin{equation} \label{eq:CDformula}  
  \begin{split}
    (y - x) \frac{1}{2\sqrt{x}} \Khat^{W, \gamma}_m(\sqrt{x}, \sqrt{y}) = {}& P_{\lfloor \frac{m - 1}{2} \rfloor + 1, \lfloor \frac{m}{2} \rfloor}(y) Q_{\lfloor \frac{m - 1}{2} \rfloor + 1, \lfloor \frac{m}{2} \rfloor}(x) \\
                                                                            & \hspace{-2cm} - \left( \frac{P_{\lfloor \frac{m - 1}{2} \rfloor, \lfloor \frac{m}{2} \rfloor}(y)}{h^{(1)}_{\lfloor \frac{m - 1}{2} \rfloor, \lfloor \frac{m}{2} \rfloor}} \right) 
                                                                              \times \left( \frac{h^{(1)}_{\lfloor \frac{m - 1}{2} \rfloor + 1, \lfloor \frac{m}{2} \rfloor}}{\tilde{h}_{\lfloor \frac{m - 1}{2} \rfloor + 2, \lfloor \frac{m}{2} \rfloor}} Q_{\lfloor \frac{m - 1}{2} \rfloor + 2, \lfloor \frac{m}{2} \rfloor}(x) \right) \\
                                                                            & \hspace{-2cm} - \left( \frac{P_{\lfloor \frac{m - 1}{2} \rfloor + 1, \lfloor \frac{m}{2} \rfloor - 1}(y)}{h^{(2)}_{\lfloor \frac{m - 1}{2} \rfloor + 1, \lfloor \frac{m}{2} \rfloor - 1}} \right) 
                                                                              \times \left( \frac{h^{(2)}_{\lfloor \frac{m - 1}{2} \rfloor + 1, \lfloor \frac{m}{2} \rfloor}}{\tilde{h}_{\lfloor \frac{m - 1}{2} \rfloor + 1, \lfloor \frac{m}{2} \rfloor + 1}} Q_{\lfloor \frac{m - 1}{2} \rfloor + 1, \lfloor \frac{m}{2} \rfloor + 1}(x) \right). 
  \end{split}
\end{equation}
Here we remark that all terms on the right-hand side of \eqref{eq:CDformula} are well-defined, due to the well-definedness of $P_{k_1, k_2}$ and $Q_{k_1, k_2}$ for $k_1 - k_2 = -1$, $0$, $1$ or $2$, and the non-vanishing property of quantities in \eqref{eq:h^1_kk}--\eqref{eq:tilde_h_kk}.

\subsection{Relation to Hermitian matrix model with external source} 

Denote the $2n$-dimensional diagonal matrix
\begin{equation} \label{eq:external_source}
  A = \diag(\underbrace{\alta, \dotsc, \alta}_{\text{$n$ entries}}, \underbrace{-\alta, \dotsc, -\alta}_{\text{$n$ entries}}), \quad \alta > 0.
\end{equation}
We consider the $2n$-dimensional random Hermitian matrix $M$ whose distribution is given by
\begin{equation} \label{eq:pdf_ext_source}
  \frac{1}{Z_n} \det(\What(M)) e^{\Tr(AM)}.
\end{equation}

By the general theory of Hermitian matrix model with external source (\cite[Equation (3.2)]{Zinn_Justin97} and \cite[Equation (3.4)]{Zinn_Justin98} for $A$ in \eqref{eq:pdf_ext_source} with distinct eigenvalues, and \cite[Equation (3.17)]{Bleher-Kuijlaars04a} for $A$ in \eqref{eq:pdf_ext_source} with eigenvalues with multiplicities), we have that the eigenvalue distribution of $M$ is a determinantal point process and also a biorthogonal ensemble in the sense of \cite[Equation (2.8)]{Borodin99}. Hence, the distribution of eigenvalues of $M$ is characterized by a correlation kernel $K^{\ext}_{2n}(x, y)$ that can be constructed as follows. Let $\{ \phi_i(x) \}^{2n}_{j = 1}$ be linearly independent functions whose linear span is $\Span \{ 1, x, \dotsc, x^{2n - 1} \}$, and $\{ \psi_k (x) \}^{2n}_{k = 1}$ be linearly independent functions whose linear span is $\Span \{ e^{\alta x}, x e^{\alta x}, \dotsc, x^{n - 1} e^{\alta x}, e^{-\alta x}, x e^{-\alta x}, \dotsc, x^{n - 1} e^{-\alta x} \}$. Let
\begin{equation} \label{eq:defn_M_and_inner}
  M = (m_{j, k})^{2n}_{j, k = 1}, \quad \text{where} \quad m_{j, k} = \langle \phi_j, \psi_k \rangle_{\ext}, \quad \text{and} \quad \langle f, g \rangle_{\ext} = \int_{\realR} f(x) g(x) \What(x) dx.
\end{equation}
Then one form of the correlation kernel is, as derived in \cite[Proposition 2.2]{Borodin99}
\begin{equation} \label{eq:kernel_for_K^ext}
  K^{\ext}_{2n}(x, y) = (\phi_1(y), \phi_2(y), \dotsc, \phi_{2n}(y)) (M^{-1})^{T} (\psi_1(x), \psi_2(x), \dotsc, \psi_{2n}(x))^T \What(x).
\end{equation}
We note that our correlation kernel $K^{\ext}_{2n}(x, y)$ is equivalent to $K_{2n}(y, x) (\What(x))^{1/2} (\What(y))^{-1/2}$, where $K_{2n}(x, y)$ is defined in \cite[Equation (4.3)]{Bleher-Kuijlaars04a} with $n_1 = n_2 = n$, $\mathfrak{a}_1 = \alta$, $\mathfrak{a}_2 = -\alta$, and $e^{-V(x)}$ there means $\What(x)$ in our paper. To simplify the formula \eqref{eq:kernel_for_K^ext}, we find a particular choice of $\{ \phi_i(x) \}^{2n}_{j = 1}$ and $\{ \psi_k(x) \}^{2n}_{k = 1}$ such that $\langle \phi_j, \psi_k \rangle_{\ext} = 0$ if $j \neq k$.

Since the weight function $\What(x)$ in the definition of $\langle f, g \rangle_{\ext}$ in \eqref{eq:defn_M_and_inner} is even, we restrict $\{ \phi_j(x) \}^n_{j = 1}$ to be in $\Span \{ 1, x^2, \dotsc, x^{2n - 2} \}$, and $\{ \psi_k(x)  \}^n_{k = 1}$ to be in $\Span(\{ x^{2k} \cosh \alta x : k = 0, 1, \dotsc, \lfloor (n - 1)/2 \rfloor \} \cup \{ x^{2k + 1} \sinh \alta x : k = 0, 1, \dotsc, \lfloor n/2 \rfloor - 1 \})$, such that they are all even functions. On the other hand, we restrict $\{ \phi_i(x) \}^{2n}_{j = n + 1}$ to be in $\Span \{ x, x^3, \dotsc, x^{2n - 1} \}$, and $\{ \psi_k \}^{2n}_{k = n + 1}$ to be in $\Span(\{ x^{2k} \sinh \alta x : k = 0, 1, \dotsc, \lfloor (n - 1)/2 \rfloor \} \cup \{ x^{2k + 1} \cosh \alta x : k = 0, 1, \dotsc, \lfloor n/2 \rfloor - 1 \})$, such that they are all odd functions.

Recall that $W(x)$ and $\What(x)$ are related by Proposition \ref{prop:basic}. Now we fix the notation
\begin{align} \label{eq:defn_Wodd}
  W(x) = {}& \What(x) e^{\alta x}, & W^{\odd}(x) = {}& \lvert x \rvert W(x).
\end{align}
Then by comparing the orthogonality satisfied by $\{ \phi_j \}$ and $\{\psi_k \}$, and the biorthogonal system defined in \eqref{eq:biorthogonal_def}, we have, for $k = 1, 2, \dotsc, n$, in terms of $p^{W, \gamma}_n$ and $q^{W, \gamma}_n$ in \eqref{eq:biorthogonal_def},
\begin{align}
  \phi_k(x) = {}& q^{W, 1}_{k - 1}(x^2), & \psi_k(x) = {}& p^{W, 1}_{k - 1}(x) e^{\alta x} + p^{W, 1}_{k - 1}(-x) e^{-\alta x}, \\
  \phi_{n + k}(x) = {}& x q^{W^{\odd}, -1}_{k - 1}(x^2), & \psi_{n + k}(x) = {}& p^{W^{\odd}, -1}_{k - 1}(x) e^{\alta x} - p^{W^{\odd}, -1}_{k - 1}(-x) e^{-\alta x}, \\
  m_{k, k} = {}& 2 h^{W, 1}_{k - 1}, & m_{n + k, n + k} = {}& 2 h^{W^{\odd}, -1}_{k - 1},
\end{align}
and all $m_{j, k} = 0$ if $j \neq k$. These formulas, together with kernels defined in \eqref{eq:kernel_MB} and \eqref{eq:defn_Khat}, imply the following proposition. 

\begin{prop} \label{prop:ext_in_biorth}
  The correlation kernel for the eigenvalues of the Hermitian matrix model with external source defined by \eqref{eq:pdf_ext_source} is given by
  \begin{equation}
    K^{\ext}_{2n}(x, y) = \frac{1}{2} \Khat^{W, 1}_n(x, y) + \frac{y}{2x} \Khat^{W^{\odd}, -1}_n(x, y),
  \end{equation}
  where $W$ and $\What$ are defined in \eqref{eq:defn_Wodd} and the kernel $\Khat$ is defined in \eqref{eq:defn_Khat}.
\end{prop}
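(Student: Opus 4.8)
The plan is to insert the explicit basis recorded immediately above the statement into the general kernel formula \eqref{eq:kernel_for_K^ext} and then recognise the two resulting blocks as symmetrised reproducing kernels. First I would note that the chosen functions form admissible bases for \eqref{eq:kernel_for_K^ext}: since $p^{W,1}_j$, $q^{W,1}_j$, $p^{W^{\odd},-1}_j$, $q^{W^{\odd},-1}_j$ are monic of degree $j$, the even functions $\phi_1,\dots,\phi_n$ span $\Span\{1,x^2,\dots,x^{2n-2}\}$, the odd functions $\phi_{n+1},\dots,\phi_{2n}$ span $\Span\{x,x^3,\dots,x^{2n-1}\}$, and likewise $\psi_1,\dots,\psi_{2n}$ span the exponential–polynomial space appearing in \eqref{eq:kernel_for_K^ext}.

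The main step is to check that the Gram matrix $M=(m_{j,k})$, $m_{j,k}=\langle\phi_j,\psi_k\rangle_{\ext}$, is diagonal, with $m_{k,k}=2h^{W,1}_{k-1}$ for $k\le n$ and $m_{n+k,n+k}=2h^{W^{\odd},-1}_{k-1}$ for $k\le n$. A term pairing an even $\phi$ with an odd $\psi$ (or vice versa) integrates an odd function against the even weight $\What$ and so vanishes, which kills all cross terms between the two index blocks. Within the first block, using that $\What$ is even and folding the $e^{-\alta x}$ half of $\psi_k$ onto the $e^{+\alta x}$ half via $x\mapsto -x$ rewrites $m_{j,k}$ as $2\langle p^{W,1}_{k-1}(x),q^{W,1}_{j-1}(x^2)\rangle_{W,1}$, which is $2h^{W,1}_{k-1}\delta_{jk}$ by \eqref{eq:biorthogonal_def} (for $\gamma=1$ the indicator factor is identically $1$); the second block is the identical computation with weight $W^{\odd}$ and $\gamma=-1$, where the sign flip in $\psi_{n+k}$ and the extra factor $x$ in $\phi_{n+k}=xq^{W^{\odd},-1}_{k-1}(x^2)$ combine so that the fold produces exactly $2\langle p^{W^{\odd},-1}_{k-1}(x),q^{W^{\odd},-1}_{j-1}(x^2)\rangle_{W^{\odd},-1}=2h^{W^{\odd},-1}_{k-1}\delta_{jk}$.

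With $M$ diagonal, $(M^{-1})^{T}$ is diagonal and \eqref{eq:kernel_for_K^ext} collapses to $K^{\ext}_{2n}(x,y)=\sum_{k=1}^{2n} m_{k,k}^{-1}\phi_k(y)\psi_k(x)\What(x)$. Splitting off $k\le n$ and substituting $W(x)=\What(x)e^{\alta x}$ from \eqref{eq:defn_Wodd}, the first block equals $\sum_{k=0}^{n-1}\frac{1}{2h^{W,1}_k}q^{W,1}_k(y^2)\bigl(p^{W,1}_k(x)W(x)+p^{W,1}_k(-x)W(-x)\bigr)$, which by \eqref{eq:kernel_MB}--\eqref{eq:defn_Khat} is precisely $\tfrac12\Khat^{W,1}_n(x,y)$. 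For the block $k>n$ the only care needed is weight bookkeeping: for $\gamma=-1$ the factor $1_{x\ge0}(x)-1_{x<0}(x)$ in \eqref{eq:kernel_MB} is $\sgn(x)$, and since $W^{\odd}(x)=\lvert x\rvert W(x)$ one has $(1_{x\ge0}(x)-1_{x<0}(x))W^{\odd}(x)=xW(x)$; together with $\phi_{n+k}(y)=yq^{W^{\odd},-1}_{k-1}(y^2)$ this produces the prefactor $y/x$ and identifies the block with $\tfrac{y}{2x}\Khat^{W^{\odd},-1}_n(x,y)$. Adding the two blocks gives the claimed identity. The one point to be attentive to is exactly this sign/weight accounting in the odd block — making sure the $\sgn$ from $\gamma=-1$, the $\lvert x\rvert$ in $W^{\odd}$, and the extra $x$ in $\phi_{n+k}$ together yield the single factor $y/(2x)$ in front of $\Khat^{W^{\odd},-1}_n$ — but this is an elementary verification rather than a genuine obstacle.
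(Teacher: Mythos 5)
Your proposal is correct and follows essentially the same route as the paper: choose the biorthogonal functions $q^{W,1}_{k-1}(x^2),\,p^{W,1}_{k-1}(x)e^{\alta x}+p^{W,1}_{k-1}(-x)e^{-\alta x}$ and their odd counterparts as the bases in \eqref{eq:kernel_for_K^ext}, observe by parity and the folding $x\mapsto -x$ that the Gram matrix diagonalizes with entries $2h^{W,1}_{k-1}$ and $2h^{W^{\odd},-1}_{k-1}$, and read off the two blocks as $\tfrac12\Khat^{W,1}_n$ and $\tfrac{y}{2x}\Khat^{W^{\odd},-1}_n$. The paper records the same diagonal Gram entries and then cites \eqref{eq:kernel_MB}--\eqref{eq:defn_Khat}; you have simply filled in the short computations it leaves implicit (the only unstated ingredient you also leave implicit is that $h^{W,1}_{k-1},h^{W^{\odd},-1}_{k-1}\neq 0$ from Proposition~\ref{prop:basic}, which guarantees $M$ is invertible).
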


\section{Riemann-Hilbert problems} \label{sec:RH}

In this section, we formulate vector Riemann-Hilbert (RH) problems for $p^{W, \gamma}_k$ and $ q^{W, \gamma}_k$ that are defined by the biorthogonality \eqref{eq:biorthogonal_def}, and matrix RH problems for $P_{k_1, k_2}$ and $Q_{k_1, k_2}$ that are defined by the polyorthogonality \eqref{eq:polyorth_P} and \eqref{eq:inner_prod_Q}. We specify $\What(x)$ in Proposition \ref{prop:basic} such that
\begin{equation} \label{eq:specified_What}
  \What(x) = \lvert x \rvert^{\alpha} e^{-n\Vhat(x)}, \quad \alpha > -1, \quad \text{$\Vhat(x)$ is analytic on $\realR$, and} \quad \Vhat(-x) = \Vhat(x),
\end{equation}
and then substitute $\alta$ in Proposition \ref{prop:basic} by $2na$, such that $W$, $W^{\odd}$, $W_1$ and $W_2$ in \eqref{eq:defn_Wodd} and \eqref{eq:defn_W_1_W_2} are specified as
\begin{align}
  W(x) = {}& \What(x) e^{2nax}, & W^{\odd}(x) = {}& \lvert x \rvert W(x), \label{eq:specified_W_Wodd} \\
  W_1(x) = {}& (e^{2na \sqrt{x}} + \gamma e^{2na \sqrt{x}}) x^{-\frac{1}{2}} \What(\sqrt{x}), & W_1(x) = {}& (e^{2na \sqrt{x}} + \gamma e^{2na \sqrt{x}}) x^{-\frac{1}{2}} \What(\sqrt{x}).  \label{eq:specified_W1_W2}
\end{align}
In this section, we use $p_k, q_k, h_k$ to mean $p^{W, \gamma}_k, q^{W, \gamma}_k, h^{W, \gamma}_k$, and use $\langle f, g \rangle$ to mean $\langle f, g \rangle_{W, \gamma}$ with fixed $W$ and $\gamma \in [-1, 1]$, unless otherwise stated. Below, we denote the right-half plane by $\halfH = \{z:  \arg z \in (-\pi/2, \pi/2) \}$.

\subsection{Vector RH problems for $p_k$ and $q_k$}

In this subsection we show that $p_k$ and $q_k$ satisfy vector RH problems. The idea goes back to the RH problems satisfied by orthogonal polynomials \cite{Fokas-Its-Kitaev91}, \cite{Fokas-Its-Kitaev92}. If $\gamma = 0$, the biorthogonal polynomials $p_k$ and $q_k$ are Muttalib-Borodin biorthogonal polynomials, and in \cite[Theorems 1.3 and 1.5]{Claeys-Romano14} it is shown that they satisfy vector RH problems; see also \cite{Wang-Zhang21}, \cite{Wang-Xu25}.

With $W_1$ and $W_2$ specified in \eqref{eq:specified_W1_W2}, we consider two vector RH problems as follows:

\begin{RHP} \label{RHP:defn_Y}
  $Y^{(k)} = (Y^{(k)}_1, Y^{(k)}_2)$ is analytic on $(\compC, \halfH \setminus \realR_+)$, such that
  \begin{enumerate}
  \item \label{enu:RHP:defn_Y_1}
    As $z \to \infty$ in $\compC$, $Y^{(k)}_1(z) = z^k + \bigO(z^{k - 1})$.
  \item \label{enu:RHP:defn_Y_2} 
    As $z \to \infty$ in $\halfH$, $Y^{(k)}_2(z) = \bigO(z^{-2(k + 1)})$.
  \item \label{enu:RHP:defn_Y_3}
    As $z \to 0$ in $\compC$, $Y^{(k)}_1(z) = \bigO(1)$, and as $z \to 0$ in $\halfH$,
    \begin{equation} \label{eq:bc_Y_general_c}
      Y^{(k)}_2(z) =
      \begin{cases}
        \bigO(1), & \alpha > 1, \\
        \bigO(\log z), & \alpha = 1, \\
        \bigO(z^{\alpha - 1}), & -1 < \alpha < 1.
      \end{cases}
    \end{equation}
  \item \label{enu:RHP:defn_Y_4}
    On $\realR_+$,
    \begin{equation}
      Y^{(k)}_{2, +}(x) - Y^{(k)}_{2, -}(x) = \frac{\lvert x \rvert^{\alpha}}{2x} \left( Y^{(k)}_1(x) e^{n(2ax - \Vhat(x))} + \gamma Y^{(k)}_1(-x) e^{n(-2ax - \Vhat(x))} \right).
    \end{equation}
  \item \label{enu:RHP:defn_Y_5}
    We have the boundary condition for $ x\in\realR_+$ 
    \begin{equation}
      Y^{(k)}_{2}(ix) =Y^{(k)}_{2}(-ix).
    \end{equation}
  \end{enumerate}
\end{RHP}

\begin{rmk} \label{rmk:c=-1_special}
  When $\gamma = -1$, the factor $(p_k(x) e^{2nax} + \gamma p_k(-x) e^{-2nax})$ has a zero at $x = 0$, and then the boundary condition \eqref{eq:bc_Y_general_c} is strengthened to
  \begin{equation}
    Y^{(k)}_2(z) =
    \begin{cases}
      \bigO(1), & \alpha > 0, \\
      \bigO(\log z), & \alpha = 0, \\
      \bigO(z^{\alpha}), & -1 < \alpha < 0.
    \end{cases}
  \end{equation}
\end{rmk}

\begin{RHP} \label{RHP:Ytilde}
  $\Y^{(k)} = (\Y^{(k)}_1, \Y^{(k)}_2)$ is analytic on $(\halfH, \compC \setminus \realR)$, such that
  \begin{enumerate}
  \item
    As $z \to \infty$ in $\halfH$, $\Y^{(k)}_1(z) = z^{2k} + \bigO(z^{2(k - 1)})$.
  \item 
    As $z \to \infty$ in $\compC$, $\Y^{(k)}_2(z) = \bigO(z^{-(k + 1)})$.
  \item
    As $z \to 0$ in $\halfH$, $\Y^{(k)}_1(z) = \bigO(1)$, and as $z \to 0$ in $\compC$,
    \begin{equation}
      \Y^{(k)}_2(z) =
      \begin{cases}
        \bigO(1), & \alpha > 0, \\
        \bigO(\log z), & \alpha = 0, \\
        \bigO(z^{\alpha}), & -1 < \alpha < 0.
      \end{cases}
    \end{equation}
  \item
    On $\realR_+$,
    \begin{equation}
      \Y^{(k)}_{2, +}(x) - \Y^{(k)}_{2, -}(x) = \Y^{(k)}_1(x) \lvert x \rvert^{\alpha} e^{n(2ax - \Vhat(x))},
    \end{equation}
    and on $\realR_-$, with $x > 0$,
    \begin{equation}
      \Y^{(k)}_{2, +}(-x) - \Y^{(k)}_{2, -}(-x) = \gamma \Y^{(k)}_1(x) \lvert x \rvert^{\alpha} e^{n(-2ax - \Vhat(x))}.
    \end{equation}
  \item
    We have the boundary condition for $ x\in\realR_+$ 
    \begin{equation}
      \Y^{(k)}_{1}(ix) =\Y^{(k)}_{1}(-ix).
    \end{equation}
  \end{enumerate}
\end{RHP}

To state the solutions of RH problems \ref{RHP:defn_Y} and \ref{RHP:Ytilde}, we define for $z\in \halfH \setminus \realR_+$ and a polynomial function $f$
\begin{equation} \label{eq:defn_Cp_k}
  \C f(z) = \frac{1}{2\pi i} \int_{\realR} \frac{1}{x^2 - z^2} f(x) (1_{x \geq 0}(x) + \gamma \cdot 1_{x < 0}(x)) \lvert x \rvert^{\alpha} e^{n(2ax - \Vhat(x))}  dx.
\end{equation}
Then, we have 
\begin{equation}
  \begin{split}
  \C f(z) = {}& \frac{1}{2\pi i} \int_{\realR} \frac{1}{x^2 - z^2} f(x) (1_{x \geq 0}(x) + \gamma \cdot 1_{x < 0}(x)) \lvert x \rvert^{\alpha} e^{n(2ax - \Vhat(x))}  dx \\
  = {}& \frac{1}{2\pi i} \int_{0}^{\infty} \frac{1}{x^2 - z^2} f(x) x^{\alpha} e^{n(2ax - \Vhat(x))}  dx + \frac{\gamma}{2\pi i} \int_{0}^{\infty} \frac{1}{x^2 - z^2} f(-x) x^{\alpha} e^{-n(2ax + \Vhat(x))}  dx \\
  = {}& \frac{1}{2\pi i} \int_{0}^{\infty} \frac{1}{x^2 - z^2} x^{\alpha} (f(x) e^{2nax} + \gamma f(-x) e^{-2nax}) e^{-n\Vhat(x)} dx.
  \end{split}
\end{equation}
Hence, on $\realR_+$, we have
\begin{equation} \label{eq:jump_Cp_k}
  \C f_+(x) - \C f_-(x) = \frac{\lvert x \rvert^{\alpha}}{2x} \left( f(x) e^{n(2ax - \Vhat(x))} + \gamma f(-x) e^{n(-2ax - \Vhat(x))} \right).
\end{equation}

We also define for $z \in \compC \setminus \realR$ and a polynomial function $f$
\begin{equation}
  C f(z) = \frac{1}{2\pi i} \int_{\realR} \frac{1}{x - z} f(x^2) (1_{x \geq 0}(x) + \gamma \cdot 1_{x < 0}(x)) \lvert x \rvert^{\alpha} e^{n(2ax - \Vhat(x))} dx.
\end{equation}
Hence, on $\realR$, we have
\begin{equation}
  C f_+(x) - C f_-(x) = f(x^2) (1_{x \geq 0}(x) + \gamma \cdot 1_{x < 0}(x)) \lvert x \rvert^{\alpha} e^{n(2ax - \Vhat(x))}.
\end{equation}
We have the following result:
\begin{prop}
  Let $p_k(z) = p^{W, \gamma}_k(z)$ and $q_k(z) = q^{W, \gamma}_k(z)$ be the biorthogonal polynomials defined in \eqref{eq:biorthogonal_def} with $W(x)$ specified in \eqref{eq:specified_What} and \eqref{eq:specified_W_Wodd}. Then we have that $(p_k(z), \C p_k(z))$ is the unique solution of RH problem \ref{RHP:defn_Y}, and $(q_k(z^2), C q_k(z))$ is the unique solution of RH problem \ref{RHP:Ytilde}.
\end{prop}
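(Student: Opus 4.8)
The plan is to follow the Fokas--Its--Kitaev scheme: verify directly that the two proposed pairs satisfy the respective RH problems, and then establish uniqueness through a reflection-and-Liouville argument that exploits the built-in symmetries (item \ref{enu:RHP:defn_Y_5} of RH problem \ref{RHP:defn_Y} and the analogous symmetry condition in RH problem \ref{RHP:Ytilde}).

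For the verification in RH problem \ref{RHP:defn_Y}, I take $Y^{(k)}_1 = p_k$ and $Y^{(k)}_2 = \C p_k$. Since $p_k$ is a monic polynomial of degree $k$ it is entire, has the prescribed behavior at $\infty$ and is bounded at $0$; $\C p_k$ is analytic on $\halfH \setminus \realR_+$ by construction, its jump on $\realR_+$ is exactly \eqref{eq:jump_Cp_k} with $f = p_k$, and since $\C p_k(z)$ is a function of $z^2$ one has $\C p_k(ix) = \C p_k(-ix)$. For the decay at $\infty$ I would expand $(x^2 - z^2)^{-1} = -\sum_{j \geq 0} x^{2j} z^{-2j-2}$; the coefficients are constant multiples of $\langle p_k(x), x^{2j}\rangle_{W,\gamma}$, which vanish for $j = 0, \dots, k-1$ by the biorthogonality \eqref{eq:biorthogonal_def}, leaving $\C p_k(z) = \bigO(z^{-2(k+1)})$. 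For the behavior at $0$ I would localize the integral \eqref{eq:defn_Cp_k}: near $x = 0$ the density is $x^\alpha$ times an analytic function whose value at $0$ is a nonzero multiple of $p_k(0)(1+\gamma)$, so the standard estimate of $\int_0^\delta x^\alpha (x^2 - z^2)^{-1}\,dx$ gives $\bigO(z^{\alpha-1})$, $\bigO(\log z)$ or $\bigO(1)$ according as $-1 < \alpha < 1$, $\alpha = 1$ or $\alpha > 1$; when $\gamma = -1$ the density vanishes one order faster at $x = 0$, which upgrades $x^\alpha$ to $x^{\alpha+1}$ and yields the sharper bound of Remark \ref{rmk:c=-1_special}. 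The verification that $(q_k(z^2), Cq_k(z))$ solves RH problem \ref{RHP:Ytilde} is entirely parallel, the decay $Cq_k(z) = \bigO(z^{-(k+1)})$ now reflecting $\langle x^j, q_k(x^2)\rangle_{W,\gamma} = 0$ for $j = 0, \dots, k-1$.

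For uniqueness, let $(Y_1, Y_2)$ be any solution of RH problem \ref{RHP:defn_Y}. Since $Y_1$ is entire with $Y_1(z) = z^k + \bigO(z^{k-1})$, Liouville makes it a monic polynomial of degree $k$. Using item \ref{enu:RHP:defn_Y_5} I extend $Y_2$ to an even function $\hat Y_2$ on $\compC \setminus \realR$ by setting $\hat Y_2(z) := Y_2(-z)$ for $\Re z < 0$; the matching boundary values on the imaginary axis make $\hat Y_2$ analytic across it. Let $\tilde G$ be defined as in \eqref{eq:defn_Cp_k} but with $p_k$ replaced by the polynomial $Y_1$; it is even, tends to $0$ at $\infty$, and has the same $\realR_+$-jump as $\hat Y_2$. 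Since both are even, they then automatically share the same $\realR_-$-jump (for an even $F$ analytic off $\realR$ one has $F_+(-x_0)-F_-(-x_0) = -(F_+(x_0)-F_-(x_0))$), so $\hat Y_2 - \tilde G$ extends analytically to $\compC \setminus \{0\}$; near $0$ it is $\bigO(z^{\alpha-1})$ and even, so the absence of a $z^{-1}$ term forces the singularity at $0$ to be removable. Thus $\hat Y_2 - \tilde G$ is entire and, by item \ref{enu:RHP:defn_Y_2} together with the reflection, vanishes at $\infty$; hence $\hat Y_2 \equiv \tilde G$. Then item \ref{enu:RHP:defn_Y_2} forces $\tilde G(z) = \bigO(z^{-2(k+1)})$, hence $\langle Y_1(x), x^{2j}\rangle_{W,\gamma} = 0$ for $j = 0, \dots, k-1$; by Proposition \ref{prop:basic} this determines $Y_1 = p_k$ and therefore $Y_2 = \C p_k$. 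The uniqueness for RH problem \ref{RHP:Ytilde} runs the same way: its symmetry condition lets one extend $\Y_1$ to an even entire function, whence $\Y_1(z) = q(z^2)$ with $q$ monic of degree $k$, and the decay of $\Y_2$ forces $\langle x^j, q(x^2)\rangle_{W,\gamma} = 0$ for $j = 0, \dots, k-1$, so $q = q_k$ and $\Y_2 = Cq_k$.

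The step that needs genuine care rather than bookkeeping is this reflection-plus-Liouville argument: one must ensure that the decay in the boundary conditions at $\infty$ is uniform up to the imaginary axis, so that after reflection $\hat Y_2 - \tilde G$ is a bounded entire function vanishing along enough directions (a Phragm\'en--Lindel\"of argument in each half-plane disposes of the residual sectors), and one must use the evenness both to match the $\realR_-$-jumps and to kill the possible $z^{-1}$ term at the origin when $-1 < \alpha \leq 0$. The other mildly delicate point is the local analysis at $z = 0$, where the exact exponents $\alpha - 1$ (resp.\ $\alpha$ in the case $\gamma = -1$) must be extracted to match the stated boundary conditions and Remark \ref{rmk:c=-1_special}.
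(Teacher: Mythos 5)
Your proof is correct and follows the same scheme as the paper's: verify each item directly (with the decay at $\infty$ coming from the orthogonality and the behavior at $0$ from localizing the Cauchy integral), then establish uniqueness by Liouville on $Y_1$, deduce $Y_2 = \C Y_1$, read off the orthogonality conditions from the decay of $\C Y_1$, and invoke Proposition \ref{prop:basic}. The paper compresses the middle step to a single sentence (``Items 3, 4 and 5 imply that $Y_2 = \C Y_1$''), whereas you spell out the underlying reflection-and-Liouville argument (evenness of $\hat Y_2$ and $\tilde G$, cancellation of the $\realR_-$-jump, evenness killing the possible $z^{-1}$ Laurent coefficient, decay at $\infty$) --- that is a faithful fleshing-out rather than a different route, and the Phragm\'en--Lindel\"of remark is an acceptable precaution even though the decay in such RH problems is conventionally understood to be uniform in the sector.
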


\begin{proof}
  The proof of this proposition is essentially the same as the proof of \cite[Theorems 1.3 and 1.5]{Claeys-Romano14}. We consider RH problem \ref{RHP:defn_Y}; RH problem \ref{RHP:Ytilde} is analogous.

  First we verify that $(p_k(z), \C p_k(z))$ is a solution of RH problem \ref{RHP:defn_Y}. Item \ref{enu:RHP:defn_Y_1} is clear, Items \ref{enu:RHP:defn_Y_3} and \ref{enu:RHP:defn_Y_5} follow from the definition \eqref{eq:defn_Cp_k} of $\C p_k(z)$, and Item \ref{enu:RHP:defn_Y_4} follows from \eqref{eq:jump_Cp_k}. Finally, Item \ref{enu:RHP:defn_Y_2} follows from the orthogonality \eqref{eq:biorthogonal_def} satisfied by $p_k(z)$.

  Next we show that RH problem \ref{RHP:defn_Y} has at most one solution. We note that $Y_1(z)$ is an entire function, and Item \ref{enu:RHP:defn_Y_1} entails that $Y_1$ is a monic polynomial of degree $k$. Then Items \ref{enu:RHP:defn_Y_3}, \ref{enu:RHP:defn_Y_4} and \ref{enu:RHP:defn_Y_5} imply that $Y_2(z)$ is related to $Y_1(z)$ as $Y_2(z) = \C Y_1(z)$, where $\C$ is the transform defined in \eqref{eq:defn_Cp_k}. Finally, Item \ref{enu:RHP:defn_Y_2} is equivalent to $Y_1(z)$ being orthogonal to $1, z^2, z^4, \dotsc, z^{2k - 2}$, where the orthogonality is defined in \eqref{eq:biorthogonal_def}. Since it is proved in Proposition \ref{prop:basic} that the biorthogonal polynomials are unique, we conclude that $Y_1(z) = p_k(z)$, and then $Y_2(z) = \C p_k(z)$.
\end{proof}

\subsection{Matrix RH problems for $P_{k_1, k_2}$ and $Q_{k_1, k_2}$}

In this subsection we show that $P_{k_1, k_2}$ and $Q_{k_1, k_2}$ satisfy matrix RH problems. The RH problems of multiple orthogonal polynomials have been extensively used in asymptotic analysis; see \cite{Geronimo-Kuijlaars-Van_Assche00}, \cite{Kuijlaars10} and \cite{Kuijlaars10a} for example.

With $W_1$ and $W_2$ specified in \eqref{eq:specified_W1_W2}, we consider two matrix RH problems as follows:
\begin{RHP} \label{RH:X_tilde}
  $\X^{(m)} = (\X^{(m)}_{ij})^2_{i, j = 0}$ is analytic on $\compC \setminus \realR_+$, such that
  \begin{enumerate}
  \item
    As $z \to \infty$,
    \begin{equation}
      \X^{(m)}(z) = (I + \bigO(z^{-1})) \times
      \begin{cases}
        \diag(z^{2k}, z^{-k}, z^{-k}), & m = 2k, \\
        \diag(z^{2k + 1}, z^{-k - 1}, z^{-k}), & m = 2k + 1.
      \end{cases}
    \end{equation}
    As $z \to 0$,
    \begin{align}
      \X^{(m)}_{i, 0}(z) = {}& \bigO(1), && \notag \\
      \X^{(m)}_{i, 1}(z) = {}&
                                                                    \begin{cases}
                                                                      \bigO(1), & \alpha > 1, \\
                                                                      \bigO(\log z), & \alpha = 1, \\
                                                                      \bigO(z^{\frac{\alpha - 1}{2}}), & -1 < \alpha < 1,
                                                                    \end{cases}
      & \X^{(m)}_{i, 2}(z) = {}&
                                 \begin{cases}
                                   \bigO(1), & \alpha > 0, \\
                                   \bigO(\log z), & \alpha = 0, \\
                                   \bigO(z^{\frac{\alpha}{2}}), & -1 < \alpha < 0.
                                 \end{cases}
    \end{align}
  \item
    On $\realR_+$,
    \begin{equation}
      \X^{(m)}_+(x) = \X^{(m)}_-(x) J_{\X}(x), \quad J_{\X}(x) =
      \begin{pmatrix}
        1 & W_1(x) & W_2(x) \\
        0 & 1 & 0 \\
        0 & 0 & 1
      \end{pmatrix}.
    \end{equation}
  \end{enumerate}
\end{RHP}

\begin{RHP} \label{RH:X}
  $X^{(m)} = (X^{(m)}_{ij})^2_{i, j = 0}$ is analytic on $\compC \setminus \realR_+$, such that
  \begin{enumerate}
  \item
    As $z \to \infty$,
    \begin{equation}
      X^{(m)}(z) = (I + \bigO(z^{-1})) \times
      \begin{cases}
        \diag(z^{-2k}, z^k, z^k), & m = 2k, \\
        \diag(z^{-2k - 1}, z^{k + 1}, z^k), & m = 2k + 1.
      \end{cases}
    \end{equation}
    As $z \to 0$,
    \begin{align}
      X^{(m)}_{i, 0}(z) = {}&
                              \begin{cases}
                                \bigO(1), & \alpha > 1, \\
                                \bigO(\log z), & \alpha = 1, \\
                                \bigO(z^{\frac{\alpha - 1}{2}}), & -1 < \alpha < 1,
                              \end{cases}
      & X^{(m)}_{i, 1}(z) = {}& \bigO(1), & X^{(m)}_{i, 2}(z) = {}& \bigO(1),
    \end{align}
  \item
    On $\realR_+$,
    \begin{equation}
      X^{(m)}_+(x) = X^{(m)}_-(x) J_X(x), \quad J_X(x) =
      \begin{pmatrix}
        1 & 0 & 0 \\
        -W_1(x) & 1 & 0 \\
        -W_2(x) & 0 & 1
      \end{pmatrix}.
    \end{equation}
  \end{enumerate}
\end{RHP}

To state the solutions of RH problems \ref{RH:X_tilde} and \ref{RH:X}, we define for any polynomial $f$, the transforms
\begin{align}
  \C_1 f(z) = {}& \frac{1}{2\pi i} \int_{\realR_+} \frac{1}{x - z} f(x) W_1(x) dx, & \C_2 f(z) = {}& \frac{1}{2\pi i} \int_{\realR_+} \frac{1}{x - z} f(x) W_2(x) dx,
\end{align}
and for any integrable function $g$, the transform
\begin{equation}
  C_0 g(z) = \frac{1}{2\pi i} \int_{\realR_+} \frac{1}{x - z} g(x) dx.
\end{equation}
Then $C_0 g(z)$ and $\C_i f(z)$ ($i = 1, 2$) are defined in $\compC \setminus \realR_+$, and on $\realR_+$,
\begin{align}
  C_0 g_+(x) - C_0 g_-(x) = {}& g(x), & \C_i f_+(x) - \C_i f_-(x) = {}& f(x) W_i(x), \quad i = 1, 2.
\end{align}

We have the following result:
\begin{prop} \label{prop:matrix_RHP_solution}
  Let $P_m$ and $Q_m$ be defined in Section \ref{subsec:poly_multiple} with $W_1$ and $W_2$ specified in \eqref{eq:specified_W1_W2}. The unique solution of RH problem \ref{RH:X_tilde} is
  \begin{multline} \label{eq:constr_Xtilde}
    \X^{(m)}(z) = (\X^{(m)}_{i, j}(z))^2_{i, j = 0} = (\C^{(m)})^{-1} \\
    \times
    \begin{pmatrix}
      P_{\lfloor \frac{m - 1}{2} \rfloor + 1, \lfloor \frac{m}{2} \rfloor}(z) & \C_1 P_{\lfloor \frac{m - 1}{2} \rfloor + 1, \lfloor \frac{m}{2} \rfloor}(z) & \C_2 P_{\lfloor \frac{m - 1}{2} \rfloor + 1, \lfloor \frac{m}{2} \rfloor}(z) \\
      P_{\lfloor \frac{m - 1}{2} \rfloor, \lfloor \frac{m}{2} \rfloor}(z) & \C_1 P_{\lfloor \frac{m - 1}{2} \rfloor, \lfloor \frac{m}{2} \rfloor}(z) & \C_2 P_{\lfloor \frac{m - 1}{2} \rfloor, \lfloor \frac{m}{2} \rfloor}(z) \\
      P_{\lfloor \frac{m - 1}{2} \rfloor + 1, \lfloor \frac{m}{2} \rfloor - 1}(z) &    \C_1 P_{\lfloor \frac{m - 1}{2} \rfloor + 1, \lfloor \frac{m}{2} \rfloor - 1}(z) & \C_2 P_{\lfloor \frac{m - 1}{2} \rfloor + 1, \lfloor \frac{m}{2} \rfloor - 1}(z)
    \end{pmatrix},
  \end{multline}
  where $\C^{(m)}$ is a constant diagonal matrix
  \begin{equation}
    \C^{(m)} = \diag \left( 1, -\frac{1}{2\pi i} h_{\lfloor \frac{m - 1}{2} \rfloor, \lfloor \frac{m}{2} \rfloor}^{(1)}, -\frac{1}{2\pi i} h_{\lfloor \frac{m - 1}{2} \rfloor + 1, \lfloor \frac{m}{2} \rfloor - 1}^{(2)} \right),
  \end{equation}
  and the unique solution of RH problem \ref{RH:X} is 
  \begin{multline} \label{eq:defn_X^(m)}
    X^{(m)}(z) = (X^{(m)}_{i, j}(z))^2_{i, j = 0} = (C^{(m)})^{-1} \\
    \times
    \begin{pmatrix}
      -C_0 Q_{\lfloor \frac{m - 1}{2} \rfloor + 1, \lfloor \frac{m}{2} \rfloor}(z) &  A_{\lfloor \frac{m - 1}{2} \rfloor + 1, \lfloor \frac{m}{2} \rfloor}(z) &   B_{\lfloor \frac{m - 1}{2} \rfloor + 1, \lfloor \frac{m}{2} \rfloor}(z) \\
      - C_0 Q_{\lfloor \frac{m - 1}{2} \rfloor + 2, \lfloor \frac{m}{2} \rfloor}(z) & A_{\lfloor \frac{m - 1}{2} \rfloor + 2, \lfloor \frac{m}{2} \rfloor}(z) & B_{\lfloor \frac{m - 1}{2} \rfloor + 2, \lfloor \frac{m}{2} \rfloor}(z) \\
      -  C_0 Q_{\lfloor \frac{m - 1}{2} \rfloor + 1, \lfloor \frac{m}{2} \rfloor + 1}(z) & A_{\lfloor \frac{m - 1}{2} \rfloor + 1, \lfloor \frac{m}{2} \rfloor + 1}(z) & B_{\lfloor \frac{m - 1}{2} \rfloor + 1, \lfloor \frac{m}{2} \rfloor + 1}(z)
    \end{pmatrix},
  \end{multline}
  where $C^{(m)}$ is a constant diagonal matrix
    \begin{align}
    C^{(m)} = \diag \left( \frac{1}{2\pi i},     C^{(m)}_{22}, C^{(m)}_{33}\right),
  \end{align}
  with $ C^{(m)}_{22}$ and $ C^{(m)}_{33}$ given respectively by the leading coefficients of $A_{\lfloor \frac{m - 1}{2} \rfloor + 2, \lfloor \frac{m}{2} \rfloor}$ and $B_{\lfloor \frac{m - 1}{2} \rfloor + 1, \lfloor \frac{m}{2} \rfloor + 1}$. 
\end{prop}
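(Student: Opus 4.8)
The plan is to prove Proposition~\ref{prop:matrix_RHP_solution} as a multiple-orthogonal-polynomial instance of the Fokas--Its--Kitaev characterization (cf.\ the treatments in \cite{Bleher-Kuijlaars04a} and \cite{Claeys-Romano14}): first check that the explicit matrices on the right of \eqref{eq:constr_Xtilde} and \eqref{eq:defn_X^(m)} solve the respective RH problems, then establish uniqueness. I would carry out RH problem~\ref{RH:X_tilde} in detail and deduce RH problem~\ref{RH:X} from it by a duality argument.

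For the existence part of RH problem~\ref{RH:X_tilde}, the jump follows because columns $1$ and $2$ of the proposed matrix are $\C_1$ and $\C_2$ of the polynomials in column $0$, and the jump relations $\C_j f_+ - \C_j f_- = f W_j$ together with the analyticity of those polynomials across $\realR_+$ reproduce $J_{\X}$. The behaviour at $\infty$ comes from expanding $1/(x-z) = -\sum_{\ell \ge 0} x^\ell z^{-\ell-1}$: the polyorthogonality \eqref{eq:polyorth_P} forces $\C_j P_{k_1, k_2}(z) = \bigO(z^{-k_j-1})$ with leading coefficient $-\tfrac{1}{2\pi i} h^{(j)}_{k_1, k_2}$, and combining this with the prescribed degrees of the $P_{k_1,k_2}$ across the three rows — treating $m = 2k$ and $m = 2k+1$ separately because of the floor functions — and normalizing by the diagonal matrix $\C^{(m)}$ yields the required form $(I + \bigO(z^{-1}))\diag(\cdots)$. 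The behaviour at $0$ is read off from the local form of the weights in \eqref{eq:specified_W1_W2}, namely $W_1(x) \sim c\,(1+\gamma)\,x^{(\alpha-1)/2}$ and $W_2(x) \sim c'\,(1-\gamma)\,x^{\alpha/2}$ as $x \to 0^+$: the first column is $\bigO(1)$, and for the Cauchy transforms the local contribution $\int_0^\delta x^\beta(x-z)^{-1}\,dx$, which is $\bigO(1)$, $\bigO(\log z)$ or $\bigO(z^\beta)$ according as $\beta > 0$, $\beta = 0$ or $-1 < \beta < 0$, reproduces the three branches in RH problem~\ref{RH:X_tilde} (with harmless extra vanishing in the degenerate cases $\gamma = \mp 1$).

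For uniqueness in RH problem~\ref{RH:X_tilde}, let $\mathcal X$ be any solution. Column $0$ of $J_{\X}$ is trivial, so $\mathcal X_{i,0}$ is entire, and the bound at $\infty$ makes it a polynomial of the prescribed degree (the bound at $0$ being then automatic). The jump forces $\mathcal X_{i,j} = \C_j \mathcal X_{i,0} + (\text{entire})$ for $j = 1,2$, and the decay conditions at $\infty$ both annihilate the entire additive terms and impose exactly the polyorthogonality \eqref{eq:polyorth_P} with the correct indices; since for $k_1 - k_2 \in \{-1, 0, 1, 2\}$ the polyorthogonal polynomials are uniquely determined in Section~\ref{subsec:poly_multiple} (ultimately from the Markov-system property in Proposition~\ref{prop:basic}), each row of $\mathcal X$ is pinned down up to a scalar, and the leading-coefficient normalization built into $\C^{(m)}$ — whose relevant entries are the nonvanishing $h^{(j)}_{k_1,k_2}$ of \eqref{eq:h^1_kk} — fixes those scalars. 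Equivalently, one can run the classical argument: $\det \X^{(m)}$ is entire because $\det J_{\X} \equiv 1$, tends to $1$ at $\infty$, and has a removable singularity at $0$ (the individual-entry exponents all lie in $(-1,\infty)$, so the determinant cannot carry a non-integer singular leading term, while a pole is excluded by the growth bound together with $\det J_{\X} \equiv 1$), so $\det \X^{(m)} \equiv 1$; then $R := \mathcal X(\X^{(m)})^{-1}$ extends analytically across $\realR_+$, has a removable singularity at $0$ by the same bookkeeping, and tends to $I$ at $\infty$, hence $R \equiv I$.

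Finally, for RH problem~\ref{RH:X} I would use the identity $J_X = (J_{\X}^{-1})^T$: if $Y_+ = Y_- J_{\X}$ then $((Y^{-1})^T)_+ = ((Y^{-1})^T)_- J_X$, so with $Y = \X^{(m)}$ and $\det \X^{(m)} \equiv 1$ the matrix $((\X^{(m)})^{-1})^T = (\mathrm{adj}\,\X^{(m)})^T$ has the correct (inverted-diagonal) asymptotics at $\infty$; its column $0$ is a $2 \times 2$ minor built from the Cauchy-transform columns of $\X^{(m)}$, hence carries the prescribed mild singularity at $0$, while columns $1$ and $2$ are minors involving the polynomial column $(P_{k_1,k_2})$ and a single Cauchy-transform column, whose leading $z \to 0$ singular parts cancel because the leading singularity of $\C_j P_{k_1,k_2}(z)$ is $P_{k_1,k_2}(0)$ times a row-independent constant — so those entries are $\bigO(1)$, as required. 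It then remains to identify the cofactor entries, up to the diagonal constant $C^{(m)}$, with $-C_0 Q$, $A$ and $B$ (with the appropriate indices); this is the classical type-I/type-II duality and follows by matching the defining integral relations against the formulas \eqref{eq:multiple_P}--\eqref{eq:defn_Q_k+2_k}, after which uniqueness is inherited from that of RH problem~\ref{RH:X_tilde}. I expect the genuinely delicate step to be the local analysis at the origin: establishing the precise branch and logarithmic behaviour of every Cauchy transform, and — in the uniqueness argument — confirming that the singularity of $\det \X^{(m)}$ and of $R$ at $0$ is truly removable in the borderline regime $\alpha$ close to $-1$, where the naive entry-by-entry bounds are weakest; the remaining work of tracking degrees and the $\pm 1$ index shifts through the floor functions for $m$ even versus odd, though routine, must be carried out carefully.
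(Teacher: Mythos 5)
Your plan is a correct, self-contained re-derivation, but it takes a genuinely different route from the paper. The paper's proof is essentially a one-step reduction: RH problems~\ref{RH:X_tilde} and~\ref{RH:X} are recognized as the type-II and type-I multiple-orthogonal-polynomial Riemann--Hilbert problems of \cite[Theorems 3.1 and 2.1]{Geronimo-Kuijlaars-Van_Assche00} for the weights $W_1,W_2$ and index $(n_1,n_2)=(\lfloor(m-1)/2\rfloor+1,\lfloor m/2\rfloor)$, and the only thing to check is that the five indices $(n_1,n_2)$, $(n_1\pm1,n_2)$, $(n_1,n_2\pm1)$ are \emph{normal} — i.e.\ that the polyorthogonal polynomials with those indices are determined up to scaling — which was already arranged in Section~\ref{subsec:poly_multiple} via \eqref{eq:multiple_P}--\eqref{eq:defn_Q_k+2_k} and ultimately the Markov-system property of Proposition~\ref{prop:basic}. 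Your proposal instead writes out the full FIK-style argument (jump check, Cauchy-transform expansion at $\infty$, local endpoint analysis at $0$, determinant/Liouville uniqueness, and transpose-inverse duality to pass from $\X^{(m)}$ to $X^{(m)}$), and you correctly pin the uniqueness step to the same normality input. What this buys you is self-containedness and explicit visibility of where the constants $\C^{(m)}$, $C^{(m)}$ come from, at the cost of length and of having to carry out the cofactor-cancellation bookkeeping at $z=0$ (the $s_1 s_2$ cross-terms in the $2\times 2$ minors do cancel, but the next-order terms must be tracked, particularly for $-1<\alpha<0$ and the degenerate cases $\gamma=\pm1$); the paper gets all of that for free by citing \cite{Geronimo-Kuijlaars-Van_Assche00} and absorbing the endpoint behaviour into the equivalence. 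Your flagging of the origin analysis as the delicate step is accurate and is precisely the part the paper's citation sidesteps.
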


Before giving the proof of Proposition \ref{prop:matrix_RHP_solution}, we remark that all the diagonal entries of $\C^{(m)}$ and $C^{(m)}$ are nonzero, due to \eqref{eq:h^1_kk} and \eqref{eq:htilde_kk}:
\begin{align}
  \C^{(2k)}_{22} = {}& h^{(1)}_{k - 1, k}, &
  \C^{(2k)}_{33} = {}& h^{(2)}_{k, k - 1}, &
  \C^{(2k + 1)}_{22} = {}& h^{(1)}_{k, k}, & 
  \C^{(2k + 1)}_{33} = {}& h^{(2)}_{k + 1, k - 1}, \\
  C^{(2k)}_{22} = {}& \frac{1}{2h^{W, \gamma}_{2k}}, &
  C^{(2k)}_{33} = {}& 1, &
  C^{(2k + 1)}_{33} = {}& \frac{1}{2h^{W, \gamma}_{2k + 1}}, &
  C^{(2k + 1)}_{22} = {}& 1.
\end{align}
\begin{proof}[Proof of Proposition \ref{prop:matrix_RHP_solution}]
  RH problem \ref{RH:X} is equivalent to the RH problem in \cite[Theorem 2.1]{Geronimo-Kuijlaars-Van_Assche00} for type I multiple orthogonal polynomials and RH problem \ref{RH:X_tilde} is equivalent to the RH problem in \cite[Theorem 3.1]{Geronimo-Kuijlaars-Van_Assche00} for type II multiple orthogonal polynomials, such that the multiple weights there are $W_1$ and $W_2$, and the index there is $(n_1, n_2) = (\lfloor (m - 1)/2 \rfloor + 1, \lfloor m/2 \rfloor)$, which is $(k, k)$ if $m = 2k$ or $(k + 1, k)$ if $m = 2k+1$. Hence, Proposition \ref{prop:matrix_RHP_solution} is a consequence of \cite[Theorems 2.1 and 3.1]{Geronimo-Kuijlaars-Van_Assche00}, as long as we can show that the indices $(n_1, n_2)$, $(n_1 + 1, n_2)$, $(n_1, n_2 + 1)$, $(n_1 - 1, n_2)$ and $(n_1, n_2 - 1)$ are all ``normal'' in the sense of \cite[Section 1.1]{Geronimo-Kuijlaars-Van_Assche00}, or equivalently, $P_{k_1, k_2}$ and $Q_{k_1, k_2}$ are determined up to a constant multiple for $(k_1, k_2)$ to be any one of the five indices aforementioned, or equivalently, for $(k_1, k_2)$ such that $k_1 - k_2 = -1$, $0$, $1$ or $2$. The well-definedness of such $P_{k_1, k_2}$ and $Q_{k_1, k_2}$ is given in \eqref{eq:multiple_P}--\eqref{eq:defn_Q_k+2_k}.
\end{proof}

We note that from the relation between RH problems \ref{RH:X_tilde} and \ref{RH:X} and their unique solvability, we find, analogous to \cite[Theorem 4.1]{Geronimo-Kuijlaars-Van_Assche00},
\begin{equation} \label{eq:relation_X_Xtilde}
  X^{(m)}(z)^T = \X^{(m)}(z)^{-1}.
\end{equation}
Then, from \eqref{eq:CDformula}, the expressions for $\X^{(m)}(z)$ and $X^{(m)}(z)$ and \eqref{eq:relation_X_Xtilde}, we have 
\begin{equation}\label{eq: kernel_K_mult}
  \begin{split}
    \frac{1}{2\sqrt{x}} \Khat^{W, \gamma}_m(\sqrt{x}, \sqrt{y}) = {}& \frac{1}{2\pi i(y - x)}
                                                                      \begin{pmatrix}
                                                                        1 & 0 & 0
                                                                      \end{pmatrix} 
                                                                      (\X^{(m)})^T(y)  ((\X^{(m)})^T(x))^{-1}
                                                                      \begin{pmatrix}
                                                                        0 \\
                                                                        W_1(x) \\
                                                                        W_2(x)
                                                                      \end{pmatrix} \\
    = {}& \frac{1}{2\pi i(y - x)}
          \begin{pmatrix}
            1 & 0 & 0
          \end{pmatrix} 
          (X^{(m)})^{-1}(y) X^{(m)}(x)
          \begin{pmatrix}
            0 \\
            W_1(x) \\
            W_2(x)
          \end{pmatrix}.
  \end{split}
\end{equation}
\section{Asymptotic analysis of $Y^{(n + k)}$ with quartic weight function} \label{sec:Deift-Zhou}

In this section, we derive the double scaling limits of the correlation kernel by studying the asymptotics of the vector-valued Riemann-Hilbert problem for $Y^{(n + k)}$; see \cite{Wang-Zhang21} for a similar approach for the hard edge universality of the Muttalib-Borodin ensemble. In the Pearcey regime and the multi-critical regime considered in this section, we focus on the construction of the local parametrix near the origin by using the solution of the new model RH problem for $\Phi^{(\gamma, \alpha, \rho, \sigma, \tau)}(\xi)$, which is associated with the Boussinesq hierarchy.

For this purpose, we further specialize $W(x)$ and $\What(x)$ in \eqref{eq:specified_What} and \eqref{eq:specified_W_Wodd} by letting $\Vhat(x) = x^4/2 - t x^2$. Hence, we express $W(x)$ as
\begin{align} \label{eq:quartic_V}
 W(x) = {}& \lvert x \rvert^{\alpha} e^{-nV(x)}, & V(x) = \Vhat(x) - 2ax  = {}& \frac{x^4}{2} - t x^2 - 2ax, \quad \alpha > -1.
\end{align}
We consider two regimes: the Pearcey (critical) regime and the multi-critical regime. In the Pearcey regime, we parametrize the coefficient $t$ and $a$ by \eqref{eq:def_t_c_Pearcey} with parameter $\tau \in \realR$,
 and in the multi-critical regime, we parametrize the coefficient $t$ and $a$ by \eqref{eq:def_t_c_critical} and \eqref{eq:c_multi_crit}, with parameters $\tau, \sigma \in \realR$.
From the formulas \eqref{eq:def_t_c_Pearcey} and \eqref{eq:def_t_c_critical}, we have that for $t, a$ in both regimes,
\begin{equation} \label{eq:c_identity}
  10c^4 - 2tc^2 - ac = 1,
\end{equation}
and
\begin{align}
  2c^4 + ac - 1 =
  \begin{cases}
    \frac{4}{3} \sqrt{\frac{1 - 3c^4}{2n}} \tau, & \text{$t,a$ in the Pearcey regime}, \\
    \frac{2}{3} n^{-\frac{3}{4}} \tau, &  \text{$t, a$ in the multi-critical regime}.
  \end{cases}
\end{align}

\subsection{$\gfn$-functions} \label{subsec:G_and_tildeG}

In both the Pearcey regime and the multi-critical regime, we let
\begin{align} \label{eq:defn_J_c}
  J_c(s) = {}& c (s + 1)^{\frac{3}{2}} s^{-\frac{1}{2}}, && \text{and} & b = {}& 3^{\frac{3}{2}} 2^{-\frac{1}{2}} c,
\end{align}
where $J_c(s)$ is defined in \cite[Equation (1.25)]{Claeys-Romano14} and $b$ is defined in \cite[Equation (4.1)]{Claeys-Romano14}, both with $\theta = 2$. Here we recall the results in \cite[Section 4.1]{Claeys-Romano14} and set up the notation to be used later. Let $\gamma_1 \subseteq \compC_+$ be the curve connecting $-1$ to $1/2$ such that $J_c$ maps $\gamma_1$ to the interval $[0, b]$. Then let $\gamma_2 = \overline{\gamma_1} \subseteq \compC_-$ and $\gamma = \gamma_1 \cup \gamma_2$ be the closed contour oriented counterclockwise and $D$ be the region enclosed by $\gamma$. Recall that $J_c$ maps $\compC \setminus \overline{D}$ to $\compC \setminus [0, b]$ and $D \setminus [-1, 0]$ to $\halfH \setminus [0, b]$ univalently. Then we define $I_1(z)$ and $I_2(z)$ to be the inverse of $J_c(z)$ on $\compC \setminus \overline{D}$ and $D \setminus [-1, 0]$ respectively. 

With the potential function $V(x)$ given in \eqref{eq:quartic_V}, we have the auxiliary function
\begin{equation}
  U_c(s) = V'(J_c(s)) J_c(s) = 2c^4 \frac{(s + 1)^6}{s^2} - 2t c^2 \frac{(s + 1)^3}{s} - 2ac \frac{(s + 1)^{\frac{3}{2}}}{s^{\frac{1}{2}}}
\end{equation}
as in \cite[Lemma 4.2]{Claeys-Romano14}. Then, by \cite[Equation (4.12)]{Claeys-Romano14}, we have
\begin{equation} \label{eq:explicit_N(s)_critical}
  N(s) =
  \begin{cases}
    N_{\outside}(s) = \frac{-1}{2\pi i} \oint_{\gamma} \frac{U_c(\xi)}{\xi - s} d\xi + 1, & \text{outside $\gamma$}, \\
    N_{\inside}(s) = \frac{1}{2\pi i} \oint_{\gamma} \frac{U_c(\xi)}{\xi - s} d\xi - 1, & \text{inside $\gamma$},
  \end{cases}
\end{equation}
and the explicit formulas are, with the help of \eqref{eq:c_identity},
\begin{align}
  N_{\outside}(s) = {}& 2c^4\frac{6s + 1}{s^2} - 2tc^2 \frac{1}{s} + 2ac \left( \frac{3}{2} + s - \frac{(s + 1)^{\frac{3}{2}}}{s^{\frac{1}{2}}} \right) + 1 \notag \\
  = {}& -2a J_c(s) + 2c^4 \frac{s + 1}{s^2} + ac \frac{(2s + 1)(s + 1)}{s} + \frac{s + 1}{s}, \\
  N_{\inside}(s) = {}& U_c(s) - N_{\outside}(s).
\end{align}
We note that \eqref{eq:c_identity} is the specialization of \cite[Equation (4.3)]{Claeys-Romano14}.

Below we define the functions $G(z)$ on $\compC \setminus [0, b]$ and $\G(z)$ on $\halfH \setminus [0, b]$, and then functions $\gfn(z)$ on $\compC \setminus (-\infty, b]$ and $\gfntilde(z)$ on $\halfH \setminus [0, b]$. $G(z)$ and $\G(z)$ are uniquely determined by $N(s)$ as in \cite[Equation (4.17)]{Claeys-Romano14}
\begin{equation} \label{eq:N(s)}
  N(s) =
  \begin{cases}
    J_c(s) G(J_c(s)), & \text{outside $\gamma$}, \\
    J_c(s) \G(J_c(s)), & \text{inside $\gamma$},
  \end{cases}
\end{equation}
and $\gfn(z)$ and $\gfntilde(z)$ are uniquely determined by
\begin{align}
  \gfn'(z) = {}& G(z), &\gfn(z) - \log z = {}& \bigO(z^{-1}) \quad \text{as } z \to \infty \text{ in $\compC$}, \\
  \gfntilde'(z) = {}& \G(z), & \gfntilde(z) - 2\log z = {}& \bigO(z^{-1}) \quad \text{as } z \to \infty \text{ in $\halfH$}.
\end{align}

Next, we consider the limits of $G(z)$, $\G(z)$, $\gfn(z)$ and $\gfntilde(z)$ as $z \to 0$. We use \cite[Equations (3.26) and (3.27)]{Wang-Zhang21} ($\omega = e^{2\pi i/3}$)
\begin{align}
  I_1(z) = {}& -1 -
  \begin{cases}
    \omega^2 (\frac{z}{c})^{\frac{2}{3}} (1 + \frac{1}{3} \omega^2 (\frac{z}{c})^{\frac{2}{3}} - \frac{1}{81} (\frac{z}{c})^2 + \bigO(z^{\frac{8}{3}})), & \arg z \in (0, \pi), \\
    \omega (\frac{z}{c})^{\frac{2}{3}} (1 + \frac{1}{3} \omega (\frac{z}{c})^{\frac{2}{3}} - \frac{1}{81} (\frac{z}{c})^2 + \bigO(z^{\frac{8}{3}})), & \arg z \in (-\pi, 0),
  \end{cases} \label{eq:I_1_at_0} \\
  I_2(z) = {}& -1 -
  \begin{cases}
    \omega (\frac{z}{c})^{\frac{2}{3}} (1 + \frac{1}{3} \omega (\frac{z}{c})^{\frac{2}{3}} - \frac{1}{81} (\frac{z}{c})^2 + \bigO(z^{\frac{8}{3}})), & \arg z \in (0, \frac{\pi}{2}), \\
    \omega^2 (\frac{z}{c})^{\frac{2}{3}} (1 + \frac{1}{3} \omega^2 (\frac{z}{c})^{\frac{2}{3}} - \frac{1}{81} (\frac{z}{c})^2 + \bigO(z^{\frac{8}{3}})), & \arg z \in (-\frac{\pi}{2}, 0).
  \end{cases} \label{eq:I_2_at_0}
\end{align}
Substituting \eqref{eq:I_1_at_0} and \eqref{eq:I_2_at_0} into the expression of $G(z)$ and $\G(z)$ in \eqref{eq:N(s)}, we have
\begin{multline} \label{eq:asy_G_quadratic}
  G(z) = G(J_c(I_1(z))) = -2 a + \frac{z}{(I_1(z) + 1)^2} \left( \frac{2c^2}{I_1(z)} + \frac{a}{c}(2I_1(z) + 1) + \frac{1}{c^2} \right) = -2a + (\frac{z}{c})^{-\frac{1}{3}} \\
  \times
          \begin{cases}
            (- 2c^3 - a + \frac{1}{c}) \omega^2 + \frac{1}{3}(10 c^3 -4a - \frac{2}{c})\omega (\frac{z}{c})^{\frac{2}{3}} & \\
            + \frac{1}{3}(-10 c^3 + a + \frac{1}{c}) (\frac{z}{c})^{\frac{4}{3}} & \\
            + \frac{1}{81}(200 c^3 - 8a - \frac{10}{c}) \omega^2 (\frac{z}{c})^2 + \bigO(z^{\frac{8}{3}}), & \arg z \in (0, \pi), \\
            (- 2c^3 - a + \frac{1}{c}) \omega^{-2} + \frac{1}{3}(10 c^3 -4a - 2 \frac{1}{c})\omega^{-1} (\frac{z}{c})^{\frac{2}{3}} & \\
            + \frac{1}{3}(-10 c^3 + a + \frac{1}{c}) (\frac{z}{c})^{\frac{4}{3}} & \\
            + \frac{1}{81}(200 c^3 - 8a - \frac{10}{c}) \omega^{-2} (\frac{z}{c})^2 + \bigO(z^{\frac{8}{3}}), & \arg z \in (-\pi, 0),
          \end{cases}
\end{multline}
\begin{multline} \label{eq:asy_tildeG_quadratic}
  \G(z) = \G(J_c(I_2(z))) = V'(z) + 2a - (\frac{z}{c})^{-\frac{1}{3}} \\
  \times
  \begin{cases}
    (- 2c^3 - a + \frac{1}{c}) \omega^{-2} + \frac{1}{3}(10 c^3 -4a - \frac{2}{c})\omega^{-1} (\frac{z}{c})^{\frac{2}{3}} & \\
    + \frac{1}{3}(-10 c^3 + a + \frac{1}{c}) (\frac{z}{c})^{\frac{4}{3}} & \\
    + \frac{1}{81}(200 c^3 - 8a - \frac{10}{c}) \omega^{-2} (\frac{z}{c})^2 + \bigO(z^{\frac{8}{3}}), & \arg z \in (0, \frac{\pi}{2}), \\
    (- 2c^3 - a + \frac{1}{c}) \omega^2 + \frac{1}{3}(10 c^3 -4a - \frac{2}{c})\omega (\frac{z}{c})^{\frac{2}{3}} & \\
    + \frac{1}{3}(-10 c^3 + a + \frac{1}{c}) (\frac{z}{c})^{\frac{4}{3}} & \\
    + \frac{1}{81}(200 c^3 - 8a - \frac{10}{c}) \omega^2 (\frac{z}{c})^2 + \bigO(z^{\frac{8}{3}}), & \arg z \in (-\frac{\pi}{2}, 0),
  \end{cases}
\end{multline}
We define, for $x \in (0, b)$,
\begin{equation}
  \psi(x) = \frac{1}{2\pi i}(G_-(x) - G_+(x)) = \frac{1}{2\pi i}(\G_-(x) - \G_+(x)).
\end{equation}
Then, as $x \to 0$,
\begin{multline} \label{eq:asy_psi}
  \psi(x) = \frac{\sqrt{3}}{2\pi} \left[ (- 2c^3 - a + \frac{1}{c}) (\frac{x}{c})^{-\frac{1}{3}} - \frac{1}{3}(10 c^3 -4a - \frac{2}{c}) (\frac{x}{c})^{\frac{1}{3}} + \frac{1}{81}(200 c^3 - 8a - \frac{10}{c}) (\frac{x}{c})^{\frac{5}{3}} \right] \\
  + \bigO(x^{\frac{7}{3}}).
\end{multline}

From the limiting formulas of $G(z)$ and $\G(z)$, we derive that
\begin{align}
  \gfn(z) = {}& C \pm \pi i - 2a z +  \\
              &
                \begin{cases}
                  \frac{3}{2}(- 2c^4 - ac + 1) \omega^2 (\frac{z}{c})^{\frac{2}{3}} + \frac{1}{4}(10 c^4 -4ac - 2)\omega (\frac{z}{c})^{\frac{4}{3}} & \\
                  + \frac{1}{6}(-10 c^4 + ac + 1) (\frac{z}{c})^2 & \\
                  + \frac{1}{216}(200 c^4 - 8ac - 10) \omega^2 (\frac{z}{c})^{\frac{8}{3}} + \bigO(z^{\frac{10}{3}}), & \arg z \in (0, \pi), \\
                  \frac{3}{2}(- 2c^4 - ac + 1) \omega^{-2} (\frac{z}{c})^{\frac{2}{3}} + \frac{1}{4}(10 c^4 -4ac - 2)\omega^{-1} (\frac{z}{c})^{\frac{4}{3}} & \\
                  + \frac{1}{6}(-10 c^4 + ac + 1) (\frac{z}{c})^2 & \\
                  + \frac{1}{216}(200 c^4 - 8ac - 10) \omega^{-2} (\frac{z}{c})^{\frac{8}{3}} + \bigO(z^{\frac{10}{3}}), & \arg z \in (-\pi, 0),
                \end{cases} \label{eq:qfn_at_0_quadratic} \\
  \gfntilde(z) = {}& C' \pm \pi i + V(z) + 2az - \\
              &
                \begin{cases}
                  \frac{3}{2}(- 2c^4 - ac + 1) \omega^{-2} (\frac{z}{c})^{\frac{2}{3}} + \frac{1}{4}(10 c^4 -4ac - 2)\omega^{-1} (\frac{z}{c})^{\frac{4}{3}} & \\
                  + \frac{1}{6}(-10 c^4 + ac + 1) (\frac{z}{c})^2 & \\
                  + \frac{1}{216}(200 c^4 - 8ac - 10) \omega^{-2} (\frac{z}{c})^{\frac{8}{3}} + \bigO(z^{\frac{10}{3}}), & \arg z \in (0, \frac{\pi}{2}), \\
                  \frac{3}{2}(- 2c^4 - ac + 1) \omega^2 (\frac{z}{c})^{\frac{2}{3}} + \frac{1}{4}(10 c^4 -4ac - 2)\omega (\frac{z}{c})^{\frac{4}{3}} & \\
                  + \frac{1}{6}(-10 c^4 + ac + 1) (\frac{z}{c})^2 & \\
                  + \frac{1}{216}(200 c^4 - 8ac - 10) \omega^2 (\frac{z}{c})^{\frac{8}{3}} + \bigO(z^{\frac{10}{3}}), & \arg z \in (-\frac{\pi}{2}, 0),
                \end{cases} \label{eq:qfntilde_at_0_quadratic}
\end{align}
where $C$ and $C'$ are constants and $\pm$ depends on whether $z \in \compC_{\pm}$. If we denote $\ell = C + C'$, and define
\begin{equation} \label{eq:defn_phi}
  \phi(z) = \gfn(z) + \gfntilde(z) - V(z) - \ell,
\end{equation}
then we have the limiting formula
\begin{equation} \label{eq:phi_quadratic_crit}
  \phi(z) = \pm 2\pi i + \sqrt{3} i \times
  \begin{cases}
    -\frac{3}{2}(- 2c^4 - ac + 1) (\frac{z}{c})^{\frac{2}{3}} + \frac{1}{4}(10 c^4 -4ac - 2) (\frac{z}{c})^{\frac{4}{3}} & \\
    - \frac{1}{216}(200 c^4 - 8ac - 10) (\frac{z}{c})^{\frac{8}{3}} + \bigO(z^{\frac{10}{3}}), & \arg z \in (0, \frac{\pi}{2}), \\
    \frac{3}{2}(- 2c^4 - ac + 1) (\frac{z}{c})^{\frac{2}{3}} - \frac{1}{4}(10 c^4 -4ac - 2) (\frac{z}{c})^{\frac{4}{3}} & \\
    + \frac{1}{216}(200 c^4 - 8ac - 10) (\frac{z}{c})^{\frac{8}{3}} + \bigO(z^{\frac{10}{3}}), & \arg z \in (-\frac{\pi}{2}, 0).
  \end{cases}
\end{equation}

In the Pearcey regime, we have, as $x \to 0$,
\begin{equation}
   \psi(x) = \frac{\sqrt{3}}{2\pi c} \left[ -\frac{4}{3} \tau \sqrt{\frac{1 - 3c^4}{2n}} (\frac{x}{c})^{-\frac{1}{3}} + (2 + \bigO(n^{-\frac{1}{2}})) (1 - 3c^4) (\frac{x}{c})^{\frac{1}{3}} \right] + \bigO(x^{\frac{5}{3}}),
\end{equation}
\begin{align}
  \gfn(z) = {}& C \pm \pi i - 2a z + \sqrt{1 - 3c^4} \notag \\
              & \times
                \begin{cases}
                  -(2n)^{-\frac{1}{2}} 2\tau \omega^{-1} (\frac{z}{c})^{\frac{2}{3}} - (\frac{3}{2} \sqrt{1 - 3c^4} + \bigO(n^{-\frac{1}{2}})) \omega^{-2} (\frac{z}{c})^{\frac{4}{3}} + \bigO(z^2), & \arg z \in (0, \pi), \\
                  -(2n)^{-\frac{1}{2}} 2\tau \omega (\frac{z}{c})^{\frac{2}{3}} - (\frac{3}{2} \sqrt{1 - 3c^4} + \bigO(n^{-\frac{1}{2}})) \omega^2 (\frac{z}{c})^{\frac{4}{3}} + \bigO(z^2), & \arg z \in (-\pi, 0),
                \end{cases} \\
  \gfntilde(z) = {}& C' \pm \pi i + V(z) + 2az - \sqrt{1 - 3c^4} \notag \\
              & \times
                \begin{cases}
                  -(2n)^{-\frac{1}{2}} 2\tau \omega (\frac{z}{c})^{\frac{2}{3}} - (\frac{3}{2} \sqrt{1 - 3c^4} + \bigO(n^{-\frac{1}{2}})) \omega^2 (\frac{z}{c})^{\frac{4}{3}} + \bigO(z^2), & \arg z \in (0, \frac{\pi}{2}), \\
                  -(2n)^{-\frac{1}{2}} 2\tau \omega^{-1} (\frac{z}{c})^{\frac{2}{3}} - (\frac{3}{2} \sqrt{1 - 3c^4} + \bigO(n^{-\frac{1}{2}})) \omega^{-2} (\frac{z}{c})^{\frac{4}{3}} + \bigO(z^2), & \arg z \in (-\frac{\pi}{2}, 0),
                \end{cases}
\end{align}
\begin{multline}
  \phi(z) = \pm 2\pi i + \sqrt{3} \sqrt{1 - 3c^4} i \\
  \times
  \begin{cases}
    (2n)^{-\frac{1}{2}} 2\tau (\frac{z}{c})^{\frac{2}{3}} - (\frac{3}{2} \sqrt{1 - 3c^4} + \bigO(n^{-\frac{1}{2}})) (\frac{z}{c})^{\frac{4}{3}}  + \bigO(z^2), & \arg z \in (0, \frac{\pi}{2}), \\
    -(2n)^{-\frac{1}{2}} 2\tau (\frac{z}{c})^{\frac{2}{3}} + (\frac{3}{2} \sqrt{1 - 3c^4} + \bigO(n^{-\frac{1}{2}})) (\frac{z}{c})^{\frac{4}{3}}  + \bigO(z^2), & \arg z \in (-\frac{\pi}{2}, 0).
  \end{cases}
\end{multline}

In the multi-critical regime, we have, as $x \to 0$,
\begin{equation}
   \psi(x) = \frac{\sqrt{3}}{2\pi c} \left[ -\frac{2}{3} \tau n^{-\frac{3}{4}} (\frac{x}{c})^{-\frac{1}{3}} + (2\sigma n^{-\frac{1}{2}} + \bigO(n^{-\frac{3}{4}})) (\frac{x}{c})^{\frac{1}{3}} + (\frac{2}{3} + \bigO(n^{-\frac{1}{2}})) (\frac{x}{c})^{\frac{5}{3}} + \bigO(x^{\frac{7}{3}}) \right],
\end{equation}
\begin{multline}
  \gfn(z) = C \pm \pi i - 2a z + \\
  \begin{cases}
    -\tau n^{-\frac{3}{4}} \omega^{-1} (\frac{z}{c})^{\frac{2}{3}} - (\frac{3}{2} \sigma n^{-\frac{1}{2}} + \bigO(n^{-\frac{3}{4}})) \omega^{-2} (\frac{z}{c})^{\frac{4}{3}} & \\
    - (\frac{1}{3} + \bigO(n^{-\frac{1}{2}})) (\frac{z}{c})^2 + (\frac{1}{4} + \bigO(n^{-\frac{1}{2}})) \omega^{-1} (\frac{z}{c})^{\frac{8}{3}} + \bigO(z^{\frac{10}{3}}), & \arg z \in (0, \pi), \\
    -\tau n^{-\frac{3}{4}} \omega (\frac{z}{c})^{\frac{2}{3}} - (\frac{3}{2} \sigma n^{-\frac{1}{2}} + \bigO(n^{-\frac{3}{4}})) \omega^2 (\frac{z}{c})^{\frac{4}{3}} & \\
    - (\frac{1}{3} + \bigO(n^{-\frac{1}{2}})) (\frac{z}{c})^2 + (\frac{1}{4} + \bigO(n^{-\frac{1}{2}})) \omega (\frac{z}{c})^{\frac{8}{3}} + \bigO(z^{\frac{10}{3}}), & \arg z \in (-\pi, 0),
  \end{cases}
\end{multline}
\begin{multline}
  \gfntilde(z) = C' \pm \pi i + V(z) + 2az - \\
  \begin{cases}
    -\tau n^{-\frac{3}{4}} \omega (\frac{z}{c})^{\frac{2}{3}} - (\frac{3}{2} \sigma n^{-\frac{1}{2}} + \bigO(n^{-\frac{3}{4}})) \omega^2 (\frac{z}{c})^{\frac{4}{3}} & \\
    - (\frac{1}{3} + \bigO(n^{-\frac{1}{2}})) (\frac{z}{c})^2 + (\frac{1}{4} + \bigO(n^{-\frac{1}{2}})) \omega (\frac{z}{c})^{\frac{8}{3}} + \bigO(z^{\frac{10}{3}}), & \arg z \in (0, \frac{\pi}{2}), \\
    -\tau n^{-\frac{3}{4}} \omega^{-1} (\frac{z}{c})^{\frac{2}{3}} - (\frac{3}{2} \sigma n^{-\frac{1}{2}} + \bigO(n^{-\frac{3}{4}})) \omega^{-2} (\frac{z}{c})^{\frac{4}{3}} & \\
    - (\frac{1}{3} + \bigO(n^{-\frac{1}{2}})) (\frac{z}{c})^2 + (\frac{1}{4} + \bigO(n^{-\frac{1}{2}})) \omega^{-1} (\frac{z}{c})^{\frac{8}{3}} + \bigO(z^{\frac{10}{3}}), & \arg z \in (-\frac{\pi}{2}, 0),
  \end{cases}
\end{multline}
\begin{multline}
  \phi(z) = \pm 2\pi i + \sqrt{3} i \\
  \times
  \begin{cases}
   \tau n^{-\frac{3}{4}} (\frac{z}{c})^{\frac{2}{3}} - (\frac{3}{2} \sigma n^{-\frac{1}{2}} + \bigO(n^{-\frac{3}{4}})) (\frac{z}{c})^{\frac{4}{3}} - (\frac{1}{4} + \bigO(n^{-\frac{1}{2}})) (\frac{z}{c})^{\frac{8}{3}} + \bigO(z^{\frac{10}{3}}), & \arg z \in (0, \frac{\pi}{2}), \\
    -\tau n^{-\frac{3}{4}} (\frac{z}{c})^{\frac{2}{3}} + (\frac{3}{2} \sigma n^{-\frac{1}{2}} + \bigO(n^{-\frac{3}{4}})) (\frac{z}{c})^{\frac{4}{3}} + (\frac{1}{4} + \bigO(n^{-\frac{1}{2}})) (\frac{z}{c})^{\frac{8}{3}} + \bigO(z^{\frac{10}{3}}), & \arg z \in (-\frac{\pi}{2}, 0).
  \end{cases}
\end{multline}

Finally, we have limiting formulas for $G(z)$, $\G(z)$, $\gfn(z)$, and $\gfntilde(z)$ as $z \to b$. We only state the result that in both regimes, for $z \in D(b, \epsilon) \setminus [0, b]$ with $\epsilon > 0$ small enough,
\begin{equation} \label{eq:behaviour_phi_t_at_b}
  \phi(z) = -\frac{4\pi}{3} d_2(z - b)^{\frac{3}{2}}( 1 + \bigO(z - b)),
\end{equation}
where $d_2 > 0$ and the term $\bigO(z - b)$ is analytic in $D(b, \epsilon)$.

\begin{rmk} \label{rmk:positivity_of_psi}
  We find that in the Pearcey regime, if $\tau \leq 0$, then $\psi(x) > 0$ for all $x \in (0, b)$. Similarly, in the multi-critical regime, if $\tau \leq 0$ and $\sigma > 0$, then $\psi(x) > 0$ for all $x \in (0, b)$. Hence, if $\tau \leq 0$ in the Pearcey regime, or $\tau \leq 0$ and $\sigma > 0$ in the multi-critical regime, then $1_{(0, b)}\psi(x) dx$ is the equilibrium measure defined by the minimization problem in \cite[Equation (1.22)]{Claeys-Romano14} for the functional
  \begin{equation}
    \frac{1}{2} \iint \log \frac{1}{\lvert x - y \rvert} d\nu(x) d\nu(y) + \frac{1}{2} \iint \log \frac{1}{\lvert x^{\theta} - y^{\theta} \rvert} d\nu(x) d\nu(y) + \int V(x) d\nu(x)
  \end{equation}
  among all probability measures supported on $\realR_+$, and $\gfn$ and $\gfntilde$ are defined from the equilibrium measure by \cite[Equations (4.4) and (4.5)]{Claeys-Romano14}. In general, in either the Pearcey or the multi-critical regime, $1_{(0, b)}\psi(x) dx$ is not a probability measure. However, from the construction of $\psi(x)$, we have that in the Pearcey case with a fixed $\tau \in \realR$, there exists $C \geq 0$ depending on $\tau$ such that $\psi(x) > 0$ for $x \in (C n^{-3/4}, b)$. Similarly, in the multi-critical case with fixed $(\tau, \sigma) \in \realR^2$, there exists $C \geq 0$ depending on $(\tau, \sigma)$ such that $\psi(x) > 0$ for $x \in (C n^{-3/8}, b)$.
\end{rmk}

\subsection{Transforms $Y \to T \to S \to Q$} \label{sec:transform_YTSQ}

Let $k \in \intZ$ be a constant. From $Y^{(n + k)} = (Y^{(n + k)}_1, Y^{(n + k)}_2) = (p_{n + k}, \C p_{n + k})$ defined in RH problem \ref{RHP:defn_Y}, we let
\begin{align}
  T^{(n + k)}(z) = (T^{(n + k)}_1(z), T^{(n + k)}_2(z)) := (Y^{(n + k)}_1(z) e^{-n\gfn(z)}, Y^{(n + k)}_2(z) e^{n(\gfntilde(z) - \ell)}),
\end{align}
Then $T^{(n + k)}$ satisfies the following RH problem:
\begin{RHP}
  $T^{(n + k)} = (T^{(n + k)}_1, T^{(n + k)}_2)$ is analytic on $(\compC \setminus \realR_+, \halfH \setminus \realR_+)$ and satisfies the following:
  \begin{enumerate}
  \item
    As $z \to \infty$ in $\compC$, $T^{(n + k)}_1(z) = z^k + \bigO(z^{k - 1})$.
  \item
    As $z \to \infty$ in $\halfH$, $T^{(n + k)}_2(z) = \bigO(z^{-2(k + 1)})$.
  \item
    As $z \to 0$ in $\compC$, $T^{(n + k)}_1(z)$ has the same behaviour as $Y^{(n + k)}_1(z)$, and as $z \to 0$ in $\halfH$, $T^{(n + k)}_2(z)$ has the same behaviour as $Y^{(n + k)}_2(z)$.
  \item
    As $z \to b$, $T^{(n + k)}(z) = \bigO(1)$.
  \item
    On $\realR_+$,
    \begin{multline}
      T^{(n + k)}_+(x) =  T^{(n + k)}_-(x)
      \begin{pmatrix}
        e^{n(\gfn_-(x) - \gfn_+(x))} & \frac{1}{2x^{1 - \alpha}} e^{n(\gfn_-(x) + \gfntilde_+(x) - V(x) - \ell)} \\
        0 & e^{n(\gfntilde_+(x) - \gfntilde_-(x))}
      \end{pmatrix} \\
      + \gamma T^{(n + k)}_1(-x) \left( 0, \frac{1}{2x^{1 - \alpha}} e^{n(\gfntilde_+(x) + \gfn(-x) - V(-x) - \ell)} \right).
    \end{multline}
  \end{enumerate}
\end{RHP}

Next, we consider the contour $\Sigma_1 \subseteq \halfH_+$ that connects $0$ and $b$, and $\Sigma_2 = \overline{\Sigma_1}$. We assume that $\Sigma_1$ is flat and close to the interval $(0, b)$, whose precise shape is specified in Section \ref{subsec:local_parametrices} at its end points $0$ and $b$ and in Section \ref{subsubsec:final_trans_p} for the remaining part. We also consider the contour $\Sigma'_1 \subseteq (\halfH_+ \setminus \{ \text{upper lens} \})$ that connects $0$ and $b' \in (b, +\infty)$, and $\Sigma'_2 = \overline{\Sigma'_1}$. The precise shape of $\Sigma'_1$ is specified in Section \ref{subsec:local_parametrices} at its end point $0$ and in Section \ref{subsubsec:final_trans_p} for the remaining part. We denote $\Sigma = \Sigma_1 \cup \Sigma_2 \cup \Sigma'_1 \cup \Sigma'_2 \cup \realR_+$. We call the region enclosed by $\Sigma'_1$ and $\Sigma'_2$ the ``lens'', the region enclosed by $\Sigma_1$ and $\Sigma_2$ the ``inner lens'', and the part of the lens that is outside the inner lens the ``outer lens''. We call the intersection with $\compC_+$ (resp.~$\compC_-$) of the (inner/outer) lens the upper (resp.~lower) (inner/outer) lens. See Figure \ref{fig:outer_and_inner_lens} for the shape of $\Sigma$. (We note that $\Sigma'_1$ is leftward, and $\Sigma'_2$, $\Sigma_1$, $\Sigma_2$ are rightward.)

\begin{figure}[htb]
  \centering
  \includegraphics{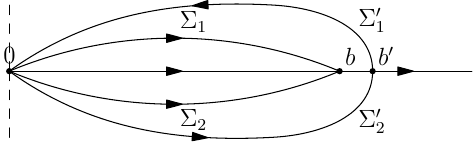}
  \caption{Shape of $\Sigma$ and the lens.}
  \label{fig:outer_and_inner_lens}
\end{figure}
We define
\begin{equation}
  S^{(n + k)}(z) =
  \begin{cases}
    T^{(n + k)}(z), & \text{outside the lens}, \\
    T^{(n + k)}(z) + \gamma T^{(n + k)}_1(-z) \left( e^{n(\gfn(-z) - \gfn(z) - 4az)}, 0 \right), & \text{outer lens}, \\
    T^{(n + k)}(z)
    \begin{pmatrix}
      1 & 0 \\
      2z^{1 - \alpha} e^{-n\phi(z)} & 1
    \end{pmatrix} & \\
    \phantom{T^{(n + k)}(z)}
    + \gamma T^{(n + k)}_1(-z) \left( e^{n(\gfn(-z) - \gfn(z) - 4az)}, 0 \right),
    & \text{lower inner lens}, \\
    T^{(n + k)}(z)
    \begin{pmatrix}
      1 & 0 \\
      -2z^{1 - \alpha} e^{-n\phi(z)} & 1
    \end{pmatrix} & \\
    \phantom{T^{(n + k)}(z)}
    + \gamma T^{(n + k)}_1(-z) \left( e^{n(\gfn(-z) - \gfn(z) - 4az)}, 0 \right),
    & \text{upper inner lens},
  \end{cases}
\end{equation}
where $\phi(z)$ is defined in \eqref{eq:defn_phi}.

We have that $S^{(n + k)}(z)$ satisfies the following RH problem:
\begin{RHP}
  $S^{(n + k)} = (S^{(n + k)}_1, S^{(n + k)}_2)$ is analytic on $(\compC \setminus \Sigma, \halfH \setminus \Sigma)$ and satisfies the following:
  \begin{enumerate}
  \item
    As $z \to \infty$ in $\compC$, and $z \to \infty$ in $\halfH$, $S^{(n + k)}(z)$ has the same behaviour as $T^{(n + k)}(z)$.
  \item
    As $z \to 0$ in $\compC \setminus \Sigma$,
    \begin{equation} \label{eq:S_1_at_0}
      S^{(n + k)}_1(z) =
      \begin{cases}
        \bigO(1), & \text{$z$ outside the inner lens or $-1 < \alpha < 1$}, \\
        \bigO(\log z), & \text{$z$ inside the inner lens and $\alpha = 1$}, \\
        \bigO(z^{1 - \alpha}), & \text{$z$ inside the inner lens and $\alpha > 1$},
      \end{cases}
    \end{equation}
    and as $z \to 0$ in $\halfH \setminus \Sigma$,
    \begin{equation} \label{eq:S_2_at_0}
      S^{(n + k)}_2(z) =
      \begin{cases}
        \bigO(1), & \alpha > 1, \\
        \bigO(\log z), & \alpha = 1, \\
        \bigO(z^{\alpha - 1}), & -1 < \alpha < 1.
      \end{cases}
    \end{equation}
  \item
    As $z \to b$, $S^{(n + k)}(z) = \bigO(1)$.
  \item
    On $\Sigma$,
    \begin{equation}
      S^{(n + k)}_+(z) =
      \begin{cases}
        S^{(n + k)}_-(z) + \gamma S^{(n + k)}_1(-z) \left( e^{n(\gfn(-z) - \gfn(z) - 4az)}, 0 \right), & z \in \Sigma'_1 \cup \Sigma'_2, \\
        S^{(n + k)}_-(z)
        \begin{pmatrix}
          1 & 0 \\
          2z^{1 - \alpha} e^{-n \phi(z)} & 1
        \end{pmatrix},
        & z \in \Sigma_1 \cup \Sigma_2, \\
        S^{(n + k)}_-(z)
        \begin{pmatrix}
          0 & \frac{1}{2z^{1 - \alpha}} \\
          -2z^{1 - \alpha} & 0
        \end{pmatrix},
        & z \in (0, b), \\
        S^{(n + k)}_-(z)
        \begin{pmatrix}
          1 & \frac{1}{2z^{1 - \alpha}} e^{n \phi(z)} \\
          0 & 1
        \end{pmatrix},
        & z \in (b, b'), \\
        S^{(n + k)}_-(z)
        \begin{pmatrix}
          1 & \frac{1}{2z^{1 - \alpha}} e^{n \phi(z)} \\
          0 & 1
        \end{pmatrix}
        & \\
        \quad + \gamma S^{(n + k)}_1(-z) \left( 0, \frac{1}{2z^{1 - \alpha}} e^{n(\gfntilde(z) + \gfn(-z) - V(-z) - \ell)} \right), & z \in (b', \infty),
      \end{cases}
    \end{equation}
  \end{enumerate}
\end{RHP}
Next, we need to construct the global parametrix $P^{(j, \infty)}$ for $j \in \intZ$, which satisfies the following RH problem:
\begin{RHP} \label{RH:P^infty}
  $P^{(j, \infty)} = (P^{(j, \infty)}_1, P^{(j, \infty)}_2)$ is analytic in $(\compC \setminus [0, b], \halfH \setminus [0, b])$ and satisfies the following:
  \begin{enumerate}
  \item
    For $x \in (0, b)$,
    \begin{equation}
      P^{(j, \infty)}_+(x) = P^{(j, \infty)}_-(x)
      \begin{pmatrix}
        0 & \frac{1}{2x^{1 - \alpha}} \\
        -2x^{1 - \alpha} & 0
      \end{pmatrix}.
    \end{equation}
  \item
    As $z \to \infty$ in $\compC$, $P^{(j, \infty)}_1(z) = z^j + \bigO(z^{j - 1})$.
  \item
    As $z \to \infty$ in $\halfH$, $P^{(j, \infty)}_2(z) = \bigO(z^{-2(j + 1)})$.
  \item
    For $x > 0$, we have the boundary condition $P^{(j, \infty)}_2(e^{\pi i/2} x) = P^{(j, \infty)}_2(e^{-\pi i/2} x)$.
  \end{enumerate}
\end{RHP}
The construction of $P^{(j, \infty)}$ is given in \cite[Section 3.3]{Wang-Zhang21} for the $j = 0$ case, and it is straightforward to generalize it to integer $j$. Here we give the construction and refer the reader to \cite[Section 3.3]{Wang-Zhang21} for details.
Let
\begin{equation}
  \mathcal{P}^{(j)}(s) = (cs)^j \times
  \begin{cases}
    \frac{s}{\sqrt{(s + 1)(s - s_b)}} \left( \frac{s + 1}{s} \right)^{\frac{1 - \alpha}{2}}, & s \in \compC \setminus \overline{D}, \\
    \frac{c^{\alpha - 1} s(s + 1)^{\alpha - 1}}{2\sqrt{(s + 1)(s - s_b)}}, & s \in D,
  \end{cases}
\end{equation}
where $c$ is the parameter for $t$ and $a$, and appears in the mapping \eqref{eq:defn_J_c}.  We let
\begin{align}
  P^{(j, \infty)}_1(z) = {}& \mathcal{P}^{(j)}(I_1(z)), & z \in \compC \setminus [0, b], \label{eq:defn_P^jinfty_1} \\
  P^{(j, \infty)}_2(z) = {}& \mathcal{P}^{(j)}(I_2(z)), & z \in \halfH \setminus [0, b]. \label{eq:defn_P^jinfty_2}
\end{align}
Then $P^{(j, \infty)}$, as defined above, is a solution of RH problem \ref{RH:P^infty}, and we will take it as our definition of $P^{(j, \infty)}$.

Like \cite[Proposition 3.6]{Wang-Zhang21}, we have that as $z \to 0$,
\begin{align}
  P^{(j, \infty)}_1(z) = {}& \frac{(-1)^j}{\sqrt{6}} c^{j + \frac{\alpha}{3}} \times
                          \begin{cases}
                            2 e^{\frac{\alpha}{3} \pi i} z^{-\frac{\alpha}{3}} (1 + \bigO(z^{\frac{2}{3}})), & \arg z \in (0, \pi), \\
                            2 e^{-\frac{\alpha}{3} \pi i} z^{-\frac{\alpha}{3}} (1 + \bigO(z^{\frac{2}{3}})), & \arg z \in (-\pi, 0),
                          \end{cases} \\
  P^{(j, \infty)}_2(z) = {}& \frac{(-1)^j}{\sqrt{6}} c^{j + \frac{\alpha}{3}} \times
                          \begin{cases}
                            e^{-\frac{\alpha}{3} \pi i} z^{\frac{2 \alpha}{3} - 1} (1 + \bigO(z^{\frac{2}{3}})), & \arg z \in (0, \frac{\pi}{2}), \\
                            -e^{\frac{\alpha}{3} \pi i} z^{\frac{2 \alpha}{3} - 1} (1 + \bigO(z^{\frac{2}{3}})), & \arg z \in (-\frac{\pi}{2}, 0).
                          \end{cases}
\end{align}

Then we define the function
\begin{equation}
  Q^{(n + k, j)}(z) = (Q^{(n + k, j)}_1(z), Q^{(n + k, j)}_2(z)) := \left( \frac{S^{(n + k)}_1(z)}{P^{(j, \infty)}_1(z)}, \frac{S^{(n + k)}_2(z)}{P^{(j, \infty)}_2(z)} \right),
\end{equation}
and it satisfies the following RH problem:
\begin{RHP}
  $Q^{(n + k, j)} = (Q^{(n + k, j)}_1, Q^{(n + k, j)}_2)$ is analytic on $(\compC \setminus \Sigma, \halfH \setminus \Sigma)$ and satisfies the following:
  \begin{enumerate}
  \item
    As $z \to \infty$ in $\compC$, $Q^{(n + k, j)}_1(z) = z^{k - j}(1 + \bigO(z^{-1}))$.
  \item
    As $z \to \infty$ in $\halfH$, $Q^{(n + k, j)}_2(z) = \bigO(z^{2(j - k - 1)})$.
  \item
    As $z \to 0$ in $\compC \setminus \Sigma$,
    \begin{equation}
      Q^{(n + k, j)}_1(z) =
      \begin{cases}
        \bigO(z^{\frac{\alpha}{3}}), & \text{$z$ outside the inner lens or $-1 < \alpha < 1$}, \\
        \bigO(z^{\frac{1}{3}} \log z), & \text{$z$ inside the inner lens and $\alpha = 1$}, \\
        \bigO(z^{1 - \frac{2\alpha}{3}}), & \text{$z$ inside the inner lens and $\alpha > 1$},
      \end{cases}
    \end{equation}
    and as $z \to 0$ in $\halfH \setminus \Sigma$,
    \begin{equation}
      Q^{(n + k, j)}_2(z) =
      \begin{cases}
        \bigO(z^{1 - \frac{2\alpha}{3}}), & \alpha > 1, \\
        \bigO(z^{\frac{1}{3}} \log z), & \alpha = 1, \\
        \bigO(z^{\frac{\alpha}{3}}), & -1 < \alpha < 1.
      \end{cases}
    \end{equation}
  \item
    As $z \to b$, $Q^{(n + k, j)}_1(z) = \bigO((z - b)^{1/4})$ and $Q^{(n + k, j)}_2(z) = \bigO((z - b)^{1/4})$.
  \item
    On $\Sigma$,
    \begin{multline}
      Q^{(n + k, j)}_+(z) = \\
      \begin{cases}
        Q^{(n + k, j)}_-(z) + \gamma Q^{(n + k, j)}_1(-z) \left( \frac{P^{(j, \infty)}_1(-z)}{P^{(j, \infty)}_1(z)} e^{n(\gfn(-z) - \gfn(z) - 4az)}, 0 \right), & z \in \Sigma'_1 \cup \Sigma'_2, \\
        Q^{(n + k, j)}_-(z)
        \begin{pmatrix}
          1 & 0 \\
          2z^{1 - \alpha} \frac{P^{(j, \infty)}_2(z)}{P^{(j, \infty)}_1(z)} e^{-n \phi(z)} & 1
        \end{pmatrix},
        & z \in \Sigma_1 \cup \Sigma_2, \\
        Q^{(n + k, j)}_-(z)
        \begin{pmatrix}
          0 & 1 \\
          1 & 0
        \end{pmatrix},
        & z \in (0, b), \\
        Q^{(n + k, j)}_-(z)
        \begin{pmatrix}
          1 & \frac{1}{2z^{1 - \alpha}} \frac{P^{(j, \infty)}_1(z)}{P^{(j, \infty)}_2(z)} e^{n \phi(z)} \\
          0 & 1
        \end{pmatrix},
        & z \in (b, b'), \\
        Q^{(n + k, j)}_-(z)
        \begin{pmatrix}
          1 & \frac{1}{2z^{1 - \alpha}} \frac{P^{(j, \infty)}_1(z)}{P^{(j, \infty)}_2(z)} e^{n \phi(z)} \\
          0 & 1
        \end{pmatrix}
        & \\
        \quad + \gamma Q^{(n + k, j)}_1(-z) \left( 0, \frac{1}{2z^{1 - \alpha}} \frac{P^{(j, \infty)}_1(-z)}{P^{(j, \infty)}_2(z)} e^{n(\gfntilde(z) + \gfn(-z) - V(-z) - \ell)} \right), & z \in (b', +\infty),
      \end{cases}
    \end{multline}
  \end{enumerate}
\end{RHP}

\subsection{Local parametrices} \label{subsec:local_parametrices}

\paragraph{Local parametrix at $b$}

Our construction of the local parametrix near the right endpoint $b$ is given by the Airy parametrix, analogous to \cite{Wang-Zhang21}. Below we cite corresponding formulas in \cite{Wang-Zhang21} for the convenience of the reader. Let $\epsilon$ be a small positive number. We define, for $z$ in a neighbourhood $D(b, \epsilon)$ of $b$, \cite[Equation (3.35)]{Wang-Zhang21}
\begin{equation} \label{def:fb_hard}
  f_b(z) = \left( -\frac{3}{4} \phi(z) \right)^{\frac{2}{3}}.
\end{equation}
Due to \eqref{eq:behaviour_phi_t_at_b}, it is a conformal mapping with $f_b(b) = 0$ and $f'_b(b) > 0$.

We define \cite[Equation (3.38)]{Wang-Zhang21}
\begin{equation} \label{eq:defn_P^b}
  P^{(j, b)}(z) = E^{(j, b)}(z) \Psi^{(\Ai)}(n^{\frac{2}{3}} f_b(z))
  \begin{pmatrix}
    e^{-\frac{n}{2} \phi(z)} g^{(j, b)}_1(z) & 0 \\
    0 & e^{\frac{n}{2} \phi(z)} g^{(j, b)}_2(z)
  \end{pmatrix},
\end{equation}
where $\Psi^{(\Ai)}$ is the well-known Airy parametrix as given in Appendix \ref{app:Airy}, and \cite[Equations (3.37) and (3.43)]{Wang-Zhang21}
\begin{align}
  g^{(j, b)}_1(z) = {}& \frac{z^{(1 - \alpha)/2}}{P^{(j, \infty)}_1(z)}, & g^{(j, b)}_2(z) = {}& \frac{z^{(\alpha - 1)/2}}{2 P^{(j, \infty)}_2(z)},
\end{align}
\begin{equation}
  E^{(j, b)}(z) = \frac{1}{\sqrt{2}}
  \begin{pmatrix}
    g^{(j, b)}_1(z) & 0 \\
    0 & g^{(j, b)}_2(z)
  \end{pmatrix}^{-1}
  e^{\frac{\pi i}{4} \sigma_3}
  \begin{pmatrix}
    1 & -1 \\
    1 & 1
  \end{pmatrix}
  \begin{pmatrix}
    n^{\frac{1}{6}} f_b(z)^{\frac{1}{4}} & 0 \\
    0 & n^{-\frac{1}{6}} f_b(z)^{-\frac{1}{4}}
  \end{pmatrix}.
\end{equation}
We have that $P^{(j, b)}(z)$ satisfies an RH problem analogous to \cite[RH problem 3.9]{Wang-Zhang21}.

Now we specify the shape of $\Sigma_1$ (and then $\Sigma_2 = \overline{\Sigma_1}$) in $D(b, \epsilon)$ as $f^{-1}_b(\{ e^{\frac{2\pi i}{3}} [0, +\infty) \}) \cap D(b, \epsilon)$. Then we define the vector-valued function $V^{(n + k, j, b)}$ by \cite[Equation (3.46)]{Wang-Zhang21}
\begin{equation} \label{eq:defn_V^b}
  V^{(n + k, j, b)}(z) = Q^{(n + k, j)}(z) P^{(j, b)}(z)^{-1}, \quad z \in D(b, \epsilon) \setminus \Sigma,
\end{equation}
which satisfies an RH problem analogous to \cite[RH problem 3.10]{Wang-Zhang21}.

\paragraph{Local parametrix at $0$}

Let $r_n > 0$ be a small constant specified in \eqref{def:rn} for the Pearcey regime and in \eqref{def:rn_crit} for the multi-critical regime. We require that in the region $D(0, r_n)$, $\Sigma_1$ and $\Sigma_2$ coincide with the rays $\{ \arg z = \pm \pi/8 \}$, and $\Sigma'_1$ and $\Sigma'_2$ coincide with the rays $\{ \arg z = \pm 3\pi/8 \}$. we define $U^{(n + k, j)} = (U^{(n + k, j)}_0, U^{(n + k, j)}_1, U^{(n + k, j)}_2)$ in the region $D(0, r_n)$ by
\begin{align}
  U^{(n + k, j)}_0(z^2) = {}& Q^{(n + k, j)}_2(z), & \arg z \in {}& (-\frac{\pi}{2}, \frac{\pi}{2}), \label{eq:defn_U(n+k):1} \\
  U^{(n + k, j)}_1(z^2) = {}&
                           Q^{(n + k, j)}_1(z), & \arg z \in {}& (-\frac{\pi}{2}, \frac{\pi}{2}) \setminus \{ \pm\frac{3\pi}{8}, \pm\frac{\pi}{8}, 0 \} \\
  U^{(n + k, j)}_2(z^2) = {}& Q^{(n + k, j)}_1(-z), & \arg z \in {}& (-\frac{\pi}{2}, \frac{\pi}{2}). \label{eq:defn_U(n+k):3}
\end{align}
Then $U^{(n + k, j)}$ satisfies the following RH problem:
\begin{RHP} \label{rhp:U}
  $U^{(n + k, j)} = (U^{(n + k, j)}_0, U^{(n + k, j)}_1, U^{(n + k, j)}_2)$ is defined and analytic in $D(0, r_n) \setminus \Gamma_{\Phi}$, where $\Gamma_{\Phi}$ is defined in RH problem \ref{RHP:model}, and satisfies the following:
  \begin{enumerate}
  \item
    On $\Sigma^{(0)} = D(0, r_n) \cap \Gamma_{\Phi}$, where all rays are oriented outward from $0$, we have
    \begin{multline} \label{eq:defn_J_U}
      U^{(n + k, j)}_+(z) = U^{(n + k, j)}_-(z) J^{(j)}_U(z), \\
      J^{(j)}_U(z) =
      \begin{cases}
        \begin{pmatrix}
          1 & 0 & 0 \\
          0 & 1 & 0 \\
          0 & -\gamma \frac{P^{(j, \infty)}_1(-\sqrt{z})}{P^{(j, \infty)}_1(\sqrt{z})} e^{n(\gfn(-\sqrt{z}) - \gfn(\sqrt{z}) - 4a\sqrt{z})} & 1
        \end{pmatrix},
        & \arg z = \frac{3\pi}{4}, \\
        \begin{pmatrix}
          1 & 0 & 0 \\
          0 & 1 & 0 \\
          0 & \gamma \frac{P^{(j, \infty)}_1(-\sqrt{z})}{P^{(j, \infty)}_1(\sqrt{z})} e^{n(\gfn(-\sqrt{z}) - \gfn(\sqrt{z}) - 4a\sqrt{z})} & 1
        \end{pmatrix},
        & \arg z = -\frac{3\pi}{4}, \\
        \begin{pmatrix}
          1 & 2z^{\frac{1 - \alpha}{2}} \frac{P^{(j, \infty)}_2(\sqrt{z})}{P^{(j, \infty)}_1(\sqrt{z})} e^{-n \phi(\sqrt{z})} & 0 \\
          0 & 1 & 0 \\
          0 & 0 & 1
        \end{pmatrix},
        & \arg z = \pm \frac{\pi}{4}, \\
        \begin{pmatrix}
          0 & 1 & 0 \\
          1 & 0 & 0 \\
          0 & 0 & 1
        \end{pmatrix},
        & z \in \realR_+, \\
        \begin{pmatrix}
          1 & 0 & 0 \\
          0 & 0 & 1 \\
          0 & 1 & 0
        \end{pmatrix},
        & z \in \realR_-.
      \end{cases}
    \end{multline}
  \item
    As $z \to 0$,
    \begin{align}
      U^{(n + k, j)}_0(z) = {}&
      \begin{cases}
        \bigO(z^{\frac{1}{2} - \frac{\alpha}{3}}), & \alpha > 1, \\
        \bigO(z^{\frac{1}{3}} \log z), & \alpha = 1, \\
        \bigO(z^{\frac{\alpha}{6}}), & -1 < \alpha < 1,
      \end{cases} \\
       U^{(n + k, j)}_1(z) = {}&
                    \begin{cases}
                      \bigO(z^{\frac{1}{2} - \frac{\alpha}{3}}), & \arg z \in (-\frac{\pi}{4}, 0) \cup (0, \frac{\pi}{4}) \text{ and } \alpha > 1, \\
                      \bigO(z^{\frac{1}{6}} \log z), & \arg z \in (-\frac{\pi}{4}, 0) \cup (0, \frac{\pi}{4}) \text{ and } \alpha = 1, \\
                      \bigO(z^{\frac{\alpha}{6}}), & \arg z \in (-\pi, -\frac{\pi}{4}) \cup (\frac{\pi}{4}, \pi) \text{ and } -1 < \alpha < 1,
                    \end{cases} \\
      U^{(n + k, j)}_2(z) = {}& \bigO(z^{\frac{\alpha}{6}}).
    \end{align}
  \end{enumerate}
\end{RHP}

As in \cite[Equations (3.104)--(3.109)]{Wang-Zhang21}, we define
\begin{equation} \label{def:M}
  M(z) = \diag(m_0(z), m_1(z), m_2(z)), \quad z \in \compC \setminus \realR,
\end{equation}
where
\begin{align}
  m_0(z) = {}&
               \begin{cases}
                 \gfntilde(z^{\frac{1}{2}}) - \gfntilde_+(0), & z \in \compC_+, \\
                 2\pi i + \gfntilde(z^{\frac{1}{2}}) - \gfntilde_+(0), & z \in \compC_-,
               \end{cases}
               \label{def:m0}\\
  m_1(z) = {}&
               \begin{cases}
                 -\gfn(z^{\frac{1}{2}}) + V(z^{\frac{1}{2}}) + \ell - \gfntilde_-(0), & z \in \compC_+, \\
                 -2\pi i - \gfn(z^{\frac{1}{2}}) + V(z^{\frac{1}{2}}) + \ell - \gfntilde_-(0), & z \in \compC_-,
               \end{cases} \\
  m_2(z) = {}&
               \begin{cases}
                 -\gfn(-z^{\frac{1}{2}}) + V(-z^{\frac{1}{2}}) + \ell - \gfntilde_-(0), & z \in \compC_-, \\
                 -2\pi i - \gfn(-z^{\frac{1}{2}}) +  V(-z^{\frac{1}{2}}) + \ell - \gfntilde_-(0), & z \in \compC_+,
               \end{cases}
               \label{def:mj}
\end{align}
and
\begin{equation} \label{def:N}
  N^{(j)}(z) = \diag (n^{(j)}_0(z), n^{(j)}_1(z), n^{(j)}_2(z)), \quad z\in\compC \setminus (-\infty,b^2],
\end{equation}
where
\begin{align}\label{def:ni}
  n^{(j)}_0(z) = {}& P^{(j, \infty)}_2(z^{\frac{1}{2}}), & n^{(j)}_1(z) = {}& P^{(j, \infty)}_1(z^{\frac{1}{2}}),  & n^{(j)}_2(z) = {}& P^{(j, \infty)}_1(-z^{\frac{1}{2}}).
\end{align}
We also define
\begin{equation} \label{eq:Npre}
  N^{(\pre)}(z) = \diag (n^{(\pre)}_0(z), n^{(\pre)}_1(z), n^{(\pre)}_2(z)), \quad z\in\compC \setminus \realR,
\end{equation}
where
\begin{equation}
  \begin{gathered}
    n^{(\pre)}_0(z) =
                        \begin{cases}
                          \frac{c^{\frac{\alpha}{3}}}{\sqrt{6}} e^{-\frac{\alpha}{3}\pi i} z^{\frac{\alpha}{3} - \frac{1}{2}}, & z \in \compC_+, \\
                          -\frac{c^{\frac{\alpha}{3}}}{\sqrt{6}} e^{\frac{\alpha}{3}\pi i} z^{\frac{\alpha}{3} - \frac{1}{2}}, & z \in \compC_-,
                        \end{cases} \quad
  n^{(\pre)}_1(z) =
                        \begin{cases}
                          \frac{2c^{\frac{\alpha}{3}}}{\sqrt{6}} e^{\frac{\alpha}{3}\pi i} z^{-\frac{\alpha}{6}}, & z \in \compC_+, \\
                          \frac{2c^{\frac{\alpha}{3}}}{\sqrt{6}} e^{-\frac{\alpha}{3}\pi i} z^{-\frac{\alpha}{6}}, & z \in \compC_-,
                        \end{cases} \\
  n^{(\pre)}_2(z) =  \frac{2c^{\frac{\alpha}{3}}}{\sqrt{6}} z^{-\frac{\alpha}{6}}.
  \end{gathered}
\end{equation}
Below we have separate discussions for the Pearcey regime and the multi-critical regime.

\paragraph{Pearcey regime}

In the Pearcey regime, we assume that $r_n$ shrinks with $n$ such that
\begin{equation}\label{def:rn}
  r_n=n^{-\frac{3}{5}}.
\end{equation}
For $z \in D(0,r_n^2)\setminus \Gamma_{\Phi}$, analogous to \cite[Equation (3.117)]{Wang-Zhang21}, we first introduce a $3 \times 3$ matrix-valued function
\begin{equation}\label{def:sfP0}
  \mathsf{P}^{(j, 0)}(z) = (-c)^j \diag((\rho n)^{-\frac{k}{2}})^2_{k = 0} \Phi^{(\gamma, \alpha, 0, \frac{3}{4}, \tau)}((\rho n)^{\frac{3}{2}} z) N^{(\pre)}(z) N^{(j)}(z)^{-1} e^{nM(z)},
\end{equation}
where
\begin{equation}
  \rho = 2c^{-\frac{4}{3}} (1 - 3c^4),
\end{equation}
$M(z)$, $N^{(j)}(z)$ and $N^{(\pre)}(z)$ are defined in \eqref{def:M}, \eqref{def:N} and \eqref{eq:Npre}, respectively, and $\Phi^{(\gamma, \alpha, 0, \frac{3}{4}, \tau)}$ is defined in RH problem \ref{RHP:model}.

\paragraph{Multi-critical regime}

In the multi-critical regime, we assume that $r_n$ shrinks with $n$ at a slower rate than in the Pearcey regime, such that
\begin{equation}\label{def:rn_crit}
  r_n=n^{-\frac{1}{3}}.
\end{equation}
For $z \in D(0,r_n^2)\setminus \Gamma_{\Phi}$, similar to \eqref{def:sfP0}, we define
\begin{equation} \label{def:sfP0_crit}
  \mathsf{P}^{(j, 0)}(z) = (-c)^j e^{\frac{2}{\sqrt{3}} nz} \diag((3^{\frac{2}{3}} n)^{-\frac{k}{4}})^2_{k = 0} \Phi^{(\gamma, \alpha, -\frac{1}{4}, \frac{3}{2}\sigma, \tau)}((3^{\frac{2}{3}} n)^{\frac{3}{4}} z) N^{(\pre)}(z) N^{(j)}(z)^{-1} e^{nM(z)},
\end{equation}
where $M(z)$, $N^{(j)}(z)$ and $N^{(\pre)}(z)$ are defined in \eqref{def:M}, \eqref{def:N} and \eqref{eq:Npre}, respectively, and $\Phi^{(\gamma, \alpha, -\frac{1}{4}, \frac{3}{2}\sigma, \tau)}$ is defined in RH problem \ref{RHP:model}.

We then have the following proposition regarding the RH problem for $\mathsf{P}^{(j, 0)}$ (defined in either \eqref{def:sfP0} or \eqref{def:sfP0_crit}), which is analogous to \cite[Proposition 3.15]{Wang-Zhang21}. Since the statement of the results are almost identical for both the Pearcey regime and the multi-critical regime, we combine them in one proposition.

\begin{prop}\label{RHP:sfP0}
 In either the Pearcey regime or the multi-critical regime, the function $\mathsf{P}^{(j, 0)}(z)$ defined in \eqref{def:sfP0} (for the Pearcey regime) or \eqref{def:sfP0_crit} (for the multi-critical regime) has the following properties.
  \begin{enumerate} 
  \item
    $ \mathsf{P}^{(j, 0)}(z)$ is analytic in $D(0,r_n^2) \setminus \Gamma_{\Phi}$, where $r_n$ is defined in \eqref{def:rn} for the Pearcey regime and in \eqref{def:rn_crit} for the multi-critical regime.
  \item
    For $z\in D(0,r_n^2) \cap \Gamma_{\Phi}$, we have
    \begin{equation}
      \mathsf{P}^{(j, 0)}_{+}(z) = \mathsf{P}^{(j, 0)}_{-}(z) J^{(j)}_U(z),
    \end{equation}
    where $J^{(j)}_U(z)$ is defined in \eqref{eq:defn_J_U}.
  \item
    For $z\in \partial D(0,r_n^2)$, we have, as $n\to \infty$,
    \begin{equation}\label{eq:sfP0asy}
      \mathsf{P}^{(j, 0)}(z) = \Upsilon(z) \Omega_{\pm} \times
      \begin{cases}
        (I+\bigO(n^{-\frac{1}{10}})), & \text{(Pearcey regime)}, \\
        (I+\bigO(n^{-\frac{1}{36}})), & \text{(multi-critical regime)},
      \end{cases}
    \end{equation}
    where the $\pm$ depends on $z \in \compC_{\pm}$, and $\Upsilon$ and $\Omega_{\pm}$ are defined in \eqref{def:Lpm}.
  \end{enumerate}
\end{prop}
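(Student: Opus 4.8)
The plan is to verify the three items directly from the definition \eqref{def:sfP0} in the Pearcey regime and \eqref{def:sfP0_crit} in the multi-critical regime, reducing each to an estimate of the same shape as \cite[Proposition 3.15]{Wang-Zhang21}; the only genuinely new ingredient is that the inner building block is now the model function $\Phi^{(\gamma, \alpha, \rho, \sigma, \tau)}$ of RH problem \ref{RHP:model}, for which I use its jump \eqref{eq:RHP:model_1} and its refined expansion at infinity \eqref{eq:asy_expansion_Phi} supplied by Theorem \ref{thm:RHP_unique_solvability}, together with the $\gfn$-asymptotics assembled in Section \ref{subsec:G_and_tildeG}. For analyticity I first note that the jump set of $\xi \mapsto \Phi^{(\gamma, \alpha, \rho, \sigma, \tau)}(\xi)$ is $\Gamma_{\Phi}$, which is invariant under the positive dilations $z \mapsto (\rho n)^{3/2} z$ and $z \mapsto (3^{2/3} n)^{3/4} z$, so the composition is analytic in $D(0, r_n^2) \setminus \Gamma_{\Phi}$; the remaining factors $N^{(\pre)}(z)$, $N^{(j)}(z)^{-1}$, $e^{nM(z)}$ (and the scalar $e^{\frac{2}{\sqrt 3} n z}$ in the multi-critical case) are diagonal and, by \eqref{def:M}, \eqref{def:N}, \eqref{eq:Npre}, analytic off $\realR$, and $\realR \cap D(0, r_n^2) \subset \Gamma_{\Phi}$ once $r_n < \sqrt{b}$, which holds for large $n$. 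Hence the product is analytic in $D(0, r_n^2) \setminus \Gamma_{\Phi}$.

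For the jump relation I would compute $\mathsf{P}^{(j, 0)}_{-}{}^{-1} \mathsf{P}^{(j, 0)}_{+}$ on each ray of $D(0, r_n^2) \cap \Gamma_{\Phi}$. On $\arg z = \pm \pi/4, \pm 3\pi/4$ the diagonal factor is analytic, so the jump is $J_{\Phi}$ conjugated by $N^{(\pre)} N^{(j)-1} e^{nM}$; inserting \eqref{eq:RHP:model_1}, the definitions \eqref{def:m0}--\eqref{def:mj} of $m_0, m_1, m_2$, the relation $\phi = \gfn + \gfntilde - V - \ell$ of \eqref{eq:defn_phi}, and the near-origin expansions of $\gfn, \gfntilde, \phi$ from Section \ref{subsec:G_and_tildeG}, the single off-diagonal entry comes out to be exactly $2 z^{(1-\alpha)/2} \frac{P^{(j, \infty)}_2(\sqrt z)}{P^{(j, \infty)}_1(\sqrt z)} e^{-n \phi(\sqrt z)}$ on $\arg z = \pm \pi/4$ and $\mp \gamma \frac{P^{(j, \infty)}_1(-\sqrt z)}{P^{(j, \infty)}_1(\sqrt z)} e^{n(\gfn(-\sqrt z) - \gfn(\sqrt z) - 4a\sqrt z)}$ on $\arg z = \pm 3\pi/4$, matching \eqref{eq:defn_J_U}. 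On $\realR_{\pm}$ I would check that the constant permutation jumps of $\Phi$ absorb the branch jumps of $N^{(j)}$ (coming from the cuts of $P^{(j, \infty)}_1, P^{(j, \infty)}_2$) and of $e^{nM}$ (coming from the cuts of $\gfn, \gfntilde$), leaving precisely the constant permutations in \eqref{eq:defn_J_U}. This is the computation of \cite[Proposition 3.15]{Wang-Zhang21} once all definitions are unwound.

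For the matching on $\partial D(0, r_n^2)$, where the scaled variable $\xi$ is large ($|\xi| \sim n^{3/10}$ in the Pearcey regime, $|\xi| \sim n^{1/12}$ in the multi-critical regime), I would substitute $\Phi(\xi) = (I + M\xi^{-1} + \bigO(\xi^{-2})) \Upsilon(\xi) \Omega_{\pm} e^{-\Theta(\xi)}$ from \eqref{eq:asy_expansion_Phi}, use the exact identity $\diag((\rho n)^{-k/2})^2_{k=0} \Upsilon((\rho n)^{3/2} z) = \Upsilon(z)$ (and its multi-critical analogue) to factor $\Upsilon(z) \Omega_{\pm}$ out on the left, and then estimate the three remaining factors of $[\Upsilon(z) \Omega_{\pm}]^{-1} \mathsf{P}^{(j, 0)}(z)$: (a) the conjugated correction $I + M\xi^{-1} + \bigO(\xi^{-2})$, whose entry in position $(i, j)$ acquires a factor $\omega^{j-i} (\rho n)^{(j-i)/2 - 3/2} z^{(j-i)/3 - 1}$ (respectively with $(3^{2/3}n)^{(j-i)/4 - 3/4}$), which on $\partial D(0, r_n^2)$ is largest at $(i, j) = (0, 2)$ and equals $\bigO(n^{-1/10})$ (respectively $\bigO(n^{-1/36})$), the $\bigO(\xi^{-2})$ tail being of strictly smaller order thanks to the choices $r_n = n^{-3/5}$ and $r_n = n^{-1/3}$ in \eqref{def:rn}, \eqref{def:rn_crit}; (b) the diagonal factor $(-c)^j N^{(\pre)}(z) N^{(j)}(z)^{-1}$, which by the $z \to 0$ expansions of $P^{(j, \infty)}_1, P^{(j, \infty)}_2$ equals $I + \bigO(z^{1/3}) = I + \bigO(r_n^{2/3})$; and (c) the diagonal residual $e^{nM(z) - \Theta(\xi)}$, which together with the prefactor $e^{\frac{2}{\sqrt 3} n z}$ in the multi-critical regime equals $I + \bigO(n^{-\delta})$ with $\delta$ strictly larger than the rate in (a), because the parametrizations \eqref{eq:def_t_c_Pearcey}, \eqref{eq:def_t_c_critical}, \eqref{eq:c_multi_crit} and the prefactor are arranged so that the $z^{1/3}$, $z^{2/3}$, $z$, $z^{4/3}$ contributions to $n\gfn(z^{1/2})$ and $n\gfntilde(z^{1/2})$ cancel against the corresponding terms of $\Theta(\xi)$, as one reads off from the refined $\gfn$-asymptotics of Section \ref{subsec:G_and_tildeG}. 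Multiplying (a)--(c) yields \eqref{eq:sfP0asy} with the stated rates.

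The hard part will be step (c): one must track, term by term, the cancellation between $n\gfn(z^{1/2})$, $n\gfntilde(z^{1/2})$, the exponential prefactor and $-\Theta(\xi)$, keeping exact account of the cube-root-of-unity powers, of the $\compC_+$ versus $\compC_-$ branch choices hard-wired into $M(z)$ and $\Theta(\xi)$, and of the $\pm 2\pi i$ shifts in $m_0, m_1, m_2$, and then confirming that the surviving exponent is no larger than the $\bigO(n^{-1/10})$ (respectively $\bigO(n^{-1/36})$) obtained in step (a); it is precisely this cancellation that dictated the form of the parametrizations and of the multi-critical prefactor $e^{\frac{2}{\sqrt 3} n z}$. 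Everything else is routine bookkeeping that parallels \cite[Proposition 3.15]{Wang-Zhang21}.
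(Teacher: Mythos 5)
Your proof is correct and takes the same route the paper does, namely direct verification paralleling \cite[Proposition 3.15]{Wang-Zhang21}, which the paper simply cites as analogous without reproducing the computation. In particular your accounting in item 3 is right: after conjugating $I + M\xi^{-1} + \bigO(\xi^{-2})$ by $\diag((\rho n)^{-k/2})^2_{k=0}\,\Upsilon(z)\Omega_\pm$ the largest surviving correction is the $(0,2)$ entry, of size $\bigO(\xi^{-1/3})$, which on $\partial D(0, r_n^2)$ gives $\bigO(n^{-1/10})$ in the Pearcey regime ($|\xi|\sim n^{3/10}$) and $\bigO(n^{-1/36})$ in the multi-critical regime ($|\xi|\sim n^{1/12}$), while your factors (b) and (c) are of strictly smaller order, confirming the stated rates.

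Two small remarks. First, for the exact jump verification in item 2 you cite the near-origin expansions of $\gfn, \gfntilde$ from Section \ref{subsec:G_and_tildeG}; these are not actually needed there --- the jump is a purely algebraic identity obtained from the definitions \eqref{def:m0}--\eqref{def:mj}, the relation $m_1(z) - m_0(z) = -\phi(z^{1/2}) + 2\pi i$ (whose $2\pi i$ disappears in $e^{nM}$ since $n\in\intZ$), and the explicit form of $N^{(\pre)}, N^{(j)}$; the expansions enter only in item 3. Second, your statement that the $z$, $z^{1/3}$, $z^{2/3}$, $z^{4/3}$ contributions to $nm_k(z)$ ``cancel against the corresponding terms of $\Theta(\xi)$'' is only literally true of the multi-critical regime: in the Pearcey regime $\Theta^{(0, 3/4, \tau)}$ has no $\xi^{4/3}$ piece and there is no scalar prefactor, so the $z$ and $z^{4/3}$ terms are not cancelled --- they are simply negligible on the smaller circle, being $\bigO(nr_n^2)=\bigO(n^{-1/5})$ and $\bigO(nr_n^{8/3})=\bigO(n^{-3/5})$ there. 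It is precisely the larger radius $r_n=n^{-1/3}$ of the multi-critical regime (which would make $nz \sim n^{1/3}$ and $nz^{4/3}\sim n^{1/9}$ diverge) that forces the nonzero $\rho$-parameter in $\Theta$ and the extra prefactor $e^{\frac{2}{\sqrt{3}}nz}$, as you note.
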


With $\mathsf{P}^{(j, 0)}$ given in \eqref{def:sfP0}, we next define in either the Pearcey regime or the multi-critical regime, analogous to \cite[Equation (3.127)]{Wang-Zhang21},
\begin{equation}\label{def:P0}
  P^{(j, 0)}(z) = \Omega_{\pm}^{-1}\Upsilon(z)^{-1}\mathsf{P}^{(j, 0)}(z), \quad z \in \compC_{\pm} \cap D(0, r^2_n).
\end{equation}
where $\mathsf{P}^{(j, 0)}(z)$ is defined in \eqref{def:sfP0} (resp.~\eqref{def:sfP0_crit}) and $r_n$ is defined in \eqref{def:rn} (resp.~\eqref{def:rn_crit}) in the Pearcey regime (resp.~the multi-critical regime). In view of \eqref{def:P0}, Proposition \ref{RHP:sfP0}, and
\begin{equation}\label{eq:UpsilonOmegajump}
  \Upsilon_+(z) \Omega_+ =  \Upsilon_-(z) \Omega_- \times
  \begin{cases}
    \begin{pmatrix}
      0 & 1 & 0 \\
      1 & 0 & 0 \\
      0 & 0 & 1
    \end{pmatrix},
    & z >0, \\
    \begin{pmatrix}
      1 & 0 & 0 \\
      0 & 0 & 1 \\
      0 & 1 & 0
    \end{pmatrix},
    & z <0,
  \end{cases}
\end{equation}
the following RH problem for $P^{(j, 0)}$ is then immediate, see \cite[Proposition 3.16]{Wang-Zhang21}.
\begin{RHP}\label{rhp:P0}
    In either the Pearcey regime or the multi-critical regime, $P^{(j, 0)}(z)$ is analytic in $D(0,r_n^2) \setminus \Gamma_{\Phi}$, where $r_n$ is defined in \eqref{def:rn} for the Pearcey regime and in \eqref{def:rn_crit} for the multi-critical regime, and satisfies the following:
  \begin{enumerate}
  \item
    For $z\in D(0,r_n^2) \cap \Gamma_{\Phi}$, we have
    \begin{equation}
      P^{(j, 0)}_+(z) =
      \begin{cases}
        P^{(j, 0)}_-(z) J^{(j)}_U(z), & \arg z = \pm \pi/4 \text{ or } \pm 3\pi/4, \\
        \begin{pmatrix}
          0 & 1 & 0 \\
          1 & 0 & 0 \\
          0 & 0 & 1
          \end{pmatrix}
        P^{(j, 0)}_-(z) J^{(j)}_U(z), & z \in (0, r_n^2), \\
        \begin{pmatrix}
          1 & 0 & 0 \\
          0 & 0 & 1 \\
          0 & 1 & 0
        \end{pmatrix}
        P^{(j, 0)}_-(z) J^{(j)}_U(z), & z \in (-r_n^2, 0),
      \end{cases}
    \end{equation}
    where $J^{(j)}_U(z)$ is defined in \eqref{eq:defn_J_U}.
  \item
    For $z\in \partial D(0,r_n^2)$, we have, as $n\to \infty$,
    \begin{equation}\label{eq:P0asy}
      P^{(j, 0)}(z) =
      \begin{cases}
        I+\bigO(n^{-\frac{1}{10}}), & \text{(Pearcey regime)}, \\
        I+\bigO(n^{-\frac{1}{36}}), & \text{(multi-critical regime)}.
      \end{cases}
    \end{equation}
  \end{enumerate}
\end{RHP}

Finally, we set in either the Pearcey regime or the multi-critical regime, analogous to \cite[Equation (3.130)]{Wang-Zhang21} and \cite[Equation (191)]{Wang-Xu25}
\begin{equation} \label{def:V0}
  V^{(n + k, j, 0)}(z) = U^{(n + k, j)}(z) P^{(j, 0)}(z)^{-1}, \quad z \in D(0,r_n^2) \setminus \Gamma_{\Phi},
\end{equation}
where $U^{(n + k, j)}$ is defined in \eqref{eq:defn_U(n+k):1}--\eqref{eq:defn_U(n+k):3} and satisfies RH problem \ref{rhp:U}. Then, by a similar argument as \cite[Proposition 3.17]{Wang-Zhang21},  we have the following proposition.
\begin{prop} \label{rhp:V0}
  In either the Pearcey regime or the multi-critical regime, the function $V^{(n + k, j, 0)}(z)$ defined in \eqref{def:V0} has the following properties.
  \begin{enumerate} 
  \item \label{enu:rhp:V0:1}
    $V^{(n + k, j, 0)}=(V^{(n + k, j, 0)}_0, V^{(n + k, j, 0)}_1, V^{(n + k, j, 0)}_2)$ is analytic in $D(0,r_n^2) \setminus \mathbb{R}$, where $r_n$ is defined in \eqref{def:rn} for the Pearcey regime and in \eqref{def:rn_crit} for the multi-critical regime.
  \item \label{enu:rhp:V0:2}
    For $z\in (-r_n^2,r_n^2)\setminus \{0\}$, we have
    \begin{equation} \label{eq:enu:rhp:V0:2}
      V^{(n + k, j, 0)}_+(z) = V^{(n + k, j, 0)}_-(z) \times
      \begin{cases}
        \begin{pmatrix}
          0 & 1 & 0 \\
          1 & 0 & 0 \\
          0 & 0 & 1
        \end{pmatrix},
        & z \in (0, r_n^2), \\
        \begin{pmatrix}
          1 & 0 & 0 \\
          0 & 0 & 1 \\
          0 & 1 & 0
        \end{pmatrix},
        & z \in (-r_n^2, 0).
      \end{cases}
    \end{equation}
  \item \label{enu:rhp:V0:3}
    For $z\in \partial D(0,r_n^2)$, we have, as $n \to \infty$,
    \begin{equation}\label{eq:V0asy}
      V^{(n + k, j, 0)}(z) = U^{(n + k, j)}(z) \times
      \begin{cases}
        (I+\bigO(n^{-\frac{1}{10}})), & \text{(Pearcey regime)}, \\
        (I+\bigO(n^{-\frac{1}{36}})), & \text{(multi-critical regime)}.
      \end{cases}
    \end{equation}
  \item
    As $z \to 0$, we have
    \begin{equation}\label{eq:V0kzero}
      V^{(n + k, j, 0)}_k(z)=\bigO(1), \qquad k=0, 1, 2.
    \end{equation}
  \end{enumerate}
\end{prop}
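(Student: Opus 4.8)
The plan is to obtain the RH problem for $V^{(n+k,j,0)}$ by juxtaposing RH problem~\ref{rhp:U} for $U^{(n+k,j)}$ with RH problem~\ref{rhp:P0} for $P^{(j,0)}$, exactly in the manner of the proof of \cite[Proposition 3.17]{Wang-Zhang21}. Item~\ref{enu:rhp:V0:1} holds because both $U^{(n+k,j)}$ and $P^{(j,0)}$ are analytic in $D(0,r_n^2)\setminus\Gamma_{\Phi}$ and carry the \emph{same} jump matrix $J^{(j)}_U(z)$ on each of the four rays $\arg z=\pm\pi/4,\pm3\pi/4$; these jumps therefore cancel in $V^{(n+k,j,0)}=U^{(n+k,j)}(P^{(j,0)})^{-1}$, so $V^{(n+k,j,0)}$ continues analytically across them and is analytic on $D(0,r_n^2)\setminus\realR$. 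For item~\ref{enu:rhp:V0:2}, writing $\Pi=\bigl(\begin{smallmatrix}0&1&0\\1&0&0\\0&0&1\end{smallmatrix}\bigr)$, on $(0,r_n^2)$ one has $U^{(n+k,j)}_+=U^{(n+k,j)}_-\Pi$ while RH problem~\ref{rhp:P0} gives $P^{(j,0)}_+=\Pi P^{(j,0)}_-\Pi$, and a short computation using $\Pi^2=I$ yields $V^{(n+k,j,0)}_+=V^{(n+k,j,0)}_-\Pi$; the computation on $(-r_n^2,0)$ is identical with $\bigl(\begin{smallmatrix}1&0&0\\0&0&1\\0&1&0\end{smallmatrix}\bigr)$ in place of $\Pi$. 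Item~\ref{enu:rhp:V0:3} is then immediate from the definition of $V^{(n+k,j,0)}$ together with $P^{(j,0)}(z)=I+\bigO(n^{-1/10})$ (Pearcey regime) or $I+\bigO(n^{-1/36})$ (multi-critical regime) on $\partial D(0,r_n^2)$ established in RH problem~\ref{rhp:P0}.

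The crux is the boundedness \eqref{eq:V0kzero} at the origin. For this I would use the factorization $P^{(j,0)}(z)^{-1}=\mathsf{P}^{(j,0)}(z)^{-1}\Upsilon(z)\Omega_{\pm}$ from \eqref{def:P0}, which in the Pearcey regime, by \eqref{def:sfP0}, gives
\begin{multline*}
  V^{(n+k,j,0)}(z)=(-c)^{-j}\,U^{(n+k,j)}(z)\,e^{-nM(z)}\,N^{(j)}(z)\,N^{(\pre)}(z)^{-1} \\
  \times\,\Phi^{(\gamma,\alpha,0,\frac{3}{4},\tau)}\big((\rho n)^{\frac{3}{2}}z\big)^{-1}\,\diag\big((\rho n)^{\frac{k}{2}}\big)_{k=0}^{2}\,\Upsilon(z)\,\Omega_{\pm},
\end{multline*}
and analogously in the multi-critical regime via \eqref{def:sfP0_crit}. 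One then inserts the local data: the regular-singular expansion \eqref{eq:limit_Phi_at_0_generic} yields $\Phi(\xi)^{-1}=E^{-1}\diag(\xi^{-\frac{1}{2}+\frac{\alpha}{3}},\xi^{-\frac{1}{2}-\frac{\alpha}{6}},\xi^{-\frac{\alpha}{6}})N^{(\gamma,\alpha,0,\frac{3}{4},\tau)}(\xi)^{-1}$ with $N^{(\gamma,\alpha,0,\frac{3}{4},\tau)}$ analytic and invertible at $0$; the matrices $N^{(\pre)}$ and $N^{(j)}$ carry precisely the compensating powers $\diag(z^{\frac{\alpha}{3}-\frac{1}{2}},z^{-\frac{\alpha}{6}},z^{-\frac{\alpha}{6}})$ up to analytic invertible factors; $e^{\pm nM(z)}$ and $\Upsilon(z)$ stay bounded as $z\to0$ for fixed $n$; and $U^{(n+k,j)}$ has the explicit mild behavior at $0$ recorded in RH problem~\ref{rhp:U}. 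Because the first row of $E$ equals $(1,0,0)$ in every case of item~\ref{enu:RHP:model_3} of RH problem~\ref{RHP:model}, the only entries of $V^{(n+k,j,0)}$ that could blow up arise from the nonzero lower entries of $E$, and a direct exponent count shows each such product is $\bigO(1)$; when $\alpha$ is an integer the additional $\log\xi$ entry of $E$ is either absent (as in \eqref{eq:E_odd} and \eqref{eq:E}) or multiplied by a strictly positive power of $z$, so no logarithmic divergence survives.

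I expect the main obstacle to be exactly this exponent bookkeeping: one has to verify, separately for $\alpha$ non-integer, $\alpha$ an odd integer, and $\alpha$ an even integer, and for every $\gamma\in[-1,1]$, that the singular exponents of $U^{(n+k,j)}$ at $0$ combine with those produced by $\Phi^{-1}$, $N^{(\pre)}$, $N^{(j)}$ and $\Upsilon$ into nonnegative powers with no uncancelled logarithm, uniformly across the sectors of $\Gamma_{\Phi}$. This is in essence the assertion that the prefactors $N^{(\pre)}(z)N^{(j)}(z)^{-1}e^{nM(z)}$ in $\mathsf{P}^{(j,0)}$ were chosen so as to match not merely the jumps of $U^{(n+k,j)}$ but also its local singular structure at the origin, which is the mechanism underlying \cite[Proposition 3.17]{Wang-Zhang21}; all remaining parts of the proof are routine juxtapositions of the two RH problems.
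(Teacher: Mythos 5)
Your overall plan — juxtaposing RH problem~\ref{rhp:U} with RH problem~\ref{rhp:P0}, cancelling the ray jumps, computing the constant permutation jumps on $\realR$, and reading item~\ref{enu:rhp:V0:3} off \eqref{eq:P0asy} — is exactly what the paper intends (it defers the whole proof to \cite[Proposition~3.17]{Wang-Zhang21} without elaboration), and items \ref{enu:rhp:V0:1}--\ref{enu:rhp:V0:3} are handled correctly.

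The weak point is the final claim that ``a direct exponent count shows each such product is $\bigO(1)$.'' It doesn't, if by that you mean comparing $\bigO$-bounds alone. Two issues. First, the formula \eqref{eq:generic_E} (and its degenerate versions) gives $E$ only in the sector $\arg\xi\in(\pi/4,3\pi/4)$; in the adjacent sectors $E$ is replaced by $E\,J_\Phi^{\mp1}$, and there the first row of $E$ (hence of $E^{-1}$) is no longer $(1,0,0)$. Second, and more seriously, in the sector $\arg z\in(0,\pi/4)$ one has $U^{(n+k,j)}_0,\,U^{(n+k,j)}_1=\bigO(z^{1/2-\alpha/3})$ (from the ``inside the inner lens'' case of RH problem~\ref{rhp:U}), while the middle entry of $D(\xi)^{-1}$ is $\xi^{-1/2-\alpha/6}$. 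Combining these via $E^{-1}$, a pure $\bigO$-bookkeeping only yields $V^{(n+k,j,0)}=\bigO(z^{-\alpha/2})$ in that sector, which is not $\bigO(1)$ once $\alpha\geq2$. The boundedness in fact hinges on a genuine \emph{cancellation} between the $z^{1/2-\alpha/3}$ contributions of $U_0$ and $U_1$: the actual $U^{(n+k,j)}$, being built from the same Cauchy-transform data that generate the $E$-structure in \eqref{eq:limit_Phi_at_0_generic}, admits a local expansion of the form $U=\tilde N(z)\,\diag(z^{1/2-\alpha/3},z^{1/2+\alpha/6},z^{\alpha/6})\,E$ with the \emph{same} piecewise-constant $E$ as $\Phi$, so that $U E^{-1}$ acquires the full triple $(z^{1/2-\alpha/3},z^{1/2+\alpha/6},z^{\alpha/6})$ of exponents and cancels $D(\xi)^{-1}$ entrywise. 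You gesture at precisely this mechanism in your closing sentence (``match \dots also its local singular structure''), and that is the correct statement of what must be verified; but the preceding ``direct exponent count'' phrasing, and the appeal to the $(1,0,0)$ first row of $E$, would not by themselves deliver \eqref{eq:V0kzero}. The proof therefore needs the sharper input that $U$'s local expansion carries the same $E$-factor as $\mathsf{P}^{(j,0)}$ — this is exactly what \cite[Proposition~3.17]{Wang-Zhang21} establishes in its setting — rather than a comparison of exponent magnitudes.
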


\subsection{Final transformation} \label{subsubsec:final_trans_p}


The arguments in this subsubsection apply to both the Pearcey regime and the multi-critical regime. We note that the constant $r_n$ has different definitions in the two regimes: \eqref{def:rn} in the Pearcey regime and \eqref{def:rn_crit} in the multi-critical regime.

Before performing the final transformation, we complete the description of the shapes of $\Sigma_1$ and $\Sigma'_1$. Then $\Sigma_2 = \overline{\Sigma_1}$ and $\Sigma'_2 = \overline{\Sigma'_1}$ are also fixed.

\begin{itemize}
\item
Let $\sigma_0$ and $\sigma_b$ be the intersections of $\Sigma_1$ with $\partial D(0, r_n)$ and $\partial D(b, \epsilon)$, respectively. We let the undetermined part of $\Sigma_1$ be a contour $\Sigma^R_1$ connecting $\sigma_0$ and $\sigma_b$ such that $\Re \phi(z) > 0$ for $z \in \Sigma^R_1$. We may choose $\Sigma^R_1$ to be close to the real axis since, on $(0, b)$, $\Re \phi_+(x) = 0$ and $\phi'_+(x) = -2\pi i \psi(x)$, and $\psi(x)$ is (almost) positive on $(0, b)$ as discussed in Remark \ref{rmk:positivity_of_psi}.
\item
Let $\sigma'_0$ be the intersection of $\Sigma'_1$ with $\partial D(0, r_n)$. We choose $b' \in (b, +\infty)$ and let the undetermined part of $\Sigma'_1$ be a contour $\Sigma^{R, '}_1$ connecting $\sigma'_0$ and $b'$ such that $\Re (\gfn(-z) - \gfn(z) - 4az) < 0$ on $\Sigma^{R, '}_1$ and $\Re (\gfntilde(x) + \gfn(-x) - V(-x) - \ell) < 0$ on $(b', + \infty)$. A possible construction of $\Sigma^{R, '}_1$ is the combination of the line segment from $\sigma'_0$ to $\sigma''_1 = iC + \Re(\sigma'_0)$, where $C$ is a sufficiently large constant, and the arc from $\sigma''_1$ to $\lvert \sigma''_1 \rvert$, where the arc is part of the circle $\{ \lvert z \rvert = \lvert \sigma''_1 \rvert \}$.
\end{itemize}
We set, similar to \cite[Equation 3.137)]{Wang-Zhang21} and \cite[Equation (195)]{Wang-Xu25}
\begin{equation}\label{def:sigmaR}
 \Sigma^{R}:=[0,b]\cup [b+\epsilon, \infty) \cup \partial D(0,r_n) \cup \partial D(b,\epsilon) \cup \Sigma_1^R \cup \Sigma_2^R \cup \Sigma^{R, '}_1 \cup \Sigma^{R, '}_2,
\end{equation}
where $\Sigma^R_1$ and $\Sigma^{R, '}_1$ are specified above, and $\Sigma^R_2 = \overline{\Sigma^R_1}$ and $\Sigma^{R, '}_2 = \overline{\Sigma^{R, '}_1}$.
We also divide the open disk $D(0,r_n)$ into $2$ parts by setting, as in \cite[Equation 3.139)]{Wang-Zhang21} and \cite[Equation (196)]{Wang-Xu25}
\begin{align}\label{def:Wk}
  W_1 = {}& \{z\in D(0,r_n)\setminus\{0\} \mid  \arg z \in (-\frac{\pi}{2}, \frac{\pi}{2}) \}, & W_2 = {}& \{z\in D(0,r_n)\setminus\{0\} \mid  \arg (-z) \in (-\frac{\pi}{2}, \frac{\pi}{2}) \}.
\end{align}
We then define a $1\times 2$ vector-valued function $R^{(n + k, j)}(z) = (R^{(n + k, j)}_1(z), R^{(n + k, j)}_2(z))$ such that $R^{(n + k, j)}_1(z)$ is analytic in $\compC \setminus \Sigma^R$ and $R^{(n + k, j)}_2(z)$ is analytic in $\halfH \setminus \Sigma^R$, analogous to \cite[Equations (3.141) and (3.142)]{Wang-Zhang21} and \cite[Equations (197) and (198)]{Wang-Xu25}:
\begin{align}
  R^{(n + k, j)}_1(z) = {}&
  \begin{cases}
    Q^{(n + k, j)}_1(z), & \hbox{$z\in \mathbb{C} \setminus \{D(b,\epsilon) \cup D(0,r_n) \cup \Sigma^R \}$,} \\
    V_1^{(n + k, j, b)}(z), & \hbox{$z\in D(b,\epsilon) \setminus [b-\epsilon, b]$,} \\
    V_1^{(n + k, j, 0)}(z^2), & \hbox{$z\in W_1 \setminus [0,r_n]$,} \\
    V_2^{(n + k, j, 0)}(z^2), & \hbox{$z\in W_2$,}
  \end{cases} \label{def:R1} \\
  R^{(n + k, j)}_2(z) = {}&
  \begin{cases}
    Q^{(n + k, j)}_2(z), & \hbox{$z\in \halfH \setminus \{D(b,\epsilon) \cup D(0,r_n) \cup \Sigma^R \}$,} \\
    V_2^{(n + k, j, b)}(z), & \hbox{$z\in D(b,\epsilon) \setminus [b-\epsilon, b]$,} \\
    V_0^{(n + k, j, 0)}(z^2), & \hbox{$z \in W_1 \setminus [0,r_n]$.}
  \end{cases} \label{def:R2}
\end{align}
Recall the function $J_c$ defined in \eqref{eq:defn_J_c} and its inverse functions $I_1$ and $I_2$ on $D \setminus [-1, 0]$ and $\compC \setminus \overline{D}$ respectively, see the paragraph below \eqref{eq:defn_J_c}. We define, as in \cite[Equation 3.146)]{Wang-Zhang21} and \cite[Equation (199)]{Wang-Xu25}
\begin{equation} \label{eq:scalar_R_defn}
  \tR^{(n + k, j)}(s) =
  \begin{cases}
    R^{(n + k, j)}_1(J_c(s)), & \text{$s \in \compC \setminus \overline{D}$ and $s \notin I_1(\Sigma^R)$,} \\
    R^{(n + k, j)}_2(J_c(s)), & \text{$s \in D \setminus[-1, 0]$ and $s \notin I_2(\Sigma^R)$.}
  \end{cases}
\end{equation}
We have that $\tR^{(n + k, j)}(s)$ satisfies a shifted RH problem stated below, which is analogous to \cite[RH problem 3.19]{Wang-Zhang21}.

We are now at the stage of describing the RH problem for $\tR$. For that purpose, we define
\begin{equation}\label{def:SigmaRi}
  \begin{aligned}
    \SigmaR^{(0)} := {}& I_1(\Sigma^{R, '}_1 \cup \Sigma^{R, '}_2) \subseteq \compC \setminus \overline{D}, & \SigmaR^{(0 '')} := {}& I_1((-\Sigma^{R, '}_1) \cup (-\Sigma^{R, '}_2)) \subseteq \compC \setminus \overline{D}, \\
    \SigmaR^{(1)} := {}& I_1(\Sigma^R_1 \cup \Sigma^R_2) \subseteq \compC \setminus \overline{D}, & \SigmaR^{(1')} := {}& I_2(\Sigma^R_1 \cup \Sigma^R_2) \subseteq D, \\
    \SigmaR^{(2)} := {}& I_1((b + \epsilon, b')) \subseteq \compC \setminus \overline{D}, & \SigmaR^{(2')} := {}& I_2((b + \epsilon, b')) \subseteq D, \\
    \SigmaR^{(\twoandhalf)} := {}& I_1((b', +\infty)) \subseteq \compC \setminus \overline{D}, & \SigmaR^{(\twoandhalf')} := {}& I_2((b', +\infty)) \subseteq D, \\
    \SigmaR^{(\twoandhalf'')} := {}& I_1((-\infty, -b')) \subseteq \compC \setminus \overline{D}, && \\
    \SigmaR^{(3)} := {}& I_1(\partial D(b,\epsilon)) \subseteq \compC \setminus \overline{D}, & \SigmaR^{(3')} := {}& I_2(\partial D(b,\epsilon)) \subseteq D, \\
    \SigmaR^{(4)} := {}& I_2(\Gamma_1) \subseteq D, & \SigmaR^{(5)}_k := {}& I_1(\Gamma_k) \subseteq \compC \setminus \overline{D}, \qquad k = 1, 2,
  \end{aligned}
\end{equation}
and set
\begin{equation}\label{def:SigmaR}
  \SigmaR = \SigmaR^{(5)}_1 \cup \SigmaR^{(5)}_2 \cup \left( \bigcup \SigmaR^{(*)} \right), \quad * = 0, 0'', 1, 1', 2, 2', \twoandhalf, \twoandhalf', \twoandhalf'', 3, 3', 4,
\end{equation}
which is the union of the curves in Figure \ref{fig:jump_R_scalar}. We also define the following functions on each curve constituting $\SigmaR$:
\begin{align}
  J^{(j)}_{\SigmaR^{(0)}}(s) = {}& \frac{P^{(j, \infty)}_1(-z)}{P^{(j, \infty)}_1(z)} e^{n(\gfn(-z) - \gfn(z) - 4az)}, & s \in {}& \SigmaR^{(0)},\label{def:Jsigma0} \\
  \intertext{with $z = J_c(s) \in \Sigma^{R, '}_1 \cup \Sigma^{R, '}_2$,}
  J^{(j)}_{\SigmaR^{(1)}}(s) = {}& 2z^{1 - \alpha} \frac{P^{(j, \infty)}_2(z)}{P^{(j, \infty)}_1(z)} e^{-n\phi(z)}, & s \in {}& \SigmaR^{(1)}, \label{def:Jsigma1} \\
  \intertext{with $z = J_c(s) \in \Sigma^R_1 \cup \Sigma^R_2$,}
  J^{(j)}_{\SigmaR^{(2')}}(s) = {}& \frac{1}{2 z^{1 - \alpha}} \frac{P^{(j, \infty)}_1(z)}{P^{(j, \infty)}_2(z)} e^{n\phi(z)}, & s \in {}& \SigmaR^{(2')}, \label{def:Jsigma2'} \\
  \intertext{with $z = J_c(s) \in (b + \epsilon, b')$,}
  J^{(j)}_{\SigmaR^{(\twoandhalf')}}(s) = {}& \frac{1}{2 z^{1 - \alpha}} \frac{P^{(j, \infty)}_1(-z)}{P^{(j, \infty)}_2(z)} e^{n(\gfntilde(z) + \gfn(-z) - V(-z) - \ell)}, & s \in {}& \SigmaR^{(\twoandhalf')}, \label{def:Jsigma2half'} \\
  \intertext{with $z = J_c(s) \in (b', +\infty)$,}
  J^{(j), 1}_{\SigmaR^{(3)}}(s) = {}& P^{(j, b)}_{11}(z) - 1, \quad J^{(j), 2}_{\SigmaR^{(3)}}(s) = P^{(j, b)}_{21}(z), & s \in {}& \SigmaR^{(3)}, \label{def:Jsigma3} \\
  \intertext{with $z = J_c(s) \in \partial D(b,\epsilon)$,}
  J^{(j), 1}_{\SigmaR^{(3')}}(s) = {}& P^{(j, b)}_{22}(z) - 1, \quad J^{(j), 2}_{\SigmaR^{(3')}}(s) = P^{(j, b)}_{12}(z), & s \in {}& \SigmaR^{(3')},
  \label{def:Jsigma32}\\
  \intertext{with $z = J_c(s) \in \partial D(b,\epsilon)$,}
  J^{(j), 0}_{\SigmaR^{(4)}}(s) = {}& P^{(j, 0)}_{00}(z^2) - 1, \quad J^{(j), j}_{\SigmaR^{(4)}}(s) = P^{(j, 0)}_{j0}(z^2), & s \in {}& \SigmaR^{(4)}, \label{def:Jsigma4} \\
  \intertext{with $z = J_c(s) \in \Gamma_1$ and $l=1, 2$,}
  J^{(j), k}_{\SigmaR^{(5)}_k}(s) = {}& P^{(j, 0)}_{kk}(z^2) - 1, \quad J^{(j), l}_{\SigmaR^{(5)}_k}(s) = P^{(j, 0)}_{lk}(z^2), \quad l\neq k, & s \in {}& \SigmaR^{(5)}_k, \label{def:Jsigma5}
\end{align}
where $z = J_c(s) \in \Gamma_k$, $k=1, 2$, and $j=0, 1, 2$. In \eqref{def:Jsigma0}, \eqref{def:Jsigma1}, \eqref{def:Jsigma2'} and \eqref{def:Jsigma2half'}, $P^{(j, \infty)}_1(z)$ and $P^{(j, \infty)}_2(z)$ are defined in \eqref{eq:defn_P^jinfty_1} and \eqref{eq:defn_P^jinfty_2}. In \eqref{def:Jsigma3} and \eqref{def:Jsigma32}, $P^{(j, b)}(z)=(P^{(j, b)}_{jk}(z))_{j,k=1}^2$ is defined in \eqref{eq:defn_P^b}. In \eqref{def:Jsigma4} and \eqref{def:Jsigma5}, $P^{(j, 0)}(z)=(P^{(j, 0)}_{jk}(z))_{j,k=1}^2$ is defined in \eqref{def:P0}. With the aid of these functions, we further define an operator $\DeltaR^{(j)}$ that acts on functions defined on $\SigmaR$ by
\begin{multline}\label{def:DeltaR}
  \DeltaR^{(j)} f(s) = \\
  \begin{cases}
    c J^{(j)}_{\SigmaR^{(0)}}(s) f(s'), & \text{$s \in \SigmaR^{(0)}$ and $s' = I_1(-J_c(s))$}, \\
    J^{(j)}_{\SigmaR^{(1)}}(s) f(\s), & \text{$s \in \SigmaR^{(1)}$ and $\s = I_2(J_c(s))$}, \\
    J^{(j)}_{\SigmaR^{(2')}}(s) f(\s), & \text{$s \in \SigmaR^{(2')}$ and $\s = I_1(J_c(s))$}, \\
    J^{(j)}_{\SigmaR^{(\twoandhalf')}}(s) f(\s) + J^{(j)}_{\SigmaR^{(\twoandhalf')}}(s) f(s'), & \text{$s \in \SigmaR^{(\twoandhalf')}$, $\s = I_1(J_c(s))$ and $s' = I_1(-J_c(s))$}, \\
    J^{(j), 1}_{\SigmaR^{(3)}}(s) f(s) + c J^{(j), 2}_{\SigmaR^{(3)}}(s) f(\s), & \text{$s \in \SigmaR^{(3)}$ and $\s = I_2(J_c(s))$}, \\
    J^{(j), 1}_{\SigmaR^{(3')}}(s) f(s) + J^{(j), 2}_{\SigmaR^{(3')}}(s) f(\s), & \text{$s \in \SigmaR^{(3')}$ and $\s = I_1(J_c(s))$}, \\
    J^{(j), 0}_{\SigmaR^{(4)}}(s) f(s) + \sum\limits ^2_{l = 1} J^{(j), l}_{\SigmaR^{(4)}}(s) f(\s_l), & \text{$s \in \SigmaR^{(4)}$ and $\s_l = I_1(J_c(s) e^{(l - 1)\pi i}) \in I_1(\Gamma_l)$}, \\
    \sum\limits^2_{l = 0} J^{(j), l}_{\SigmaR^{(5)}_k}(s) f(\s_l), & \text{$s \in \SigmaR^{(5)}_k$ and  $\s_0 = I_2(J_c(s) e^{(1 - k)\pi i}) \in D$,} \\
    & \text{  $\s_l = I_1(J_c(s) e^{(l - k)\pi i}) \in I_1(\Gamma_l) \subseteq \compC \setminus \overline{D}$} \\
                                & \text{for $l = 1, 2$,  such that $\s_k = s$}, \\
    0, & s \in \SigmaR^{(0'')}\cup\SigmaR^{(1')} \cup \SigmaR^{(2)} \cup \SigmaR^{(\twoandhalf)} \cup \SigmaR^{(\twoandhalf'')}, \\
  \end{cases}
\end{multline}
where $f$ is a complex-valued function defined on $\SigmaR$. Hence, we can define a scalar shifted RH problem as follows.

\begin{figure}[htb]
  \centering
  \includegraphics{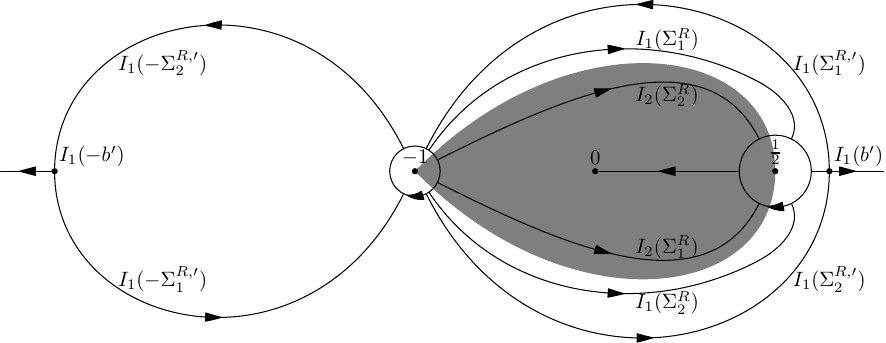}
  \caption{The contour $\SigmaR$ for the RH problem for $\tR^{(n + k, j)}$. The curves $I_1(\Sigma^R_1)$, $I_1(\Sigma^R_2)$, $I_2(\Sigma^R_1)$, $I_2(\Sigma^R_2)$, $I_1(-\Sigma^{R, '}_1)$ and $I_1(-\Sigma^{R, '}_2)$ are labelled.}
  \label{fig:jump_R_scalar}
\end{figure}

\begin{RHP} \label{rhp:tR}
  $\tR^{(n + k, j)}(s)$ is analytic in $\compC \setminus \SigmaR$, where the contour $\SigmaR$ is defined in \eqref{def:SigmaR}, and satisfies the following:
  \begin{enumerate}
  \item \label{enu:rhp:tR:2}
    For $s\in \SigmaR$, we have
    \begin{equation}
      \tR^{(n + k, j)}_+(s) - \tR^{(n + k, j)}_-(s) = \DeltaR^{(j)} \tR^{(n + k, j)}_-(s),
    \end{equation}
    where $\DeltaR^{(j)}$ is the operator defined in \eqref{def:DeltaR}.
  \item
    As $s \to \infty$, we have
    \begin{equation}
      \tR^{(n + k, j)}(s)= (cs)^{k - j}(1+\bigO(s^{-1})).
    \end{equation}
  \item
    As $s \to 0$, we have $\tR^{(n + k, j)}(s)=\bigO(s^{k - j})$.
  \end{enumerate}
\end{RHP}

By arguments analogous to \cite[Propositions 3.20 and 3.21, and Lemma 3.22]{Wang-Zhang21},  we derive the following estimate for $\tR^{(n + k, k)}(s)$.
\begin{lemma} \label{lem:tRest}
  As $n\to \infty$, for any $k \in \intZ$, we have
  \begin{equation} \label{eq:esttR}
    \tR^{(n + k, k)}(s)=1+\bigO(n^{-C_R}), 
  \end{equation}
  where $C_R = \frac{1}{10}$ in the Pearcey regime and $C_R =\frac{1}{36}$ in the multi-critical regime.  The error term is uniform for $\lvert s+1 \rvert < \epsilon r_n^{2/3}$ or $\lvert s \rvert < \epsilon$, where $\epsilon$ is a small positive constant and $r_n$ is given in \eqref{def:rn}.
\end{lemma}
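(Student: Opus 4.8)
The plan is to run a standard Deift--Zhou small-norm analysis on the shifted RH problem~\ref{rhp:tR} for $\tR^{(n + k, k)}$, in the scalar (``folded'') formalism of \cite{Wang-Zhang21} and \cite{Wang-Xu25}; I would follow those references closely and only flag the places where the two regimes of the present paper enter. The first and main task is to show that the operator $\DeltaR^{(k)}$ of \eqref{def:DeltaR} is uniformly small, of size $\bigO(n^{-C_R})$ both on $L^2(\SigmaR)$ and in sup norm, where $C_R = 1/10$ in the Pearcey regime and $C_R = 1/36$ in the multi-critical regime. Once this is in place, $\mathrm{id} - \mathcal{C}^-_{\SigmaR} \circ \DeltaR^{(k)}$ is invertible on $L^2(\SigmaR)$ for all large $n$, which simultaneously yields the unique solvability of RH problem~\ref{rhp:tR} (and, unwinding the chain $Y^{(n + k)} \to T \to S \to Q \to \tR$ of Section~\ref{sec:transform_YTSQ} together with the parametrix replacements, of the original vector RH problem for $Y^{(n + k)}$), and the representation $\tR^{(n + k, k)} = 1 + \mathcal{C}_{\SigmaR}[\DeltaR^{(k)} \tR^{(n + k, k)}_-]$ with $\lVert \DeltaR^{(k)} \tR^{(n + k, k)}_- \rVert_{L^2(\SigmaR)} = \bigO(n^{-C_R})$.

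The estimate of $\DeltaR^{(k)}$ would be carried out branch by branch along $\SigmaR$. By construction \eqref{def:DeltaR} it vanishes on $\SigmaR^{(0'')} \cup \SigmaR^{(1')} \cup \SigmaR^{(2)} \cup \SigmaR^{(\twoandhalf)} \cup \SigmaR^{(\twoandhalf'')}$. On $\SigmaR^{(1)}$, $\SigmaR^{(0)}$, $\SigmaR^{(2')}$ and $\SigmaR^{(\twoandhalf')}$ the multipliers carry the exponential factors $e^{-n\phi(z)}$, $e^{n(\gfn(-z) - \gfn(z) - 4az)}$, $e^{n\phi(z)}$ and $e^{n(\gfntilde(z) + \gfn(-z) - V(-z) - \ell)}$ respectively; the contours $\Sigma_1^R$, $\Sigma^{R, '}_1$ and the point $b'$ were chosen in Section~\ref{subsubsec:final_trans_p} precisely so that the relevant real parts are negative, and on the lens boundary outside $D(0, r_n)$ one has $\Re\phi(z)$ bounded below by a positive power of $r_n$ (using \eqref{eq:behaviour_phi_t_at_b} near $b$, the essential positivity of $\psi$ on $(0, b)$ from Remark~\ref{rmk:positivity_of_psi} in the bulk, and the local expansions of $\phi$ near $0$ from Section~\ref{subsec:G_and_tildeG}); with $r_n$ as in \eqref{def:rn} and \eqref{def:rn_crit} this makes $n\Re\phi$ grow like a positive power of $n$, so all of these branches are super-polynomially small. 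On $\SigmaR^{(3)} \cup \SigmaR^{(3')}$ the multipliers are entries of $P^{(j, b)} - I$, which is $\bigO(n^{-1})$ uniformly on $\partial D(b, \epsilon)$ by the standard Airy matching. Finally, on the curves $\SigmaR^{(4)}$ and $\SigmaR^{(5)}_1 \cup \SigmaR^{(5)}_2$ near $s = -1$ the multipliers are entries of $P^{(j, 0)} - I$, which by RH problem~\ref{rhp:P0} (equivalently by \eqref{eq:P0asy}, itself a consequence of Proposition~\ref{RHP:sfP0} and \eqref{eq:sfP0asy}) are $\bigO(n^{-1/10})$ in the Pearcey regime and $\bigO(n^{-1/36})$ in the multi-critical regime, uniformly on $\partial D(0, r_n^2)$. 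Combining these, and using that the ``shift'' pieces of $\DeltaR^{(k)}$ are composed with the biholomorphic changes of variable built from $J_c$, $I_1$, $I_2$ (Section~\ref{subsec:G_and_tildeG}), which induce bounded operators on $L^2$ of the corresponding arcs, gives $\lVert \DeltaR^{(k)} \rVert = \bigO(n^{-C_R})$.

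The uniform bound \eqref{eq:esttR} then follows from the Cauchy integral $\tR^{(n + k, k)}(s) - 1 = \frac{1}{2\pi i} \int_{\SigmaR} (\zeta - s)^{-1} (\DeltaR^{(k)} \tR^{(n + k, k)}_-)(\zeta) \, d\zeta$. For $\lvert s \rvert < \epsilon$ the distance from $s$ to the support of $\DeltaR^{(k)} \tR^{(n + k, k)}_-$ is bounded below by an absolute constant, so the integral is $\bigO(n^{-C_R})$. For $\lvert s + 1 \rvert < \epsilon r_n^{2/3}$ — which, since $J_c(s) \asymp (s + 1)^{3/2}$ near $s = -1$, is exactly the preimage of a disk $D(0, c' r_n)$ and lies strictly inside the curves $\SigmaR^{(4)}, \SigmaR^{(5)}_k$ — the distance from $s$ to $\SigmaR$ is $\gtrsim r_n^{2/3}$ while the length of the part of $\SigmaR$ near $s = -1$ is $\lesssim r_n^{2/3}$, so these two factors cancel in the integral and one again gets $\bigO(n^{-C_R})$. (The remaining contributions to the integral, from contours at bounded distance from $s$ carrying exponentially small or $\bigO(n^{-C_R})$ multipliers, are handled in the same way.)

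The only genuinely non-routine ingredient is the local parametrix at the origin that underlies the last estimate in Step~2: its construction rests on the unique solvability of the model RH problem~\ref{RHP:model} (Theorem~\ref{thm:RHP_unique_solvability}) and on the expansion $\Phi = (I + M\xi^{-1} + \bigO(\xi^{-2})) \Upsilon \Omega_\pm e^{-\Theta}$ there, and the matching estimate \eqref{eq:sfP0asy} requires choosing $r_n$ so that on $\partial D(0, r_n^2)$ the model variable $\xi = (\rho n)^{3/2} z$ (resp. $(3^{2/3} n)^{3/4} z$) is large while the discrepancy between the $n$-dependent functions $\gfn, \gfntilde, \phi$ and the model exponent $\Theta$ stays small. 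Balancing these opposing constraints is what forces $r_n = n^{-3/5}$, $C_R = 1/10$ in the Pearcey regime and $r_n = n^{-1/3}$, $C_R = 1/36$ in the multi-critical regime. Since this balancing is already carried out in Proposition~\ref{RHP:sfP0}, the present lemma reduces to the bookkeeping above, and the remaining work is simply to verify that nothing in the two new regimes is worse than the corresponding steps of the analogous estimates (\cite[Propositions 3.20, 3.21 and Lemma 3.22]{Wang-Zhang21}).
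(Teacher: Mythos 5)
Your proposal is correct and follows essentially the same route as the paper, which proves the lemma by citing the small-norm analysis of \cite[Propositions 3.20--3.22 and Section 3.7]{Wang-Zhang21} verbatim: uniqueness of the solution of the shifted RH problem \ref{rhp:tR}, the uniform bound $\lvert J^{(j),*}_*(s)\rvert = \bigO(n^{-C_R})$ on the multipliers of $\DeltaR^{(k)}$ (exponentially small on the lens contours, $\bigO(n^{-1})$ from the Airy matching, and $\bigO(n^{-C_R})$ from the local parametrix estimate \eqref{eq:P0asy} supplied by Proposition \ref{RHP:sfP0}), and the Cauchy-transform representation giving uniformity in the two stated regions. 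Your step-by-step unpacking — including the observation that the shift maps built from $J_c, I_1, I_2$ are bounded distortions so the $L^2$ operator norm of $\DeltaR^{(k)}$ inherits the sup bound, and the geometric cancellation near $s = -1$ where distance and arc length both scale like $r_n^{2/3}$ — is precisely the content the paper delegates to \cite{Wang-Zhang21}, and you have correctly isolated the matching estimate \eqref{eq:sfP0asy} as the only ingredient that is genuinely new to the present regimes.
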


The proof of the lemma is the same as that of \cite[Proposition 3.22]{Wang-Zhang21}, which is detailed in \cite[Section 3.7]{Wang-Zhang21}, and the idea is as follows: First, we show that $\tR^{(n + k, k)}(s)$ is the unique solution to this RH problem, as in \cite[Proposition 3.20]{Wang-Zhang21}. Then, by a small norm argument, we show that the solution to the RH problem that $\tR^{(n + k, k)}(s)$ satisfies is close to $1$ in the sense of \cite[Proposition 3.21]{Wang-Zhang21}. Finally, we derive the lemma as the counterpart of \cite[Proposition 3.22]{Wang-Zhang21}. Since the proof is lengthy and identical to that in \cite[Section 3.7]{Wang-Zhang21} for a slightly different setting, we omit it.
We note that the essential ingredient of the small norm argument is that as $n \to \infty$, $\lvert J^{(j), *}_*(s) \rvert = \bigO(n^{-C_R})$ uniformly, where $J^{(j), *}_*(s)$ stands for any of the functions defined in \eqref{def:Jsigma0} -- \eqref{def:Jsigma5}.

If $k \neq j$, the argument in the proof of \cite[Proposition 3.22]{Wang-Zhang21} does not apply directly because the asymptotics of $\tR^{(n + k, j)}(s)$ at $\infty$ and $0$ are different. However, from the definition, $\tR^{(n + k, j)}(s)$ can be derived from $\tR^{(n + k, k)}(s)$. For example, with $\epsilon > 0$ as in Lemma \ref{lem:tRest},
\begin{equation}
  \tR^{(n + k, j)}(s) = (-c)^{k - j} \tR^{(n + k, k)}(s), \quad \text{if $\lvert s+1 \rvert < \epsilon r_n^{2/3}$}.
\end{equation}

Let
\begin{equation} \label{eq:defn_C_10}
  \hat{\tR}^{(n + k, j)}(s) = \frac{1}{c(s + 1)} \left( \tR^{(n + k + 1, j)}(s) + c^{(n + k, j)}_{10} \tR^{(n + k, j)}(s) \right), \quad c^{(n + k, j)}_{10} = -\frac{\tR^{(n + k + 1, j)}(-1)}{\tR^{(n + k, j)}(-1)}.
\end{equation}
Then we find that $\hat{\tR}^{(n + k, k)}(s)$ satisfies a shifted RH problem similar to RH problem \ref{rhp:tR}:

\begin{RHP} \label{rhp:tR_hat}
  $\hat{\tR}^{(n + k, j)}(s)$ is analytic in $\compC \setminus \SigmaR$, where the contour $\SigmaR$ is defined in \eqref{def:SigmaR}, and satisfies the following:
  \begin{enumerate}
  \item
    For $s\in \SigmaR$, we have
    \begin{equation}
      \hat{\tR}^{(n + k, j)}_+(s) - \hat{\tR}^{(n + k, j)}_-(s) = \DeltaRhat^{(j)} \tR^{(n + k, j)}_-(s),
    \end{equation}
    where $\DeltaRhat^{(j)}$ is the operator defined as
    \begin{multline}
      \DeltaRhat^{(j)} f(s) = \frac{1}{c(s + 1)} \\
      \begin{cases}
        c J^{(j)}_{\SigmaR^{(0)}}(s) \hat{f}(s'), & \text{$s \in \SigmaR^{(0)}$ and $s' = I_1(-J_c(s))$}, \\
        J^{(j)}_{\SigmaR^{(1)}}(s) \hat{f}(\s), & \text{$s \in \SigmaR^{(1)}$ and $\s = I_2(J_c(s))$}, \\
        J^{(j)}_{\SigmaR^{(2')}}(s) \hat{f}(\s), & \text{$s \in \SigmaR^{(2')}$ and $\s = I_1(J_c(s))$}, \\
        J^{(j)}_{\SigmaR^{(\twoandhalf')}}(s) \hat{f}(\s) + J^{(j)}_{\SigmaR^{(\twoandhalf')}}(s) \hat{f}(s'), & \text{$s \in \SigmaR^{(\twoandhalf')}$, $\s = I_1(J_c(s))$ and $s' = I_1(-J_c(s))$}, \\
        J^{(j), 1}_{\SigmaR^{(3)}}(s) \hat{f}(s) + c J^{(j), 2}_{\SigmaR^{(3)}}(s) \hat{f}(\s), & \text{$s \in \SigmaR^{(3)}$ and $\s = I_2(J_c(s))$}, \\
        J^{(j), 1}_{\SigmaR^{(3')}}(s) \hat{f}(s) + J^{(j), 2}_{\SigmaR^{(3')}}(s) \hat{f}(\s), & \text{$s \in \SigmaR^{(3')}$ and $\s = I_1(J_c(s))$}, \\
        J^{(j), 0}_{\SigmaR^{(4)}}(s) \hat{f}(s) + \sum\limits ^2_{l = 1} J^{(j), l}_{\SigmaR^{(4)}}(s) \hat{f}(\s_l), & \text{$s \in \SigmaR^{(4)}$ and $\s_l = I_1(J_c(s) e^{(l - 1)\pi i}) \in I_1(\Gamma_l)$}, \\
        \sum\limits^2_{l = 0} J^{(j), l}_{\SigmaR^{(5)}_k}(s) \hat{f}(\s_l), & \text{$s \in \SigmaR^{(5)}_k$ and  $\s_0 = I_2(J_c(s) e^{(1 - k)\pi i}) \in D$,} \\
                                            & \text{$\s_l = I_1(J_c(s) e^{(l - k)\pi i}) \in I_1(\Gamma_l) \subseteq \compC \setminus \overline{D}$} \\
                                            & \text{for $l = 1, 2$,  such that $\s_k = s$}, \\
        0, & s \in \SigmaR^{(0'')}\cup\SigmaR^{(1')} \cup \SigmaR^{(2)} \cup \SigmaR^{(\twoandhalf)} \cup \SigmaR^{(\twoandhalf'')}, \\
      \end{cases}
    \end{multline}
    where $f$ is a complex-valued function defined on $\SigmaR$ and $\hat{f}(s) = c(s + 1)f(s)$.
  \item
    As $s \to \infty$, we have
    \begin{equation}
      \hat{\tR}^{(n + k, j)}(s)= (cs)^{k - j} (1+\bigO(s^{-1})).
    \end{equation}
  \item \label{enu:rhp:tR:4}
    As $s \to 0$, we have $\hat{\tR}^{(n + k, k)}(s)=\bigO(s^{k - j})$.
  \end{enumerate}
\end{RHP}

Then by the same argument as in the proof of Lemma \ref{lem:tRest}, we have
\begin{lemma} \label{lem:tRest_hat}
  As $n\to \infty$, for any $k \in \intZ$, we have
  \begin{equation} \label{eq:esttR_hat}
    \hat{\tR}^{(n + k, k)}(s)=1+\bigO(n^{-C_R}),
  \end{equation}
  uniformly for $\lvert s+1 \rvert < \epsilon r_n^{2/3}$ or $\lvert s \rvert < \epsilon$, where $\epsilon$ is a small positive constant, $r_n$ is given in \eqref{def:rn} for the Pearcey regime and in \eqref{def:rn_crit} for the multi-critical regime, and $C_R$ is the same as in \eqref{eq:esttR}.
\end{lemma}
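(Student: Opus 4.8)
The plan is to imitate, line for line, the argument that proves Lemma~\ref{lem:tRest} --- that is, \cite[Propositions~3.20--3.22 and Section~3.7]{Wang-Zhang21} --- now applied to the shifted scalar RH problem~\ref{rhp:tR_hat}. First I would verify that $\hat{\tR}^{(n+k,k)}$, as defined in \eqref{eq:defn_C_10}, is a solution of RH problem~\ref{rhp:tR_hat}: the jump relation is immediate from \eqref{eq:defn_C_10} and the linearity of $\DeltaR^{(k)}$, and since $\SigmaR$ does not pass through $s=-1$ no spurious jump is created there. The one genuinely new point to check is that $\hat{\tR}^{(n+k,k)}$ is analytic at $s=-1$, i.e.\ that $\tR^{(n+k+1,k)}(s)+c^{(n+k,k)}_{10}\tR^{(n+k,k)}(s)$ vanishes at $s=-1$; this is exactly the purpose of the constant $c^{(n+k,k)}_{10}$ in \eqref{eq:defn_C_10}, which is well defined because Lemma~\ref{lem:tRest} gives $\tR^{(n+k,k)}(-1)=1+\bigO(n^{-C_R})\neq 0$ for all large $n$ (and, via the relation $\tR^{(n+k+1,k)}(s)=(-c)\tR^{(n+k+1,k+1)}(s)$ near $s=-1$, keeps $c^{(n+k,k)}_{10}$ bounded). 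The prescribed behaviour of $\hat{\tR}^{(n+k,k)}$ at $s=\infty$ and $s=0$ then follows from that of $\tR^{(n+k+1,k)}$ and $\tR^{(n+k,k)}$ divided by $c(s+1)$. Next I would establish unique solvability of RH problem~\ref{rhp:tR_hat}, exactly as in \cite[Proposition~3.20]{Wang-Zhang21}: the homogeneous version has only the trivial solution, so $\hat{\tR}^{(n+k,k)}$ is its unique solution.

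The heart of the proof is the small-norm estimate for the shifted jump operator $\DeltaRhat^{(k)}$. The key input --- already isolated when Lemma~\ref{lem:tRest} was proved --- is that every jump entry $J^{(k),*}_{*}$ entering $\DeltaR^{(k)}$, hence $\DeltaRhat^{(k)}$, is $\bigO(n^{-C_R})$ uniformly on $\SigmaR$: on $\SigmaR^{(0)},\SigmaR^{(1)},\SigmaR^{(2')},\SigmaR^{(\twoandhalf')}$ from the exponential decay of $\phi$, of $\gfn(-z)-\gfn(z)-4az$, and of the remaining phases on the contours fixed in Section~\ref{subsubsec:final_trans_p}; on $\SigmaR^{(3)},\SigmaR^{(3')}$ from the Airy parametrix being $I+\bigO(n^{-1})$ on $\partial D(b,\epsilon)$; and on $\SigmaR^{(4)},\SigmaR^{(5)}_k$ from RH problem~\ref{rhp:P0} and the estimate \eqref{eq:V0asy}. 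The only feature of $\DeltaRhat^{(k)}$ absent from $\DeltaR^{(k)}$ is the prefactor $(c(s+1))^{-1}$ together with the compensating rescaling $\hat f(s)=c(s+1)f(s)$; away from $s=-1$ this is harmless, and on the portions $\SigmaR^{(4)},\SigmaR^{(5)}_k$ that approach $-1$ --- which lie at distance $\sim r_n^{2/3}$ and have comparable arclength --- the $\lvert s+1\rvert^{-1}$ growth is absorbed in the same way as in \cite[Section~3.7]{Wang-Zhang21}. Thus $\DeltaRhat^{(k)}$ has operator norm $\bigO(n^{-C_R})$ on the relevant weighted $L^2$-space; a Neumann series gives $\hat{\tR}^{(n+k,k)}=1+\bigO(n^{-C_R})$, and the Cauchy integral representation together with the maximum principle propagates this bound into $\{\lvert s+1\rvert<\epsilon r_n^{2/3}\}$ and $\{\lvert s\rvert<\epsilon\}$, which is \eqref{eq:esttR_hat}.

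I expect the main obstacle to be the bookkeeping in this last step near $s=-1$, where the weight $(s+1)^{-1}$, the shrinking radius $r_n$, and the $n^{-C_R}$ smallness of the jump have to be balanced against the arclength of the contour pieces meeting a neighbourhood of $-1$. Since this balance is precisely the one carried out in \cite[Section~3.7]{Wang-Zhang21} (and in \cite[Section~4]{Wang-Xu25}) for the unhatted quantity $\tR^{(n+k,k)}$, and the insertion of the $(s+1)^{-1}$ factor does not alter the relevant scales, the intended write-up would invoke that computation rather than reproduce it.
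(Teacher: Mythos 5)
Your proposal is correct and follows the paper's own (implicit) proof, which simply says that Lemma~\ref{lem:tRest_hat} holds ``by the same argument as in the proof of Lemma~\ref{lem:tRest}'', i.e.\ the unique-solvability-plus-small-norm machinery of \cite[Propositions~3.20--3.22, Section~3.7]{Wang-Zhang21}. You have correctly isolated the only genuinely new points to check --- that $c^{(n+k,k)}_{10}$ is well defined and bounded (using Lemma~\ref{lem:tRest} and the relation $\tR^{(n+k+1,k)}=(-c)\tR^{(n+k+1,k+1)}$ near $s=-1$), that $\hat{\tR}^{(n+k,k)}$ is analytic at $s=-1$ by construction, and that the conjugation by $c(s+1)$ does not spoil the small-norm bound near $s=-1$ because $\SigmaR^{(4)},\SigmaR^{(5)}_k$ stay at distance $\sim r_n^{2/3}$ from $-1$.
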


Finally, we define
\begin{equation}
  \doublehat{\tR}^{(n + k, j)}(s) = \frac{1}{c(s + 1)} \left( \hat{\tR}^{(n + k + 1, j)}(s) - \hat{\tR}^{(n + k, j)}(s) \frac{\hat{\tR}^{(n + k + 1, j)}(-1)}{\hat{\tR}^{(n + k, j)}(-1)} \right),
\end{equation}
and note that there exist constants $c^{(n + k, j)}_{20}$ and $c^{(n + k, j)}_{21}$ such that
\begin{equation} \label{eq:defn_C_20_21}
  \doublehat{\tR}^{(n + k, j)}(s) = \frac{1}{c^2(s + 1)^2} \left( \tR^{(n + k + 2, j)}(s) + c^{(n + k, j)}_{21} \tR^{(n + k + 1, j)}(s) + c^{(n + k, j)}_{20} \tR^{(n + k, j)}(s) \right).
\end{equation}
Then we find that $\doublehat{\tR}^{(n + k, k)}(s)$ satisfies a shifted RH problem similar to RH problems \ref{rhp:tR} and \ref{rhp:tR_hat}:

\begin{RHP}
  $\doublehat{\tR}^{(n + k, j)}(s)$ is analytic in $\compC \setminus \SigmaR$, where the contour $\SigmaR$ is defined in \eqref{def:SigmaR}, and satisfies the following:
  \begin{enumerate}
  \item
    For $s\in \SigmaR$, we have
    \begin{equation}
      \doublehat{\tR}^{(n + k, j)}_+(s) - \doublehat{\tR}^{(n + k, j)}_-(s) = \DeltaRdoublehat^{(j)} \tR^{(n + k, j)}_-(s),
    \end{equation}
    where $\DeltaRdoublehat^{(j)}$ is the operator defined as
    \begin{equation}
      \DeltaRdoublehat^{(j)} f(s) = \frac{1}{c^2(s + 1)^2} \DeltaR(c^2(s + 1)^2 f(s)).
    \end{equation}
  \item
    As $s \to \infty$, we have
    \begin{equation}
      \doublehat{\tR}^{(n + k, j)}(s)= (cs)^{k - j} (1+\bigO(s^{-1})).
    \end{equation}
  \item
    As $s \to 0$, we have $\doublehat{\tR}^{(n + k, k)}(s)=\bigO(s^{k - j})$.
  \end{enumerate}
\end{RHP}
Similar to Lemmas \ref{lem:tRest} and \ref{lem:tRest_hat}, we have
\begin{lemma} \label{lem:tRest_doublehat}
  As $n\to \infty$, for any $k \in \intZ$, we have
  \begin{equation} \label{eq:esttR_doublehat}
    \doublehat{\tR}^{(n + k, k)}(s)=1+\bigO(n^{-C_R}),
  \end{equation}
  uniformly for $\lvert s+1 \rvert < \epsilon r_n^{2/3}$ or $\lvert s \rvert < \epsilon$, where $\epsilon$ is a small positive constant, $r_n$ is given in \eqref{def:rn} for the Pearcey regime and in \eqref{def:rn_crit} for the multi-critical regime, and $C_R$ is the same as in \eqref{eq:esttR}.
\end{lemma}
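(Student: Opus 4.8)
The plan is to argue exactly as for Lemmas \ref{lem:tRest} and \ref{lem:tRest_hat}, which themselves copy \cite[Propositions 3.20--3.22]{Wang-Zhang21}, using that $\doublehat{\tR}^{(n + k, k)}$ solves a shifted RH problem of the same structural type, with jump operator $\DeltaRdoublehat^{(j)} = \left( c^2(s+1)^2 \right)^{-1}\,\DeltaR^{(j)}\,\bigl( c^2(s+1)^2\,\cdot\,\bigr)$, i.e.\ the conjugate of $\DeltaR^{(j)}$ by multiplication by $c^2(s+1)^2$. First I would record that $\doublehat{\tR}^{(n + k, j)}$, although written in \eqref{eq:defn_C_20_21} with a $(s+1)^{-2}$ prefactor, is in fact analytic at $s = -1$: the constants $c^{(n + k, j)}_{20}$, $c^{(n + k, j)}_{21}$ are fixed, through the two-step construction via $\hat{\tR}$, precisely so that the numerator $\tR^{(n + k + 2, j)} + c^{(n + k, j)}_{21} \tR^{(n + k + 1, j)} + c^{(n + k, j)}_{20} \tR^{(n + k, j)}$ equals $c(s+1)$ times $\bigl[ \hat{\tR}^{(n + k + 1, j)}(s) - \hat{\tR}^{(n + k, j)}(s)\,\hat{\tR}^{(n + k + 1, j)}(-1)/\hat{\tR}^{(n + k, j)}(-1) \bigr]$, and the bracket vanishes at $s = -1$; hence the numerator has a double zero at $s = -1$ and the singularity is removable. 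This is the double-$\hat{}$ analogue of the single removable singularity already present for $\hat{\tR}$ in \eqref{eq:defn_C_10}.

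Next I would establish that $\doublehat{\tR}^{(n + k, k)}$ is the unique solution of its shifted RH problem, exactly as in \cite[Proposition 3.20]{Wang-Zhang21}: the difference of two solutions has only the shifted jumps on $\SigmaR$, is $\bigO(s^{-1})$ relative to $(cs)^{k-j}$ with $k = j$ at $\infty$ and $\bigO(1)$ at $0$, and iterating the shift maps in $\DeltaRdoublehat^{(j)}$ (which push points into regions where the relevant exponentials are exponentially small) forces it to vanish. Then comes the small-norm step: on each component of $\SigmaR$ the coefficient functions $J^{(j),*}_*(s)$ of \eqref{def:Jsigma0}--\eqref{def:Jsigma5} are $\bigO(n^{-C_R})$ uniformly (the ingredient already used for Lemmas \ref{lem:tRest} and \ref{lem:tRest_hat}), while conjugation by $c^2(s+1)^2$ inserts multiplicative factors of the form $(s+1)^2/(\s+1)^2$, $(s+1)^2/(\s_l+1)^2$ produced by the shift maps $f \mapsto f(\s)$, $f \mapsto f(\s_l)$; since the images $I_1(\pm J_c(s))$, $I_2(J_c(s))$, $I_1\bigl(J_c(s) e^{(l-k)\pi i}\bigr)$ all lie in $\overline{D}$ or its exterior at a distance from $-1$ comparable to $\lvert s + 1 \rvert$, these factors are bounded uniformly on the portions of $\SigmaR$ involved. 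Hence $\DeltaRdoublehat^{(j)}$ has operator norm $\bigO(n^{-C_R})$ in the normed space of \cite[Proposition 3.21]{Wang-Zhang21}, the Neumann series for the associated singular integral equation converges, and $\doublehat{\tR}^{(n + k, k)}(s) = 1 + \bigO(n^{-C_R})$ uniformly; transferring back along $J_c$ and noting $\doublehat{\tR}^{(n + k, k)}(s) = \bigO(s^{k-j}) = \bigO(1)$ at $s = 0$ gives \eqref{eq:esttR_doublehat} on $\lvert s + 1 \rvert < \epsilon r_n^{2/3}$ or $\lvert s \rvert < \epsilon$.

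I expect the only genuinely delicate point to be the control, in the small-norm estimate, of the $(s+1)^{\pm 2}$ factors arising from the conjugation — precisely near $s = -1$, where $\doublehat{\tR}$ is defined by a limiting, l'H\^opital-type procedure. Everything else is a mechanical repetition of Lemmas \ref{lem:tRest} and \ref{lem:tRest_hat} with the multiplication operator $f \mapsto c(s+1) f$ replaced by $f \mapsto c^2(s+1)^2 f$; accordingly I would state the lemma and refer to \cite[Section 3.7]{Wang-Zhang21} for the details, as was done for the two preceding lemmas.
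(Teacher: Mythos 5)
Your proposal is correct and takes the same approach as the paper, which disposes of this lemma in one line (``Similar to Lemmas \ref{lem:tRest} and \ref{lem:tRest_hat}, we have'') by implicit appeal to the unique-solvability-plus-small-norm argument of \cite[Section 3.7]{Wang-Zhang21}. Your extra detail on the removable singularity at $s=-1$ (double zero in the numerator forced by the two-step construction through $\hat{\tR}$) and on the boundedness of the conjugation factors correctly fleshes out what the paper leaves tacit; just note that unwinding $\DeltaRdoublehat^{(j)} f(s) = \frac{1}{c^2(s+1)^2}\DeltaR^{(j)}\!\bigl[c^2(\cdot+1)^2 f(\cdot)\bigr](s)$ produces factors $(\s+1)^2/(s+1)^2$, i.e.\ the reciprocals of what you wrote, though since all shift images of points near $s=-1$ stay at a comparable distance from $-1$ the boundedness conclusion is unchanged.
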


\section{Proof of Theorem \ref{thm:main}} \label{sec:proof_main}

\subsection{Asymptotics of $Y^{(n + k)}$, \texorpdfstring{$\hat{Y}^{(n + k)}$}{Y^hat}, \texorpdfstring{$\doublehat{Y}^{(n + k)}$}{Y^doublehat}, and $X^{(n)}$}

In this subsection, we consider the Pearcey regime and the multi-critical regime at the same time, unless otherwise specified. Recall that $r_n$ is defined in \eqref{def:rn} and \eqref{def:rn_crit} in the two regimes separately, and $C_R$ is defined in \eqref{eq:esttR} in the two regimes.

Recall the vector-valued function $R^{(n + k, j)}(z)$ in \eqref{def:R1} and \eqref{def:R2}, $c^{(n + k, j)}_{10}$ defined by \eqref{eq:defn_C_10}, and $c^{(n + k, j)}_{21}, c^{(n + k, j)}_{20}$ defined in \eqref{eq:defn_C_20_21}. We define
\begin{align}
  \hat{R}^{(n + k, j)}(z) = {}& R^{(n + k + 1, j)}(z) + c^{(n + k, j)}_{10} R^{(n + k, j)}(z), \\
  \doublehat{R}^{(n + k, j)}(z) = {}& R^{(n + k + 2, j)}(z) + c^{(n + k, j)}_{21} R^{(n + k + 1, j)}(z) + c^{(n + k, j)}_{20} R^{(n + k, j)}(z).
\end{align}
Then let $\epsilon > 0$ be a small positive constant, and we define, for $0 < \lvert s \rvert < \epsilon r^{2/3}_n$ which is $\epsilon n^{-2/5}$ in the Pearcey regime or $\epsilon n^{-2/9}$ in the multi-critical regime,
\begin{equation}
  \choice{v}(s) =
  \begin{cases}
    \choice{R}^{(n - 1, -1)}_2(e^{-\frac{\pi i}{2}} s^{\frac{3}{2}}), & \arg s \in (0, \frac{\pi}{3}), \\
    \choice{R}^{(n - 1, -1)}_2(e^{\frac{\pi i}{2}} s^{\frac{3}{2}}), & \arg s \in (-\frac{\pi}{3}, 0), \\
    \choice{R}^{(n - 1, -1)}_1(-(-s)^{\frac{3}{2}}), & \arg (-s) \in (-\frac{2\pi}{3}, \frac{2\pi}{3}),
  \end{cases}
  \quad \text{where $\bullet = \hat{\,}$, $\doublehat{\,}$ or empty}. 
\end{equation}
We have that $R^{(n + k, j)}(z)$ is expressed in terms of $\tR^{(n + k, j)}(s)$ as in \eqref{eq:scalar_R_defn}, $\hat{R}^{(n + k, j)}(z)$ is expressed in terms of $\hat{\tR}^{(n + k, j)}(s)$ as in \eqref{eq:defn_C_10}, and $\doublehat{R}^{(n + k, j)}(z)$ is expressed in terms of $\doublehat{\tR}^{(n + k, j)}(s)$ as in \eqref{eq:defn_C_20_21}. By the limit formulas \eqref{eq:esttR}, \eqref{eq:esttR_hat} and \eqref{eq:esttR_doublehat} for $\tR^{(n + k, k)}(s)$, $\hat{\tR}^{(n + k, k)}(s)$ and $\doublehat{\tR}^{(n + k, k)}(s)$, respectively, we have that, with the parameter $c \in (0, 3^{-1/4})$ in the Pearcey regime and $c$ defined in \eqref{eq:c_multi_crit} in the multi-critical regime,
\begin{align}
  v(s) = {}& 1 + \bigO(n^{-C_R}), & \hat{v}(s) = {}& c^{\frac{1}{3}} s (1 + \bigO(n^{-C_R})), & \doublehat{v}(s) = {}&  c^{\frac{2}{3}} s^2 (1 + \bigO(n^{-C_R})).
\end{align}
Now we write 

\begin{equation}
  \choice{v}_k(s) = (-1)^k \frac{1}{2\pi i} \oint_{\lvert \zeta \rvert = \epsilon r^{2/3}_n} \frac{\choice{v}(\zeta)}{\zeta^{k + 1}(1 - s/\zeta^3)} d\zeta, \quad k = 0, 1, 2, \quad \text{$\bullet = \hat{\,}$, $\doublehat{\,}$ or empty}.
\end{equation}
We note that $v(s)$, $\hat{v}(s)$ and $\doublehat{v}(s)$ are analytic in the region $D(0, \epsilon r^{2/3}_n)$, such that
\begin{align}
  v(s) = {}& \sum_{k=0}^{\infty}c_ks^k, & \hat{v}(s) = {}& s\sum_{k=0}^{\infty}\hat{c}_ks^k, & \doublehat{v}(s) = {}& s^2\sum_{k=0}^{\infty}\doublehat{c}_ks^k,
\end{align}
and the coefficients have the estimates
\begin{align}
  c_0 = {}& 1 + \bigO(n^{-C_R}), & \hat{c}_0 = {}& c^{\frac{1}{3}} + \bigO(n^{-C_R}), & \doublehat{c}_0 = {}& c^{\frac{2}{3}} + \bigO(n^{-C_R}),
\end{align}
\begin{equation}
  \choice{c}_k =
  \begin{cases}
    \bigO(n^{\frac{2k-1/2}{5}}), & \text{(Pearcey regime)}, \\
    \bigO(n^{\frac{2k-1/4}{9}}), & \text{(multi-critical regime)},
  \end{cases}
  \quad \text{$k \geq 1$ and $\bullet = \hat{\,}$, $\doublehat{\,}$ or empty}.
\end{equation}
Hence,
\begin{align}
  v_0(z) = {}& c_0 + c_3 z + \bigO(z^2), & v_1(z) = {}& c_1 + c_4 z + \bigO(z^2), & v_2(z) = {}& c_2 + c_5 z + \bigO(z^2), \label{eq:v_0v_1v_2} \\
  \hat{v}_0(z) = {}& \hat{c}_2 z + \bigO(z^2), & \hat{v}_1(z) = {}& \hat{c}_0 + \hat{c}_3 z + \bigO(z^2), & \hat{v}_2(z) = {}& \hat{c}_1 + \hat{c}_4 z + \bigO(z^2), \\
  \doublehat{v}_0(z) = {}& \doublehat{c}_1 z + \bigO(z^2), & \doublehat{v}_1(z) = {}& \doublehat{c}_2 z + \bigO(z^2), & \doublehat{v}_2(z) = {}& \doublehat{c}_0 + \doublehat{c}_3 z + \bigO(z^2). \label{eq:v_0v_1v_2_doublehat}
\end{align}
Let $\choice{U}(z) = (\choice{U}_0(z), \choice{U}_1(z), \choice{U}_2(z))$ be
\begin{equation}
  \choice{U}_k(z) = \sum^2_{i = 0}  \choice{v}_i(-z) \mathsf{P}^{(-1, 0)}_{ik}(z), \quad \text{$k = 0,1,2$ and $\bullet = \hat{\,}$, $\doublehat{\,}$ or empty},
\end{equation}
where $\mathsf{P}^{(-1, 0)}(z)$ is defined in \eqref{def:sfP0} in the Pearcey regime and in \eqref{def:sfP0_crit} in the multi-critical regime. By the relation between $\mathsf{P}^{(j, 0)}(z)$ and $P^{(j, 0)}(z)$ given in \eqref{def:P0}, and the relation \eqref{def:R1} and \eqref{def:R2} between $R^{(n + k, j)}$ and $V^{(n + k, j)}$, we have that
\begin{equation}
  U_k(z) = \sum^2_{i = 0} V^{(n - 1, -1)}_i(-z) \mathsf{P}^{(-1, 0)}_{ik}(z) = U^{(n - 1, -1)}_k(z),
\end{equation}
and more generally,
\begin{equation}\label{eq:U-R}
  \begin{pmatrix}
    U_0(z) & U_1(z) & U_2(z) \\
    \hat{U}_0(z) & \hat{U}_1(z) & \hat{U}_2(z) \\
    \doublehat{U}_0(z) & \doublehat{U}_1(z) & \doublehat{U}_2(z)
  \end{pmatrix}
  = 
 R(z)
  \mathsf{P}^{(-1, 0)}(z), \quad R(z)= \begin{pmatrix}
    v_0(z) & v_1(z) & v_2(z) \\
    \hat{v}_0(z) & \hat{v}_1(z) & \hat{v}_2(z) \\
    \doublehat{v}_0(z) & \doublehat{v}_1(z) & \doublehat{v}_2(z)
  \end{pmatrix}.
\end{equation}

Then $U(z)$ is related to $Y^{(n - 1, -1)}(z)$ by \eqref{eq:defn_U(n+k):1}--\eqref{eq:defn_U(n+k):3} via $Q^{(n - 1, -1)}(z)$, and $\hat{U}(z)$ and $\doublehat{U}(z)$ are related to $\hat{Y}^{(n - 1, -1)}(z)$ and $\doublehat{Y}^{(n - 1, -1)}(z)$ in the same way. If we consider the region $\{ z \in D(0, r^2_n) : \arg z \in (0, \pi/2) \}$, we have ($\bullet = \hat{\,}$, $\doublehat{\,}$ or empty)
\begin{multline}
  (\choice{Y}^{(n - 1)}_1(\sqrt{z}), \choice{Y}^{(n - 1)}_1(-\sqrt{z}), \choice{Y}^{(n - 1)}_2(\sqrt{z})) = (\choice{U}_0(z), \choice{U}_1(z), \choice{U}_2(z)) \\
  \begin{pmatrix}
    2z^{\frac{1 - \alpha}{2}} e^{-n(\gfntilde(\sqrt{z}) - V(\sqrt{z}) - \ell)} P^{(-1, \infty)}_2(\sqrt{z}) & 0 & \!\!\!\!\! e^{-n(\gfntilde(\sqrt{z}) - \ell)} P^{(-1, \infty)}_2(\sqrt{z}) \\
    e^{n\gfn(\sqrt{z})} P^{(-1, \infty)}_1(\sqrt{z}) & 0 & \!\!\!\!\! 0 \\
    -c e^{n(\gfn(-\sqrt{z}) - 4a\sqrt{z})} P^{(-1, \infty)}_1(-\sqrt{z}) & e^{n\gfn(-\sqrt{z})} P^{(-1, \infty)}_1(-\sqrt{z}) & \!\!\!\!\! 0
  \end{pmatrix},
\end{multline}
where  $P^{(-1, \infty)}_1(\sqrt{z})$ and $P^{(-1, \infty)}_2(\sqrt{z})$ are defined in \eqref{eq:defn_P^jinfty_1} and \eqref{eq:defn_P^jinfty_2}.

We define the vector-valued functions
\begin{align}
  \hat{Y}^{(n + k)}(z) = {}& Y^{(n + k + 1)}(z) + c^{(n + k, k)}_{10} Y^{(n + k)}(z), \\
  \doublehat{Y}^{(n + k)}(z) = {}& Y^{(n + k + 2)}(z) + c^{(n + k, k)}_{21} Y^{(n + k + 1)}(z) + c^{(n + k, k)}_{20} Y^{(n + k)}(z).
\end{align}
Recall the matrix-valued function $X^{(n)}(z)$ defined in \eqref{eq:defn_X^(m)} and we specify $\Vhat(x)$ in \eqref{eq:specified_What} as $x^4/2 - tx^2$, the same as in Section \ref{eq:quartic_V}. Then $X^{(n)}(z)$ satisfies
\begin{align}
    X^{(n)}(z) = {}& \Chat^{(n)}
    \begin{pmatrix}
      Y^{(n - 1)}_1(\sqrt{z}) & Y^{(n - 1)}_1(-\sqrt{z}) & Y^{(n - 1)}_2(\sqrt{z}) \\
      \hat{Y}^{(n - 1)}_1(\sqrt{z}) & \hat{Y}^{(n - 1)}_1(-\sqrt{z}) & \hat{Y}^{(n - 1)}_2(\sqrt{z}) \\
      \doublehat{Y}^{(n - 1)}_1(\sqrt{z}) & \doublehat{Y}^{(n - 1)}_1(-\sqrt{z}) & \doublehat{Y}^{(n - 1)}_2(\sqrt{z})
    \end{pmatrix}
    \begin{pmatrix}
      0 & \frac{1}{2} & \frac{1}{2\sqrt{z}} \\
      0 & \frac{1}{2} & -\frac{1}{2\sqrt{z}} \\
      -2 & 0 & 0
    \end{pmatrix} \nonumber \\
    = {}& \Chat^{(n)} R(z)
          \mathsf{P}^{(-1, 0)}(z)
          \begin{pmatrix}
            0 & 0 & 1 \\
            1 & 0 & 0 \\
            0 & 1 & 0
          \end{pmatrix} \nonumber \\
                   & \times
                     \begin{pmatrix}
                       e^{n\gfn(\sqrt{z})} P^{(-1, \infty)}_1(\sqrt{z}) & 0 & 0 \\
                       0 & e^{n\gfn(-\sqrt{z})} P^{(-1, \infty)}_1(-\sqrt{z}) & 0 \\
                       0 & 0 & e^{-n(\gfntilde(\sqrt{z}) - \ell)} P^{(-1, \infty)}_2(\sqrt{z})
                     \end{pmatrix} \nonumber \\
                   & \times
                     \begin{pmatrix}
                       1 & 0 & 0 \\
                       -\gamma e^{- 4na\sqrt{z}} & 1 & 0 \\
                       2z^{\frac{1 - \alpha}{2}} e^{nV(\sqrt{z})} & 0 & 1
                     \end{pmatrix}
                     \begin{pmatrix}
                       0 & \frac{1}{2} & \frac{1}{2\sqrt{z}} \\
                       0 & \frac{1}{2} & -\frac{1}{2\sqrt{z}} \\
                       -2  & 0 & 0
                     \end{pmatrix},
\end{align}
where $R(z)$ is defined in \eqref{eq:U-R}, $\Chat^{(n)}$ is an invertible constant matrix that can be expressed in terms of the $C^{(n)}$ defined in \eqref{eq:defn_X^(m)}, $c^{(n - 1, -1)}_{10}$, $c^{(n - 1, -1)}_{20}$ and $c^{(n - 1, -1)}_{21}$ defined in \eqref{eq:defn_C_10} and \eqref{eq:defn_C_20_21}, and $a^{(n + 1)}_1$ defined in \eqref{eq:defn_tilde_p_m}. Hence, if we take the analytic continuation of $W_1(z)$ and $W_2(z)$ in \eqref{eq:defn_W_1_W_2} in the region $\{ z \in D(0, r^2_n) : \arg z \in (0, \pi/2) \}$,
\begin{equation} \label{eq:X^n_right_vector}
  \begin{split}
    & X^{(n)}(z)
      \begin{pmatrix}
        0 \\
        W_1(z) \\
        W_2(z)
      \end{pmatrix} \\
    = {}& \Chat^{(n)}R(z)
          \mathsf{P}^{(-1, 0)}(z)
          \begin{pmatrix}
            2 e^{n(-\gfntilde(\sqrt{z}) + \ell)} P^{(-1, \infty)}_2(\sqrt{z}) \\
            z^{\frac{\alpha - 1}{2}} e^{n(\gfn(\sqrt{z}) - V(\sqrt{z}))} P^{(-1, \infty)}_1(\sqrt{z}) \\
            0
          \end{pmatrix} \\
    = {}& (-c)^{-1} (-1)^n e^{n(\ell - \Re \gfntilde_+(0))} \frac{2}{\sqrt{6}} c^{\frac{\alpha}{3}} z^{\frac{\alpha}{3} - \frac{1}{2}} \Chat^{(n)}
    R(z)\\
    & \times
      \left\{
      \begin{aligned}
        & \diag((\rho n)^{-\frac{k}{2}})^2_{k = 0} \Phi^{(\gamma, \alpha, 0, \frac{3}{4}, \tau)}((\rho n)^{\frac{3}{2}} z)
          \begin{pmatrix}
            -e^{-\frac{\alpha}{3} \pi i} \\
            -e^{\frac{\alpha}{3} \pi i} \\
            0
          \end{pmatrix}, && \text{(Pearcey regime)} \\
        & e^{\frac{2}{\sqrt{3}} n z} \diag((3^{\frac{2}{3}} n)^{-\frac{k}{4}})^2_{k = 0} \Phi^{(\gamma, \alpha, -\frac{1}{4}, \frac{3}{2}\sigma, \tau)}((3^{\frac{2}{3}} n)^{\frac{3}{4}} z)
          \begin{pmatrix}
            -e^{-\frac{\alpha}{3} \pi i} \\
            -e^{\frac{\alpha}{3} \pi i} \\
            0
          \end{pmatrix}, && \text{(multi-critical regime)}. \\
      \end{aligned}
      \right.
  \end{split}
\end{equation}
We also have
\begin{equation}
  \begin{split}
    X^{(n)}(z)^{-1} = {}&
                          \begin{pmatrix}
                            0 & 0 & -\frac{1}{2} \\
                            1 & 1 & 0 \\
                            \sqrt{z} & -\sqrt{z} & 0
                          \end{pmatrix}
                          \begin{pmatrix}
                            1 & 0 & 0 \\
                            \gamma e^{-4na \sqrt{z}} & 1 & 0 \\
                            -2z^{\frac{1 - \alpha}{2}} e^{nV(\sqrt{z})} & 0 & 1
                          \end{pmatrix} \\
                        & \times
                          \begin{pmatrix}
                            \frac{e^{-n\gfn(\sqrt{z})}}{P^{(-1, \infty)}_1(\sqrt{z})} & 0 & 0 \\
                            0 & \frac{e^{-n\gfn(-\sqrt{z})}}{P^{(-1, \infty)}_1(-\sqrt{z})} & 0 \\
                            0 & 0 & \frac{e^{n(\gfntilde(\sqrt{z}) - \ell)}}{P^{(-1, \infty)}_2(\sqrt{z})}
                          \end{pmatrix} \\
                        & \times
                          \begin{pmatrix}
                            0 & 1 & 0 \\
                            0 & 0 & 1 \\
                            1 & 0 & 0
                          \end{pmatrix}
                          \mathsf{P}^{(-1, 0)}(z)^{-1}R(z)^{-1}
                          (\Chat^{(n)})^{-1},
  \end{split}
\end{equation}
which implies 
\begin{equation}
  \begin{split}
    &
      \begin{pmatrix}
        1 & 0 & 0
      \end{pmatrix} 
      X^{(n)}(z)^{-1} \\
    = {}& \left( -\frac{1}{2} \frac{e^{n(\gfntilde(\sqrt{z}) - \ell)}}{P^{(-1, \infty)}_2(\sqrt{z})}, -z^{\frac{1 - \alpha}{2}} \frac{e^{-n(\gfn(\sqrt{z}) - V(\sqrt{z}))}}{P^{(-1, \infty)}_1(\sqrt{z})}, 0 \right)
          \mathsf{P}^{(-1, 0)}(z)^{-1}R(z)^{-1}(\Chat^{(n)})^{-1}\\
    = {}& (-c) (-1)^n e^{n(\Re \gfntilde_+(0) - \ell)} \frac{\sqrt{6}}{2} c^{-\frac{\alpha}{3}} z^{\frac{1}{2} - \frac{\alpha}{3}} (e^{\frac{\alpha}{3} \pi i}, -e^{-\frac{\alpha}{3} \pi i}, 0) \\
    & \times
      \left\{
      \begin{aligned}
        &\Phi^{(\gamma, \alpha, 0, \frac{3}{4}, \tau)}((\rho n)^{\frac{3}{2}} z)^{-1} \diag((\rho n)^{\frac{k}{2}})^2_{k = 0}R(z)^{-1} (\Chat^{(n)})^{-1}, && \text{(Pearcey regime)} \\
        &e^{-\frac{2}{\sqrt{3}} nz} \Phi^{(\gamma, \alpha, -\frac{1}{4}, \frac{3}{2}\sigma, \tau)}((3^{\frac{2}{3}} n)^{\frac{3}{4}} z)^{-1} \diag((3^{\frac{2}{3}} n)^{\frac{k}{4}})^2_{k = 0}R(z)^{-1} (\Chat^{(n)})^{-1}, && \text{(multi-critical regime)}.
      \end{aligned}\right.
  \end{split}
\end{equation}
Finally, for $x, y$ in the region $\{ z \in D(0, r^2_n) : \arg z \in (0, \pi/2) \}$, we have
\begin{equation}
  \begin{split}
    & 
      \begin{pmatrix}
        1 & 0 & 0
      \end{pmatrix} 
      X^{(n)}(y)^{-1} X^{(n)}(x)
      \begin{pmatrix}
        0 \\
        W_1(x) \\
        W_2(x)
      \end{pmatrix} \\
    = {}& 
          \left( -\frac{1}{2} \frac{e^{n(\gfntilde(\sqrt{y}) - \ell)}}{P^{(-1, \infty)}_2(\sqrt{y})}, -y^{\frac{1 - \alpha}{2}} \frac{e^{-n(\gfn(\sqrt{y}) - V(\sqrt{y}))}}{P^{(-1, \infty)}_1(\sqrt{y})}, 0 \right)
          \mathsf{P}^{(-1, 0)}(y)^{-1}R(y)^{-1}\\
    & \times
    R(x)  \mathsf{P}^{(-1, 0)}(x)
      \begin{pmatrix}
        2 e^{n(-\gfntilde(\sqrt{x}) + \ell)} P^{(-1, \infty)}_2(\sqrt{x}) \\
        x^{\frac{\alpha - 1}{2}} e^{n(\gfn(\sqrt{x}) - V(\sqrt{x}))} P^{(-1, \infty)}_1(\sqrt{x}) \\
        0
      \end{pmatrix} \\
   & = {}- \left( \frac{x}{y} \right)^{\frac{\alpha}{3} - \frac{1}{2}} (e^{\frac{\alpha}{3} \pi i}, -e^{-\frac{\alpha}{3} \pi i}, 0)\times \\
    &  \left\{
      \begin{aligned}
        & \Phi^{(\gamma, \alpha, 0, \frac{3}{4}, \tau)}((\rho n)^{\frac{3}{2}} y)^{-1} \diag((\rho n)^{\frac{k}{2}})^2_{k = 0}
          R(y)^{-1}R(x) \diag((\rho n)^{-\frac{k}{2}})^2_{k = 0} \\
        & \hspace{4cm}
          \times \Phi^{(\gamma, \alpha, 0, \frac{3}{4}, \tau)}((\rho n)^{\frac{3}{2}} x)
          \begin{pmatrix}
            e^{-\frac{\alpha}{3} \pi i} \\
            e^{\frac{\alpha}{3} \pi i} \\
            0
          \end{pmatrix},
          \quad \text{(Pearcey regime)} \\
        &
         \Phi^{(\gamma, \alpha, -\frac{1}{4}, \frac{3}{2}\sigma, \tau)}((3^{\frac{2}{3}} n)^{\frac{3}{4}} y)^{-1} \diag((3^{\frac{2}{3}} n)^{\frac{k}{4}})^2_{k = 0}
          \frac{R(y)^{-1}R(x) }{e^{\frac{2}{\sqrt{3}} n(y - x)} }\diag((3^{\frac{2}{3}} n)^{-\frac{k}{4}})^2_{k = 0} \\
        & \hspace{4cm}
          \times \Phi^{(\gamma, \alpha, -\frac{1}{4}, \frac{3}{2}\sigma, \tau)}((3^{\frac{2}{3}} n)^{\frac{3}{4}} x)
          \begin{pmatrix}
            e^{-\frac{\alpha}{3} \pi i} \\
            e^{\frac{\alpha}{3} \pi i} \\
            0
          \end{pmatrix}, 
          \quad \text{(multi-critical regime)}
      \end{aligned}
      \right.
  \end{split}
\end{equation}
We note that this result extends to $x, y \in (0, r^2_n)$ by taking straightforward analytic continuation.

\subsection{Proof of Theorem \ref{thm:main}}

Before we prove Theorem \ref{thm:main}, we first show that the limiting kernel defined by \eqref{eq:limit_K^mult_entries} is equivalent to \eqref{eq:limit_kernel_mult}. To see this, we recall the definition of $\tilde{\Phi}^{(\gamma, \alpha, \rho, \sigma, \tau)}(\xi)$ in the beginning of Section \ref{subsec:main_results}, and find that for $\xi$ in the sector $\{ \arg \xi \in (0, \pi/4) \}$, 
\begin{equation}\label{eq:def_tildePhi}
  \tilde{\Phi}^{(\gamma, \alpha, \rho, \sigma, \tau)}(\xi) = \xi^{ \frac{\alpha}{3}-\frac{1}{2} } \Phi^{(\gamma, \alpha, \rho, \sigma, \tau)}(\xi)
  \begin{pmatrix}
    1 & e^{-\frac{2\alpha}{3} \pi i} & 0 \\
    0 & 1 & 0 \\
    0 & 0 & 1
  \end{pmatrix}.
\end{equation}
This implies the equivalence between \eqref{eq:limit_K^mult_entries} and \eqref{eq:limit_kernel_mult}. Therefore we only need to prove Theorem \ref{thm:main} with $K^{(\gamma, \alpha, \rho, \sigma, \tau)}$ defined by \eqref{eq:limit_K^mult_entries}.

Now we consider the asymptotics in the Pearcey regime. We note that if $z = \bigO(n^{-3/2})$ and $\arg z \in (0, \pi/2)$, we have by \eqref{eq:v_0v_1v_2}--\eqref{eq:v_0v_1v_2_doublehat}
\begin{multline}
  \diag((\rho n)^{\frac{k}{2}})^2_{k = 0}R(z)
  \diag((\rho n)^{-\frac{k}{2}})^2_{k = 0} = 
  \begin{pmatrix}
    c_0 + \bigO(n^{-\frac{2}{5}}) & \bigO(n^{-\frac{1}{5}}) & \bigO(n^{-\frac{3}{10}}) \\
    \bigO(n^{-\frac{3}{10}}) & \hat{c}_0 + \bigO(n^{-\frac{2}{5}}) & \bigO(n^{-\frac{1}{5}}) \\
    \bigO(n^{-\frac{1}{5}}) & \bigO(n^{-\frac{3}{10}}) & \doublehat{c}_0 + \bigO(n^{-\frac{2}{5}})
  \end{pmatrix},
\end{multline}
where $R(z)$ is defined in \eqref{eq:U-R}.
This estimate extends to $z \in \realR_+$ by analytic continuation. Hence, we have that if $x, y = \bigO(n^{-3/2})$ and $x, y \in \realR_+$,
\begin{equation}
  \diag((\rho n)^{\frac{k}{2}})^2_{k = 0}R(y)^{-1}R(x)
  \diag((\rho n)^{-\frac{k}{2}})^2_{k = 0} 
  = I + \bigO(n^{-\frac{1}{5}}),
\end{equation}
and furthermore, if $\xi = (\rho n)^{3/2} x$ and $\eta = (\rho n)^{3/2} y$ are fixed positive numbers,
\begin{multline}
  \begin{pmatrix}
    1 & 0 & 0
  \end{pmatrix} 
  X^{(n)}(y)^{-1} X^{(n)}(x)
  \begin{pmatrix}
    0 \\
    W_1(x) \\
    W_2(x)
  \end{pmatrix} = \\
  \left( \frac{\xi}{\eta} \right)^{\frac{\alpha}{3} - \frac{1}{2}} (e^{\frac{\alpha}{3} \pi i}, -e^{-\frac{\alpha}{3} \pi i}, 0) \Phi^{(\gamma, \alpha, 0, \frac{3}{4}, \tau)}(\eta)^{-1} \Phi^{(\gamma, \alpha, 0, \frac{3}{4}, \tau)}(\xi)
  \begin{pmatrix}
    -e^{-\frac{\alpha}{3} \pi i} \\
    -e^{\frac{\alpha}{3} \pi i} \\
    0
  \end{pmatrix}
  + \bigO(n^{-\frac{1}{5}}).
\end{multline}
Thus, we prove Theorem \ref{thm:main} for the Pearcey regime.

Next, we consider the asymptotics in the multi-critical regime. We note that if $z = \bigO(n^{-3/4})$ and $\arg z \in (0, \pi/2)$, we have by \eqref{eq:v_0v_1v_2}--\eqref{eq:v_0v_1v_2_doublehat}
\begin{multline}
  \diag((3^{\frac{2}{3}} n)^{\frac{k}{4}})^2_{k = 0}R(z)
  \diag((3^{\frac{2}{3}} n)^{-\frac{k}{4}})^2_{k = 0} = 
  \begin{pmatrix}
    c_0 + \bigO(n^{-\frac{1}{9}}) & \bigO(n^{-\frac{1}{18}}) & \bigO(n^{-\frac{1}{12}}) \\
    \bigO(n^{-\frac{1}{12}}) & \hat{c}_0 + \bigO(n^{-\frac{1}{9}}) & \bigO(n^{-\frac{1}{18}}) \\
    \bigO(n^{-\frac{1}{18}}) & \bigO(n^{-\frac{1}{12}}) & \doublehat{c}_0 + \bigO(n^{-\frac{1}{9}})
  \end{pmatrix},
\end{multline}
and this estimate extends to $z \in (0, r^2_n)$ by analytic continuation. Hence, we have that if $x, y = \bigO(n^{-3/4})$ and $x, y \in \realR$,
\begin{equation}
  \diag((3^{\frac{2}{3}} n)^{\frac{k}{4}})^2_{k = 0}R(y)^{-1}R(x)
  \diag((3^{\frac{2}{3}} n)^{-\frac{k}{4}})^2_{k = 0} 
  = I + \bigO(n^{-\frac{1}{18}}),
\end{equation}
and furthermore, if $\xi = (3^{\frac{2}{3}} n)^{3/4} x$ and $\eta = (3^{\frac{2}{3}} n)^{3/4} y$ are fixed positive numbers,
\begin{multline}
  e^{\frac{2}{\sqrt{3}} n(y - x)} 
  \begin{pmatrix}
    1 & 0 & 0
  \end{pmatrix} 
  X^{(n)}(y)^{-1} X^{(n)}(x)
  \begin{pmatrix}
    0 \\
    W_1(x) \\
    W_2(x)
  \end{pmatrix}
  = \\
  \left( \frac{\xi}{\eta} \right)^{\frac{\alpha}{3} - \frac{1}{2}} (e^{\frac{\alpha}{3} \pi i}, -e^{-\frac{\alpha}{3} \pi i}, 0) \Phi^{(\gamma, \alpha, -\frac{1}{4}, \frac{3}{2}\sigma, \tau)}(\eta)^{-1} \Phi^{(\gamma, \alpha, -\frac{1}{4}, \frac{3}{2}\sigma, \tau)}(\xi)
  \begin{pmatrix}
    -e^{-\frac{\alpha}{3} \pi i} \\
    -e^{\frac{\alpha}{3} \pi i} \\
    0
  \end{pmatrix}
  + \bigO(n^{-\frac{1}{18}}).
\end{multline}
Thus, we prove Theorem \ref{thm:main} for the multi-critical regime.

\section{Proof of Theorem \ref{thm:RHP_unique_solvability}} \label{sec:solvability}

In this section, we prove Theorem \ref{thm:RHP_unique_solvability}. This is achieved by proving a vanishing lemma, which asserts that every solution of the associated homogeneous RH problem is trivial. Throughout this section, we assume that the parameters $\gamma, \alpha, \rho, \sigma, \tau$ satisfy the conditions in Theorem \ref{thm:RHP_unique_solvability}.

\subsection{Vanishing lemma}

\begin{lemma} (vanishing lemma) \label{lem:vanishing}
  Let $\Phi_0(\xi) = \Phi^{(\gamma, \alpha, \rho, \sigma, \tau)}_0(\xi)$ be a $3 \times 3$ vector-valued function on $\compC \setminus \Gamma_{\Phi}$, where $\Gamma_{\Phi}$ is the same as in RH problem \ref{RHP:model}. Suppose $\Phi_0$ satisfies the homogeneous version of RH problem \ref{RHP:model} such that the asymptotic behaviour at infinity is changed from \eqref{eq:expansion_at_infty_Phi} into
  \begin{equation}
    \Phi_0(\xi) = \bigO(\xi^{-1}) \Upsilon(\xi) \Omega_{\pm} e^{-\Theta(\xi)},
  \end{equation}
  and all other conditions in RH problem \ref{RHP:model} are unchanged. Then $\Phi_0$ is identically $0$, under the condition that $\alpha > -1$, $\gamma \in [-1, 1]$, and either of Conditions \ref{enu:thm:RHP_unique_solvability:1} or \ref{enu:thm:RHP_unique_solvability:2} in Theorem \ref{thm:RHP_unique_solvability} is satisfied.
\end{lemma}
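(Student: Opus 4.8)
The plan is to follow the classical route: by the standard Fredholm/Zhou argument, unique solvability of RH problem \ref{RHP:model} is equivalent to the assertion that the associated homogeneous problem admits only the trivial solution, which is exactly Lemma \ref{lem:vanishing}; so the whole content lies in the vanishing lemma, and I would prove it by the auxiliary-function technique used for the analogous $3\times3$ model problems (compare the vanishing lemmas of \cite{Kuijlaars-Martinez_Finkelshtein-Wielonsky11}, \cite{Dai-Xu-Zhang23} and \cite{Wang-Xu25}). The crucial structural observation is that RH problem \ref{RHP:model} is invariant under Schwarz reflection: since $\rho,\sigma,\tau\in\realR$ one has $\overline{\Theta^{(\rho,\sigma,\tau)}(\bar\xi)}=\Theta^{(\rho,\sigma,\tau)}(\xi)$, the jumps on $\arg\xi=\pi/4$ and $\arg\xi=-\pi/4$ (and likewise on $\arg\xi=3\pi/4$ and $\arg\xi=-3\pi/4$) are complex conjugates of one another, and the jumps on $\realR_{\pm}$ are real involutions. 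Hence $\Phi_0^{\sharp}(\xi):=\overline{\Phi_0(\bar\xi)}$ satisfies the \emph{same} jump relations on all six rays as $\Phi_0$, and one can form a bilinear combination $H(\xi)=\Phi_0(\xi)\,C(\xi)\,\overline{\Phi_0(\bar\xi)}^{\,T}$ with a piecewise-constant Hermitian factor $C$ chosen precisely so that the unipotent jumps on the four Stokes rays cancel; then $H$ is analytic in the open upper half-plane (a variant using a suitable reflection symmetry even yields an entire $H$, to which Liouville applies directly).

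Next I would determine the behaviour of $H$ at the two distinguished points. At $\xi=0$, item \ref{enu:RHP:model_3} of RH problem \ref{RHP:model} gives a regular singularity with characteristic exponents $\tfrac12-\tfrac\alpha3,\ \tfrac12+\tfrac\alpha6,\ \tfrac\alpha6$, all strictly larger than $-1$ because $\alpha>-1$; this makes the singularity of $H$ at $0$ mild enough to be removable (or simply integrable in the contour version), the $\log\xi$ entries occurring in $E$ when $\alpha\in\intZ$ being controlled by the same elementary estimate. At $\xi=\infty$ I would start from the homogeneous condition $\Phi_0(\xi)=\bigO(\xi^{-1})\Upsilon(\xi)\Omega_{\pm}e^{-\Theta(\xi)}$ and run a Phragm\'en--Lindel\"of argument sector by sector to conclude $H(\xi)\to0$. \emph{This last point is the heart of the proof and the main obstacle}: one must keep track of the signs of $\Re\big(\rho\omega^{j}\xi^{4/3}+\sigma\omega^{2j}\xi^{2/3}+\tau\omega^{j}\xi^{1/3}\big)$, $j=0,1,2$, throughout each of the six sectors cut out by $\Gamma_\Phi$, and it is exactly here that the two admissible regimes of Theorem \ref{thm:RHP_unique_solvability} are forced: when $\rho>0$ the $\xi^{4/3}$-terms are dominant and dictate, in each Stokes sector, which components of $e^{-\Theta}$ decay, so that the a priori bound $\bigO(\xi^{-1})$ on the prefactor really does propagate to decay of $H$; while for $\rho=0$ the $\sigma\xi^{2/3}$-terms take over and the condition $\sigma>0$ plays the identical role. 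For $\rho<0$, or $\rho=0$ with $\sigma\le0$, the decay and hence the lemma would genuinely fail.

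Granting $H$ analytic in the upper half-plane with $H(\xi)\to0$ at $\infty$, the Cauchy theorem gives $\int_{\realR}H_{+}(x)\,dx=0$; on the other hand, on each boundary arc the symmetric combination $H_{+}(x)+H_{+}(x)^{*}$ equals $\Phi_{0,-}(x)\,Q(x)\,\Phi_{0,-}(x)^{*}$ for an explicit Hermitian matrix $Q$ built from the jump and the factor $C$, which is positive semidefinite --- the non-negativity uses that the symmetric parts $J+J^{*}$ of the Stokes-ray jumps are positive definite, which is where the hypotheses $\lvert\gamma\rvert\le1$ (positivity) and $\alpha>-1$ (integrability at $0$) enter. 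Taking traces, $\int_{\realR}\Tr\big(H_{+}+H_{+}^{*}\big)=0$ with a non-negative integrand forces $H_{+}+H_{+}^{*}\equiv0$, hence $H\equiv0$ by analyticity (equivalently, if one has arranged $H$ to be entire, Liouville applies directly). From $H\equiv0$ and the non-degeneracy of $Q$ one reads off that the boundary values of $\Phi_0$ vanish on the rays $\arg\xi=\pm\pi/4,\pm3\pi/4$; finally, since every jump matrix is invertible, a boundary value vanishing on an open sub-arc propagates $\Phi_0\equiv0$ first to a full neighbourhood of that arc and then, across the remaining rays and through the regular singularity at $0$, to all of $\compC$. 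This proves Lemma \ref{lem:vanishing}; Theorem \ref{thm:RHP_unique_solvability} then follows by the usual argument, with the expansion \eqref{eq:asy_expansion_Phi} and the analyticity of the $m^{(\gamma,\alpha)}_{ij}$ in $(\rho,\sigma,\tau)$ coming from the small-norm/resolvent representation of the unique solution.
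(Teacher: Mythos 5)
Your plan follows the classical Zhou vanishing--lemma template: build a bilinear auxiliary matrix $H(\xi)=\Phi_0(\xi)\,C(\xi)\,\overline{\Phi_0(\bar\xi)}^T$ whose Stokes-ray jumps cancel, push it to a half-plane-analytic (or entire) function decaying at infinity, and use Cauchy/Liouville plus positivity of $H_++H_+^*$ on $\realR$. This is a genuinely different route from the one the paper actually takes. The paper first uses the Schwarz symmetry to reduce to a real-analytic $\Phi_0$ (that part you anticipate), then extracts a single row $(\phi_0,\phi_1,\phi_2)$ and builds \emph{scalar} auxiliary functions: a piecewise function $F(z)$ in the variable $z=\xi^{1/3}$ obtained by gluing $\phi_j(z^3)e^{\rho\omega z^4+\sigma\omega^2 z^2+\tau\omega z}$ across sectors, whose Laurent expansion at $\infty$ is shown to be nontrivial via super-exponential decay estimates and Carlson's theorem; and then a second scalar function $f$ satisfying a simple multiplicative jump on $\realR$. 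The contradiction comes from evaluating the contour integral of the product $f(z)\phi_0(z)\phi_2(z)/\bigl(\prod(z-a_i)\prod(z-b_i)\bigr)\cdot z^{-k}$ over a large circle $\Gamma_R$ — not a half-plane boundary — with the finitely many real zeros of $\phi_0$ and $\phi_2$ divided out so that the resulting boundary integrand $|f_+|^2 h(x) x^{-k}$ (plus a $(1-\gamma^2)|\phi_{2,+}|^2$ piece) is sign-definite while the integral vanishes.

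Beyond being a different route, your sketch has a concrete gap at the step you yourself flag as central. The jumps on the four Stokes rays are unipotent with complex parameters $\pm e^{\mp 2\alpha\pi i/3}$ and $\mp\gamma e^{\mp\alpha\pi i/3}$, and the jumps on $\realR_\pm$ are the two \emph{distinct} permutations $\begin{pmatrix}0&1&0\\1&0&0\\0&0&1\end{pmatrix}$ and $\begin{pmatrix}1&0&0\\0&0&1\\0&1&0\end{pmatrix}$. For $H(\xi)=\Phi_0(\xi)C\,\Phi_0(\bar\xi)^*$ to be analytic across $\arg\xi=\pi/4$ one needs $J_{\pi/4}\,C\,J_{-\pi/4}^{-*}=C$ on that ray; a short computation shows this already fails for diagonal $C$ unless its middle entry vanishes (which kills the positivity), and the requirement that the two different permutation jumps on $\realR_\pm$ be compatible with the same piecewise-constant Hermitian $C$ adds a further constraint you do not verify. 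Likewise the claimed positive semidefiniteness of $H_++H_+^*$ is asserted from ``$J+J^*\succeq0$ for the Stokes-ray jumps,'' but for a unipotent $J=I+cE_{12}$ one has $J+J^*=2I+cE_{12}+\bar c E_{21}$, which is positive definite only when $|c|<2$; here $|c|=1$, so it holds, but the argument must also survive conjugation by the permutation jumps and the prefactor $\Upsilon(\xi)\Omega_\pm$ at $\infty$, none of which is checked. In short, the bilinear-matrix route is not obviously adapted to the present jump structure, which is precisely why the paper replaces it with the scalar contour-integral argument (together with the Carlson-theorem step and the device of dividing out the finitely many real zeros of $\phi_0$ and $\phi_2$ — neither of which appears in your sketch, and both of which are load-bearing).

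On the points of agreement: you correctly locate where $\rho>0$ (or $\rho=0,\sigma>0$) enters — it is needed for exponential decay along the rays at $\infty$ — and you correctly identify $|\gamma|\le1$ as a positivity condition and $\alpha>-1$ as an integrability condition near $0$; in the paper these manifest as the positivity of the factor $1-\gamma^2$ in $I_2(R)$ and as the finiteness of $I_1(R),I_2(R)$ via the local exponents $\tfrac12-\tfrac\alpha3,\ \tfrac12+\tfrac\alpha6,\ \tfrac\alpha6$ at the origin.
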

\begin{proof}
We prove Lemma \ref{lem:vanishing} by contradiction. 
First, we show that we only need to consider real-analytic solutions of the homogeneous version of RH problem \ref{RHP:model}. Let $H(z)$ be a nontrivial solution of the homogeneous version of RH problem \ref{RHP:model}. Observe that $\overline{H(\bar{z})}$ is also a solution of this RH problem. Consequently, both $H(z) + \overline{H(\bar{z})}$ and $i(H(z) - \overline{H(\bar{z})})$ are solutions, and at least one of them must be nontrivial. By selecting a nontrivial one as $\Phi_0$, we ensure that $\Phi_0$ satisfies the symmetry relation 
\begin{equation} \label{eq:Symmetry_Psi_0}
  \Phi_0(z) = \overline{\Phi_0(\bar{z})}.
\end{equation}

Next, we show that every row of $\Phi_0$ is identically zero by contradiction. Suppose $(\phi_0, \phi_1, \phi_2)$ is one of the three rows of $\Phi_0$ and it is not identically zero.

Let $m_0$ be an integer such that $m_0 > \max(1, \alpha)$. We define the function $F(z)$ as follows
\begin{multline} \label{eq:defn_F(z)}
  F(z) = F^{(\pre)}(z) \times
  \begin{cases}
    1, & \lvert z \rvert > 1, \\
    z^{m_0}, & \lvert z \rvert < 1,
  \end{cases} \\
  \text{where} \quad
  F^{(\pre)}(z) = e^{\rho \omega z^4 + \sigma \omega^2 z^2 + \tau \omega z} \times
  \begin{cases}
    \phi_0(z^3), & \arg z \in (0, \frac{2\pi}{3}), \\
    \phi_2(z^3), & \arg z \in (-\pi, -\frac{\pi}{3}), \\
    \phi_1(z^3), & \arg z \in (\frac{2\pi}{3},\pi) \cup (-\frac{\pi}{3},0).
  \end{cases}
\end{multline}
It follows from the definition that $F(z)$ is analytic on $\compC \setminus \Gamma_F$, where
\begin{equation}
  \Gamma_F = \{ \lvert z \rvert = 1 \} \cup \{ \arg z = -\frac{\pi}{12}, -\frac{\pi}{4}, \frac{3\pi}{4}, \frac{11\pi}{12} \}.
\end{equation}
It is straightforward to see that $F(z) = \bigO(z^{-1})$ as $z \to \infty$, and $F(z)$ is bounded as $z \to 0$. Moreover, if we denote
\begin{equation}
  J_F(z) = F_+(z) - F_-(z), \quad z \in \Gamma_F,
\end{equation}
where $\Gamma_F$ is oriented such that the unit circle is counterclockwise and all rays are outward, then $J_F(z)$ can be expressed in terms of $\phi_0$, $\phi_1$ and $\phi_2$. We find that $J_F(z)$ is bounded on $\Gamma_F$, and it vanishes exponentially fast as $z \to \infty$ along the rays of $\Gamma_F$. Hence, we derive that
\begin{equation}
  F(z) = \frac{1}{2\pi i} \int_{\Gamma_F} \frac{J_F(w)}{w - z} dw,
\end{equation}
which implies that $F(z)$ has the asymptotic expansion at infinity:
\begin{equation} \label{eq:asy_exp_F}
  F(z) = \sum^{\infty}_{k = 1} c_k \omega^{-k} z^{-k}.
\end{equation}
Since we assume that $(\phi_0, \phi_1, \phi_2)$ is not identically $0$, we have that $F(z)$ is not identically $0$. We show below by contradiction that the coefficients $c_k$ in \eqref{eq:asy_exp_F} are not all zero. Otherwise, for all $m \geq 0$,
\begin{equation}
  \frac{1}{2\pi i} \int_{\Gamma_F} w^m J_F(w) dw = 0.
\end{equation}
We consider the value of $F(z)$ where $z = r$ or $z = e^{2\pi i/3} r$ with $r \in (1, +\infty)$. We have that for any $m$,
\begin{equation}
  \begin{split}
    F(z) = {}& \frac{1}{2\pi i} \int_{\Gamma_F} \frac{J_F(w)}{w - z} dw \\
    = {}& \frac{1}{2\pi i} \int_{\Gamma_F} J_F(w) \left( \frac{1}{w - z} - \frac{1 - (w/z)^m}{w - z} \right) dw \\
    = {}& \frac{1}{z^m} \frac{1}{2\pi i} \int_{\Gamma_F} J_F(w) \frac{w^m}{w - z} dw.
  \end{split}
\end{equation}
Then we show that $F(r)$ and $F(e^{2\pi i/3} r)$ vanish faster than any exponential function as $r \to \infty$. To be precise, we note that there exist $\epsilon, M > 0$ such that on $\Gamma_F$, 
\begin{equation}
  \lvert J_F(w) \rvert \leq M e^{-\epsilon \lvert w \rvert^2}.
\end{equation}
Then for large enough $r$, with $\epsilon', M' > 0$ that depend only on $\epsilon$ and $M$,
\begin{equation} \label{eq:F(z)_est}
  \lvert F(e^{\phi i} r) \rvert \leq \frac{M'}{r^m} \int^{\infty}_0 e^{-\epsilon' x^2} x^m dx, \quad \phi = 0 \text{ or } \frac{2\pi}{3}.
\end{equation}
We take $m$ to be the integer such that $m \leq 2 \epsilon' r^2 < m+1$. Then from \eqref{eq:F(z)_est}, we derive that 
\begin{equation}\label{eq:Expnential_Decay}
  \lvert F(e^{\phi i} r) \rvert \leq M'' e^{-\epsilon' r^2}, \quad \phi = 0 \text{ or } \frac{2\pi}{3},
\end{equation}
for some constant $M''$. Now consider the function
\begin{equation}
  \Fhat(z) = z^{\frac{2}{3} m_0} F^{(\pre)}(e^{\frac{\pi i}{3}} z^{\frac{2}{3}}), \quad \arg z \in (-\frac{\pi}{2}, \frac{\pi}{2}).
\end{equation}
It is clear that $\Fhat(z)$ is analytic in the half-plane $\{ \Re z > 0 \}$, and is continuous up to the boundary $\{ \Re z = 0 \}$. Moreover, $\lvert \Fhat(z) \rvert$ is bounded by $\lvert z \rvert^{2m_0/3}$ as $z \to \infty$, and also for $z = iy$ on the boundary, $\lvert \Fhat(iy) \rvert$ is bounded by $M''e^{-\epsilon' y^{4/3}} \max(1, \lvert y \rvert^{2 m_0/3})$. Hence, we can apply Carlson's theorem as stated in \cite[Section XIII.13]{Reed-Simon78} to show that $\Fhat(z)$ vanishes identically in $\{ \Re z \geq 0 \}$. This implies that $\phi_1(z)$ is identically zero, and then $(\phi_0(z), \phi_1(z), \phi_2(z))$ is identically zero, violating our assumption.
Thus, the coefficients $c_k$ in \eqref{eq:asy_exp_F} are not all zero. 

Below we assume that $K_0 = \min \{ k \geq 1 : c_k \neq 0 \}$. We define
\begin{equation}
  f(z) =
  \begin{cases}
    \phi_1(z), & \arg z \in (\frac{\pi}{4}, \frac{3\pi}{4}) \cup (-\frac{3\pi}{4}, -\frac{\pi}{4}), \\
    \phi_1(z) + e^{-\frac{2\alpha}{3} \pi i} \phi_0(z), & \arg z \in (0, \frac{\pi}{4}), \\
    \phi_1(z) + e^{\frac{2\alpha}{3} \pi i} \phi_0(z), & \arg z \in (-\frac{\pi}{4}, 0), \\
    \phi_1(z) + \gamma e^{-\frac{\alpha}{3} \pi i} \phi_2(z), & \arg z \in (\frac{3\pi}{4}, \pi), \\
    \phi_1(z) + \gamma e^{\frac{\alpha}{3} \pi i} \phi_2(z), & \arg z \in (-\pi, -\frac{3\pi}{4}),
  \end{cases}
\end{equation}
and then we have (with $\realR$ oriented from $-\infty$ to $+\infty$)
\begin{equation} \label{eq:f_jump}
 f_+(x) =
  \begin{cases}
    e^{-\frac{2\alpha}{3} \pi i}f_{-}(x), & x\in (0, +\infty), \\
    \gamma e^{-\frac{\alpha}{3} \pi i} f_{-}(x)+(1-\gamma^2)\phi_{2,-}(x), & x\in (-\infty,0).
  \end{cases}
\end{equation}
We also have that for $x \in (0, +\infty)$:
\begin{align}
  \phi_{0, +}(x) + e^{\frac{2\alpha}{3} \pi i}\phi_{0, -}(x) = {}& f_{-}(x), & \phi_{0, +}(-x) = {}& \phi_{0, -}(-x), \label{eq:phi_0_jump} \\
  \phi_{2, +}(-x) + \gamma e^{\frac{\alpha}{3} \pi i}\phi_{2, -}(-x) = {}& f_{-}(-x), & \phi_{2, +}(x) = {}& \phi_{2, -}(x).
\end{align}
For later use, we derive that for $x \in (-\infty,0)$:
\begin{equation} \label{eq:jump_on_real}
  \begin{split}
    & (f\phi_2)_+(x) + (f\phi_2)_-(x) \\
    = {}& f_{+}(x)(f_{-}(x) - \gamma e^{\frac{\alpha}{3} \pi i}\phi_{2, -}(x)) + (f\phi_2)_-(x)\\
    = {}& f_{+}(x) f_{-}(x) - \gamma e^{\frac{\alpha}{3} \pi i}\phi_{2, -}(x)(\gamma e^{-\frac{\alpha}{3} \pi i}f_{-}(x)+(1-\gamma^2)\phi_{2,-}(x)) + (f\phi_2)_-(x)\\
    = {}& f_{+}(x) f_{-}(x) + (1-\gamma^2) \phi_{2, -}(x)(f_{-}(x) - \gamma e^{\frac{\alpha}{3} \pi i}\phi_{2, -}(x)) \\
    = {}& f_{+}(x) f_{-}(x) + (1-\gamma^2) \phi_{2, +}(x)\phi_{2, -}(x).
  \end{split}
\end{equation}
We have, from the asymptotic expansion of $F(z)$ at infinity,
\begin{align}
  \phi_0(z) = {}& M_{0}(z) \times
                  \begin{cases}
                    e^{-\rho \omega \xi^{\frac{4}{3}} - \sigma \omega^2 z^{\frac{2}{3}} - \tau \omega z^{\frac{1}{3}}}, & \arg z \in (0, \pi), \\
                    e^{-\rho \omega^2 \xi^{\frac{4}{3}} - \sigma \omega z^{\frac{2}{3}} - \tau \omega^2 z^{\frac{1}{3}}}, & \arg z \in (-\pi, 0),
                  \end{cases} \label{eq:vanishing_g_1_infty} \\
  f(z) = {}& M_1(z) \times
                      \begin{cases}
                        e^{\rho \omega^2 \xi^{\frac{4}{3}} - \sigma \omega z^{\frac{2}{3}} - \tau \omega^2 z^{\frac{1}{3}}}, & \arg z \in (0, \pi), \\
                        e^{\rho \omega \xi^{\frac{4}{3}} - \sigma \omega^2 z^{\frac{2}{3}} - \tau \omega z^{\frac{1}{3}}}, & \arg z \in (-\pi, 0),
                      \end{cases} \label{eq:vanishing_f_infty} \\
  \phi_2(z) = {}& M_{2}(z) e^{-\rho \xi^{\frac{4}{3}} - \sigma z^{\frac{2}{3}} - \tau z^{\frac{1}{3}}}, \label{eq:vanishing_g_2_infty} 
\end{align}
where the terms $M_0(z)$, $M_{1}(z)$ and $M_{2}(z)$ are determined by $F(z)$ in \eqref{eq:defn_F(z)} and satisfy the following properties:
\begin{align}
 M_{0}(z) = {}& \sum^{\infty}_{k = K_0} c_k (-1)^k (-z)^{-\frac{k}{3}}, \quad \arg (-z) \in (-\pi + \epsilon, \pi - \epsilon), \label{eq:asy_exp_N_g_0} \\
  M_1(z) = {}&
               \begin{cases}
                 \sum^{\infty}_{k = K_0} c_k \omega^k z^{-\frac{k}{3}}, & \arg z \in (\epsilon, \pi - \epsilon), \\
                 \sum^{\infty}_{k = K_0} c_k \omega^{-k} z^{-\frac{k}{3}}, & \arg z \in (-\pi + \epsilon, -\epsilon), 
               \end{cases} \label{eq:asy_exp_N_f} \\
  M_{2}(z) = {}& \sum^{\infty}_{k = K_0} c_k z^{-\frac{k}{3}}, \quad \arg z \in (-\pi + \epsilon, \pi - \epsilon), \label{eq:asy_exp_N_g_2}
\end{align}
and $M_i(z) = \bigO(z^{-K_0/3})$ as $z \to \infty$ in any direction for $i = 0, 1, 2$. Then we have that
\begin{equation} \label{eq:est_M_i}
  f(z) \phi_0(z) \phi_2(z) = \bigO(z^{-K_0})
\end{equation}
holds uniformly as $\lvert z \rvert \to \infty$.
Similarly, from the behavior of $\Phi_0$ near the origin, if $\alpha$ is not an integer, we have as $z \to 0$:
\begin{multline} \label{eq:phi_0_at_origin_generic}
  \phi_0(z) = N_{0}(z) (-z)^{\frac{1}{2}-\frac{\alpha }{3}} - \frac{1-\gamma}{2\sin(\frac{\alpha\pi}{2})}N_{1}(z) (-z)^{\frac{1}{2}+\frac{\alpha }{6}} +\frac{1+\gamma}{2\cos(\frac{\alpha\pi}{2})}N_{2}(z) (-z)^{\frac{\alpha }{6}}, \\
  \arg (-z)\in(-\pi,\pi),
\end{multline}
\begin{align}
  f(z) = {}& 
                      \begin{cases}
                        (1-\gamma) e^{-\frac{\alpha\pi}{3}i}N_{1}(z) z^{\frac{1}{2}+\frac{\alpha }{6}}+(1+\gamma) e^{-\frac{\alpha\pi}{3}i}N_{2}(z) z^{\frac{\alpha }{6}}, & \arg z\in(0,\pi), \\
                        (1-\gamma) e^{\frac{\alpha\pi}{3}i}N_{1}(z) z^{\frac{1}{2}+\frac{\alpha }{6}}+(1+\gamma) e^{\frac{\alpha\pi}{3}i}N_{2}(z) z^{\frac{\alpha }{6}}, & \arg z\in(-\pi,0),
                      \end{cases} \\
  \phi_2(z) = {}& -N_{1}(z) z^{\frac{1}{2}+\frac{\alpha }{6}}+N_{2}(z) z^{\frac{\alpha }{6}}, \quad \arg z\in(-\pi,\pi), \label{eq:phi_2_at_origin_generic}
\end{align}
where $N_{0}(z)$, $N_{1}(z)$ and $N_{2}(z)$ are analytic functions defined in a neighborhood of the origin. If $\alpha = 2k$ is an even integer, \eqref{eq:phi_0_at_origin_generic} is substituted by
\begin{multline} \label{eq:phi_0_at_origin_even}
  \phi_0(z) = N_{0}(z) (-z)^{\frac{1}{2}-\frac{\alpha }{3}} - (-1)^k\frac{1-\gamma}{2\pi}N_{1}(z) (-z)^{\frac{1}{2}+\frac{\alpha }{6}} \log(-z) +\frac{1+\gamma}{2\cos(\frac{\alpha\pi}{2})}N_{2}(z) (-z)^{\frac{\alpha }{6}}, 
\end{multline}
for $\arg (-z)\in(-\pi,\pi)$, and if $\alpha = 2k + 1$ is an odd integer, \eqref{eq:phi_0_at_origin_generic} is substituted by
\begin{multline} \label{eq:phi_0_at_origin_odd}
  \phi_0(z) = N_{0}(z) (-z)^{\frac{1}{2}-\frac{\alpha }{3}} - \frac{1-\gamma}{2\sin(\frac{\alpha\pi}{2})}N_{1}(z) (-z)^{\frac{1}{2}+\frac{\alpha }{6}} - (-1)^k \frac{1+\gamma}{2\pi}N_{2}(z) (-z)^{\frac{\alpha }{6}} \log(-z),
\end{multline}
for $\arg (-z)\in(-\pi,\pi)$. 
For all $\alpha > -1$ and $\gamma \in [-1, 0]$, we have
\begin{align}
  f(z) \phi_0(z) \phi_1(z) = {}&
  \begin{cases}
    \bigO(z^{\frac{1}{2}}), & \alpha > 1, \\
    \bigO(z^{\frac{1}{2}}\log z), & \alpha = 1, \\
    \bigO(z^{\frac{\alpha}{2}}), & -1< \alpha < 1,
  \end{cases}
  & \text{if $\gamma \neq -1$ and $N_2(0) \neq 0$}, \\
  f(z) \phi_0(z) \phi_1(z) = {}&
  \begin{cases}
    \bigO(z), & \alpha > 0, \\
    \bigO(z \log z), & \alpha = 0, \\
    \bigO(z^{1 + \frac{\alpha}{2}}), & -1< \alpha < 0,
  \end{cases}
  & \text{if $\gamma = -1$ or $N_2(0) = 0$}. \label{eq:est_gamma=-1_N2(0)=0}
\end{align}

From \eqref{eq:Symmetry_Psi_0}, we have that $\phi_2(x) \in \realR$ and $\phi_0(-x) \in \realR$ for all $x \in (0, +\infty)$. As a result, the functions $N_{0}(x)$, $N_{1}(x)$ and $N_{2}(x)$ are real for $x\in\realR$. Furthermore, we have $f_{+}(x) f_{-}(x)=|f_{+}(x)|^2$ and $\phi_{2,+}(x)\phi_{2,-}(x) =|\phi_{2,+}(x)|^2$.

From the asymptotic behaviour of $\phi_0(z)$ and $\phi_2(z)$ given in \eqref{eq:vanishing_g_1_infty}, \eqref{eq:asy_exp_N_g_0}, \eqref{eq:vanishing_g_2_infty} and \eqref{eq:asy_exp_N_g_2} as $z \to \infty$ and the limits of $\phi_0(z)$ and $\phi_2(z)$ given in \eqref{eq:phi_0_at_origin_generic}, \eqref{eq:phi_0_at_origin_even}, \eqref{eq:phi_0_at_origin_odd} and \eqref{eq:phi_2_at_origin_generic}, we have that $\phi_0(z)$ has finitely many zeros on $(-\infty, 0)$ and $\phi_2(z)$ has finitely many zeros on $(0, +\infty)$. Let $a_1, \dotsc, a_m$ be zeros of $\phi_0(z)$ on $(-\infty, 0)$ and $b_1, \dotsc, b_n$ be zeros of $\phi_2(z)$ on $(0, +\infty)$. Let $k = 0$ or $1$. Denote $\Gamma_R$ the contour $R e^{i\theta}$ with $R$ a large enough positive number and $\theta\in(0,\pi)\cup(-\pi,0)$, and orient $\Gamma_R$ from left to right, in both the upper curve and the lower curve. We consider the contour integral
\begin{equation} \label{eq:vanishing_integral_on_circle}
  -\oint_{\Gamma_R} f(z) \frac{\phi_0(z)}{\prod^m_{i = 1}(z - a_i)} \frac{\phi_2(z)}{\prod^n_{i = 1}(z - b_i)} z^{ - k} dz,
\end{equation} 
where $\Gamma_R = \{ R e^{i\theta} : \theta \in (0, \pi) \cup (-\pi, 0) \}$. 
As $R \to \infty$, the integrand in \eqref{eq:vanishing_integral_on_circle} is $\bigO(z^{-(K_0 + k + m + n)})$ as $z \to \infty$, where $K_0 \geq 1$ is the same as in \eqref{eq:est_M_i}. Hence, as $R \to \infty$, the contour integral is bounded, and it vanishes if $K_0 + k + m + n > 1$.

Deforming the contour $\Gamma_R$ into two flat contours, one from $-R + i\epsilon$ to $R + i\epsilon$ and the other from $-R - i\epsilon$ to $R - i\epsilon$ such that $\epsilon \to 0_+$, and using \eqref{eq:f_jump}, \eqref{eq:phi_0_jump} and \eqref{eq:jump_on_real}, we have that the contour integral in \eqref{eq:vanishing_integral_on_circle} is equal to $I_1(R) + I_2(R)$, where 
\begin{equation} \label{eq:vanishing_integral_on_real}
     I_1(R) =  \int^R_0| f_{+}(x) |^2 h(x) x^{-k} dx, \quad  I_2(R) = \int^0_{-R}   \left( |f_{+}(x) |^2+(1-\gamma^2)|\phi_{2,+}(x) |^2\right) h(x) x^{-k} dx,
\end{equation}
with
\begin{equation}
  h(x) =
  \begin{cases}
    \frac{\phi_2(x)}{\prod^n_{i = 1} (x - b_i)\prod^m_{i = 1} (x - a_i)}, & x \in (0, +\infty), \\
    \frac{\phi_0(x)}{\prod^m_{i = 1} (x - a_i)\prod^n_{i = 1} (x - b_i)}, & x \in (-\infty, 0),
  \end{cases}
\end{equation}
as long as $I_1(R)$ and $I_2(R)$ are well-defined. We note that $h(x)$ does not change sign on $(0, +\infty)$ and $(-\infty, 0)$, respectively.
Below we consider the limits of $I_1(R)$ and $I_2(R)$ as $R \to \infty$, with properly chosen $k = 0$ or $1$, in two cases. In either case, we show that by computing the integrals in \eqref{eq:vanishing_integral_on_real}, we derive that $\lim_{R \to \infty} I_1(R) + I_2(R) \neq 0$. On the other hand, in either case, we also show $\lim_{R \to \infty} I_1(R) + I_2(R) = 0$ from the vanishing property of \eqref{eq:vanishing_integral_on_circle}. Hence, we derive the contradiction that finally proves Lemma \ref{lem:vanishing}.

\paragraph{Case 1: $\alpha \geq 1$, or $\gamma = -1$, or $N_2(0) = 0$}

In this case, $I_1(R)$ and $I_2(R)$ in \eqref{eq:vanishing_integral_on_real} are well-defined for either $k = 0$ or $k = 1$, since the integrand is $\bigO(x^{-1/2} \log x)$ as $x \to 0$. We take $k = 1$ if $h(x)$ changes sign at $0$, and $k = 0$ if $h(x)$ does not change sign at $0$. Then the integrand in \eqref{eq:vanishing_integral_on_real} is sign-constant at $0$.  

If $k = 1$, we see that the integrand in \eqref{eq:vanishing_integral_on_circle} decays faster than $z^{-1}$ since $K_0 + 1 + m + n > 1$. Consequently, the integral \eqref{eq:vanishing_integral_on_circle} vanishes as $R \to \infty$, which implies that $\lim_{R \to \infty} I_1(R) + I_2(R) = 0$. On the other hand, since the integrands for $I_1(R)$ and $I_2(R)$ are of the same sign, and both of them vanish faster than $x^{-1}$, we have that $\lim_{R \to \infty} I_1(R) + I_2(R)$ exists and is nonzero. We thus derive the contradiction.

If $k = 0$, the same argument as above can be applied to derive the contradiction, if $K_0 + m + n > 1$. We show below that this condition is always satisfied; that is, it is not possible for $m = n = 0$ and $K_0 = 1$ to hold simultaneously. If $K_0 = 1$ and $m = n = 0$, that is, the coefficient $c_1$ in \eqref{eq:asy_exp_N_g_0}, \eqref{eq:asy_exp_N_f} and \eqref{eq:asy_exp_N_g_2} is nonzero, $\phi_0(x)$ does not change sign on $(-\infty, 0)$ and $\phi_2(x)$ does not change sign on $(0, +\infty)$. Then we have that $h(x)$ has the same sign as $c_1$ as $x \to +\infty$ and has the opposite sign as $c_1$ as $x \to -\infty$, so that $h(x)$ has opposite signs on $(-\infty, 0)$ and $(0, +\infty)$. This is contradictory to the assumption that $h(x)$ does not change sign at $x = 0$.

\paragraph{Case 2: $-1 < \alpha < 1$, $\gamma \neq -1$ and $N_2(0) \neq 0$}

In this case, $I_1(R)$ and $I_2(R)$ in \eqref{eq:vanishing_integral_on_circle} are well-defined when $k=0$. From formula \eqref{eq:phi_2_at_origin_generic} for $\phi_2(x)$ as $x \to 0_+$, and from formulas \eqref{eq:phi_0_at_origin_generic} with $\alpha \in (-1, 0) \cup (0, 1)$ and \eqref{eq:phi_0_at_origin_even} with $\alpha = 0$ for $\phi_0(x)$ as $x \to 0_-$, we have that $h(x)$ does not change sign at $0$. Then we take $k = 0$, and use the arguments in Case 1 to show that either $K_0 > 1$ or $m + n > 0$. Then we derive the contradiction in the same way as in Case 1.
\end{proof}

\subsection{Proof of Theorem \ref{thm:RHP_unique_solvability}}
We now prove the solvability of RH problem \ref{RHP:model} by using the vanishing lemma following the strategy in \cite[Section 5.3]{Deift-Kriecherbauer-McLaughlin-Venakides-Zhou99a} and \cite[Section 2.2]{Claeys-Vanlessen07a}. 

We transform RH problem \ref{RHP:model} into an equivalent one for $\Phihat$, whose jump matrix is continuous on a jump contour $\Gamma_{\Phihat}$ and converges exponentially to the identity matrix as $\xi \to \infty$ on $\Gamma_{\Phihat}$, and whose solution is normalized at infinity. To this end, we define for $\xi \in \compC \setminus \Gamma_{\Phi}$
\begin{equation}
  \Phihat(\xi) = \Phi^{(\gamma, \alpha, \rho, \sigma, \tau)}(\xi) \times
  \begin{cases}
    e^{\Theta(\xi)} \Omega^{-1}_{\pm} \Upsilon(\xi)^{-1}, & \lvert \xi \rvert > 1, \\
    E^{-1} \diag \left( \xi^{-\frac{1}{2} + \frac{\alpha}{3}}, \xi^{-\frac{1}{2} - \frac{\alpha}{6}}, \xi^{-\frac{\alpha}{6}} \right), & \lvert \xi \rvert < 1,
  \end{cases}
\end{equation}
where $\Theta(\xi)$, $\Omega_{\pm}$ and $\Upsilon(\xi)$ are defined in \eqref{eq:Theta} and \eqref{def:Lpm}, and $E$ is the same as in \eqref{eq:limit_Phi_at_0_generic}. Then by straightforward analytic continuation, we find that $\Phihat(\xi)$ is analytic in $\lvert \xi \rvert < 1$, and satisfies the following RH problem:
\begin{RHP} \label{RHP:Phihat}
  $\Phihat$ is a $3\times 3$ matrix-valued function analytic in $\compC \setminus \Gamma_{\Phihat}$, where
  \begin{equation}
    \Gamma_{\Phihat} = \{ \lvert \xi \rvert = 1 \} \cup \{ e^{\frac{\pi}{4} i} t : t > 1 \} \cup \{ e^{-\frac{\pi}{4} i} t : t > 1 \} \cup \{ e^{\frac{3\pi}{4} i} t : t > 1 \} \cup \{ e^{-\frac{3\pi}{4} i} t : t > 1 \}.
  \end{equation}
  Here the unit circle of $\Gamma_{\Phihat}$ is positively oriented and all four rays of $\Gamma_{\Phihat}$ are oriented outward. 
  \begin{enumerate}
  \item
    We have $\Phihat_+(\xi) = \Phihat_-(\xi) J_{\Phihat}(\xi)$ for $\xi\in \Gamma_{\Phihat}$, where
    \begin{equation}
      J_{\Phihat}(\xi) =
      \begin{cases}
        \Upsilon(\xi) \Omega_{\pm} e^{-\Theta(\xi)} E^{-1} \diag \left( \xi^{-\frac{1}{2} + \frac{\alpha}{3}}, \xi^{-\frac{1}{2} - \frac{\alpha}{6}}, \xi^{-\frac{\alpha}{6}} \right), & \xi \text{ on the unit circle}, \\
        \Upsilon(\xi) \Omega_{\pm} e^{-\Theta(\xi)} J_{\Phi}(\xi) e^{\Theta(\xi)} \Omega^{-1}_{\pm} \Upsilon(\xi)^{-1}, & \xi \text{ on the four rays}.
      \end{cases}
    \end{equation}
  \item \label{enu:RHP:Phihat:3} We have
    \begin{equation} \label{eq:RHP:Phihat:3}
      \Phihat(\xi) = I + \bigO(\xi^{-1}), \quad \xi \to \infty.
    \end{equation}
  \end{enumerate}
\end{RHP}

The solvability of RH problem \ref{RHP:model} is therefore equivalent to that of RH problem \ref{RHP:Phihat}. By the general theory of the construction of solutions of RH problems, this reduces to the analysis of the associated singular integral operator
\begin{equation}
  C_{\Phihat}: L^2(\Gamma_{\Phihat}) \to L^2(\Gamma_{\Phihat}): \quad f \mapsto C_+[f(1 - J^{-1}_{\Phihat})],
\end{equation}
where $C_+$ is the boundary value of the following Cauchy operator from the positive side:
\begin{equation}
  Cf(\xi) = \frac{1}{2\pi i} \int_{\Gamma_{\Phihat}} \frac{f(s)}{s - \xi} ds, \quad \xi \in \compC \setminus \Gamma_{\Phihat}.
\end{equation}
Suppose that $I - C_{\Phihat}$ is invertible in $L^2(\Gamma_{\Phihat})$; then there exists $\mu \in L^2(\Gamma_{\Phihat})$ such that $(I - C_{\Phihat}) \mu = C_+(I - J^{-1}_{\Phihat})$, and
\begin{equation}
  \Phihat(\xi) = I + \frac{1}{2\pi i} \int_{\Gamma_{\Phihat}} \frac{(I + \mu(s))(I - J_{\Phihat}(s)^{-1})}{s - \xi} ds, \quad \xi \in \compC \setminus \Gamma_{\Phihat}
\end{equation}
satisfies RH problem \ref{RHP:Phihat} in the $L^2$ sense. Furthermore, one can use the analyticity of $J_{\Phihat}$ to show that $\Phihat$ satisfies the jump condition in the sense of continuous boundary values as well; see \cite[Step 3 of Sections 5.2 and 5.3]{Deift-Kriecherbauer-McLaughlin-Venakides-Zhou99a}. Also, it follows from the exponential decay of $I - J^{-1}_{\Phihat}$ as $\xi \to \infty$ on $\Gamma_{\Phihat}$ that the boundary condition of $\Phihat$ at infinity is also satisfied; see \cite[Proposition 5.4]{Deift-Kriecherbauer-McLaughlin-Venakides-Zhou99a}. Hence, RH problem \ref{RHP:Phihat} is solvable if the singular integral operator $I - C_{\Phihat}$ is invertible in $L^2(\Gamma_{\Phihat})$.

To prove that $I - C_{\Phihat}$ is invertible, it suffices to show that it is a Fredholm operator of index zero with trivial kernel. As in \cite[Steps 1 and 2 of Section 5.3]{Deift-Kriecherbauer-McLaughlin-Venakides-Zhou99a}, one can show that $I - C_{\Phihat}$ is indeed a Fredholm operator of zero index. To show that its kernel is trivial, we argue by contradiction using the vanishing lemma.

Suppose $\mu_0 \in L^2(\Gamma_{\Phihat})$ is nonzero such that $(I - C_{\Phihat}) \mu_0 = 0$. One can then show that the matrix-valued function $\Phihat_0$ defined by
\begin{equation}
  \Phihat_0(\xi) = \frac{1}{2\pi i} \int_{\Gamma_{\Phihat}} \frac{\mu_0(s)(I - J_{\Phihat}(s)^{-1})}{s - \xi} ds, \quad \xi \in \compC \setminus \Gamma_{\Phihat}
\end{equation}
is a nontrivial solution to the homogeneous version of RH problem \ref{RHP:Phihat}, that is, with \eqref{eq:RHP:Phihat:3} replaced by
\begin{equation}
  \Phihat(\xi) = \bigO(\xi^{-1}), \quad \xi \to \infty.
\end{equation}
The vanishing lemma confirms that the homogeneous version of RH problem \ref{RHP:model} has only the trivial solution, and this is equivalent to the homogeneous version of RH problem \ref{RHP:Phihat} having only the trivial solution. Hence, we derive the contradiction that $\Phihat_0(\xi) = 0$ and conclude the proof.

\section{Proof of Theorem \ref{prop:boussinesq}} \label{sec:boussinesq}

We recall $\Phi^{(\gamma, \alpha, \rho, \sigma, \tau)}$ defined by RH problem \ref{RHP:model_Psi} and $M = (m^{(\gamma, \alpha)}_{ij}(\rho, \sigma, \tau))^3_{i, j = 1}$ defined in \eqref{eq:asy_expansion_Phi}. Throughout this section, we use $\Phi$ and $m_{ij}$ to mean $\Phi^{(\gamma, \alpha, \rho, \sigma, \tau)}$ and  $m_{ij}(\gamma, \alpha, \rho, \sigma, \tau)$ respectively, unless otherwise stated. We note that by identity \eqref{eq:Phi_det}, we have
\begin{equation} \label{eq:trace_zero}
  m_{11} + m_{22} + m_{33} = 0.
\end{equation}

From \eqref{eq:Phi_det}, we see that $ \Phi $ is invertible. Therefore we define
\begin{align} \label{eq:LaxPair}
L = {}& -\xi \frac{\partial}{\partial \xi} \Phi   \Phi^{-1}, &  A={}& -\frac{\partial}{\partial \tau} \Phi \Phi^{-1}, &   B= {}&-\frac{\partial}{\partial \sigma} \Phi  \Phi^{-1}, &  C= {}&-\frac{\partial}{\partial \rho} \Phi \Phi^{-1}.
\end{align}
We find that $A$, $B$ and $C$ are analytic in $\xi \in \compC \setminus \{ 0 \}$, and from the asymptotic behavior of $\Phi$ at $0$, we have that $A$, $B$ and $C$ are polynomials in $\xi$. By direct computation, we have the asymptotic expansion at $\infty$
\begin{equation}
  A =
  \begin{pmatrix}
    0 & 1 & 0 \\
    0 & 0 & 1 \\
    \xi & 0 & 0
  \end{pmatrix}
  + \left[ M,
    \begin{pmatrix}
      0 & 0 & 0 \\
      0 & 0 & 0 \\
      1 & 0 & 0
    \end{pmatrix}
  \right] + \xi^{-1} (A^{(-1)} - M_{\tau}) + \bigO(\xi^{-2}),
\end{equation}
where $M_{\tau}$ is the derivative of $M$ with respect to $\tau$, and 
\begin{equation}
  A^{(-1)} = \left[M,
    \begin{pmatrix}
      0 & 1 & 0 \\
      0 & 0 & 1 \\
      0& 0 & 0
    \end{pmatrix}
  \right] + \left[ \hat{M},
    \begin{pmatrix}
      0 & 0 & 0 \\
      0 & 0 & 0 \\
      1 & 0 & 1
    \end{pmatrix}
  \right] + \left[
    \begin{pmatrix}
      0 & 0 & 0 \\
      0 & 0 & 0 \\
      1 & 0 & 1
    \end{pmatrix}
    M, M \right].
\end{equation}
Since $A$ is a polynomial in $\xi$, we have
\begin{equation} \label{eq:M_tau=A^-1}
  A^{(-1)} = M_{\tau},
\end{equation}
and
\begin{equation}\label{eq:A}
  A =
  \begin{pmatrix}
    0 & 1 & 0 \\
    0 & 0 & 1 \\
    \xi & 0 & 0
  \end{pmatrix}
  +
  \begin{pmatrix}
    m_{13}& 0 & 0 \\
    m_{23} & 0 & 0 \\
    m_{33} - m_{11}& - m_{12} & -m_{13}
  \end{pmatrix}.
\end{equation}
Similarly, we have
\begin{equation} \label{eq:B}
  B=
  \begin{pmatrix}
    0 & 0 & 1 \\
    \xi & 0 & 0 \\
    0 & \xi & 0
  \end{pmatrix}
  +
  \begin{pmatrix}
    m_{12}& m_{13} & 0 \\
    m_{22}   -m_{11} &m_{23}   -m_{12} &   -m_{13} \\
    m_{32} - m_{21}& m_{33}-m_{22} & -m_{23}
  \end{pmatrix},
\end{equation}
and
\begin{equation} \label{eq:C_in_A}
  C = \xi A + A^{(-1)} + \bigO(\xi^{-1}) = \xi A + M_{\tau}.
\end{equation}
Since the upper-right $2 \times 2$ block of $A^{(-1)}$ is independent of $\hat{M}$ and given explicitly by
\begin{equation} \label{eq:M_x} 
  A^{(-1)} = 
  \begin{pmatrix}
    * & m_{11}-m_{22}-m_{12}m_{13}& m_{12}-m_{23}-m_{13}^2  \\
    *  & m_{21}-m_{32}-m_{12}m_{23} & m_{22}-m_{33}-m_{13}m_{23} \\
    * &  * &  * 
  \end{pmatrix},
\end{equation}
we have four differential identities from \eqref{eq:M_tau=A^-1}:
\begin{align}
  (m_{12})_{\tau} = {}& m_{11} - m_{33} - m_{13}(m_{12} + m_{23}), & (m_{13})_{\tau} = {}& m_{12}-m_{23}-m_{13}^2 = \frac{1}{3}u, \label{eq:m_13_x} \\
  (m_{22})_{\tau} = {}& m_{21}-m_{32}-m_{12}m_{23}, & (m_{23})_{\tau} = {}& m_{22}-m_{33}-m_{13}m_{23}, \label{eq:m_23_x}
\end{align}
and these identities yield \eqref{eq:u_v_expr}, where the functions $u$ and $v$ are defined in \eqref{eq:uv_u} and \eqref{eq:uv_v}.

We recall matrix-valued functions $P$ and $\Phihat=P\Phi$ defined in \eqref{eq:transform}. By the transformation \eqref{eq:transform}, we have
\begin{align} \label{eq:LaxPair_tilde_no_tilde}
  -\frac{\partial}{\partial \star} \Phihat = {}& (PAP^{-1} - P_{\star} P^{-1}) \Phihat, \quad \star = \tau, \sigma, \text{ and } \rho.
\end{align}

\paragraph{Proof of the second item of \eqref{eq:LaxTriple}}

By direct calculation, we have
\begin{align} \label{eq:PAPinv}
  PAP^{-1} = {}&
                 \begin{pmatrix}
                   0 & 1 & 0 \\
                   \frac{1}{3}u & 0 & 1 \\
                   \xi - \frac{1}{3}u_{\tau} +\frac{2}{3}v & -\frac{2}{3}u & 0
                 \end{pmatrix},
  &
    P_{\tau} P^{-1} = {}&
                          \begin{pmatrix}
                            0 & 0 & 0 \\
                            \frac{1}{3} u & 0 & 0 \\
                            \frac{1}{3}(-u_{\tau} - v) & \frac{1}{3} u & 0
                          \end{pmatrix},
\end{align}
and verify that $PAP^{-1} - P_{\tau} P^{-1} = \hat{A}$, where $\hat{A}$ is defined in \eqref{eq:Ahat}, and we prove the second item of \eqref{eq:LaxTriple}.

\paragraph{Proof the third item of \eqref{eq:LaxTriple}}

From  \eqref{eq:transform}, \eqref{eq:uv_u}, \eqref{eq:uv_v} and \eqref{eq:B}, we have

\begin{equation}\label{}
  \hat B=
    \begin{pmatrix}
      0 & 0 & 1 \\
      \xi & 0 & 0 \\0 & \xi & 0
    \end{pmatrix}
    +
    \begin{pmatrix}
      \frac{2}{3}u&0 & 0 \\
      b_{21}&  -\frac{1}{3}u & 0 \\
      b_{31}& b_{32}&  -\frac{1}{3}u 
    \end{pmatrix}.
  \end{equation}
  To determine the remaining entries, we use the zero-curvature equation
\begin{equation}\label{eq:Zero_Curvature_tau}
  \frac{\partial \hat{A}}{\partial \sigma} - \frac{\partial \hat{B}}{\partial \tau} = \hat{A} \hat{B} - \hat{B} \hat{A}.
\end{equation}
From the $(1,1)$, $(2,2)$ and $(2,1)$ entries of this equation, we have
\begin{equation}
  b_{21} - v = -\frac{2}{3}u_{\tau}, \quad
  b_{32}-b_{21}= \frac{1}{3}u_{\tau}, \quad
  b_{31}= -(b_{21})_{\tau},
\end{equation}
which implies the third item of \eqref{eq:LaxTriple}.

  \paragraph{Proof of the fourth item of \eqref{eq:LaxTriple}}

  We note that by \eqref{eq:C_in_A},
  \begin{equation}
    PCP^{-1} - P_{\rho} P^{-1} = \xi PAP^{-1} + P M_{\tau} P^{-1} - P_{\rho} P^{-1},
  \end{equation}
 where  $PAP^{-1}$ is obtained in \eqref{eq:PAPinv}.  By the lower triangular structure of $P$, we have that $P_{\rho} P^{-1}$ is strictly lower triangular with diagonal entries $0$. Then by direct calculation of $P M_{\tau} P^{-1}$, we have
  \begin{equation}
    P M_{\tau} P^{-1} - P_{\rho} P^{-1} =
    \begin{pmatrix}
      c_{11} & -\frac{1}{3}(v - u_x) & \frac{1}{3} u \\
      c_{21} & c_{22} & -\frac{1}{3} v \\
      c_{31} & c_{32} & c_{33}
    \end{pmatrix},
  \end{equation}
  where $c_{11}$, $c_{22}$ and $c_{33}$ are polynomials in $m_{ij}$ and $(m_{ij})_{\tau}$, and $c_{21}$, $c_{31}$ and $c_{32}$ are polynomials in $m_{ij}$, $(m_{ij})_{\tau}$ and $(m_{ij})_{\sigma}$. Similar to \eqref{eq:trace_zero}, we have
  \begin{equation} \label{eq:trac_C}
    c_{11} + c_{22} + c_{33} = 0.
  \end{equation}
  Next, we apply the zero-curvature equation
  \begin{equation}\label{eq:Zero_Curvature_t}
    \frac{\partial \hat A}{\partial \rho} - \frac{\partial \hat C}{\partial \tau} = \hat A \hat C-\hat C\hat A.
  \end{equation}
  From the $(1, 2)$ and $(2, 3)$ entries of the identity, we have
  \begin{align}
    c_{22} - c_{11} = {}& \frac{1}{3}(-u_{\tau \tau} + v_{\tau} - u^2), & c_{33}-c_{22} = {}&\frac{1}{3}v_{\tau}.
  \end{align}
  These identities, together with \eqref{eq:trac_C}, imply \eqref{eq:c11}. Next, the $(1, 1)$, $(2, 1)$ and $(2, 2)$ entries of the identity are
  \begin{align} \label{eq:u_t}
    c_{21} = {}& \frac{1}{3} uv - (c_{11})_{\tau}, & c_{31} = {}& -\frac{1}{3} v^2 - (c_{21})_{\tau}, & c_{32} = {}& c_{21} + \frac{1}{3} uv - (c_{22})_{\tau}.
  \end{align}
  These identities imply \eqref{eq:c21} -- \eqref{eq:c32}.

  \paragraph{Boussinesq equations}
  
  The $(3, 2)$ and $(3, 1)$ entries of the zero-curvature equation \eqref{eq:Zero_Curvature_tau} are non-trivial, and they imply \eqref{eq:1st_Bous_hier} and further \eqref{eq: Boussinesq equation}. Similarly, the $(3, 2)$ and $(3, 1)$ entries of the zero-curvature equation \eqref{eq:Zero_Curvature_t} imply \eqref{eq:2nd_Boussinesq_hierarchy:1} and \eqref{eq:2nd_Boussinesq_hierarchy:2}.

  \paragraph{Proof of the similarity reduction formulas \eqref{eq:Boussinesq_equation_coupled_equation:1} and \eqref{eq:Boussinesq_equation_coupled_equation:2}}

  We introduce the differential operator 
  \begin{equation}
    D=-\frac{1}{4} \tau \partial_{\tau} - \frac{1}{2}\sigma \partial_{\sigma} - \rho \partial_{\rho}.
  \end{equation}
  From the behavior of $\Psi$ at infinity, we have
  \begin{equation}\label{eq: dzPsi}
    \frac{3}{4}\xi\Phi_{\xi}(\xi) + D \Phi(\xi) = \frac{1}{4} \diag(0,1,2) \Phi(\xi).
  \end{equation}
  From the vanishing of the coefficient of $1/\xi$ in the large-$\xi$ expansion of the above equation, we have
  \begin{equation}\label{eq: DM}
    DM - \frac{3}{4} M + \frac{1}{4} [M, \diag(0,1,2)] = 0, \quad \text{or equivalently} \quad D m_{ij} = \frac{3 + i - j}{4} m_{ij}, \quad i, j = 1, 2, 3.
  \end{equation}
  Then we apply the operator $D$ to $u$ and $v$, and have
  \begin{align}
    4\rho u_{\rho} + 2\sigma u_{\sigma} + \tau u_{\tau} + 2u = {}& 0, & 4\rho v_{\rho} + 2\sigma v_{\sigma} + \tau v_{\tau} + 3v = {}& 0.
  \end{align}
  Substituting \eqref{eq:1st_Bous_hier}, \eqref{eq:2nd_Boussinesq_hierarchy:1} and \eqref{eq:2nd_Boussinesq_hierarchy:2} for $u_{\sigma}$, $v_{\sigma}$, $u_{\rho}$ and $v_{\rho}$ into the above equations, we obtain the similarity reduction of the Boussinesq hierarchy \eqref{eq:Boussinesq_equation_coupled_equation:1} and \eqref{eq:Boussinesq_equation_coupled_equation:2}.

  \paragraph{Proof of the first item of \eqref{eq:LaxTriple}}
  
  By \eqref{eq: DM},  we have
  \begin{equation}\label{eq:DP}
    DPP^{-1} = -\frac{1}{4} P \diag(0,1,2) P^{-1} + \frac{1}{4} \diag(0,1,2).
  \end{equation}
  By \eqref{eq:transform}, \eqref{eq: dzPsi} and \eqref{eq:DP}, we have 
  \begin{equation}
    \begin{split}
      -\frac{3}{4}\xi\Phihat_{\xi}\Phihat^{-1}= {}& P\left(-\frac{3}{4}\xi\Phi_{\xi}\Phi^{-1}\right)P^{-1} \\
      = {}& P\left(-\frac{1}{4} \diag(0,1,2) +D\Phi\Phi^{-1}\right)P^{-1} \\
      = {}& -\frac{1}{4} \diag(0,1,2)+DPP^{-1}+PD\Phi\Phi^{-1}P^{-1} \\
      = {}& -\frac{1}{4} \diag(0,1,2)+D\Phihat\Phihat^{-1}.
    \end{split}
  \end{equation}
  Substituting the partial derivatives of $\Phihat$ with respect to $\tau$, $\sigma$ and $\rho$ into the above equation, we obtain the first item of \eqref{eq:LaxTriple}.

  \paragraph{Chazy and Painlev\'{e} equations}

  Finally, we assume $\rho = 0$ and derive two more equations for \eqref{eq:Boussinesq_equation_Reduction} and relate them to Chazy's equations. Without loss of generality, we take $\sigma = 3/2$. We can compute $L$ in \eqref{eq:LaxPair} directly by the asymptotic expansion of $\Phi$ as $\xi \to \infty$ in \eqref{eq:expansion_at_infty_Phi}, like the computation for $A$ in \cite[Equation (43)]{Wang-Xu25}. Then we have $\hat{L} = PLP^{-1}$. We have that the $1/\xi$ coefficient of $L$ is 
  \begin{equation}
    -L_{-1}= -M + \left[ M,
        \begin{pmatrix}
          0 & -\frac{\tau}{3} & -1 \\
          0 & \frac{1}{3} & -\frac{\tau}{3} \\
          0& 0 & \frac{2}{3}
        \end{pmatrix}
      \right] + \left[ M,
        \begin{pmatrix}
          0 & 0& 0 \\
          1 & 0 & 0\\
          \frac{\tau}{3}& 1& 0
        \end{pmatrix}
      \right] M-\left[\hat{M},
        \begin{pmatrix}
          0 & 0& 0 \\
          1 & 0 & 0\\
          \frac{\tau}{3}& 1& 0
        \end{pmatrix}
      \right],
  \end{equation}
  whose derivation is similar to that of $A_{-2}$ in \cite[Equation (418)]{Wang-Xu25}. Since we know that $L$ and $\hat{L}$ are analytic at $\xi = 0$, this coefficient must be zero, which implies particularly, analogous to \cite[Equation (419)]{Wang-Xu25},
  \begin{equation} \label{eq:vanishing_L-1_13}
    - \left( L_{-1} \right)_{13} = m_{33} - m_{11} + (m_{12} + m_{23})m_{13} - \frac{1}{3} \left( \tau(m_{12} - m_{23}) - \tau m^2_{13} + m_{13} \right)=0.
  \end{equation}
  Below we denote $m_{13}$ by $f$ as in \eqref{eq:defn_f(tau)}. Then \eqref{eq:u_v_expr} implies that $u = 3f_{\tau}$. Identity \eqref{eq:vanishing_L-1_13}, together with \eqref{eq:m_13_x}, implies
  \begin{equation} \label{eq:(tau_f)_tau}
    \begin{split}
      m_{33}-m_{11}+(m_{12}+m_{23})f = {}& \frac{1}{3}\left(\tau (m_{12}-m_{23})-\tau f^2+f\right)\\
      = {}& \frac{1}{3}\left(\tau (f_{\tau} +f^2)-\tau f^2+f\right)\\
      = {}& \frac{1}{3}(\tau f)_{\tau} .
    \end{split}
  \end{equation}
  We also have, by \eqref{eq:m_13_x} and \eqref{eq:m_23_x},
  \begin{equation} \label{eq:ftautau}
    f_{\tau \tau} = m_{11} - 3 m_{12} m_{13} + 2 m_{13}^{3} + 3 m_{13} m_{23} - 2 m_{22} + m_{33}.
  \end{equation}
  As the linear combination of \eqref{eq:(tau_f)_tau} and \eqref{eq:ftautau}, we have
  \begin{equation}\label{eq:v_f} 
    v = \frac{1}{2}(\tau f)_{\tau} +\frac{3}{2}f_{\tau \tau}.
  \end{equation}
  
  From \eqref{eq:LaxTriple}, we have
 \begin{align}
     &-\xi\Phihat_{\xi}\Phihat^{-1} = \hat{L} \nonumber \\ 
     = {}& -\frac{1}{3} \diag(0,1,2)+\frac{\tau}{3}\hat A + \hat B \nonumber \\
     = {}& -\frac{1}{3} \diag(0,1,2)+\frac{\tau}{3}
           \begin{pmatrix}
             0 & 1 & 0 \\
             0 & 0 & 1 \\
             \xi +v& -u& 0
           \end{pmatrix}
           +
           \begin{pmatrix}
             0 & 0 & 1 \\
             \xi & 0 & 0 \\0 & \xi & 0
           \end{pmatrix}
           +
           \begin{pmatrix}
             \frac{2}{3}u& 0 & 0 \\
             v - \frac{2}{3}u_{\tau} & - \frac{1}{3}u & 0 \\
             -v_{\tau} + \frac{2}{3}u_{\tau \tau}& v - \frac{1}{3}u_{\tau} & - \frac{1}{3}u
           \end{pmatrix} \nonumber \\  
     = {}& \frac{1}{3}
           \begin{pmatrix}
             2u & \tau  & 3 \\
             3\xi - 2u_{\tau} +3v & -u-1& \tau  \\
             \tau \xi + 2u_{\tau \tau}-3v_{\tau} +\tau v& 3\xi - (\tau u)_{\tau} +3v& -u-2
           \end{pmatrix}.
 \end{align}
 From the behavior of $\Phi$ near the origin, we have that the eigenvalues of $-\hat{L}_0$ are $1/2-\frac{\alpha}{3}$, $\frac{\alpha}{6}$ and $1/2+\frac{\alpha}{6}$, where $\hat{L}_0=\hat{L}(0)$. (This observation was also used in \cite[Section 7]{Wang-Xu25}.) Therefore, by computing the characteristic polynomial of $\hat{L}_0$, we obtain 
\begin{equation}\label{eq:First_integral_1} 
  3( -2u_{\tau \tau}+3v_\tau -\tau v)-\tau (-3u_\tau -\tau u+6v)-3u^2 - 3u=\frac{1}{4}(1+3\alpha-3\alpha^2),
\end{equation}
and 
\begin{multline}\label{eq:First_integral_2} 
  (-2u_{\tau} - \frac{2}{3}\tau u+3v)( -u_{\tau} - \frac{2}{3}\tau u+3v+\frac{2}{3}\tau )-(-u-\frac{1}{3}\tau ^2-1)(2u_{\tau \tau}-3v_\tau +\frac{2}{3}u^2 + \frac{4}{3}u+\tau v)= \\
  \frac{1}{12}\alpha^3+\frac{1}{8}\alpha^2-\frac{3}{8}\alpha.
\end{multline} 
Substituting $u = 3f_{\tau}$ and \eqref{eq:v_f}  into \eqref{eq:First_integral_1}, we obtain a third-order differential equation 
\begin{equation} 
f_{\tau \tau \tau} + 6f_{\tau}^2 + \frac{1}{3}{\tau}^2 f_{\tau} +\tau f + \frac{1}{18}(1+3\alpha-3\alpha^2)=0.
\end{equation}
Let $y(x)=f(x)+\frac{x^3}{108}$, then $y$ satisfies the Chazy's  equation \eqref{eq:Third_order_ODE}. 
Substituting $u = 3f_{\tau}$ and \eqref{eq:v_f}  into \eqref{eq:First_integral_2} and using \eqref{eq:Third_order_ODE} to eliminate $f_{\tau \tau \tau }$ from this equation,   we obtain  a second-order differential equation 
\begin{equation}
\left(f_{\tau \tau} + \frac{2}{9}\tau \right)^2+4\left(f_{\tau} +\frac{1}{9} {\tau}^2 \right)^3 -(\tau f_{\tau} - f)^2-\frac{1}{3}(\alpha^2-\alpha+1)(f_{\tau} +\frac{1}{9} \tau^2)+\frac{1}{54}(\alpha+1)(2\alpha-1)(\alpha-2)=0.
\end{equation}
By the transformation $\tilde{y}(x)=\sqrt{2}f(\sqrt{2}x)+\frac{4}{27}x^3$, we obtain another Chazy's equation \eqref{eq:second_order_ODE} that is equivalent to a Painlev\'{e} IV equation. 

  \paragraph{Elementary solution for special parameters}

After  completing  the proof of Theorem \ref{prop:boussinesq}, we consider the special cases in Remark \ref{rmk:poly_solution}. If $\rho = 0$, $\sigma = 3/2$, $\gamma = 1$ and $\alpha = 0$, from the behavior of $\Phi$ near the origin given by \eqref{eq:limit_Phi_at_0_generic} with $E$ expressed by \eqref{eq:E}, we find that $\hat{L}_0 + \frac{1}{2} I_3$ is of rank one. Then the $2 \times 2$ upper-right minor of $\hat{L}_0 + \frac{1}{2} I_3$ has zero determinant, that is,
\begin{equation}
  \frac{\tau^2}{9} - (-\frac{u}{3} + \frac{1}{6}) = 0,
\end{equation}
which implies that $3f_{\tau} = u = -\tau^2/3 + 1/2$. Plugging this into \eqref{eq:Third_order_ODE} with $\alpha = 0$, we obtain that $f = -\tau^3/27 + \tau/6$. Similarly, if $\rho = 0$, $\sigma = 3/2$, $\gamma = -1$ and $\alpha = 1$, we have that $\hat{L}_0 + \frac{1}{6} I_3$ is of rank one. By considering the $2 \times 2$ upper-right minor of $\hat{L}_0 + \frac{1}{6} I_3$, we have  $3f_{\tau} = u = -\tau^2/3 - 1/2$ and $f = -\tau^3/27 - \tau/6$. Thus we obtain \eqref{eq:poly_f}.

\section{Proof of Corollary \ref{cor:MB} and Propositions \ref{pro:sol_Psi} and \ref{pro:sum_kernel_Psi}} \label{sec:remaining}

\subsection{Proof of Corollary \ref{cor:MB}} \label{subsec:proof_MB}

Setting $\gamma = 0$, the limit identities \eqref{eq:limit_K^mult} and \eqref{eq:limit_K^mult_crit} imply \eqref{eq:MB_limit_Pearcey} and \eqref{eq:MB_limit_tri} respectively. It therefore remains to show the equivalence between \eqref{eq:MB_limit_Pearcey} and \cite[Equation (21)]{Wang-Xu25}. 

For this purpose, we first derive an identity for $\Phi^{(\gamma, \alpha, \rho, \sigma, \tau)}(\xi)$.
Let $\star = (\gamma, \alpha, \rho, \sigma, \tau)$. From the second item in \eqref{eq:LaxPair} and \eqref{eq:A}, we have
\begin{equation} \label{eq:defn:derivative_Phi(y)Phi(x)}
  \frac{d}{d\tau} \Phi^{\star}(y)^{-1} \Phi^{\star}(x) = \Phi^{\star}(y)^{-1} (A(y) - A(x)) \Phi^{\star}(x) = (y - x) \Phi^{\star}(y)^{-1}
  \begin{pmatrix}
    0 \\
    0 \\
    1 
  \end{pmatrix} 
  \begin{pmatrix}
    1 & 0 & 0
  \end{pmatrix}
  \Phi^{\star}(x).
\end{equation}
Next, let $\star$ be $(\gamma, \alpha, 0, \frac{3}{4},\tau)$ in \eqref{eq:limit_K^mult} or $(\gamma, \alpha, -\frac{1}{4}, \frac{3}{2}\sigma, \tau)$ in \eqref{eq:limit_K^mult_crit}. Since $K^{\star}(\xi, \eta)$ can be expressed by the $(0, 0)$, $(0, 1)$, $(1, 0)$ and $(1, 1)$ entries of $\Phi^{\star}(y)^{-1}\Phi^{\star}(x)$, we have, by substituting entry-wise identities of \eqref{eq:defn:derivative_Phi(y)Phi(x)} into \eqref{eq:limit_K^mult},
  \begin{equation}
    \begin{split}
      \frac{d}{d\tau} K^{\star}(\xi, \eta) = {}& \frac{1}{2\pi i} \left( -\Phi^{\star}_{00}(\xi) (\Phi^{\star}(\eta))^{-1}_{02} + e^{-\frac{2\alpha}{3} \pi i} \Phi^{\star}_{00}(\xi) (\Phi^{\star}(\eta))^{-1}_{12}(\eta) \right. \\
                                                                & \phantom{\frac{1}{2\pi i} } \left.
                                                                  \vphantom{e^{-\frac{2\alpha}{3} \pi i}}
                                                                  - e^{\frac{2\alpha}{3} \pi i} \Phi^{\star}_{01}(\xi) (\Phi^{\star}(\eta))^{-1}_{02} + \Phi^{\star}_{01}(\xi) (\Phi^{\star}(\eta))^{-1}_{12} \right)_+ \frac{\xi^{\frac{\alpha}{3} - \frac{1}{2}}}{\eta^{\frac{\alpha}{3} - \frac{1}{2}}} \\
      = {}& \frac{-1}{2\pi} \hat{\phi}^{\star}(\xi) \hat{\psi}^{\star}(\eta) \frac{\xi^{\frac{\alpha}{3} - \frac{1}{2}}}{\eta^{\frac{\alpha}{3} - \frac{1}{2}}},
    \end{split}
  \end{equation}
  where
  \begin{align}
    \hat{\phi}^{\star}(\xi) = {}& \left( e^{\frac{\alpha}{3} \pi i} \Phi^{\star}_{01}(\xi) + e^{-\frac{\alpha}{3} \pi i} \Phi^{\star}_{00}(\xi) \right)_+, \\
    \hat{\psi}^{\star}(\eta) = {}& \left( -i e^{\frac{\alpha}{3} \pi i} (\Phi^{\star}(\eta))^{-1}_{02} + i e^{-\frac{\alpha}{3} \pi i} (\Phi^{\star}(\eta))^{-1}_{12} \right)_+.
  \end{align}
  Now we take $\star = (0, \alpha, 0, \frac{3}{4}, \tau)$. We note that the correlation kernel $K^{(\tau)}$ defined in \cite[Theorem 1.6 and Equations (35) and (36)]{Wang-Xu25} has the expression
  \begin{equation} \label{eq:K^0alphatau_integral}
    K^{(\tau)}(\xi, \eta) = 2\xi \cdot \frac{1}{2\pi} \int^{\infty}_{\tau} \hat{\phi}^{(0, \alpha, 0, \frac{3}{4}, \tilde{\tau})}(\xi^2) \hat{\psi}^{(0, \alpha, 0, \frac{3}{4}, \tilde{\tau})}(\eta^2) \frac{\xi^{\frac{2\alpha}{3} - 1}}{\eta^{\frac{2\alpha}{3} - 1}} d\tilde{\tau}.
  \end{equation}
Consequently, we obtain the relation 
  \begin{equation} \label{eq:equivalence_of_MB_kernels}
    2\xi K^{(0, \alpha, 0, \frac{3}{4}, \tau)}(\xi^2, \eta^2) = K^{(\tau)}(\xi, \eta) + C,
  \end{equation}
  where $C$ is independent of $\tau$. Similar to \cite[Equation (486)]{Wang-Xu25}, we can derive estimates of $\phi^{(0, \alpha, 0, \frac{3}{4}, \tau)}_j(\xi)$ and $\psi^{(0, \alpha, 0, \frac{3}{4}, \tau)}_j(\eta)$ as $\tau \to \infty$, which shows that $C = 0$ in \eqref{eq:equivalence_of_MB_kernels}. This completes the proof of Corollary \ref{cor:MB}.

  \begin{rmk}
    In the same way, we can prove that $K^{(\gamma, \alpha, \rho, \sigma, \tau)}(\xi, \eta)$ also has an integral representation analogous to \eqref{eq:K^0alphatau_integral} if the parameters satisfy the conditions in Theorem \ref{thm:RHP_unique_solvability}. The proof relies on the estimates of $\phi^{(\gamma, \alpha, \rho, \sigma, \tau)}_j(\xi)$ and $\psi^{(\gamma, \alpha, \rho, \sigma, \tau)}_j(\eta)$ as $\tau \to \infty$ and other parameters fixed. We will give the details in a future publication.
  \end{rmk}

\subsection{Proof of the Propositions \ref{pro:sol_Psi} and \ref{pro:sum_kernel_Psi}}

We first show that there exists a gauge transformation relating $\Phi^{(\gamma, \alpha, \rho, \sigma, \tau)}$ for different choices of parameters:
\begin{lemma} \label{lem:Backlund}
  Let $\Phi^{(\gamma, \alpha, \rho, \sigma, \tau)}(\xi)$ be the matrix-valued function defined by RH problem \ref{RHP:model}, where $\gamma, \alpha, \rho, \sigma, \tau$ satisfy the conditions in Theorem \ref{thm:RHP_unique_solvability}. Then
  \begin{multline}\label{eq:Backlund_transform}
    \xi^{\frac{1}{3}} 
    \begin{pmatrix}
      0& 0& 1\\
      1& 0 & 0 \\
      0 & 1 & 0
    \end{pmatrix}
    \Phi^{(-\gamma, \alpha+1,\rho, \sigma, \tau)}(\xi)  \diag(\omega^{\pm 1}, \omega^{\mp 1},1) = \\
   H^{(\gamma, \alpha, \rho, \sigma, \tau, -)}(\xi)^T H^{(-\gamma, \alpha + 1, \rho, \sigma, \tau, +)}(\xi) \Phi^{(\gamma, \alpha, \rho, \sigma, \tau)}(\xi)
  \end{multline}
  for $\xi \in \compC_{\pm}$, where $H^{(\gamma, \alpha, \rho, \sigma, \tau, +)}(\xi)$ is defined in \eqref{eq:H}, and 
   \begin{equation} 
     H^{(\gamma, \alpha, \rho, \sigma, \tau, -)}(\xi)=
     \begin{pmatrix}
       \xi^{\frac{1}{2}} & 0 & 0\\
       -m^{(\gamma, \alpha)}_{12}(\rho, \sigma, \tau)  & 1 & 0 \\
       -m^{(\gamma, \alpha)}_{13}(\rho, \sigma, \tau)  & 0 & 1
     \end{pmatrix}.
  \end{equation}
\end{lemma}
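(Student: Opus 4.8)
\textbf{Proof strategy for Lemma \ref{lem:Backlund}.} The plan is to verify that the left-hand side of \eqref{eq:Backlund_transform}, call it $\Xi(\xi)$, and the right-hand side, call it $\Xi'(\xi)$, are two solutions of one and the same (normalized) RH problem, and then invoke uniqueness from Theorem \ref{thm:RHP_unique_solvability}. First I would check the jump relations: on each of the six rays of $\Gamma_{\Phi}$ and on $\realR_{\pm}$, the jump matrix of $\Phi^{(-\gamma, \alpha+1, \rho, \sigma, \tau)}$ differs from that of $\Phi^{(\gamma, \alpha, \rho, \sigma, \tau)}$ by the sign flip $\gamma \to -\gamma$ and the shift $\alpha \to \alpha+1$, which multiplies the relevant off-diagonal entries by $-e^{\mp\pi i/3}$ (for the rays $\arg\xi = \pm\pi/4$, entry $\pm e^{\mp 2\alpha\pi i/3}$ becomes $\pm e^{\mp 2(\alpha+1)\pi i/3} = \pm e^{\mp 2\alpha\pi i/3} e^{\mp 2\pi i/3}$) and $-\gamma e^{\mp\pi i/3}$ on the rays $\arg\xi = \pm 3\pi/4$. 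The permutation matrix $\begin{pmatrix} 0&0&1\\1&0&0\\0&1&0 \end{pmatrix}$ on the left and the conjugation by $\diag(\omega^{\pm 1}, \omega^{\mp 1}, 1)$ together with the scalar factor $\xi^{1/3}$ (which jumps by $\omega^{\pm 1}$ across $\realR_{\mp}$) should be exactly engineered to absorb these phase factors and turn the jump of $\Phi^{(-\gamma, \alpha+1, \ldots)}$ into that of $\Phi^{(\gamma, \alpha, \ldots)}$; meanwhile $H^{(\gamma, \alpha, \ldots, -)}(\xi)^T H^{(-\gamma, \alpha+1, \ldots, +)}(\xi)$ is independent of the sector (it is a fixed rational-in-$\xi^{1/2}$ matrix) and hence does not affect the jumps, except that one must track the branch of $\xi^{1/2}$ implicit in the $H$ factors and confirm it is consistent with the cut structure.

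\textbf{Key steps in order.} (1) Compute the product $H^{(\gamma, \alpha, \rho, \sigma, \tau, -)}(\xi)^T H^{(-\gamma, \alpha+1, \rho, \sigma, \tau, +)}(\xi)$ explicitly; it is lower-triangular-plus-one-entry and has determinant $\xi$ (matching the $\det$ formula \eqref{eq:Phi_det} up to the permutation sign and the $\xi^{1/3}$-cubed$=\xi$ factor on the left). (2) Check that both $\Xi$ and $\Xi'$ satisfy the same jump relation on $\Gamma_{\Phi}$, using the explicit form of $J_{\Phi}$ in \eqref{eq:RHP:model_1} for the two parameter sets. (3) Check the behaviour at $\infty$: using the asymptotic expansion \eqref{eq:asy_expansion_Phi} of both $\Phi$'s, verify that $\Xi(\xi)$ and $\Xi'(\xi)$ have the same leading term $\Upsilon(\xi)\Omega_{\pm} e^{-\Theta^{(\rho,\sigma,\tau)}(\xi)}$; here the $m^{(\gamma,\alpha)}_{ij}$ and $m^{(-\gamma,\alpha+1)}_{ij}$ entries in the $H$ factors must conspire with the $M/\xi$ term of $\Phi^{(\gamma,\alpha,\ldots)}$ so that the $\bigO(\xi^{-1})$ correction is still $\bigO(\xi^{-1})$ — this is the point where the precise definitions \eqref{eq:H} and the sign choices in $H^{(\cdot,-)}$ are used, and it likely forces the identity $A^{(-1)} = M_\tau$ and the relations \eqref{eq:m_13_x}--\eqref{eq:m_23_x} into play. (4) Check the behaviour at $0$: both sides should have the form $N(\xi)\diag(\xi^{1/2 - \alpha/3}, \xi^{1/2 + \alpha/6}, \xi^{\alpha/6}) E$ with $N$ analytic; the shift $\alpha \to \alpha+1$ on the right-hand $\Phi$ changes the exponents to $(\xi^{1/2 - (\alpha+1)/3}, \xi^{1/2 + (\alpha+1)/6}, \xi^{(\alpha+1)/6})$ and the permutation plus the $\xi^{1/3}$ prefactor plus the factor $\xi^{1/2}$ hidden in $H$ must rearrange these into the exponent triple for parameter $\alpha$. (5) Conclude by uniqueness (Theorem \ref{thm:RHP_unique_solvability}) that $\Xi = \Xi'$.

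\textbf{Alternative/cleaner route.} Rather than re-deriving everything from the RH problem, I would more efficiently define $\Psi^{(\gamma,\alpha,\rho,\sigma,\tau)}$ via \eqref{eq: Psi_def} (Proposition \ref{pro:sol_Psi}) — though note the lemma is logically prior to that proposition, so one should be careful about circularity; in fact it seems the intended logic is that Lemma \ref{lem:Backlund} is \emph{used to prove} Proposition \ref{pro:sol_Psi}, so the RH-problem-matching argument above is the right one. A third possibility is to observe that \eqref{eq:Backlund_transform} is a Schlesinger/Bäcklund transformation of the isomonodromic system \eqref{eq:LaxTriple}: one checks that the gauge transformation given by $H^{(\gamma,\alpha,-)T} H^{(-\gamma,\alpha+1,+)}$ intertwines the Lax matrices $\hat L, \hat A, \hat B, \hat C$ for the two parameter values, which reduces the verification to a finite computation on the residue matrices. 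I would use whichever bookkeeping is shorter, but present the RH-uniqueness version.

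\textbf{Main obstacle.} The hard part will be step (3), matching the subleading $\bigO(\xi^{-1})$ behaviour at infinity: one must show that the constant (in $\xi$) matrix $H^{(\gamma,\alpha,\rho,\sigma,\tau,-)}(\xi)^T H^{(-\gamma,\alpha+1,\rho,\sigma,\tau,+)}(\xi)$, which itself contains growing-in-$\xi^{1/2}$ entries, when multiplied by $(I + M^{(\gamma,\alpha)}/\xi + \cdots)\Upsilon(\xi)\Omega_{\pm}e^{-\Theta}$, reproduces $\xi^{-1/3}\,\mathrm{(perm)}\,(I + M^{(-\gamma,\alpha+1)}/\xi + \cdots)\Upsilon(\xi)\Omega_{\pm}e^{-\Theta}\diag(\omega^{\mp1},\omega^{\pm1},1)$ up to an $I + \bigO(\xi^{-1})$ prefactor — this amounts to a nontrivial set of identities among the first-moment entries $m^{(\gamma,\alpha)}_{ij}$ and $m^{(-\gamma,\alpha+1)}_{ij}$, analogous to the Riccati-type relations already recorded in \eqref{eq:m_13_x}--\eqref{eq:m_23_x} and \eqref{eq:u_v_expr}. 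I would handle it by comparing the $\xi^0$ and $\xi^{-1}$ coefficients column by column after conjugating out $\Upsilon(\xi)\Omega_\pm$, and expect the necessary identities to follow from the analyticity of the Lax matrices at $\xi = 0$ (the vanishing of $L_{-1}$, cf.\ \eqref{eq:vanishing_L-1_13}) applied to both parameter sets simultaneously.
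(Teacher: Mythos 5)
Your strategy matches the paper's: verify by direct computation that both sides of \eqref{eq:Backlund_transform} have identical jumps, identical behavior at the origin, and matching leading behavior at $\infty$, then conclude via a uniqueness/vanishing argument. The paper does this by considering the \emph{difference} of the two sides and showing it satisfies the homogeneous version of RH problem \ref{RHP:model} — same jumps, the right behavior at $0$, and $\bigO(\xi^{-1})\Upsilon(\xi)\Omega_\pm e^{-\Theta(\xi)}$ at $\infty$ — so Lemma \ref{lem:vanishing} kills it. This is slightly cleaner than your formulation (``both sides solve the same normalized RH problem, then invoke uniqueness from Theorem \ref{thm:RHP_unique_solvability}''), because neither side solves RH problem \ref{RHP:model} literally: each carries a growing prefactor at $\infty$, namely the $\xi^{1/3}$ on the left and the $\xi^{1/2}$ entries of the $H$-factors on the right, so the common leading term is \emph{not} $\Upsilon(\xi)\Omega_\pm e^{-\Theta(\xi)}$ as asserted in your step~(3), but rather $\Upsilon(\xi)\Omega_\pm e^{-\Theta(\xi)}$ multiplied by an explicit matrix with entries growing like $\xi$ and $\xi^{1/2}$. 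The crux — which the difference formulation surfaces and your ``main obstacle'' paragraph correctly locates — is that these growing contributions must cancel in the difference, and the cancellation is exactly what forces the specific entries $m^{(\gamma,\alpha)}_{12}$, $m^{(\gamma,\alpha)}_{13}$, $m^{(-\gamma,\alpha+1)}_{13}$, $m^{(-\gamma,\alpha+1)}_{23}$ built into $H^{(\cdot,\pm)}$. Once you recast step~(3) as a statement about the difference rather than about a shared normalized leading term, your outline coincides with the paper's proof, which compresses all of the checking into ``by direct calculation.''

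Two small remarks on your alternatives. You are right that the route through Proposition \ref{pro:sol_Psi} would be circular, since that proposition is proved \emph{using} this lemma. The Schlesinger/gauge-transformation alternative you sketch (intertwining the Lax matrices) would also work and is conceptually illuminating, but it still needs an analyticity-at-the-origin input equivalent to the vanishing of $L_{-1}$, so it does not actually shorten the bookkeeping; the paper's vanishing-lemma route remains the most economical.
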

\begin{proof}
  We consider the difference between the two sides of \eqref{eq:Backlund_transform}. By direct calculation, we verify that it satisfies the same jump condition of $ \Phi^{(\gamma, \alpha, \rho, \sigma, \tau)}(\xi)$ that is specified in \eqref{eq:RHP:model_1} and the same boundary condition of $\Phi^{(\gamma, \alpha, \rho, \sigma, \tau)}(\xi)$ at $0$ that is specified in \eqref{eq:expansion_at_infty_Phi}. We can also show that as $\xi \to \infty$, the difference has the asymptotic behaviour $\bigO(\xi^{-1}) \Upsilon(\xi) \Omega_{\pm} e^{-\Theta(\xi)}$ that is specified in \eqref{eq:limit_Phi_at_0_generic}. Hence, the difference satisfies the homogeneous version of RH problem \ref{RHP:model}. By Lemma \ref{lem:vanishing}, we have that the difference vanishes, and then prove the lemma.
\end{proof}

\begin{proof}[Proof of Proposition \ref{pro:sol_Psi}]
  First we prove that $\Psi^{(\gamma, \alpha, \rho, \sigma, \tau)}(\xi)$ defined by \eqref{eq: Psi_def} is a solution of RH problem \ref{RHP:model_Psi}. We note that by \eqref{eq:Backlund_transform}, \eqref{eq: Psi_def} can also be written as
  \begin{multline} \label{eq:Psi_alt}
    \Psi^{(\gamma, \alpha, \rho, \sigma, \tau)}(\xi)=\xi^{-\frac{1}{3}} (H^{(\gamma, \alpha, \rho, \sigma, \tau, -)}(\xi^2)^{-1})^T \\
    \times
    \begin{pmatrix}
      0 & 0 & 1 \\
      1 & 0 & 0 \\
      0 & 1 & 0
    \end{pmatrix}
    \times
    \begin{cases}
      \Phi^{(-\gamma, \alpha + 1, \rho, \sigma, \tau)}(\xi^2) \diag(\omega, \omega^2, 1), & \arg \xi \in (0, \frac{\pi}{2}), \\
      \Phi^{(-\gamma, \alpha + 1, \rho, \sigma, \tau)}(\xi^2) \diag(\omega^2, \omega, 1), & \arg \xi \in (-\frac{\pi}{2}, 0), \\
      \Phi^{(-\gamma, \alpha + 1, \rho, \sigma, \tau)}(\xi^2)(1\oplus\sigma_1)  \diag(\omega, \omega^2, 1), & \arg \xi \in(\frac{\pi}{2}, \pi), \\
      \Phi^{(-\gamma, \alpha + 1, \rho, \sigma, \tau)}(\xi^2)(1\oplus\sigma_1)  \diag(\omega^2, \omega, 1), & \arg \xi \in(-\pi, -\frac{\pi}{2}).
    \end{cases}
  \end{multline}
  It is straightforward to see that the function $\Psi^{(\gamma, \alpha, \rho, \sigma, \tau)}$ defined by \eqref{eq: Psi_def} satisfies the jump condition \eqref{eq:RHP:model_Psi} and boundary condition \eqref{eq:RHP_Psi_infty} at infinity. Now we verify the boundary condition at the origin, and without loss of generality assume that $\alpha \notin \mathbb{Z}$. We let $(\psi_0(\xi), \psi_1(\xi), \psi_2(\xi))$ be one of the three rows of $\Psi^{(\gamma, \alpha, \rho, \sigma, \tau)}(\xi)$. In the sector $ \arg \xi \in (\pi/8, 3\pi/8) $, using \eqref{eq:limit_Phi_at_0_generic} with the expression of $E$ given in \eqref{eq:generic_E}, we have that from \eqref{eq: Psi_def}, 
  \begin{equation} \label{eq:psi_one_way}
    \begin{split}
      (\psi_0(\xi), \psi_1(\xi), \psi_2(\xi)) = {}& (a_1(\xi^2) + a_2(\xi^2) \xi) \xi^{-\frac{2\alpha}{3}} (1, 0, 0) \\
                                                  & + (b_1(\xi^2) + b_2(\xi^2) \xi) \xi^{\frac{\alpha}{3}} (\frac{e^{\frac{2\alpha}{3} \pi i}}{1 - e^{\alpha \pi i}} (1 - \gamma), 1 - \gamma, -e^{\frac{\alpha}{3} \pi i}) \\
                                                  & + (c_1(\xi^2) + c_2(\xi^2) \xi) \xi^{-1 + \frac{\alpha}{3}} (\frac{e^{\frac{2\alpha}{3} \pi i}}{1 + e^{\alpha \pi i}} (1 + \gamma), 1 + \gamma, e^{\frac{\alpha}{3} \pi i}),
    \end{split}
  \end{equation}
  and from \eqref{eq:Psi_alt},
  \begin{equation} \label{eq:psi_another_way}
    \begin{split}
      (\psi_0(\xi), \psi_1(\xi), \psi_2(\xi)) = {}& (\tilde{a}_1(\xi^2) + \tilde{a}_2(\xi^2) \xi^{-1}) \xi^{-\frac{2\alpha}{3}} (1, 0, 0) \\
                                                  & + (\tilde{b}_1(\xi^2) + \tilde{b}_2(\xi^2) \xi^{-1}) \xi^{\frac{\alpha}{3}} (\frac{e^{\frac{2\alpha}{3} \pi i}}{1 - e^{\alpha \pi i}} (1 - \gamma), 1 - \gamma, -e^{\frac{\alpha}{3} \pi i}) \\
                                                  & + (\tilde{c}_1(\xi^2) + \tilde{c}_2(\xi^2) \xi^{-1}) \xi^{1 + \frac{\alpha}{3}} (\frac{e^{\frac{2\alpha}{3} \pi i}}{1 + e^{\alpha \pi i}} (1 + \gamma), 1 + \gamma, e^{\frac{\alpha}{3} \pi i}),
    \end{split}
  \end{equation}
  such that the functions $a_1, a_2, b_1, b_2, c_1, c_2, \tilde{a}_1, \tilde{a}_2, \tilde{b}_1, \tilde{b}_2, \tilde{c}_1, \tilde{c}_2$ are all analytic functions at $0$. By comparing \eqref{eq:psi_one_way} and \eqref{eq:psi_another_way}, we have that
  \begin{align}
    A(\xi) = {}& a_1(\xi^2) + a_2(\xi^2) \xi = \tilde{a}_1(\xi^2) + \tilde{a}_2(\xi^2) \xi^{-1}, \\
    B(\xi) = {}& b_1(\xi^2) + b_2(\xi^2) \xi = \tilde{b}_1(\xi^2) + \tilde{b}_2(\xi^2) \xi^{-1}, \\
    C(\xi) = {}& (c_1(\xi^2) + c_2(\xi^2) \xi) \xi^{-1} = (\tilde{c}_1(\xi^2) + \tilde{c}_2(\xi^2) \xi^{-1}) \xi
  \end{align}
  are all analytic functions at $\xi = 0$, and then
  \begin{multline} \label{eq:expr_psi_row}
      (\psi_0(\xi), \psi_1(\xi), \psi_2(\xi)) =  A(\xi) \xi^{-\frac{2\alpha}{3}} (1, 0, 0) \\
       + B(\xi) \xi^{\frac{\alpha}{3}} (\frac{e^{\frac{2\alpha}{3} \pi i}}{1 - e^{\alpha \pi i}} (1 - \gamma), 1 - \gamma, -e^{\frac{\alpha}{3} \pi i}) 
       + C(\xi) \xi^{\frac{\alpha}{3}} (\frac{e^{\frac{2\alpha}{3} \pi i}}{1 + e^{\alpha \pi i}} (1 + \gamma), 1 + \gamma, e^{\frac{\alpha}{3} \pi i}).     
  \end{multline}
  Since all three rows of $\Psi^{(\gamma, \alpha, \rho, \sigma, \tau)}$ can be expressed like \eqref{eq:expr_psi_row}, we verify that $\Psi^{(\gamma, \alpha, \rho, \sigma, \tau)}(\xi)$ defined by \eqref{eq: Psi_def} satisfies the boundary condition \eqref{eq:limit_Psi_at_0_generic}.
 
 Next, we prove the uniqueness. Suppose $\Psihat$ is another solution of this RH problem. By \eqref{eq:Phi_det}, $\Psi$ is invertible, so the product $\Psihat\Psi^{-1}$ is well defined. Since both $\Psihat$ and $\Psi$ satisfy the same jump condition, the function $\Psihat\Psi^{-1}$ is meromorphic in the complex plane with only possible singularities at the origin and infinity. The boundary condition at the origin implies that $\Psihat\Psi^{-1}$ is bounded near the origin; hence the singularity at $\xi=0$ is removable. From the behavior at infinity, we have $\Psihat(\xi)\Psi^{-1}(\xi)\to I$ as $\xi\to \infty$. By Liouville's theorem, $\Psihat\Psi^{-1}=I$, which proves the uniqueness.
\end{proof}

\begin{proof}[Proof of Proposition \ref{pro:sum_kernel_Psi}]
  We prove the equation \eqref{eq:sum_kernel_Psi} for $\xi, \eta > 0$, and the general case is obtained by analytic continuation.

  From \eqref{eq: Psi_def}, we can rewrite \eqref{eq:limit_kernel_mult} as
\begin{multline} \label{eq:limiting_kernel_BT}
  K^{(\gamma, \alpha, \rho, \sigma, \tau)}(\xi^2, \eta^2) = \frac{e^{\frac{2}{3}\alpha \pi i}}{2\pi i (\xi^2 - \eta^2)}
\begin{pmatrix}
      1 & 0 & 0
    \end{pmatrix}
    \tilde{\Psi}^{(\gamma, \alpha, \rho, \sigma, \tau)}(\eta)^{-1} \\
    H^{(-\gamma, \alpha + 1, \rho, \sigma, \tau, +)}(\eta^2) H^{(-\gamma, \alpha + 1, \rho, \sigma, \tau, +)}(\xi^2)^{-1} \tilde{\Psi}^{(\gamma, \alpha, \rho, \sigma, \tau)}(\xi)
    \begin{pmatrix}
      0 \\ 1 \\ 0
    \end{pmatrix},
  \end{multline}
  and from \eqref{eq:Psi_alt}
  \begin{multline} \label{eq:limiting_kernel_BT_alt}
    K^{(-\gamma, \alpha + 1, \rho, \sigma, \tau)}(\xi^2, \eta^2) = \frac{e^{\frac{2}{3}\alpha \pi i}}{2\pi i (\xi^2 - \eta^2)}
\begin{pmatrix}
      1 & 0 & 0
    \end{pmatrix}
    \tilde{\Psi}^{(\gamma, \alpha, \rho, \sigma, \tau)}(\eta)^{-1} \\
    (H^{(\gamma, \alpha, \rho, \sigma, \tau, -)}(\eta^2)^{-1})^T H^{(\gamma, \alpha, \rho, \sigma, \tau, -)}(\xi^2)^T \tilde{\Psi}^{(\gamma, \alpha, \rho, \sigma, \tau)}(\xi)
    \begin{pmatrix}
      0 \\ 1 \\ 0
    \end{pmatrix}.
  \end{multline}
  From the identity
  \begin{multline}
    \xi H^{(-\gamma, \alpha + 1, \rho, \sigma, \tau, +)}(\eta^2) H^{(-\gamma, \alpha + 1, \rho, \sigma, \tau, +)}(\xi^2)^{-1} + \eta (H^{(\gamma, \alpha, \rho, \sigma, \tau, -)}(\eta^2)^{-1})^T H^{(\gamma, \alpha, \rho, \sigma, \tau, -)}(\xi^2)^T = \\
    (\xi + \eta) I_3,
  \end{multline}
it follows that \eqref{eq:limiting_kernel_BT} and \eqref{eq:limiting_kernel_BT_alt} together imply \eqref{eq:sum_kernel_Psi}.
\end{proof}

\appendix

\section{The Airy parametrix} \label{app:Airy}

Let $y_0$, $y_1$ and $y_2$ be the functions defined by
\begin{equation}
  y_0(\zeta) = \sqrt{2\pi}e^{-\frac{\pi i}{4}} \Ai(\zeta), \quad y_1(\zeta) = \sqrt{2\pi}e^{-\frac{\pi i}{4}} \omega\Ai(\omega \zeta), \quad y_2(\zeta) = \sqrt{2\pi}e^{-\frac{\pi i}{4}} \omega^2\Ai(\omega^2 \zeta),
\end{equation}
where $\Ai$ is the usual Airy function (cf. \cite[Chapter 9]{Boisvert-Clark-Lozier-Olver10}) and $\omega=e^{2\pi i/3}$. We then define a $2\times 2$ matrix-valued function $\Psi^{(\Ai)}$ by
\begin{equation} \label{eq:defn_Psi_Ai}
  \Psi^{(\Ai)}(\zeta)
  = \left\{
    \begin{array}{ll}
      \begin{pmatrix}
        y_0(\zeta) &  -y_2(\zeta) \\
        y_0'(\zeta) & -y_2'(\zeta)
      \end{pmatrix}, & \hbox{$\arg \zeta \in (0,\frac{2\pi}{3})$,} \\
      \begin{pmatrix}
        -y_1(\zeta) &  -y_2(\zeta) \\
        -y_1'(\zeta) & -y_2'(\zeta)
      \end{pmatrix}, & \hbox{$\arg \zeta \in (\frac{2\pi}{3},\pi)$,} \\
      \begin{pmatrix}
        -y_2(\zeta) &  y_1(\zeta) \\
        -y_2'(\zeta) & y_1'(\zeta)
      \end{pmatrix}, & \hbox{$\arg \zeta \in (-\pi,-\frac{2\pi}{3})$,} \\
      \begin{pmatrix}
        y_0(\zeta) &  y_1(\zeta) \\
        y_0'(\zeta) & y_1'(\zeta)
      \end{pmatrix}, & \hbox{$\arg \zeta \in  (-\frac{2\pi}{3},0)$.}
    \end{array}
  \right.
\end{equation}
It is well-known that $\det (\Psi^{(\Ai)}(\zeta))=1$ and $\Psi^{(\Ai)}(\zeta)$ is the unique solution of the following $2 \times 2$ RH problem; cf. \cite[Section 7.6]{Deift99}.
\begin{RHP} \label{rhp:Ai}
\begin{enumerate}
\item
  $ \Psi^{(\Ai)}(\zeta)$ is analytic in $\mathbb{C} \setminus \Gamma_{\Ai}$, where the contour $\Gamma_{\Ai}$ is defined by
    \begin{equation} \label{def:AiryContour}
      \Gamma_{\Ai}:=e^{-\frac{2\pi i}{3}}[0,+\infty) \cup \mathbb{R} \cup e^{\frac{2\pi i}{3}}[0,+\infty)
    \end{equation}
 with all rays oriented from left to right.
\item
  For $\zeta \in \Gamma_{\Ai}$, we have
  \begin{equation} \label{eq:Ai_jump}
     \Psi^{(\Ai)}_+(\zeta) =  \Psi^{(\Ai)}_-(\zeta)
    \begin{cases}
      \begin{pmatrix}
        1 & 1 \\
        0 & 1
      \end{pmatrix},
      & \arg \zeta =0, \\
      \begin{pmatrix}
        1 & 0 \\
        1 & 1
      \end{pmatrix},
      & \arg \zeta = \pm \frac{2\pi }{3}, \\
      \begin{pmatrix}
        0 & 1  \\
        -1 & 0
      \end{pmatrix},
      & \arg \zeta = \pi.
    \end{cases}
  \end{equation}
\item
  As $\zeta \to \infty$, we have
    \begin{equation} \label{eq:defn_Psi^Ai_infty}
      \Psi^{(\Ai)}(\zeta) = \Psi^{(\Ai)}_{\infty}(\zeta) (I+\bigO(\zeta^{-\frac32}))e^{-\frac23 \zeta^{\frac32}\sigma_3}, \quad \Psi^{(\Ai)}_{\infty}(\zeta) = \zeta^{-\frac{1}{4} \sigma_3} \frac{1}{\sqrt{2}}
      \begin{pmatrix}
        1 & 1 \\
        -1 & 1
      \end{pmatrix}
      e^{-\frac{\pi i}{4} \sigma_3}.
    \end{equation}
\end{enumerate}
\end{RHP}


\end{document}